\documentclass[a4paper,12pt]{amsart}

%Change default margins in amsart
\usepackage{kantlipsum} % for text filler
\setlength{\textwidth}{\paperwidth}
\addtolength{\textwidth}{-2in}
\calclayout{}

\usepackage[ruled,vlined]{algorithm2e}
\usepackage{amsfonts}
\usepackage{amsmath}    % need for subequations
\usepackage{amssymb}    %useful mathematical symbols
\usepackage{bm}         %needed for bold greek letters and math symbols
\usepackage{bbm}
\usepackage{color}
\usepackage{enumitem}
\usepackage{graphicx}   % need for PS figures
\usepackage{natbib}    %number and author-year style referencing
\usepackage{epsf}
\usepackage{lscape}
\usepackage{enumitem}
\usepackage{hyperref}   % use for hypertext links, including those to external documents and URLs. Must be loaded after xr-hyper
\usepackage{longtable}
\bibpunct{(}{)}{;}{a}{,}{,}

\usepackage[normalem]{ulem}

\usepackage{subcaption}

\pagestyle{empty} % use if page numbers not wanted

%%%% ADDED BY THE AUTHORS
\usepackage{bm}

%% as per the requirement new theorem styles can be included as shown below
\theoremstyle{thmstyleone}%
\newtheorem{theorem}{Theorem}%  meant for continuous numbers
%%\newtheorem{theorem}{Theorem}[section]% meant for sectionwise numbers
%% optional argument [theorem] produces theorem numbering sequence instead of independent numbers for Proposition
%\newtheorem{proposition}[theorem]{Proposition}% 
\newtheorem{proposition}{Proposition}% to get separate numbers for theorem and proposition etc.
\newtheorem{corollary}{Corollary}% 
\newtheorem{lemma}{Lemma}% 

\theoremstyle{thmstyletwo}%

\theoremstyle{thmstylethree}%

\raggedbottom
%%\unnumbered% uncomment this for unnumbered level heads

\newcommand{\bgamma}{\bm{\gamma}}
\newcommand{\bGamma}{\bm{\Gamma}}
\newcommand{\bpi}{\bm{\pi}}
\newcommand{\btheta}{\bm{\theta}}
\newcommand{\bomega}{\bm{\omega}}

\newcommand{\R}{\mathcal{R}}
\newcommand{\T}{\mathcal{T}}

\newcommand{\bx}{\bm{x}}
\newcommand{\by}{\bm{y}}
\newcommand{\bz}{\bm{z}}

\newcommand{\bA}{\bm{A}}
\newcommand{\bD}{\bm{D}}
\newcommand{\bP}{\bm{P}}
\newcommand{\bU}{\bm{U}}
\newcommand{\bV}{\bm{V}}
\newcommand{\bX}{\bm{X}}
\newcommand{\bZ}{\bm{Z}}

\begin{document}

  \title{\bf Empirical Bayes for Data integration}
  \author{Paul Rognon-Vael,
          Pompeu Fabra University\\
          and David Rossell, 
          Pompeu Fabra University\\
}

\bigskip
\begin{abstract}
We discuss the use of empirical Bayes for data integration, in the sense of transfer learning. Our main interest is in settings where one wishes to learn structure (e.g., feature selection) and one only has access to incomplete data from previous studies, such as summaries, estimates or lists of relevant features. We discuss differences between full Bayes and empirical Bayes, and develop a computational framework for the latter. We discuss how empirical Bayes attains consistent variable selection under weaker conditions (sparsity and betamin assumptions) than full Bayes and other standard criteria do, and how it attains faster convergence rates. Our high-dimensional regression examples show that fully Bayesian inference enjoys excellent properties, and that data integration with empirical Bayes can offer moderate yet meaningful improvements in practice.
\end{abstract}

\noindent%
{\it Keywords:}  Data integration, Transfer learning, Empirical Bayes, Frequentist properties, Variable selection
\vfill

  \maketitle

\section{Introduction}\label{sec:intro}

Data integration is the task of combining multiple datasets or sources of information into a single analysis. 
It has the potential to improve inference and/or predictions,
particularly in settings where data are scarce relative to the problem's complexity (number of parameters, models or questions being considered).
This is a standard statistical problem that is receiving renewed attention within the context of transfer learning.
Therein, one wishes to transfer findings obtained from a setting where one has large or cheaply obtained data, to another setting where data are scarcer or costlier.
We discuss the role of empirical Bayes methodology for data integration.
In contrast to parameter estimation or prediction problems, which have been better studied, we focus on structural learning.
By structural learning we refer to model selection tasks where one seeks to learn about structural features of the data-generating truth, say whether a parameter is zero, and we use variable selection as a canonical example.

There are two main scenarios where one may seek to integrate data, illustrated in Figure \ref{fig:plate}. 
The first (left panel) is when one has full access to multiple datasets. For example, one has a separate dataset for multiple cancer types, or for multiple schools.
%Or, one has data from multiple schools and wants to learn common traits for low academic performance.
We refer to this scenario as {\it data integration with full data}.
There is a well-developed literature for this setting. A popular approach is to use hierarchical models that share information across datasets, see for instance \cite{gelman:2007} for a monograph. % on mixed effects models.
For data integration in covariance models see \cite{avalospacheco:2022,chandra:2024}, in clustering see \cite{lock:2013} and in functional data analysis see \cite{yang_jingjing:2015}.
See also \cite{jacob:2017,nott:2023} and references therein for data integration strategies when the assumed model is misspecified.

Our main focus is in the second scenario (Figure \ref{fig:plate}, right), where one is interested in a single dataset $\by$, and one only has full access to that data. One also has other information that could be useful, for example summaries of previous datasets (denoted $\bZ=(\bz_1,\ldots,\bz_q)$ in Figure \ref{fig:plate}).
This second scenario is very common and more challenging, because one does not have full access to the previous data and the provided summaries may be poor.
%The key characteristic is that one has full access to the main data of interest but only restricted access to the external data, via some sort of summary 
An application that we illustrate later is a gene expression study in human colon cancer, where one has a list of genes that were found to be important in mice. We have full access to the human data, but only the list of genes from the mice experiment.
We refer to this scenario as {\it data integration with meta-covariates},
i.e. with summaries $\bZ$ that may be informative as to what parameters are more likely to be important. %play an important role in the main data of interest.
%Although this scenario received less attention than data integration with full data, there is a literature on the topic.
%To mention some examples, 
Literature focusing on this scenario includes \cite{stingo:2011, cassese:2014}, who proposed Bayesian variable selection %for gene expression 
where prior inclusion probabilities depend on biological knowledge and meta-covariates. 
\cite{chen_tinghuei:2021} predicted disease outcomes by allowing LASSO penalties to depend on functional annotations. 
In causal analysis, \cite{belloni:2012,antonelli:2021,papaspiliopoulos:2025} 
penalized the inclusion of control covariates using their degree of association with the treatment of interest.
See \cite{vandewiel:2019} for an overview on the use of meta-covariates in regression.
Beyond regression, in Gaussian graphical models \cite{peterson:2016} and \cite{jewson:2023} used node and edge covariates to drive prior edge inclusion probabilities.
In factor models, \cite{schiavon:2022} used meta-covariates to determine non-zero loadings.

\begin{figure}
\begin{center}
\begin{tabular}{cc}
\includegraphics[width=0.49\textwidth]{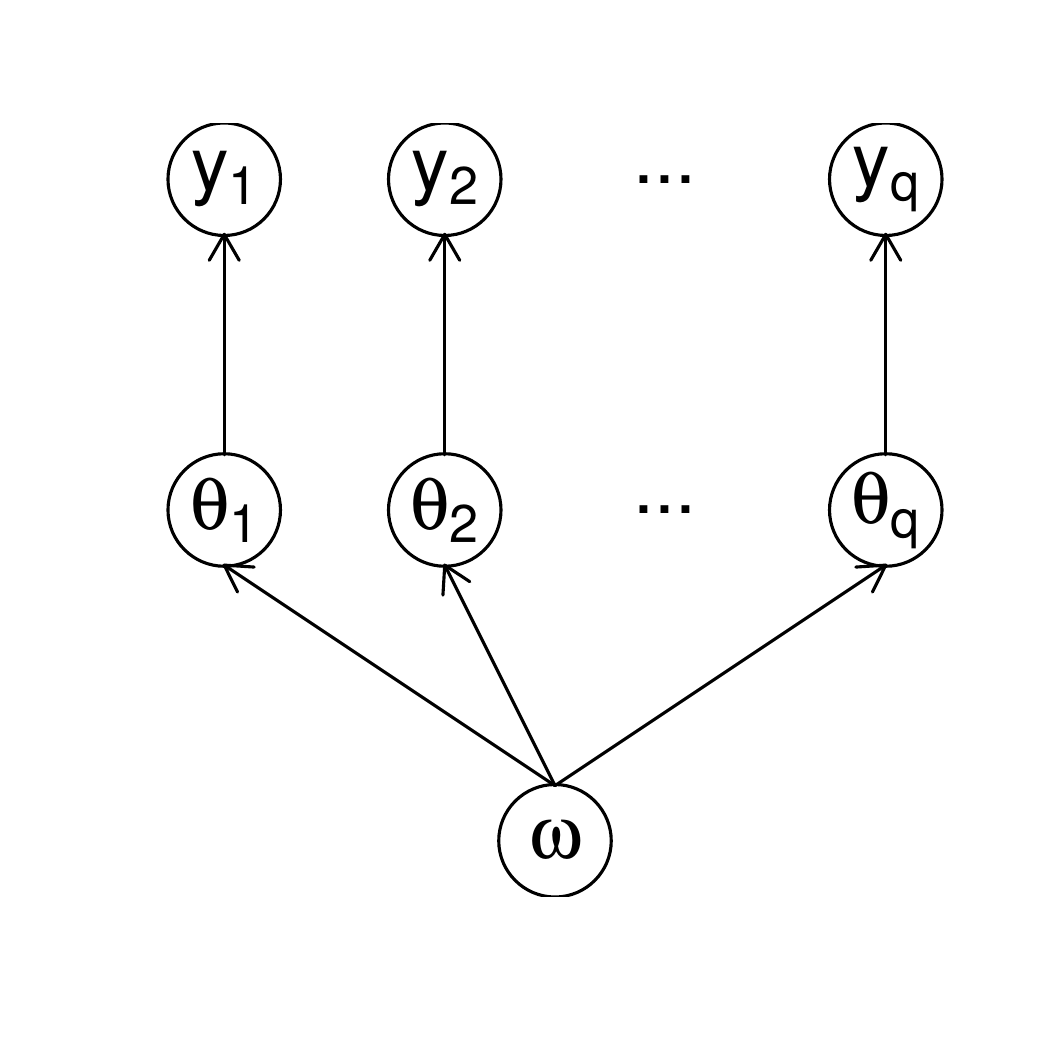} &
\includegraphics[width=0.49\textwidth]{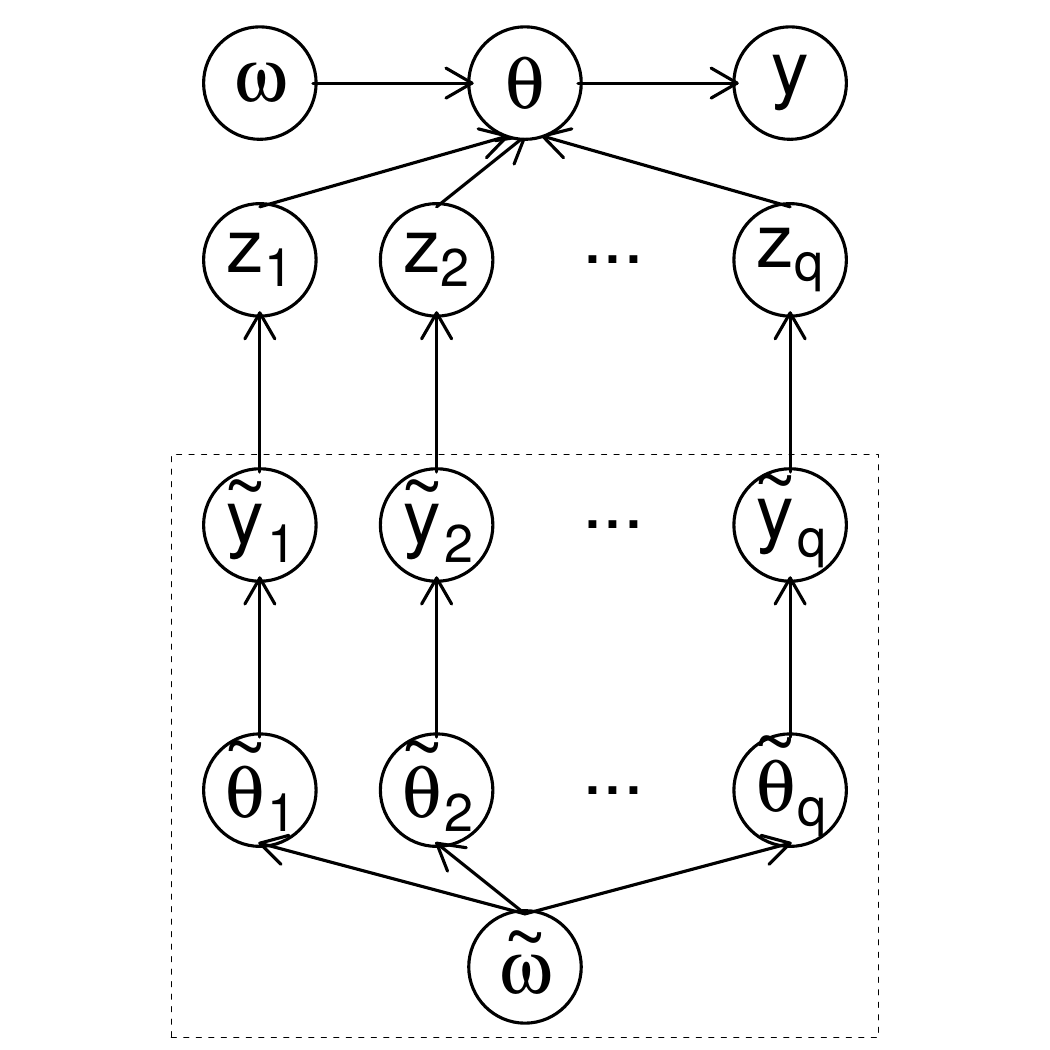}
\end{tabular}
\end{center}
\caption{Data integration with full data (left) and with meta-covariates (right). With full data one observes $q$ datasets with underlying parameters $\btheta_1,\ldots,\btheta_q$. The hyper-parameter $\bomega$ allows sharing information.
With meta-covariates one has access to summaries $\bz_1,\ldots,\bz_q$ extracted from unobserved data $\tilde{\by}_1,\ldots,\tilde{\by}_q$. These summaries inform the parameter $\btheta$ governing the dataset of interest $\by$}
\label{fig:plate}
\end{figure}

Bayesian statistics provides a natural framework to integrate prior data such as meta-covariates $\bZ$ into the analysis.
In principle, it suffices to set a prior on the parameters of interest ($\btheta$ in Figure \ref{fig:plate}, right panel) that depends on $\bZ$ and some hyper-parameters $\bomega$.
There may be situations where an expert is willing to set such a prior subjectively. For example a biologist believes that if a gene is known to be involved in gastrointestinal cancer, then it has 0.9 prior probability of also being involved in colorectal cancer.
This situation is rare. Experts do not have such a precise knowledge, specially when there are $q >1$ meta-covariates.
This is important: if the prior grossly failed to reflect the true effect of $\bZ$ (i.e., the expert provided a poor prior for $\bomega$), then one may obtain worse inference than if $\bZ$ were not used at all.

We explain how empirical Bayes can be useful for data integration. 
By modeling the relation between the meta-covariates $\bZ$ and the parameters $\btheta$ governing the data $\by$,
one may learn $\bomega$ and hence integrate $\bZ$ in a way that's less sensitive to misspecification.
Empirical Bayes can be seen as a frequentist method where the model-fitting criteria are based on Bayesian thinking, and as such it can help reconcile long-standing debates between frequentist and Bayesian inference, see for example \cite{efron:2024}. 

The paper is structured as follows.
Section \ref{sec:ebayes} reviews empirical Bayes and the frequentist validity of Bayesian methods, in the setting of data integration with full data.
%We also point to some literature discussing the benefits of empirical Bayes, which largely focus on parameter estimation and the full data setting.
In Section \ref{sec:framework} we outline an empirical Bayes framework for data integration with meta-covariates.
For concreteness, we focus on variable selection, but the setup extends directly to more general settings.
Section \ref{sec:theory} provides theory on high-dimensional linear regression with meta-covariates.
We discuss how a simple form of meta-covariate (partitioning covariates into blocks) leads to consistent model selection (identifying the truly non-zero parameters) in settings where it is otherwise mathematically impossible (milder sparsity or betamin conditions), and that it improves consistency rates.
%By model selection (sometimes referred to support recovery) we mean identifying the subset of covariates truly associated with $\by$. 
Sections \ref{sec:ebayes}-\ref{sec:framework} are not methodologically new,
but they reflect the views of the authors on using empirical Bayes for data integration.
The main novelty in Section \ref{sec:theory} is adapting results from \cite{rognon:2025} to a Bayesian setting.
Section \ref{sec:computation} includes new computational results to estimate the hyper-parameters $\hat{\bomega}$  that apply to very general settings, and in particular beyond regression. 
Evaluating the log marginal likelihood $p(\by \mid \bomega)$ requires a sum over models (e.g. $2^p$ in regression), but its gradient can be expressed as a sum over $p$ terms, and we propose an Expectation-Maximization (EM, \cite{dempster:1977}) algorithm. %, for which we derive simplified gradients and hessians.
In Section \ref{sec:results} we present simulations and an application to colon cancer data.
These illustrate that fully Bayesian inference enjoys excellent properties and that its results are often aligned with those from empirical Bayes. However in some settings, including our colon cancer example, there can be moderate but non-negligible differences between the two frameworks.
Section \ref{sec:discussion} concludes.

\section{Empirical Bayes} \label{sec:ebayes}

\begin{figure}
\begin{center}
\begin{tabular}{cc}
\includegraphics[width=0.5\textwidth]{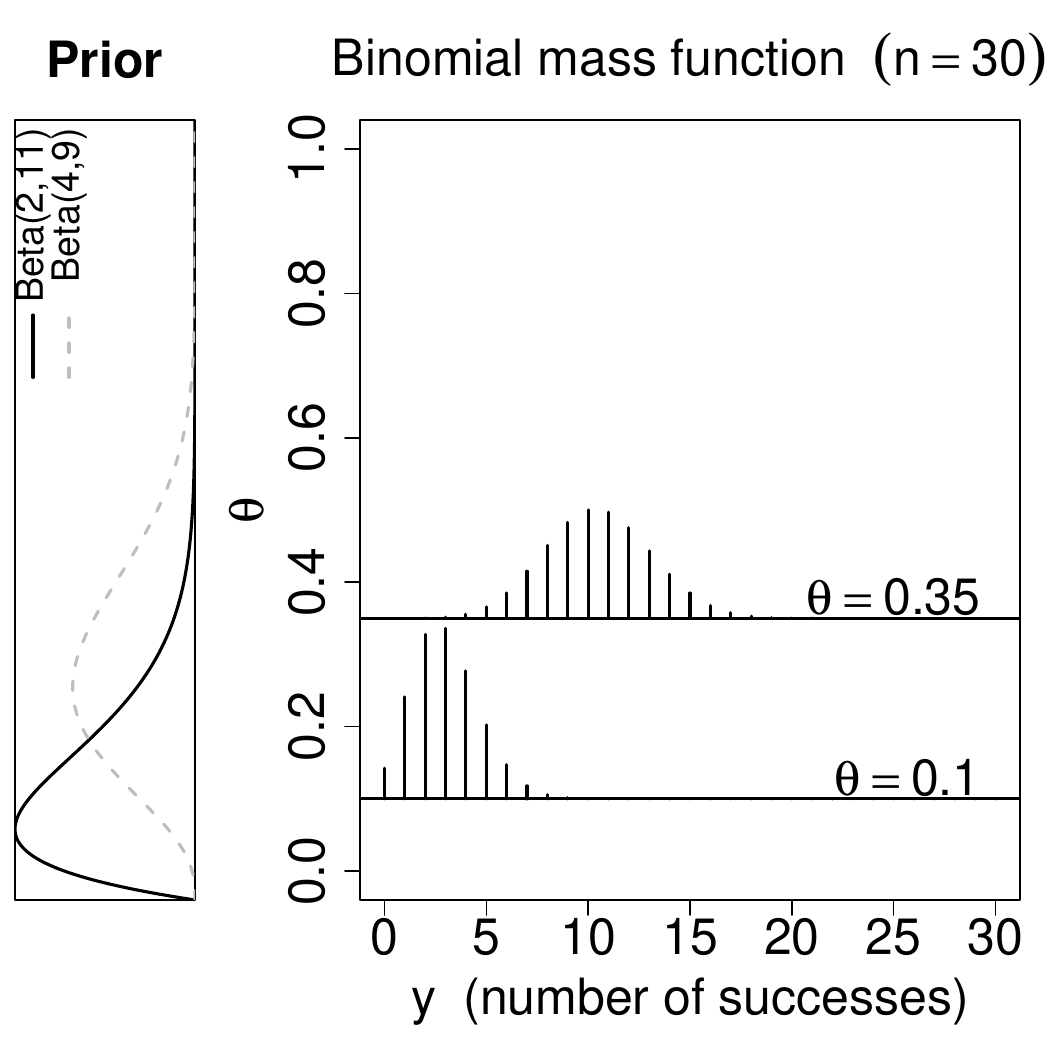} &
\includegraphics[width=0.5\textwidth]{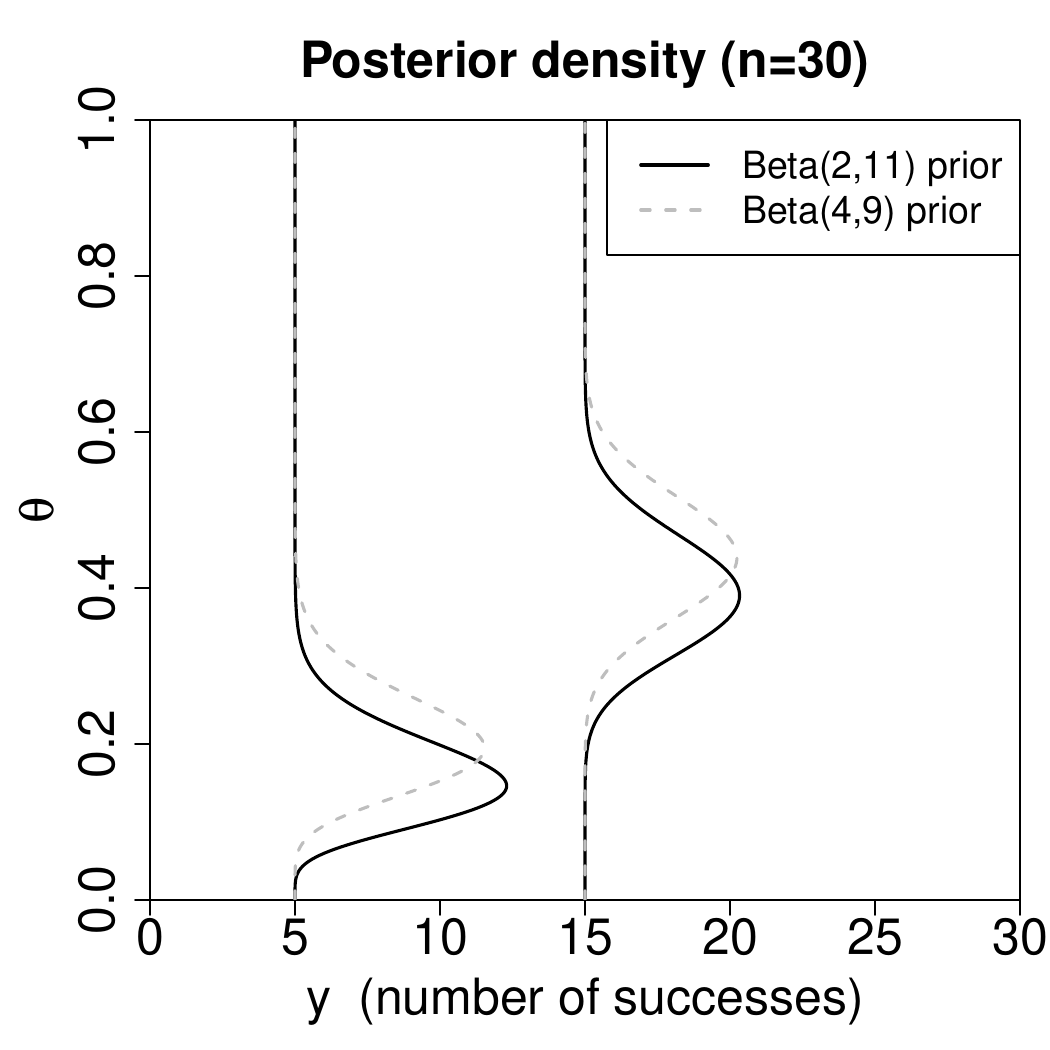} \\
\includegraphics[width=0.5\textwidth]{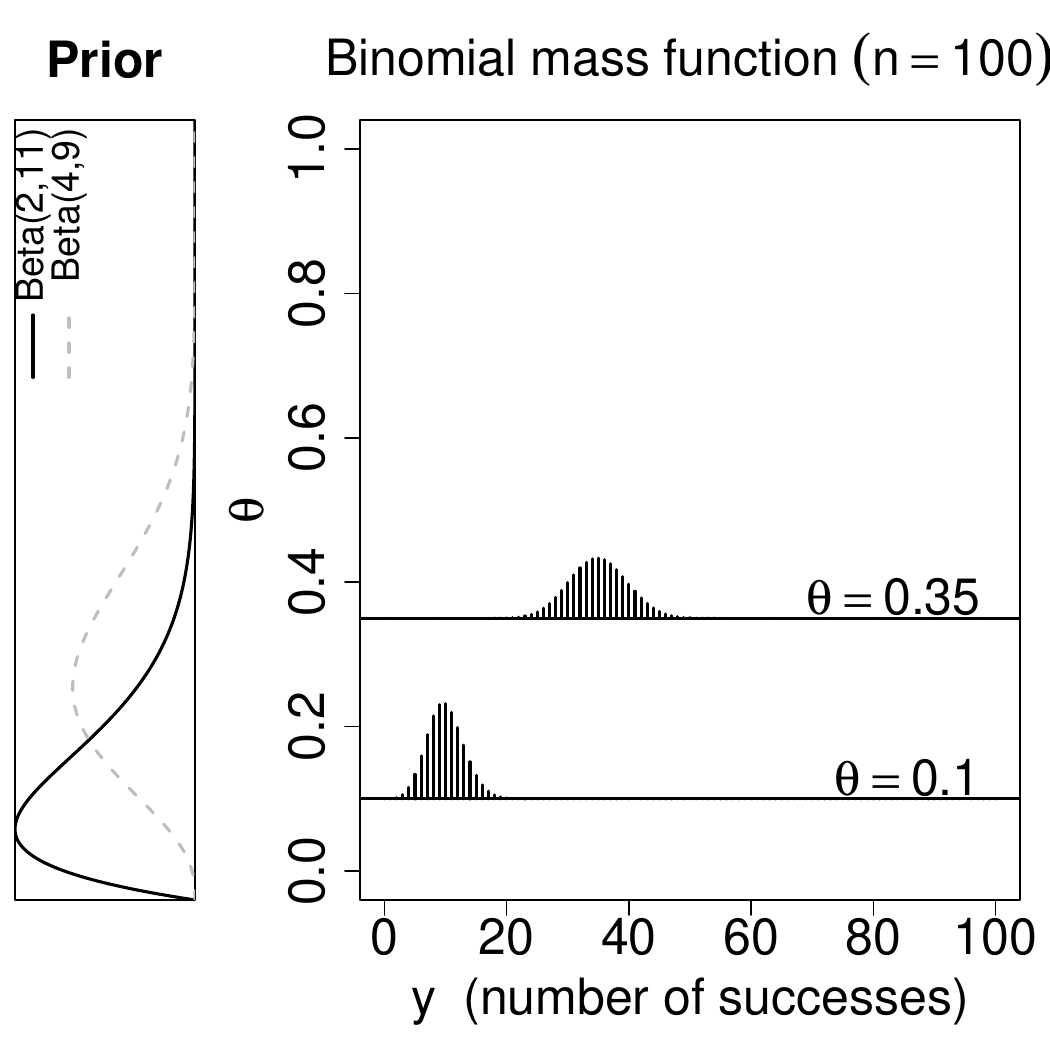} &
\includegraphics[width=0.5\textwidth]{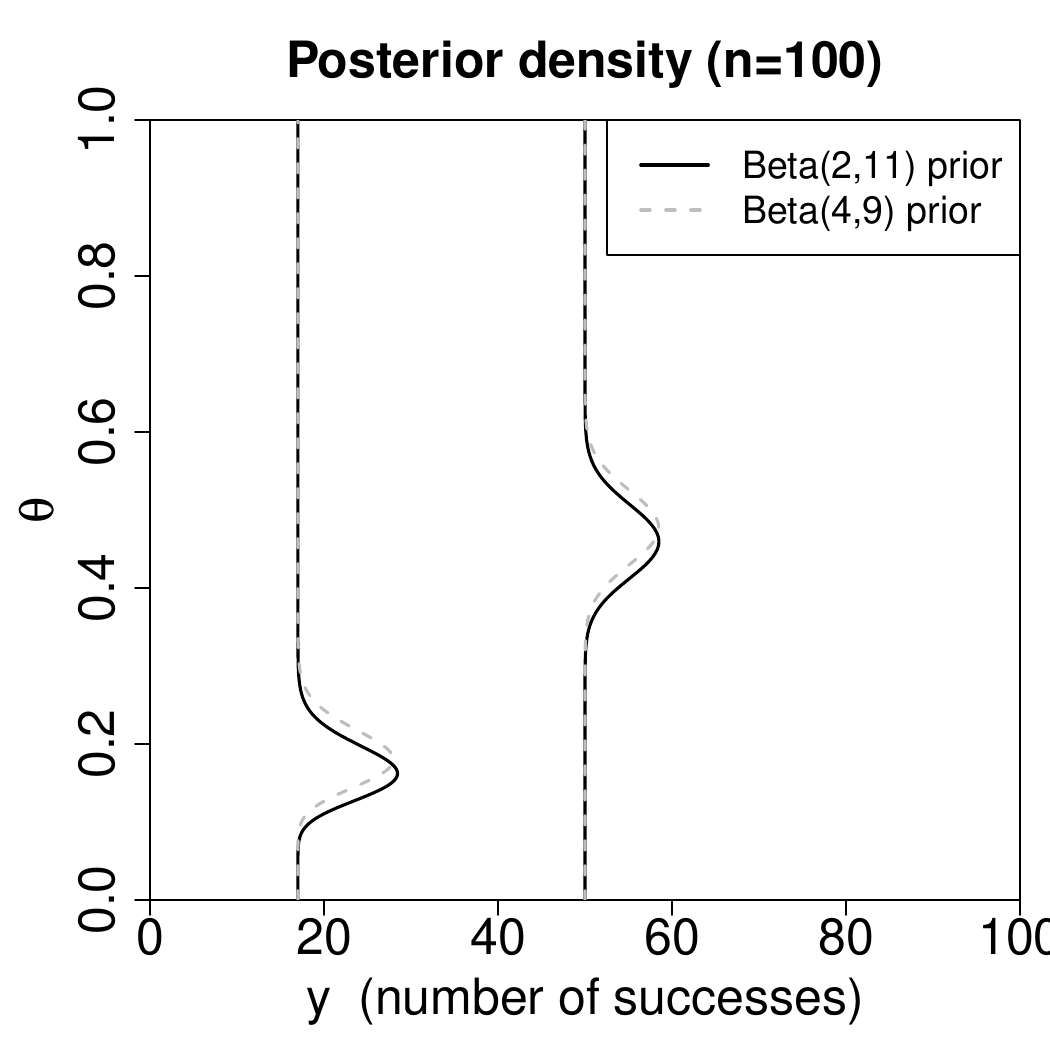}
\end{tabular}
\end{center}
\caption{Multiple Binomial experiments are conducted with $n=30$ (top) and $n=100$ (bottom). The left panels display two prior distributions and the Binomial distribution for success probabilities $\theta=0.1$ and $\theta=0.35$. The right panels display the posterior distribution from experiments with $y=5$ and $y=15$ (top), and $y=17$ and $y=50$ (bottom)}
\label{fig:frequentist_bayes}
\end{figure}

We start by discussing a purely frequentist interpretation of Bayesian inference.
The interpretation holds exactly when the prior matches a certain data-generating truth, and asymptotically under certain settings.
We discuss how this motivates using empirical Bayes to try to improve the frequentist calibration of fully Bayesian inference.
We end by mentioning pitfalls of empirical Bayes and how to ameliorate them.

Interpreting Bayesian inference from a frequentist standpoint may be easiest with an example.
Consider a setting where repeated experiments are conducted, with data from experiment $b$ generated as $\by^{(b)} \sim p(\by^{(b)} \mid \btheta^{(b)})$, and where the data-generating parameters $\btheta^{(b)}$ are drawn from some distribution $p(\btheta)$.
The usual frequentist setting envisions repeated studies with the same parameter $\btheta^{(1)}=\btheta^{(2)}=\ldots$, whereas we envision a frequentist setting where each experiment has its own data-generating $\btheta^{(b)} \sim p(\btheta)$.
In a Bayesian analysis, one also specifies a prior $\pi(\btheta)$, which in general will not match $p(\btheta)$.
For example, a pharmaceutical company conducts repeated experiments over time to assess the efficacy of drugs $b=1,2,\ldots$.
One observes the number of cured patients $y^{(b)} \sim \mbox{Beta}(n^{(b)}, \theta^{(b)})$ and the parameter of interest is the cure probability $\theta^{(b)}$.
Suppose that, unknown to the company, truly $\theta^{(b)} \sim \mbox{Beta}(2,11)$ independently across $b$, denoted by $p(\theta)$ earlier.
Figure \ref{fig:frequentist_bayes} shows an illustration for $n=30$ and $n=100$.
For simplicity, it assumes that $n^{(1)}= n^{(2)}= \ldots = n$, and that the experiments are independent given $\theta^{(b)}$.
The left panels show the $\mbox{Beta}(2,11)$ density, two realizations $\theta^{(b)}=0.1$ and $0.35$, and the corresponding Binomial probability mass functions.
Suppose that one sets independent  $\mbox{Beta}(a, l)$  priors to each $\theta^{(b)}$, denoted $\pi(\theta^{(b)})$ earlier, for some  $a,l>0$. % $\theta^{(b)} \sim \mbox{Beta}(4,9)$ priors to each $b$.
The right panels show the posterior distributions $\pi(\theta^{(b)} \mid y^{(b)})$ for two putative $y^{(b)}$ if the data analyst were to use the $\mbox{Beta}(2,11)$ prior (black solid lines) matching the data-generating truth. These posteriors are the frequentist distribution of the $\theta^{(b)}$'s conditional on the observed $y^{(b)}$, that is $\pi(\theta^{(b)} \mid y^{(b)}) = p(\theta^{(b)} \mid y^{(b)})$ where $p(\theta^{(b)} \mid y^{(b)}) \propto p(y^{(b)} \mid \theta^{(b)}) p(\theta^{(b)})$. No Bayesian interpretation is needed: $(\theta^{(b)},y^{(b)})$ have a frequentist distribution, $p(\theta^{(b)} \mid y^{(b)})$ is a well-defined frequentist object,
and posterior inference is well-calibrated from a frequentist standpoint.

Figure \ref{fig:frequentist_bayes} also shows another posterior that only has a Bayesian interpretation, obtained by setting the prior $\pi(\theta^{(b)})$ to a $\mbox{Beta}(4,9)$ (dashed grey lines).
The new posterior no longer has a direct frequentist interpretation under the data-generating $p(\btheta)$,
but as $n$ grows (bottom panels) both posteriors converge to the same answer by virtue of the Bernstein-von Mises theorem (\cite{vandervaart:1998}, Theorem 10.1).
That is, even if one sets the wrong prior $\pi(\theta^{(b)})$ (in a frequentist sense, that is not matching $p(\theta^{(b)})$), when certain limiting results apply the posterior asymptotically has a valid frequentist interpretation.
However, such limiting theorems typically apply to finite-dimensional models, and high-dimensional settings must be handled with care.

The empirical Bayes framework \citep{robbins:1992}, replaces a fully Bayesian prior $\pi(\btheta \mid \bomega)$ %(possibly completed by a hyper-prior $\pi(\bomega)$) 
by a data-based $\pi(\btheta \mid \hat{\bomega})$,
where $\hat{\bomega}$ is usually obtained via  maximizing the  marginal likelihood (see Sections \ref{sec:framework} and \ref{sec:computation}).
The hope is that $\pi(\btheta \mid \hat{\bomega})$ is close to the data-generating $p(\btheta)$ and hence has a better frequentist calibration than $\pi(\btheta \mid \bomega)$ for a possibly poorly chosen $\bomega$.
%\cite{robbins:1992} considered parameter estimation settings akin to Figure \ref{fig:frequentist_bayes}, where one learns about $\btheta^{(b)}$ by using data $\by^{(b')}$ from other experiments $b' \neq b$, by letting $\hat{\bomega}$ and hence also $\pi(\btheta^{(b)} \mid \by, \hat{\bomega})$ depend on such $\by^{(b')}$.
For a review of empirical Bayes in estimation see \cite{carlin:2000} (Chapter 3)
and for  theory see \cite{butucea:2009,efron:2016}.
%see \cite{butucea:2009} for learning rates when estimating $\pi(\btheta)$ non-parametrically, and \cite{efron:2016} for their parametric counterpart.
As mentioned, here we are more interested in data integration and structural learning than in estimation. 

Under suitable assumptions, if $\pi(\btheta \mid \bomega)$ is a parametric prior, then, asymptotically, $\hat{\bomega}$ maximizes the prior density of the data-generating $\btheta^*$ %, i.e. $\hat{\bomega}= \arg\max_{\bomega} \pi(\btheta^* \mid \bomega)$  
\citep{petrone:2014}. See \cite{rousseau:2017} for non-parametric extensions.
A natural question is whether a fully Bayesian framework shares such a frequentist interpretation. The answer is that not always.
In some settings the empirical Bayes posterior $\pi(\btheta \mid \by, \hat{\bomega})$ asymptotically matches $\pi(\btheta \mid \by)$ from a fully Bayesian analysis \citep{petrone:2014}, extending the Bernstein-von Mises argument given above, whereas in others empirical and full Bayes cannot be reconciled.
In Section \ref{sec:framework} we discuss intuitively why empirical and full Bayes may give different answers in our data integration setting, in Section \ref{sec:theory} we give precise conditions when this occurs, and in Section \ref{sec:results} we show via examples that the differences are usually not large but that they can be non-negligible.
%See also \cite{scott:2010} for a discussion in the multiple testing and variable selection settings.

We remark that there are pitfalls to empirical Bayes.
A first concern for Bayesians is that, by setting a data-dependent $\pi(\btheta \mid \hat{\bomega})$, Bayesian updating is no longer coherent. Coherence refers to the standard desideratum that, if one observes the data $\by=(\by_1,\by_2)$ as two separate batches, the posterior $\pi(\btheta \mid \by) \propto p(\by \mid \btheta) \pi(\btheta)$ is equal to the posterior under the sequential updating
$\pi(\btheta \mid \by) \propto p(\by_2 \mid \btheta) \pi(\btheta \mid \by_1)$.
If $\pi(\btheta)$ is replaced by $\pi(\btheta \mid \hat{\bomega})$, where  $\hat{\bomega}$  depends on $\by$, such coherence no longer holds.
The second concern is more critical in that it also applies to non-Bayesians. The empirical Bayes estimate $\hat{\bomega}$ may collapse onto degenerate solutions, e.g. where $\pi(\btheta \mid \hat{\bomega})$ assigns prior probability 0 or 1 to a parameter being zero, resulting in poor inference.
We discuss practical solutions to such degeneracies in Section \ref{sec:framework}.

\section{Framework} \label{sec:framework}

We lay out an empirical Bayes framework for data integration via meta-covariates, akin to \cite{vandewiel:2019}.
Consider a setting where the main data records an outcome $\by=(y_1,\ldots,y_n)^T \in \mathbb{R}^n$ for $n$ individuals 
and $p$ covariates $\bx_i \in \mathbb{R}^p$ for $i=1,\ldots,n$. 
For each covariate $j=1,\ldots,p$, one also has a vector of $q$ meta-covariates $\bz_j \in \mathbb{R}^q$.
We denote by $\bX$ the $n \times p$ covariates matrix with $i^{th}$ row equal to $\bx_i$, and by $\bZ$ the $p \times q$ meta-covariates matrix with $\bz_j$ as its $j^{th}$ row.
The framework can be extended to other settings such as covariance models \citep{jewson:2023}, but for simplicity we focus on regression.

Before proceeding,  we discuss examples to make ideas concrete.
Suppose that one wants to study the association between colon cancer survival ($y_i$) and gene expression ($\bx_i$), and one has three lists of genes found to be associated with gastrointestinal, endometrial and ovarian cancer.
Here, $\bz_j \in \{0,1\}^3$ indicates whether gene $j$ is included in these 3 lists. 
The meta-covariates $\bz_j$ may contain further information:
biological pathways in which each gene features,
whether the covariate is a clinical or genomic variable, etc.
As another example, a school's director wants to identify covariates $\bX$ associated with academic performance $\by$. The director has reports $\bZ$ on estimated covariate effects in other schools, but does not have the full data from those schools.
%Yet another example, is where $\bX$ are principal components extracted from some covariates and $\bz$ records the fraction of explained variance by each column in $\bX$. In many applications, one expects that the components that explain more variance are more likely to be associated with the outcome.

\subsection{Empirical vs. full Bayes} \label{ssec:ebayes_vs_fullbayes}

The key idea is to set a prior
$\pi(\btheta \mid \bZ, \bomega)$ that depends on the meta-covariates $\bZ$ via hyper-parameters $\bomega$,
and where $\bomega$ is estimated via empirical Bayes.
For simplicity, we denote $\pi(\btheta \mid \bomega) := \pi(\btheta \mid \bZ, \bomega)$.
We are interested in structural learning where one considers many different models for $\btheta$.
We denote such models generically by $\bgamma \in \Gamma$, where $\Gamma$ is the set of models.
For example, in variable selection we have $\btheta=(\theta_1,\ldots,\theta_p)^T$ regression parameters
and the models are indexed by $\bgamma=(\gamma_1,\ldots,\gamma_p) \in \{0,1\}^p$, where $\gamma_j= \mbox{I}(\theta_j \neq 0)$ is an inclusion indicator for covariate $j=1,\ldots,p$.
One may also have parameters that are present in all models, such as the error variance or covariates not undergoing selection, but we omit these from the notation for simplicity.

Formally, each model $\bgamma$ has its own likelihood $p(\by \mid \btheta, \bgamma)$, and the joint Bayesian model is
$
p(\by, \btheta, \bgamma)= p(\by \mid \btheta, \bgamma) \pi(\btheta \mid \bgamma) \pi(\bgamma),
$
where $\pi(\bgamma)$ is a given model prior.
Using Bayes formula, posterior model probabilities are $p(\bgamma \mid \by) \propto p(\by \mid \bgamma) \pi(\bgamma)$, where
\begin{align}
p(\by \mid \bgamma)= \int p(\by \mid \btheta, \bgamma) d\Pi(\btheta \mid \bgamma) 
\nonumber
\end{align}
is the marginal likelihood of model $\bgamma$, and $\Pi(\btheta \mid \bgamma)$ the prior measure on $\btheta$ under $\bgamma$.

To achieve data integration within full Bayes, one would replace $\pi(\bgamma)$ by some $\pi(\bgamma \mid \bomega)$ that depends on the meta-covariates $\bZ$, and then set a hyper-prior $\pi(\bomega)$.
%For example, one may set prior inclusion probabilities $\pi(\gamma_j=1 \mid \bomega)$ that depend on the $j^{th}$ parameter's covariates $\bz_j$ via a logistic regression with linear predictor $\bz_j^T \bomega$.
Although much of our discussion and some of our results are more general, to fix ideas consider $\pi(\bgamma \mid \bomega)= \prod_{j=1}^p \mbox{Bern}(\gamma_j; m_j(\bomega))$ for a given differentiable $m_j(\bomega)$, such as the inverse logit function $m_j(\bomega)= 1 / (1 + e^{-\bz_j^T \bomega})$. In what follows, we indicate where this prior structure is explicitly assumed.
The issue is that such a fully Bayesian framework gives posterior probabilities $\pi(\bgamma \mid \by) \propto p(\by \mid \bgamma) \pi(\bgamma)$, where
\begin{align}
 \pi(\bgamma)= \int \pi(\bgamma \mid \bomega) d\Pi(\bomega).
\nonumber
\end{align}
By definition $\pi(\bgamma)$ does not depend on $\by$, it is a quantity defined a priori,  and hence by definition it is not data-adaptive.   %, and hence it cannot take advantage of learning $\hat{\bomega}$ from $\by$.
That is, although $\pi(\bgamma)$ depends on meta-covariates, this dependence is set entirely a priori.
%For example, suppose that one wants to set different prior inclusion probabilities $\pi(\gamma_j=1)$ for each $j$, but one wishes to learn them from the data, and sets an exchangeable $\pi(\bgamma \mid \bomega)$.
%This task is not possible because $\pi(\bgamma)$ is exchangeable (it's a mixture of exchangeable $\pi(\bgamma \mid \bomega)$).
 This statement may sound surprising, given the popular practice of placing priors on hyper-parameters such as $\bomega$, with the hope of learning $\bomega$ and guiding posterior inference via $\pi(\bomega \mid \by)$. Specifically, since
\begin{align}
 \pi(\bgamma \mid \by) = \int \pi(\bgamma \mid \by, \bomega) \pi(\bomega \mid \by) d\bomega
\label{eq:pp_fullbayes}
\end{align}
where $\pi(\bomega \mid \by)$ depends on $\by$, it should follow that $\pi(\bgamma \mid \by)$ adapts to learning $\bomega$ from $\by$. 
This interpretation of \eqref{eq:pp_fullbayes} is misleading.
As noted 
$\pi(\bgamma \mid \by) \propto p(\by \mid \bgamma) \pi(\bgamma)$
where $\pi(\bgamma)$ does not depend on $\by$, hence $\pi(\bgamma \mid \by)$ can only adapt to $\by$ via $p(\by \mid \bgamma)$.
To reconcile this observation with \eqref{eq:pp_fullbayes}, using that
$\pi(\bgamma \mid \by, \bomega)= p(\by \mid \bgamma, \bomega) \pi(\bgamma \mid \bomega) / p(\by \mid \bomega)$
and that $\pi(\bomega \mid \by)= p(\by \mid \bomega) \pi(\bomega) / p(\by)$,
re-write \eqref{eq:pp_fullbayes} as $\pi(\bgamma \mid \by) =$
\begin{align}
\int \frac{p(\by \mid \bgamma, \bomega) \pi(\bgamma \mid \bomega)}{p(\by \mid \bomega)}
\frac{p(\by \mid \bomega) \pi(\bomega)}{p(\by)} d\bomega
=
\frac{\int p(\by \mid \bgamma, \bomega) \pi(\bgamma \mid \bomega) \pi(\bomega) d\bomega}{p(\by)}
= \frac{p(\by \mid \bgamma) \pi(\bgamma)}{p(\by)}
\label{eq:pp_fullbayes2}
\end{align}
since $p(\by \mid \bgamma, \bomega)= p(\by \mid \bgamma)$.
Critically, in the left-hand side of \eqref{eq:pp_fullbayes2} the term $p(\by \mid \bomega)$ coming from $\pi(\bomega \mid \by)$, which is how one learns hyper-parameters from $\by$, cancels with $p(\by \mid \bomega)$ coming from $\pi(\bgamma \mid \bomega, \by)$.
That is, is true that $\pi(\bomega \mid \by)$ learns about $\bomega$ from $\by$, but the effect of such learning is cancelled in $\pi(\bgamma \mid \by)$ because $p(\by \mid \bomega)$ also features in $\pi(\bgamma \mid \bomega, \by)$.
These arguments are not restricted to Bayesian model selection, they apply to any hierarchical model where one sets a prior on hyper-parameters.

Here we consider an empirical Bayes framework where one sets $\pi(\bgamma \mid \hat{\bomega})$ that depends on $\bZ$ and where $\hat{\bomega}$ is learned from $\by$.
The empirical and full Bayes solutions may not agree even as $n \to \infty$ when $\pi(\bgamma \mid \hat{\bomega})$ differs sufficiently from a fully Bayesian $\pi(\bgamma)$, e.g, in settings where the (marginal) likelihood $p(\by \mid \bgamma)$ does not fully dictate inference asymptotically (e.g. high-dimensional settings).
%In other situations where the combination of $p(\by \mid \bgamma)$ and $p(\bgamma)$ are sufficiently informative, the fully Bayesian solution matches the empirical Bayes solution asymptotically \citep{petrone:2014}.
 We do not argue that fully Bayesian frameworks lead to poor inference (quite the contrary).
Further, hierarchical priors $\pi(\bgamma \mid \bomega) \pi(\bomega)$ are useful to induce prior dependence in $\bgamma$, which is often helpful in applications. Our point is that empirical Bayes' plug-in $\pi(\bgamma \mid \hat{\bomega})$ can behave like a data-adaptive penalty, and in high dimensions the difference between integrating out $\bomega$ and plugging in $\hat{\bomega}$ can matter.

We remark that in the framework above only the model prior $\pi(\bgamma \mid \bomega)$ depends on $\bZ$, but it is also possible to consider a prior on parameters $\pi(\btheta \mid \bgamma, \bomega)$ that depends on $\bomega$.
For example, \cite{jewson:2023} considered a spike-and-slab graphical model where the slab's mean and variance depend on $\bZ$.
The theory of Section \ref{sec:theory} can be extended to such settings, but the computational algorithms in Section \ref{sec:computation} get more involved, and in particular some of the simplified gradients given there no longer apply.

\subsection{Hyper-parameter estimation in empirical Bayes} \label{ssec:ebayes_hyperpar}

A possible strategy to obtain $\hat{\bomega}$ is to maximize the log-marginal likelihood
\begin{align}
\log p(\by \mid \bomega)=
\log \left( \sum_{\gamma \in \Gamma} p(\by \mid \bgamma, \bomega) \pi(\bgamma \mid \bomega) \right).
\nonumber
\end{align}
However, in our setting maximizing $p(\by \mid \bomega)$ may lead to degenerate priors,
e.g., $\hat{\bomega}$ such that prior inclusion probabilities $\pi(\gamma_j=1 \mid \by, \hat{\bomega})$ are either 1 or 0 for all covariates $j$, see Section \ref{ssec:comp_gradientbased}.
We discuss three practical fixes. First, one may constrain the support of $\bomega$ to exclude the set of values $\overline{\Omega}$ that result in a degenerate prior. For example, under $\pi(\bgamma \mid \bomega)= \prod_{j=1}^p \mbox{Bern}(\gamma_j; 1/(1+e^{-\bz_j^T \bomega})$, one could define $\overline{\Omega}$ such that $1/(1+e^{-\bz_j^T \bomega}) \in [0.001,0.999]$ for all $\bomega \not\in \overline{\Omega}$ and all $j$. Second, one can set a minimally informative prior $\pi(\bomega)$ that assigns zero prior probability to $\overline{\Omega}$, and define $\hat{\bomega}$ to be the posterior mode maximizing $\log p(\by \mid \bomega) + \log \pi(\bomega)$, akin to \cite{gu_mengyang:2018}.
Third, one can consider a two-step solution where one does not fully maximize $\log p(\by \mid \bomega)$,  in a manner such that $\hat{\bomega} \not\in \overline{\Omega}$. 
The theory in Section \ref{sec:theory} discusses such a two-step solution, which we use there because it facilitates theoretical analysis.
The results therein show that, for the particular case where one has a single meta-covariate and a particular initial choice of $\bomega$, the two-step solution asymptotically avoids degeneracies, that is $\hat{\bomega} \not\in \overline{\Omega}$. The two-step strategy is in principle applicable beyond single meta-covariate settings: one obtains an initial $\hat{\bomega}^{(0)} \not\in \overline{\Omega}$ and takes a single optimization step in a manner that ensures that the updated $\hat{\bomega} \not\in \overline{\Omega}$. In practice, we advocate for setting minimally informative $\pi(\bomega)$ that prevents $\hat{\bomega}$ from taking extreme values, also for finite $n$.
We next discuss such a prior, based on the idea that prior inclusion probabilities should not be too close to 0 or 1, and which we use for our examples in Section \ref{sec:results}.

Suppose that $\pi(\bgamma \mid \bomega)= \prod_{j=1}^p \pi(\gamma_j \mid \bomega)$ and that $\pi(\gamma_j = 1 \mid \bomega)= m_j(\bomega)$ for some invertible function $m_j$.
For example, the inverse logit function $m_j(\bomega)= 1 / (1 + e^{-\bz_j^T \bomega})$ is interpretable and leads to some computational simplifications (Section \ref{sec:computation}).
Consider the prior $\bomega \sim N( {\bf 0}, g_\omega \bV)$ for some known $\bV$ where,
taking inspiration from the popular Zellner prior, by default we set $\bV= (\bZ^T \bZ / p)^{-1}$.
%This choice of $\bV$ enjoys an invariance property, the marginal likelihood is invariance to linear transformations of $\bZ$.
The scaling by $p$ facilitates interpretability: if the columns in $\bZ$ (other than the intercept) have zero mean, then %$\bZ^T \bZ/p$ is the sample covariance matrix of $\bZ$. Hence, 
$\bV$ is the sample precision (inverse covariance) matrix of $\bZ$ multiplied by $g_\omega$.
%and another possibility is a normal shrinkage prior $\bV= [\mbox{diag}(\bZ^T \bZ)]^{-1}$.
%Both choices enjoy a scale invariance property: if one multiplies a column of $\bZ$ by a constant, the implied prior on $\bZ \bomega$ remains unaltered.

We next discuss how to set a minimally informative default for $g_\omega$.
The idea is to set $g_\omega$ such that $\pi(\bomega)$ is unlikely to generate a value of $\bomega$ that leads to prior inclusion probabilities that are very close to 0 or 1.
In more detail, note that the prior inclusion probabilities $m_j(\bomega)$ are random, since they depend on $\bomega$.
We propose setting $g_\omega$ such that 
the prior probability $P \left( m_j(\bomega) \in [0.001, 0.999] \right) \geq 0.95$ for all $j$.
If $m_j$ is the inverse logit function, we seek $g_\omega$ such that, for all $j$,
\begin{align}
0.95 \leq P \left( m_j(\bomega) \in [0.001, 0.999] \right)
%= 1 - 2 \Phi\left( - \frac{\log(0.001/0.999)}{\sqrt{g_\omega v_j}} \right)
\Rightarrow
 g_\omega \leq \frac{1}{v_j} \left( \frac{\log(0.001/0.999)}{\Phi^{-1}(0.05 / 2)} \right)^2,
\nonumber
\end{align}
where $\Phi^{-1}$ is the standard normal quantile function, $v_j= \bz_j^T \bV \bz_j$, and we used that $\bz_j^T \bomega \sim N(0, g_\omega v_j)$.
We hence take $g_\omega$ by replacing $v_j$ above by $\max_{j=1}^p v_j$.

\section{Theory for variable selection consistency} \label{sec:theory}

We consider a simplified setting where one has a high-dimensional linear regression $\by \sim N(\bX \btheta, \phi I)$,
where $\phi > 0$ is the error variance,
and a single discrete meta-covariate $z_j \in \{1,\ldots,B\}$ that splits covariates $j=1,\ldots,p$ into $B$ blocks.
Letting $\gamma_j = \mbox{I}(\theta_j \neq 0)$ as before, we consider a model prior
$\pi(\bgamma \mid \bomega) = \prod_{j=1}^p \mbox{Bern}(\gamma_j; \omega_{z_j})$,
where $\bomega= (\omega_1, \ldots, \omega_B) \in [0,1]^B$ are prior inclusion probabilities for each block.
The number of covariates $p$ can grow with $n$.
To ease the exposition and proofs, we assume that $\phi$ is known and that one sets Zellner's prior
\begin{align}
 \pi(\btheta_{\bgamma} \mid \bgamma)= N(\btheta_{\bgamma}; {\bf 0}, g \phi (\bX_{\bgamma}^T \bX_{\bgamma}/n)^{-1}),
\nonumber
\end{align}
where $\btheta_{\bgamma}= \{ \theta_j : \gamma_j=1 \}$ are the non-zero parameters under model $\bgamma$, 
$\bX_{\bgamma}$ the corresponding columns of $\bX$ and $g > 0$ a given prior dispersion.
 By default, we recommend $g=1$, mimicking Zellner's unit information prior, a popular prior that is tightly connected to the Bayesian information criterion \citep{kass:1995}. In all our examples we used $g=1$, and results were insensitive to varying $g \in [0.1,10]$.  
Our results can be extended to unknown $\phi$ and to other Gaussian priors by placing eigenvalue conditions on $\bX_{\bgamma}^T \bX_{\bgamma}$, for example those listed in Sections 2.3-2.4 of \cite{rossell:2022}.
In particular, one could consider a block-dependent prior dispersion, but we do not do this here for brevity and because its main role would be to induce block-dependent sparsity, which is already achieved by $\bomega$.
 It would also be possible to assign a prior to $g$, leading to a new (thicker-tailed) marginal prior $\tilde{\pi}(\btheta_{\bgamma} \mid \bgamma)$. While our results should remain unaffected, subject to mild conditions, we focus on Zellner's prior to streamline the proof. Gaussian priors are also computationally convenient in that marginal likelihoods have a closed-form for models with Gaussian errors.  Throughout, we assume that one constrains attention to models such that $\bX_{\bgamma}$ has full column rank for any $\bgamma \in \Gamma$, and that the true model lies in the set of considered models.

We provide two main results. Theorem \ref{thm:suffcondlinearmodel} in Section \ref{ssec:consistency_dataintegration} 
states that, when $\bomega$ satisfies certain conditions, the posterior distribution on models $\pi(\bgamma \mid \by, \bomega)$ concentrates on $\bgamma^*=(\gamma_1^*, \ldots, \gamma_p^*)$, where $\gamma_j^* = \mbox{I}(\theta_j^* \neq 0)$ and $\btheta^*$ is the data-generating parameter.
Theorem \ref{thm:suffcondlinearmodel} holds under milder conditions than those required when setting equal prior inclusion probabilities in all blocks $\omega_1=\ldots = \omega_B$, %(obviously, since we consider a strictly wider set of $\bomega$ values), 
or when using the Beta-Binomial model prior  of \cite{scott:2010}  (which treats all blocks equally)  or any standard (non block-dependent) $L_0$ criterion.
The interest of Theorem \ref{thm:suffcondlinearmodel} is that it gives the best possible consistency result attainable using the meta-covariates $\bZ$, if an oracle were to set $\bomega$ to an ideal value.
Theorem \ref{theo:empbayesselconsist} in Section \ref{ssec:consistency_ebayes} shows that, under slightly stronger conditions, one can obtain $\hat{\bomega}$ that also leads to consistent model selection under milder conditions than under $\omega_1=\ldots = \omega_B$, a Beta-Binomial model prior, and standard $L_0$ criteria.
 Our results do not assume that the blocks are informative: the blocks could be useless, in that the true sparsity and signal strength are identical across blocks.
Theorems \ref{thm:suffcondlinearmodel}-\ref{theo:empbayesselconsist} guarantee consistent model selection even in this setting, and Section \ref{ssec:simstudy} includes an empirical illustration (Scenario 3). The key is that if the $B>1$ blocks are informative then Conditions A2-A3 required by Theorem \ref{thm:suffcondlinearmodel} become milder than in the $B=1$ case, and similarly for Conditions A4-A6 in Theorem \ref{theo:empbayesselconsist}. If, on the contrary, blocks are uninformative then these conditions become equivalent to the $B=1$ case.

\subsection{Improved consistency with data integration}
\label{ssec:consistency_dataintegration}

We denote by $p_b= \sum_{j=1}^p \mbox{I}(z_j = b)$ the number of variables in block $b$,
by $p_{\bgamma,b}= \sum_{j=1}^p \gamma_j \mbox{I}(z_j=b)$ the number of those selected by $\bgamma$,
and by $s_b= \sum_{j=1}^p \gamma_j^* \mbox{I}(z_j=b)$ the number of truly active variables in block $b$.
We denote by $\bX_{\bgamma}$ the subset of columns of $\bX$ selected by $\bgamma$ (i.e., such that $\gamma_j=1$),
and $\bX_{\bgamma^* \setminus \bgamma}$ those selected by $\bgamma^*$ but not $\bgamma$.
Under the priors set above, simple algebra shows that
\begin{align}
\frac{\pi(\bgamma^* \mid \by,\bomega)}{\pi(\bgamma \mid \by, \bomega)}=
\exp \left\{ -\frac{g n W_{\bgamma \bgamma^*}}{2\phi (1+g n)} \right\}
\prod_{b=1}^B \left( \frac{\omega_b}{(1 + gn)^{1/2} (1 - \omega_b)} \right)^{p_{\bgamma^*,b} - p_{\bgamma,b}}
\label{eq:bf_zellner_known}
\end{align}
where $W_{\bgamma \bgamma^*}=\hat{\btheta}_{\bgamma}^T \bX_{\bgamma}^T \bX_{\bgamma}\hat{\btheta}_{\bgamma} - \hat{\btheta}_{\bgamma^*}^T \bX_{\bgamma^*}^T \bX_{\bgamma^*}\hat{\btheta}_{\bgamma^*}$
and $\hat{\btheta}_{\bgamma}= (\bX_{\bgamma}^T \bX_{\bgamma})^{-1} \bX_{\bgamma}^T \by$ is the least-squares estimate.
The first term on the right-hand side in \eqref{eq:bf_zellner_known} compares the sum of squared residuals between $\bgamma$ and $\bgamma^*$,
and hence rewards goodness-of-fit, whereas the second term is a complexity penalty.
Its logarithm is $\sum_{b=1}^B (p_{\bgamma,b} - p_{\bgamma^*,b}) \kappa_b$, where
\begin{align}
\kappa_b= \frac{1}{2} \log(1 + gn) + \log (1 / \omega_b - 1)
\label{eq:kappa}
\end{align}
plays a key role in our conditions, which we now state.  Recall that by default we set $g=1$, but our upcoming results are essentially unaffected as long as $|\log g|=o(\log n)$. 
\begin{enumerate}
\item [{\bf A1.}] The number of blocks $B$ is constant.

\item [{\bf A2.}] For each block $b$, there exists $f_b\to \infty$ (as $n\to \infty$) such that for sufficiently large $n$,
$\kappa_b \;=\;\log(p_b-s_b) + f_b$.

\item [{\bf A3.}] For each block $b$, there exists $g_b\to \infty$ such that for sufficiently large $n$,
    \begin{equation*}
        \sqrt{\frac{(1-\nu) n\phi^{-1}\rho(\bX)}{6}}{\theta_{\min,b}^*} \,- \,\sqrt{\kappa_b} \;=\; \sqrt{\log(s_b)} + g_b ,
    \end{equation*}
where $\theta_{\min,b}^*= \min_{z_j = b, \theta_j^* \neq 0} |\theta_j^*|$,
$\nu:=\tfrac12(1+\max_b\log(p_b-s_b)/\kappa_b)\in (\tfrac12,1)$,
 \begin{equation}\label{eq:rhoX}
        \rho(\bX)=  \min_{\bgamma: \bgamma \not \supseteq \bgamma^*} \lambda_{\min}\big(\tfrac{1}{n} \bX_{\bgamma^* \setminus \bgamma}^{\top}\left(I - \bP_{\bgamma} \right) \bX_{\bgamma^* \setminus \bgamma}\big),
    \end{equation}
$\lambda_{\min}$ denotes the smallest eigenvalue,
and $\bP_{\bgamma}= \bX_{\bgamma} (\bX_{\bgamma}^T \bX_{\bgamma})^{-1} \bX_{\bgamma}^T$ is the projection matrix onto the column span of $\bX_{\bgamma}$.
\end{enumerate}

Assumption A2 requires $\kappa_b$ to be large enough, that is that the prior penalizes sufficiently complex models.
%The penalty depends is block-dependent via the prior inclusion probability $\omega_b$ and on the dispersion $g$
A3 is a betamin condition on the smallest non-zero parameter $\theta_{\min,b}^*$ in block $b$.
The quantity $\rho(\bX)$ is non-negative and relates to how distinguishable non-overfitted models $\bgamma$ are from $\bgamma^*$.  More specifically, $\tfrac{1}{n} \bX_{\bgamma^* \setminus \bgamma}^T \left(I- P_{\bgamma} \right) \bX_{\bgamma^* \setminus \bgamma}$ is the sample covariance of the residuals when regressing $\bX_{\bgamma^* \setminus \bgamma}$ on $\bX_{\bgamma}$,
and $\rho(\bX)=1$ in a simpler orthonormal case where $\bX^T\bX=n I$.
Assumptions A2-A3 are fairly minimal.
For orthonormal designs A2-A3 are similar to Assumptions A4-A5 in \cite{rognon:2025}, which the authors showed to be near-necessary. 
Also, when applied to the $B=1$ blocks setting, these conditions are very similar (and in some settings, slightly weaker) than those required by $L_0$ criteria that know the true model size analysed in \cite{wainwright:2009information}. 
Further, \cite{rognon:2025} (Sections 4.3 and S7) showed that A2-A3 nearly match necessary conditions for consistent model recovery in a wide range of regimes, and that A2-A3 are milder than those required when one uses standard $L_0$ criteria or a Beta-Binomial model prior,  that is the $B=1$ case penalizing all blocks equally. In particular, the dependence of A3 on $\rho(\bX)$ (which measures covariate collinearity) becomes slightly milder for $B>1$. 
%This follows from the results in \cite{rognon:2025} (Section 4.3), which give necessary conditions that any set of prior probabilities must satisfy for model selection consistency.

\begin{theorem}\label{thm:suffcondlinearmodel}
If Assumption A1 holds and $\kappa_b$ implied by $(\bomega,g)$ satisfies A2 and A3,
then $\lim_{n \to \infty} E \left[ \pi(\bgamma^* \mid \by, \bomega) \right] =1$.
\end{theorem}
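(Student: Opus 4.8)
The plan is to work from the exact posterior-odds identity \eqref{eq:bf_zellner_known} and show that the total posterior mass on incorrect models vanishes. Writing $R = \sum_{\bgamma \neq \bgamma^*} \pi(\bgamma \mid \by, \bomega)/\pi(\bgamma^* \mid \by, \bomega)$, one has $\pi(\bgamma^* \mid \by, \bomega) = 1/(1+R)$, so $1 - \pi(\bgamma^* \mid \by, \bomega) = R/(1+R) \le \min(1,R)$. Since this quantity lies in $[0,1]$, bounded convergence reduces the claim $E[\pi(\bgamma^* \mid \by, \bomega)] \to 1$ to showing $R \to 0$ in probability, i.e. $P(R > \delta) \to 0$ for every $\delta > 0$. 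I would split $R$ according to whether a model overfits or underfits $\bgamma^*$ and control each part separately.

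For the overfitting part I would bound $\mathrm{ratio}_\bgamma := \pi(\bgamma \mid \by, \bomega)/\pi(\bgamma^* \mid \by, \bomega)$ summed over supersets $\bgamma \supsetneq \bgamma^*$. Writing $\epsilon := \by - \bX\btheta^*$, here $W_{\bgamma\bgamma^*} = \epsilon^T(\bP_\bgamma - \bP_{\bgamma^*})\epsilon$, since $(\bP_\bgamma - \bP_{\bgamma^*})\bX_{\bgamma^*} = 0$ for $\bgamma \supseteq \bgamma^*$; this is $\phi$ times a $\chi^2$ with $\sum_b d_b$ degrees of freedom, $d_b := p_{\bgamma,b} - p_{\bgamma^*,b}$. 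The penalty $\sum_b d_b \kappa_b$ must beat the number of supersets $\prod_b \binom{p_b - s_b}{d_b}$. The crude first moment fails: $E[\exp\{\tfrac{gn}{2\phi(1+gn)}\chi^2_k\}] = (1+gn)^{k/2}$ overweights the chi-squared tail and exactly cancels the $\tfrac12\log(1+gn)$ inside $\kappa_b$ from \eqref{eq:kappa}. I would therefore use a fractional (Rényi) moment $E[\mathrm{ratio}_\bgamma^\alpha]$ with $\alpha < 1$, so that after invoking A2 ($\kappa_b = \log(p_b - s_b) + f_b$) each added variable in block $b$ contributes a factor of order $(1-\alpha)^{-1/2}(p_b - s_b)^{1-\alpha} e^{-\alpha f_b}$. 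Summing the resulting geometric series over $d_b$ and letting $\alpha \uparrow 1$ at a rate that keeps $(1-\alpha)\log(p_b - s_b)$ negligible while preserving the $e^{-\alpha f_b}$ decay, the margin $f_b \to \infty$ drives the superset contribution to $0$.

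For the underfitting part I would sum over $\bgamma \not\supseteq \bgamma^*$, where $-W_{\bgamma\bgamma^*} = \mathrm{RSS}_\bgamma - \mathrm{RSS}_{\bgamma^*}$ carries a non-central signal term $\|(I - \bP_\bgamma)\bX_{\bgamma^* \setminus \bgamma}\btheta^*_{\bgamma^* \setminus \bgamma}\|^2 \ge n\rho(\bX)\sum_b m_b (\theta_{\min,b}^*)^2$, with $m_b$ the number of truly active variables in block $b$ missed by $\bgamma$ and $\rho(\bX)$ as in \eqref{eq:rhoX}. This signal renders $\mathrm{ratio}_\bgamma$ exponentially small; a non-central chi-squared deviation bound, combined with a union bound over the $\prod_b \binom{s_b}{m_b}$ ways of missing active variables (entropy of order $\log s_b$) and the accompanying false inclusions, shows the sum vanishes exactly when the betamin condition A3 holds. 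The constants $\tfrac16$, $\nu$ and the $\sqrt{\cdot}$ terms in A3 are precisely what is required for the signal $\sqrt{(1-\nu)n\phi^{-1}\rho(\bX)/6}\,\theta_{\min,b}^*$ to dominate both the entropy $\sqrt{\log s_b}$ and the penalty contribution $\sqrt{\kappa_b}$ after the tail bound is taken in square-root form.

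The main obstacle is the overfitting bound: one must control a sum of exponentials of Gaussian quadratic forms over exponentially many models without the naive first moment, whose moment generating function blows up at the edge of its domain and erases the margin supplied by $\kappa_b$. Calibrating the fractional-moment exponent (or, equivalently, the truncation event) to simultaneously preserve the $e^{-\alpha f_b}$ decay and neutralize the model-counting entropy under only A1--A2 is the delicate step. By contrast, the underfitting bound is comparatively routine once the signal lower bound through $\rho(\bX)$ and the betamin condition A3 are in hand. Throughout, A1 guarantees that the finite products over the $B$ blocks introduce no additional loss.
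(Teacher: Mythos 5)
Your overall reduction (bounding $1-\pi(\bgamma^*\mid\by,\bomega)$ by the sum of posterior odds, splitting into overfitting and underfitting models, identifying $W_{\bgamma\bgamma^*}$ as a chi-squared through the union with $\bgamma^*$, and the roles of $\rho(\bX)$, the entropies $\log(p_b-s_b)$ and $\log s_b$, and A3) is consistent with the paper's argument, which organizes things slightly differently: it bounds $E[\pi(\bgamma\mid\by,\bomega)]$ model by model via a comparison with the union model $\bgamma\cup\bgamma^*$ (Lemmas~\ref{lem:boundexpnc} and~\ref{thm:convtoT}), which is also how it treats non-nested $\bgamma$, for which $W_{\bgamma\bgamma^*}$ is a difference of two dependent quadratic forms rather than a single noncentral chi-squared — a point your underfitting sketch glosses over. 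You also correctly diagnose the central difficulty: the exact tilt $E[\exp\{gnW/(2\phi(1+gn))\}]=(1+gn)^{k/2}$ cancels the $\tfrac12\log(1+gn)$ in $\kappa_b$ and leaves no margin.

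The genuine gap is that the fix you propose for the overfitting sum — Rényi moments $E[\mathrm{ratio}_\bgamma^{\alpha}]$ with $\alpha\uparrow 1$ — cannot be calibrated to work under A2. One has $E[\mathrm{ratio}_\bgamma^{\alpha}]=\big[(1+gn)/(1+(1-\alpha)gn)\big]^{k/2}e^{-\alpha\sum_b d_b\kappa_b}$, so each added variable costs a factor of order $(1-\alpha)^{-1/2}$, exactly as you note. Multiplying by the $\binom{p_b-s_b}{d_b}$ count, the per-variable log-factor is $\tfrac12\log\tfrac{1}{1-\alpha}+(1-\alpha)\log(p_b-s_b)-\alpha f_b$; optimizing over $\alpha$ (at $1-\alpha\asymp 1/\log(p_b-s_b)$) leaves $\tfrac12\log\log(p_b-s_b)-f_b+O(1)$, which tends to $-\infty$ only if $f_b-\tfrac12\log\log(p_b-s_b)\to\infty$. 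A2 gives only $f_b\to\infty$, so your calibration cannot simultaneously preserve the $e^{-\alpha f_b}$ decay and neutralize the entropy in the regime $f_b=o(\log\log(p_b-s_b))$; and the $\sqrt{\kappa_b}$-per-variable cost is not an artifact of the method, since already in the orthogonal case $E[\min(1,e^{tz^2-\kappa_b})]\asymp\sqrt{\kappa_b}\,e^{-\kappa_b}$ for $z\sim N(0,1)$ and $t=gn/(2(1+gn))$. This is precisely the step where the paper works hardest: it never takes an MGF of the odds, but instead bounds $E[\pi(\bgamma\mid\by,\bomega)]\le\int_0^1 P\big(\log\tfrac{\pi(\bgamma^*\mid\by,\bomega)}{\pi(\bgamma\mid\by,\bomega)}\le\log(1/u-1)\big)\,du$ and applies an explicit (non-exponential-moment) chi-squared tail inequality inside the integral (Lemmas~\ref{lemma:s18}--\ref{lemma:s20}), extracting an exponent $\psi$ that is then pushed toward $1$ with $n$; the cap of the posterior probability at $1$ is what removes the $(1-\alpha)^{-1/2}$ blow-up. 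Your parenthetical "(or, equivalently, the truncation event)" is not equivalent to the fractional moment and is the idea actually needed, but you have not supplied it, so as written the overfitting bound — which you yourself identify as the main obstacle — does not go through.
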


Theorem \ref{thm:suffcondlinearmodel} shows that $\pi(\bgamma^* \mid \by, \bomega)$ converges to 1 in the $L_1$ sense.
By Proposition 1 in \cite{rossell:2022}, such $L_1$ convergence rates bound the frequentist probability of a wrong model selection, as well as type I and II error probabilities.
 The proof of Theorem \ref{thm:suffcondlinearmodel} bounds the convergence rate for $E[\pi(\bgamma^* \mid \by, \bomega)]$ as a function of the number of truly non-zero and zero parameters $(s_b, p_b-s_b)$ and the magnitude of covariate effects via $\theta_{\min,b}^*$ in each block. Using these, an oracle could set penalties $\kappa_b^*$ that depend on $(s_b,p_b, \theta_{\min,b}^*)$ and approximately optimize the convergence rate (Section \ref{ssec:oracle_conv_rate}).
Briefly, neglecting constants, $\kappa_b^*$ is of order
\begin{align}
 n\rho(\bX) \theta^*_{\min,b}/\phi \left( 1 + \frac{\ln({p_b/s_b-1})}{n \rho(\bX) \theta^*_{\min,b}/\phi} \right)^2
\nonumber
\end{align}
Our two-step procedure in Section \ref{ssec:consistency_ebayes} could estimate $(s_b, \theta_{\min,b}^*)$ in Step 1 and plug them into $\kappa_b^*$ to define $\kappa_b^{(1)}$ in Step 2, but we do not pursue this here for simplicity.

%\david{Paul, can you please add the proof to the supplement? To match the notation in your paper, our A2-A3 are A6-A7 in your paper. Your notation $s_j$ is now $s_b$, and what you denoted $\gamma$ is now $\nu$. Also $\kappa_j$ is $\kappa_b$, with the slight difference that the first term features $\log(1 + gn)$ instead of $\log(n)$. The true model is $\bgamma^*$ instead of your $S$, and other models are $\bgamma$ instead of your $M$.}

\subsection{Consistency for empirical Bayes}
\label{ssec:consistency_ebayes}

Theorem \ref{thm:suffcondlinearmodel} gives a range of $(\bomega,g)$ that lead to consistent model selection.
We now show that it is possible to obtain a data-based $\hat{\bomega}$ that also leads to consistent model selection.
We analyze the following two-step procedure.

\begin{enumerate}
\item Set $\hat{\omega}_b^{(0)} = 1 / (p + 1)$ for $b=1,\ldots,B$, so that 
$\kappa_b^{(0)}= \log(p) + \frac{1}{2} \log(1 + g n)$.

\item Set $\hat{\omega}_b^{(1)}= \frac{1}{p_b} \sum_{z_j=b} \pi(\gamma_j=1 \mid \by, \hat{\bomega}^{(0)})$,
and $\kappa_b^{(1)}= \log(1/\hat{\omega}_b^{(0)}  - 1) + \frac{1}{2} \log(1 + g n)$.
\end{enumerate}

This procedure can be seen as first taking $\hat{\omega}_b^{(0)}=1/(p+1)$ in all blocks, which is a sparse choice assuming a constant prior expected number of active covariates as $p$ grows.
If one sets $\hat{\omega}_b^{(0)}=c/(p+1)$ for constant $c$, our theory continues to hold with minor modifications.
In Step 2, one obtains $\hat{\omega}_b^{(1)}$ by taking the average posterior inclusion probability in block $b$, given the data and $\hat{\bomega}^{(0)}$. Formally, Step 2 is a fixed-point iteration aimed at setting $\nabla_{\bomega} \log p(\by \mid \bomega) = {\bf 0}$, see the discussion after Theorem \ref{thm:ebayes_zerograd} in Section \ref{ssec:comp_gradientbased}.

The consistency of the two-step procedure can be established under slightly stronger conditions than A2-A3, called A4-A6 below, but still milder than those attainable by a model prior that does not use the meta-covariates.
To discuss these assumptions, we introduce a notion of small, intermediate and large truly active signals $|\theta_j^*|/\sqrt{\phi}$ in each block,
denoted $S_b^S$, $S_b^I$, and $S_b^L$ respectively. 
Let $\bar{\lambda}$ be the largest eigenvalue of $\bX_{\bgamma^*}^T \bX_{\bgamma^*}$
and $\kappa_b$ as defined in \eqref{eq:kappa}, then
\begin{align}
%\label{eq:defSes}
    &S_b^{S}(\kappa_b) := \left\{j : z_j=b, \theta_j^* \neq 0, \sqrt{n \bar{\lambda}}|\theta_j^*| / \sqrt{\phi}=o\big(\sqrt{\kappa_b}\big) \right\}
\nonumber\\
    &S_b^{L}(\kappa_b) := \left\{j: z_j=b, \theta_j^* \neq 0, \sqrt{\frac{(1-\nu) n \rho(\bX)}{6 \phi}}|\theta_j^*| \,- \,\sqrt{\kappa_b} \;=\; \sqrt{\ln(s_b)} + g_b\right\}.
\nonumber\\
    &S_b^{I}(\kappa_b) := \left\{ j: z_j=b, \theta_j^* \neq 0  \right\} \setminus \big(S_b^{L}(\kappa_b) \cup S_b^{S}(\kappa_b)\big).
    \nonumber 
\end{align}

%\david{In the paper we also have a set $S_b^I$ with intermediate signals, but I didn't see these used in the result below. If I goofed and it's needed, you can uncomment $S_b^I$ in the display above.}

%For every block $b$, let $O_b$ be the set of models that over-fit by only one variable from block $b$.
%The asymptotic recovery of $\bgamma^*$ implies that, $\max_{\bgamma \in O_b} \pi(\bgamma \mid \by, \bomega)/\pi(\bgamma^* \mid \by, \bomega) < 1$ with probability going to 1 as $n$ grows for all $b$. 
%For every $\bgamma \in O_b$ the ratio grows with $W_{\bgamma \bgamma^*}$, which is $\chi^2_1$ distributed. 
%That is, for every $\bgamma \in O_b$, there exists $U_{\bgamma} \sim N(0,1)$ such that $W_{\bgamma \bgamma^*}=U_{\bgamma}^2$.
%Let
%\begin{align}
%%    &\underline{\lambda}_b\;:=\;\lambda_{min}(C^j)\;\;\text{where} \;\;C^j_{k,l}=\operatorname{corr}\big(U_{M_k},U_{M_l}\big),\;\;\forall M_k,M_l \in O_b
%%    \nonumber \\
%\underline{\lambda}\;:=\;\lambda_{min}(C)\;\;\text{where} \;\;C_{k,l}=\operatorname{corr}\big(U_{M_k},U_{M_l}\big),\;\;\forall M_k,M_l \in \cup_{j=1}^b O_b.
%    \nonumber
%\end{align}
%That is, small $\underline{\lambda}_b$ indicates that truly inactive variables in block $j$ are highly correlated with each other, whereas $\underline{\lambda}_b=1$ that they are uncorrelated.

We are now ready to state the conditions.
Condition A4 imposes a mild bound on number of truly active covariates.
A5 is a slightly stronger version of A3, requiring that the number of large signals is not too small,
whereas A6 requires that the number of small signals is not too large.
Both conditions are stated in terms of $\kappa_b^{(0)}$ defined in Step 1.
Intuitively, if there were too many small signals or too few large signals relative to $\kappa_b^{(0)}$, then $\hat{\omega}_b^{(1)}$ would be a poor estimator of the proportion of truly active covariates in block $b$, and the posterior would not be guaranteed to concentrate on $\bgamma^*$.
% A3 and A5 depend on $\rho(\bX)$, which measures collinearity between truly active and inactive covariates, in a similar manner. Hence, Theorem \ref{theo:empbayesselconsist} holds under similar covariate correlations than Theorem \ref{thm:suffcondlinearmodel}. 

\begin{enumerate}
\item [{\bf A4.}] The total number of truly active covariates $\sum_{b=1}^B s_b = o(\sqrt{1+gn})$.

\item [{\bf A5.}] For each block $b$, there exists $a_b\to \infty$ such that for sufficiently large $n$, 
\begin{equation*}\label{cond:suffIIreg2}
        \sqrt{\frac{ (1-\psi)n\phi^{-1}\rho(\bX)}{6}}{\theta_{\min,b}^*} - \sqrt{\log \bigg(\frac{p_b}{|S^{L}_b(\kappa_b^{(0)})|} - 1\bigg) + \frac12 \log(1+g n)} \;=\; \sqrt{\log(s_b)} + a_b,
    \end{equation*}
where $\psi=\tfrac12\Big(1+\max_b \log(p_b-s_b)/\big(\log(p_b/s_b-1)+\log(1+gn)/2\big)\Big)$.

\item [{\bf A6.}] For each block $b$, it holds that  $|S^{I}_b(\kappa_b^{(0)})|=O(1)$ and $|S^{S}_b(\kappa_b^{(0)})| = O(p_b - s_b)$.
\end{enumerate}

\begin{theorem}\label{theo:empbayesselconsist}
If Assumptions A1, A4, A5 and A6 hold,
then $\lim_{n \to \infty} E \left[ \pi(\bgamma^* \mid \by, \hat{\bomega}^{(1)}) \right] =1$.
\end{theorem}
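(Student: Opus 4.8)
The plan is to reduce Theorem \ref{theo:empbayesselconsist} to Theorem \ref{thm:suffcondlinearmodel} by showing that the data-driven $\hat{\bomega}^{(1)}$ falls, with probability tending to one, into a deterministic region of $\bomega$-values on which Assumptions A2 and A3 hold for the full truth $\bgamma^*$, and then transferring consistency from a fixed $\bomega$ to the random $\hat{\bomega}^{(1)}$. The one genuinely new difficulty relative to Theorem \ref{thm:suffcondlinearmodel} is that $\hat{\bomega}^{(1)}$ depends on the same $\by$ that drives the final posterior, so Theorem \ref{thm:suffcondlinearmodel} (stated for fixed $\bomega$) cannot be invoked verbatim. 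I would circumvent this by first upgrading that proof to a uniform statement: since the posterior ratio \eqref{eq:bf_zellner_known} depends on $\bomega$ only through $\kappa_1,\ldots,\kappa_B$, and monotonically so, there should be an event $E_n$ with $P(E_n) \to 1$ on which $\pi(\bgamma^* \mid \by, \bomega) \geq 1 - \epsilon_n$, with $\epsilon_n \to 0$, simultaneously for every $\bomega$ in an interval $R_n = \prod_b [\kappa_b^-, \kappa_b^+]$ whose endpoints satisfy A2--A3. Checking the two endpoints suffices because A2 improves and A3 degrades as $\kappa_b$ grows.

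First I would analyze the marginal posterior inclusion probabilities under the initial $\hat{\bomega}^{(0)} = 1/(p+1)$, whose penalty $\kappa_b^{(0)} = \log p + \tfrac12 \log(1+gn)$ already satisfies A2 since $\kappa_b^{(0)} - \log(p_b - s_b) \geq \tfrac12 \log(1+gn) \to \infty$. Working term by term in \eqref{eq:bf_zellner_known}, I expect to show $E[1 - \pi(\gamma_j = 1 \mid \by, \hat{\bomega}^{(0)})] \to 0$ for every large signal $j \in S_b^L(\kappa_b^{(0)})$, whose inclusion Bayes factor diverges, while $E[\pi(\gamma_j = 1 \mid \by, \hat{\bomega}^{(0)})] = O(1/p)$ for each truly inactive $j$ and each small signal $j \in S_b^S(\kappa_b^{(0)})$. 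The key quantity here is the expected Bayes factor of a spurious covariate under Zellner's prior, which, multiplied by the sparse prior odds $\hat{\omega}_b^{(0)}/(1 - \hat{\omega}_b^{(0)}) = O(1/p)$, keeps the expected inclusion probability at order $1/p$.

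Summing these bounds over a block yields the concentration $\hat{\omega}_b^{(1)} = |S_b^L(\kappa_b^{(0)})|/p_b \,(1 + o_P(1))$, and A6 is precisely what controls the error terms: the $O(1)$ intermediate signals contribute $O(1/p_b)$, while the $O(p_b - s_b)$ small signals and the inactive covariates each carry inclusion probability $O(1/p)$ and so contribute a vanishing fraction. Translating through $\kappa_b^{(1)} = \log(1/\hat{\omega}_b^{(1)} - 1) + \tfrac12 \log(1+gn)$ gives $\kappa_b^{(1)} = \log(p_b/|S_b^L(\kappa_b^{(0)})| - 1) + \tfrac12 \log(1+gn) + o_P(1)$ with high probability. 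At this value, A2 for $\bgamma^*$ follows from A4, since $\kappa_b^{(1)} - \log(p_b - s_b) = \tfrac12 \log(1+gn) - \log|S_b^L(\kappa_b^{(0)})| + o(1) \to \infty$ because $|S_b^L(\kappa_b^{(0)})| \leq s_b = o(\sqrt{1+gn})$; and A3 for $\bgamma^*$ follows from A5, which is exactly the betamin condition on the smallest signal $\theta_{\min,b}^*$ evaluated at this loosened penalty (with $\psi \geq \nu$, itself guaranteed by $|S_b^L(\kappa_b^{(0)})| \leq s_b$, ensuring A5's left-hand side lower-bounds that of A3). This is the mechanism by which the procedure detects, under $\hat{\bomega}^{(1)}$, the small signals it could not detect under the sparser $\hat{\bomega}^{(0)}$.

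I expect the main obstacle to be the second paragraph. The marginal inclusion probabilities $\pi(\gamma_j = 1 \mid \by, \hat{\bomega}^{(0)})$ are sums over all models containing $j$, so bounding them and their block sums in expectation is more delicate than the single-model comparison underlying Theorem \ref{thm:suffcondlinearmodel}; it is here that the bookkeeping of false-positive and small-signal contributions via A6, together with the expected-Bayes-factor control under Zellner's prior, must be carried out carefully and uniformly in $n$. Granting the three ingredients---uniform consistency over $R_n$, the concentration $\hat{\bomega}^{(1)} \in R_n$ with probability tending to one, and A2--A3 at the endpoints---the conclusion follows: on $E_n \cap \{\hat{\bomega}^{(1)} \in R_n\}$, an event of probability tending to one, $\pi(\bgamma^* \mid \by, \hat{\bomega}^{(1)}) \geq 1 - \epsilon_n$, and since $\pi(\bgamma^* \mid \by, \hat{\bomega}^{(1)}) \in [0,1]$ this gives $E[\pi(\bgamma^* \mid \by, \hat{\bomega}^{(1)})] \to 1$.
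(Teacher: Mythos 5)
Your proposal follows essentially the same route as the paper: reduce to Theorem \ref{thm:suffcondlinearmodel} by showing that the data-driven penalties $\kappa_b^{(1)}$ satisfy A2 (via A4) and A3 (via A5) with probability tending to one, where the key input is control of the posterior inclusion probabilities under the initial sparse choice $\hat{\bomega}^{(0)}$. The one place where your execution differs is the step you yourself flag as the main obstacle: you propose bounding each marginal inclusion probability $\pi(\gamma_j=1\mid\by,\hat{\bomega}^{(0)})$ covariate by covariate via expected Bayes factors, whereas the paper works at the level of models, first proving (Lemma \ref{thm:convtoT}) that the step-0 posterior concentrates on the set $\T(\kappa^{(0)})$ of models containing all large signals and no inactive or small ones, and then sandwiching the block-averaged inclusion probability between $|S_b^L(\kappa^{(0)})|/p_b$ and $(|S_b^L(\kappa^{(0)})|+|S_b^I(\kappa^{(0)})|)/p_b$ (Lemma \ref{prop:shatconsistence}); this aggregate bound is all that is needed and avoids the delicate per-covariate analysis. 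A6 enters exactly as you say, to make $\T(\kappa^{(0)})$ small and the sandwich tight. Your explicit uniformity-over-$R_n$ argument (exploiting monotonicity of the posterior odds in $\kappa_b$) to transfer Theorem \ref{thm:suffcondlinearmodel} from fixed $\bomega$ to the random $\hat{\bomega}^{(1)}$ is a point the paper treats only implicitly, so that addition is a genuine improvement in rigor rather than a deviation.
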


%\david{Paul, can you please add the proof? Conditions A4-A5 are A8-A9 in your paper. A6 was not listed but stated within your Theorem 5.2. In Condition A5 you had $\log n$, which I updated to $\log (gn)$ since here $\kappa_b$ features $\log(1 + gn)$ not $\log(n)$. Also, in the definition of $\underline{\lambda}$ you had $Z_{MS}$ and here we have $U_{\bgamma \bgamma^*}$ (to avoid confusions with meta-covariates $\bZ$).}

\section{Computational framework} \label{sec:computation}

We discuss computational algorithms to obtain the empirical Bayes estimate 
$\hat{\bomega}= \arg\max_{\bomega} \log \pi(\bomega \mid \by)=\arg\max_{\bomega} \log p(\by \mid \bomega) + \log \pi(\bomega)=$
\begin{align}
\arg\max_{\bomega} \log \left( \sum_{\bgamma} p(\by \mid \bgamma) \pi(\bgamma \mid \bomega) \right) -\frac{1}{2 g_\omega} \bomega^T \bV^{-1} \bomega,
\label{eq:ebayes_objective_regression}
\end{align}
where $(g_\omega,\bV)$ are as in Section \ref{ssec:ebayes_hyperpar}. %, and by default we take $\bV^{-1}= \bZ^T \bZ/p$.
Directly maximizing \eqref{eq:ebayes_objective_regression} poses a computational challenge, 
because it involves a prohibitive sum over all models $\bgamma$, e.g. $2^p$ in regression.
We next discuss some strategies that render this problem tractable.

In Section \ref{ssec:comp_gradientbased} we provide results
showing that maximizing \eqref{eq:ebayes_objective_regression} is amenable to stochastic gradient methods, as long as one can quickly evaluate $p(\by \mid \bgamma)$ for any given model $\bgamma$.
We remark that $p(\by \mid \bgamma)$ is only available in closed-form for specific models such as linear or additive regression with Gaussian outcomes. 
However, in practice one can use Laplace approximations and recent extensions like the approximate Laplace approximation \citep{kass:1990,rossell:2021,rossell:2021b}, which allow to quickly approximate $p(\by \mid \bgamma)$ in popular models such as generalized linear and generalized additive models.
Hence, the results in Section \ref{ssec:comp_gradientbased} can be useful for a relatively wide model class.
Section \ref{ssec:em_algorithm} extends the results of Section \ref{ssec:comp_gradientbased} to an EM algorithm
that eases somewhat the computation of stochastic gradients in Section \ref{ssec:comp_gradientbased}.
In Section \ref{ssec:other_comput_methods} we briefly discuss how 
to obtain a full posterior $\pi(\bomega \mid \by)$, beyond just a point estimate $\hat{\bomega}$,
and strategies for the case where $p(\by \mid \bgamma)$ cannot be easily computed.

We remark that $\log \pi(\bomega \mid \by)$ may be multi-modal, hence gradient-based methods find a local mode,
whereas sampling $\pi(\bomega \mid \by)$ may be more effective at exploring multiple modes.

In our software we initialize $\hat{\bomega}^{(0)}$ from a least-squares regression of $\log(\hat{\bpi}^{BB}/[1-\hat{\bpi}^{BB}])$ on $\bZ$,
where $\hat{\pi}_j^{BB}= \pi(\gamma_j=1 \mid \by)$ are marginal posterior inclusion probabilities under a Beta-Binomial model prior, estimated using a relatively small number of iterations $M$ (by default, $M=1000$).
In our examples of Section~\ref{sec:results}, such $\hat{\bomega}^{(0)}$ lead to virtually identical results to $\hat{\bomega}^{(0)}= {\bf 0}$, and to using $M=100$ (Table \ref{tab:dif_omega} in Appendix~\ref{sec:runtimes}).
Also, our approach had a similar computational cost than running an MCMC under standard model priors such as the Beta-Binomial. 
In both cases the cost scales well with $p$ under sparsity conditions, otherwise it quickly becomes unmanageable as $p$ grows. Under sparsity conditions, \cite{rossell:2017} and \cite{zhou_quan:2022} run an MCMC for $p > 10^4$ and $p> 10^5$ in a few minutes, respectively.
See Appendix \ref{sec:runtimes} for a comparison of run times and further discussion.

\subsection{Gradient-based methods}
\label{ssec:comp_gradientbased}

Our main results are as follows.
First, Theorem \ref{thm:ebayes_zerograd} shows that under fairly general conditions the gradient of \eqref{eq:ebayes_objective_regression} can be evaluated as a sum over only $p$ terms.
We then derive an EM algorithm to obtain a local mode for \eqref{eq:ebayes_objective_regression}
that also only requires sums over $p$ terms (Corollary \ref{cor:em_zerograd}), and brings some further simplifications.
Finally, Proposition \ref{prop:mstep_fullrank} shows that in an important particular case where the meta-covariates $\bZ$ partition the parameters into blocks, the M-step is available essentially in closed-form.
For example, the partition could be given by a list of covariates found to be relevant in a prior study, or by clustering the covariates into blocks using $\bZ$.
More specifically, the main implication of Theorem \ref{thm:ebayes_zerograd} is that, if $\pi(\bgamma \mid \bomega)$ factors,
then $\nabla_{\bomega} \log p(\by \mid \bomega)$ can be evaluated as a sum over $p$ terms.
Theorem \ref{thm:ebayes_zerograd} is not limited to $\pi(\gamma_j=1 \mid \bomega)$ given by the logit function:
one may parameterize $\pi(\gamma_j=1 \mid \bomega)$ using an arbitrary function $m_j()$, including non-parametric choices.
Theorem \ref{thm:ebayes_zerograd} is a novel result that is related to Proposition 3.1 in \cite{papaspiliopoulos:2025} and Lemma S0.1 in \cite{rognon:2025}, but is more general as these authors considered different settings to ours and only the logit function.

\begin{theorem}
Let $\pi(\bgamma \mid \by, \bomega) \propto p(\by \mid \bgamma) \pi(\bgamma \mid \bomega)$ for
an arbitrary model prior $\pi(\bgamma \mid \bomega)$. Then
\begin{align}
\nabla_{\bomega}\log p(\by \mid \bomega)= E_{\bgamma} \left[ \nabla_{\bomega} \log \pi(\bgamma \mid \bomega) \mid \by, \bomega \right]=
\sum_{\bgamma} \pi(\bgamma \mid \by, \bomega) \nabla_{\bomega} \log \pi(\bgamma \mid \bomega).
\nonumber
\end{align}

If $\pi(\bgamma \mid \bomega)= \prod_{j=1}^p \mbox{Bern}(\gamma_j; m_j(\bomega))$ for a given differentiable $m_j(\bomega)$, then
%In the particular case where $\pi(\bgamma \mid \bomega)= \prod_{j=1}^p \pi(\gamma_j \mid \bomega)$,
%where $\pi(\gamma_j=1 \mid \bomega)= m_j(\bomega)$ and $m_j()$ is differentiable, it holds that
\begin{align}
\nabla_{\bomega}\log p(\by \mid \bomega)= \sum_{j=1}^p \frac{\nabla_{\bomega} m_j(\bomega)}{m_j(\bomega) [1 - m_j(\bomega)]} 
\left[ \pi(\gamma_j=1 \mid \by, \bomega) - \pi(\gamma_j=1 \mid \bomega) \right].
\nonumber
\end{align}
Further, $\nabla_{\bomega}\log p(\by \mid \bomega)= \sum_{j=1}^p \bz_j \left[ \pi(\gamma_j=1 \mid \by, \bomega) - \pi(\gamma_j=1 \mid \bomega) \right]$
for the inverse logit function $m_j(\bomega)= (1 + e^{-\bz_j^T \bomega})^{-1}$.
\label{thm:ebayes_zerograd}
\end{theorem}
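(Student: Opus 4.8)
The plan is to prove the three claims in sequence, each building on the last. The first identity is the fundamental one: I would start from the definition $p(\by \mid \bomega) = \sum_{\bgamma} p(\by \mid \bgamma) \pi(\bgamma \mid \bomega)$ and differentiate under the sum. Writing $\nabla_{\bomega} \log p(\by \mid \bomega) = \nabla_{\bomega} p(\by \mid \bomega) / p(\by \mid \bomega)$, the numerator is $\sum_{\bgamma} p(\by \mid \bgamma) \nabla_{\bomega} \pi(\bgamma \mid \bomega)$, since $p(\by \mid \bgamma)$ does not depend on $\bomega$. The key trick is the log-derivative (score) identity $\nabla_{\bomega} \pi(\bgamma \mid \bomega) = \pi(\bgamma \mid \bomega) \nabla_{\bomega} \log \pi(\bgamma \mid \bomega)$. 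Substituting and recognizing that $p(\by \mid \bgamma) \pi(\bgamma \mid \bomega) / p(\by \mid \bomega) = \pi(\bgamma \mid \by, \bomega)$ yields exactly the posterior expectation $E_{\bgamma}[\nabla_{\bomega} \log \pi(\bgamma \mid \bomega) \mid \by, \bomega]$, which is the first displayed equation.

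For the second claim I would specialize to the factorized prior $\pi(\bgamma \mid \bomega) = \prod_{j=1}^p \mathrm{Bern}(\gamma_j; m_j(\bomega))$, so that $\log \pi(\bgamma \mid \bomega) = \sum_{j=1}^p [\gamma_j \log m_j(\bomega) + (1-\gamma_j) \log(1 - m_j(\bomega))]$. Differentiating term by term gives $\nabla_{\bomega} \log \pi(\bgamma \mid \bomega) = \sum_{j=1}^p \nabla_{\bomega} m_j(\bomega) \big[ \gamma_j / m_j(\bomega) - (1-\gamma_j)/(1 - m_j(\bomega)) \big]$, which simplifies to $\sum_{j=1}^p \nabla_{\bomega} m_j(\bomega) (\gamma_j - m_j(\bomega)) / \{m_j(\bomega)[1 - m_j(\bomega)]\}$. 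The remaining step is to take the posterior expectation of this over $\bgamma$ as dictated by the first claim. By linearity only $E[\gamma_j \mid \by, \bomega] = \pi(\gamma_j = 1 \mid \by, \bomega)$ survives, and since $m_j(\bomega) = \pi(\gamma_j = 1 \mid \bomega)$ is the marginal prior inclusion probability, the numerator becomes $\pi(\gamma_j = 1 \mid \by, \bomega) - \pi(\gamma_j = 1 \mid \bomega)$, delivering the second formula. The third claim is then a direct substitution: for the inverse logit $m_j(\bomega) = (1 + e^{-\bz_j^T \bomega})^{-1}$, a short computation gives $\nabla_{\bomega} m_j(\bomega) = m_j(\bomega)[1 - m_j(\bomega)] \bz_j$, so the ratio $\nabla_{\bomega} m_j(\bomega) / \{m_j(\bomega)[1 - m_j(\bomega)]\}$ collapses to $\bz_j$, yielding the final linear-cost expression.

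The step requiring the most care is justifying the interchange of differentiation and summation in the first claim, together with the reduction of the marginal posterior in the second. When $\Gamma$ is the finite set $\{0,1\}^p$ the sum is finite and the interchange is immediate, so I would note that the argument holds verbatim for finite $\Gamma$ and extends to countable or continuous model spaces under standard dominated-convergence conditions. The genuinely substantive point — and the one worth stating explicitly — is that passing from $E[\gamma_j \mid \by, \bomega]$ (a single-coordinate marginal) back to the joint posterior is legitimate precisely because $\nabla_{\bomega} \log \pi(\bgamma \mid \bomega)$ decomposes as a sum of per-coordinate terms, each depending on $\bgamma$ only through the single coordinate $\gamma_j$. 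This is what converts the intractable sum over $2^p$ models into a sum over $p$ marginal posterior inclusion probabilities, and it is the crux of the linear-cost claim. I would flag this factorization explicitly, since it is exactly the structural feature that makes the gradient tractable and that fails, as the paper notes in Section \ref{ssec:ebayes_vs_fullbayes}, when the parameter prior itself depends on $\bomega$.
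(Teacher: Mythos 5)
Your proposal is correct and follows essentially the same route as the paper's own proof: differentiate the finite mixture $p(\by\mid\bomega)=\sum_{\bgamma}p(\by\mid\bgamma)\pi(\bgamma\mid\bomega)$ via the score identity, reduce the factorized Bernoulli case to the per-coordinate expression $\nabla_{\bomega} m_j(\bomega)\,[\gamma_j-m_j(\bomega)]/\{m_j(\bomega)[1-m_j(\bomega)]\}$, take the posterior expectation coordinatewise, and specialize to the inverse logit via $\nabla_{\bomega} m_j(\bomega)=\bz_j m_j(\bomega)[1-m_j(\bomega)]$. The only cosmetic difference is that you differentiate the Bernoulli log-density directly while the paper splits into the cases $\gamma_j=1$ and $\gamma_j=0$ before recombining; your explicit remarks on interchanging differentiation and summation and on why the per-coordinate decomposition yields the linear cost are sound additions but not a different argument.
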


Beyond computation, Theorem \ref{thm:ebayes_zerograd} provides intuition for the empirical Bayes solution, and for the degeneracies associated with maximizing $p(\by \mid \bomega)$ mentioned in Section \ref{ssec:ebayes_hyperpar}.
Said interpretation is simplest when $\pi(\bgamma \mid \bomega)= \prod_{j=1}^p \mbox{Bern}(\gamma_j; m_j(\bomega))$ and $m_j(\bomega)$ is the inverse logit function. Then, setting $\nabla_{\bomega} \log p(\by \mid \bomega)= {\bf 0}$ is equivalent to
\begin{align}
\sum_{j=1}^p \bz_j  \pi(\gamma_j=1 \mid \bomega) = \sum_{j=1}^p \bz_j \pi(\gamma_j=1 \mid \by, \bomega).
\nonumber
\end{align}
That is, the inner-products between the columns of $\bZ$ and their prior inclusion probabilities
$(\pi(\gamma_1=1 \mid \bomega), \ldots, \pi(\gamma_p=1 \mid \bomega))$ must be equal to the inner products with the posterior inclusion probabilities.
For example, if $\bZ$ divides covariates into $B$ blocks, this implies that the average prior and posterior inclusion probabilities within each block must be equal.
This is on the one hand intuitive, one sets higher prior inclusion probabilities in the blocks where the corresponding posterior probabilities are higher.
On the other hand the expression for the gradient highlights the degeneracy problem: if $\bomega$ is such that all $\pi(\gamma_j=1 \mid \bomega)$ is equal to either 1 or 0 for all $j$, then $\pi(\gamma_j=1 \mid \by, \bomega)= \pi(\gamma_j=1 \mid \bomega)$ and $\nabla_{\bomega} \log p(\by \mid \bomega)= {\bf 0}$.
This issue can be addressed by setting a minimally informative $\pi(\bomega)$ (Section \ref{ssec:ebayes_hyperpar}).

Theorem \ref{thm:ebayes_zerograd} enables the use of stochastic gradient methods where one uses samples
from $\pi(\bgamma \mid \by, \bomega)$ to estimate $\nabla_{\bomega} \log p(\by \mid \bomega)$.
 In fact, only marginal posterior inclusion probabilities $\pi(\gamma_j=1 \mid \by, \bomega)$ are needed, which are usually easier to estimate than joint probabilities $\pi(\bgamma \mid \by, \bomega)$. 
In principle, one must estimate $\pi(\gamma_j=1 \mid \by, \bomega)$ for each newly considered $\bomega$.
This is less cumbersome than it might first appear.
First, $\pi(\bgamma \mid \by, \bomega) \propto p(\by \mid \bgamma) \pi(\bgamma \mid \bomega)$ where $p(\by \mid \bgamma)$ does not depend on  $\bomega$. Hence, once $p(\by \mid \bgamma)$ is first computed, it can be stored and re-used for all $\pi(\bgamma \mid \bomega)$.
Second, if one has samples from $\pi(\bgamma \mid \by, \bomega)$ one may use sequential Monte Carlo to sample from 
$\pi(\bgamma \mid \by, \bomega')$, and the associated weights are simply $\pi(\bgamma \mid \bomega') / \pi(\bgamma \mid \bomega)$.
Third, we next derive an EM that allows one to do multiple updates of $\bomega$ using a single set of samples from $\pi(\bgamma \mid \bomega)$ at each EM iteration.

\subsection{Expectation-Maximization algorithm}
\label{ssec:em_algorithm}

Consider an EM algorithm setting where the likelihood for the full data $(\by, \bgamma)$ is given by
$p(\by, \bgamma \mid \bomega) \propto p(\by \mid \bgamma) \pi(\bgamma \mid \bomega)$,
and where $\bgamma$ are the latent variables in the usual definition of the EM algorithm.
The algorithm requires an E-step where one obtains the expected full data log-likelihood with respect to $\bgamma$,
given the data $\by$ and the current parameter estimate  $\hat{\bomega}^{(k)}$.
This requires obtaining $\pi(\bgamma \mid \by, \hat{\bomega}^{(k)})$ and, if an exact computation is too costly (one cannot enumerate all models), one may use a stochastic EM algorithm where $\pi(\bgamma \mid \by, \hat{\bomega}^{(k)})$ is replaced by the frequency of posterior samples visiting model $\bgamma$.
Given such $\pi(\bgamma \mid \by, \hat{\bomega}^{(k)})$, the M-step seeks to maximize
\begin{align}
 E_{\bgamma} \left[ \log p(\by \mid \bgamma) + \log \pi(\bgamma \mid \bomega) \mid \by, \bomega= \hat{\bomega}^{(k)} \right]
+ \log \pi(\bomega)=
c + f(\bomega) + \log \pi(\bomega),
\nonumber
\end{align}
%Obtain $\hat{\bomega}^{(k+1)}$ such that $f(\hat{\bomega}^{(k+1)}) + \log \pi(\hat{\bomega}^{(k+1)}) \geq f(\hat{\bomega}^{(k)}) + \log \pi(\hat{\bomega}^{(k)})$, 
where $c$ is a constant that does not depend on $\bomega$ and
\begin{align}
f(\bomega)= 
  \sum_{\bgamma} \pi(\bgamma \mid \by, \hat{\bomega}^{(k)}) \log \pi(\bgamma \mid \bomega).
\nonumber
\end{align}
%which has gradient $\nabla_{\bomega} f(\bomega)= \sum_{\bgamma} \pi(\bgamma \mid \by, \hat{\bomega}^{(k)}) \nabla_{\bomega} \log \pi(\bgamma \mid \bomega)$.
This expression is analogous to that in Theorem \ref{thm:ebayes_zerograd}, replacing $\pi(\bgamma \mid \by ,\bomega)$ by $\pi(\bgamma \mid \by, \hat{\bomega}^{(k)})$.
%An advantage of the EM algorithm is that the gradient can be estimated at several $\bomega$ values using a single set of posterior samples, whereas Theorem \ref{thm:ebayes_zerograd} requires new posterior samples for each $\bomega$.
Also similar to Theorem \ref{thm:ebayes_zerograd}, 
Corollary \ref{cor:em_zerograd} gives an expression to evaluate $\nabla_{\bomega} f(\bomega)$ using only $p$ operations
when $\pi(\bgamma \mid \bomega)$ factors.

\begin{corollary}
If $\pi(\bgamma \mid \bomega)= \prod_{j=1}^p \mbox{Bern}(\gamma_j \mid m_j(\bomega))$, then
%and $\pi(\gamma_j=1 \mid \bomega)= m_j(\bomega)$, then
\begin{align}
\nabla_{\bomega} f(\bomega)= \sum_{j=1}^p \frac{\nabla_{\bomega} m_j(\bomega)}{m_j(\bomega) [1 - m_j(\bomega)]} \left[ \pi(\gamma_j=1 \mid \by, \hat{\bomega}^{(k)}) - \pi(\gamma_j=1 \mid \bomega) \right].
\nonumber
\end{align}

In the particular case $m_j(\bomega)= 1/(1 + e^{\bz_j^T \bomega})$, then
$\nabla_{\bomega} f(\bomega)= \bZ^T [\hat{\bpi} - {\bf m}(\bomega)]$
 and 
$\nabla_{\bomega}^2 f(\bomega)= - \bZ^T \bD(\bomega) \bZ$ where 
$\hat{\bpi}= (\pi(\gamma_1 \mid \by, \hat{\bomega}^{(k)}), \ldots, \pi(\gamma_p \mid \by, \hat{\bomega}^{(k)}))^T$,
${\bf m}(\bomega)= (m_1(\bomega), \ldots, m_p(\bomega))^T$ and
$\bD(\bomega)= \mbox{diag}(m_1(\bomega), \ldots, m_p(\bomega))$.
\label{cor:em_zerograd}
\end{corollary}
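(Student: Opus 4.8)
The plan is to exploit that $f(\bomega)$ has exactly the structure of the expectation in Theorem \ref{thm:ebayes_zerograd}, with one crucial simplification: the weights $\pi(\bgamma \mid \by, \hat{\bomega}^{(k)})$ are frozen at the current iterate $\hat{\bomega}^{(k)}$ and therefore do not depend on the argument $\bomega$ over which we differentiate. First I would differentiate $f$ under the (finite) sum over models, which is immediate because the weights are constant in $\bomega$:
\[
\nabla_{\bomega} f(\bomega)= \sum_{\bgamma} \pi(\bgamma \mid \by, \hat{\bomega}^{(k)}) \, \nabla_{\bomega} \log \pi(\bgamma \mid \bomega).
\]
This is precisely the first identity of Theorem \ref{thm:ebayes_zerograd}, except that the posterior is held fixed at $\hat{\bomega}^{(k)}$ rather than evaluated at the running $\bomega$. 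Consequently the remaining work is the same term-by-term factorization carried out in that proof, and the entire statement follows as a corollary of that computation.

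Next I would substitute the factorized prior, writing $\log \pi(\bgamma \mid \bomega)= \sum_{j=1}^p [\gamma_j \log m_j(\bomega) + (1-\gamma_j) \log(1 - m_j(\bomega))]$ and differentiating each summand. Combining the two resulting fractions over the common denominator $m_j(\bomega)[1-m_j(\bomega)]$ yields $\nabla_{\bomega} \log \pi(\bgamma \mid \bomega) = \sum_{j=1}^p \frac{\nabla_{\bomega} m_j(\bomega)}{m_j(\bomega)[1-m_j(\bomega)]}(\gamma_j - m_j(\bomega))$. I would then interchange the two finite sums (over $\bgamma$ and over $j$) and use that the posterior weights sum to one: $\sum_{\bgamma} \pi(\bgamma \mid \by, \hat{\bomega}^{(k)}) \gamma_j = \pi(\gamma_j=1 \mid \by, \hat{\bomega})$, while $\sum_{\bgamma} \pi(\bgamma \mid \by, \hat{\bomega}^{(k)}) m_j(\bomega) = m_j(\bomega)$ since $m_j(\bomega)$ is constant in $\bgamma$. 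Recalling that $m_j(\bomega) = \pi(\gamma_j = 1 \mid \bomega)$ then gives the first displayed expression.

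For the logistic case I would plug in the inverse-logit derivative $\nabla_{\bomega} m_j(\bomega) = m_j(\bomega)[1 - m_j(\bomega)] \bz_j$ (as used in Theorem \ref{thm:ebayes_zerograd}), which cancels the denominator and leaves $\frac{\nabla_{\bomega} m_j(\bomega)}{m_j(\bomega)[1-m_j(\bomega)]} = \bz_j$. Stacking over $j$ gives $\nabla_{\bomega} f(\bomega) = \sum_{j=1}^p \bz_j [\pi(\gamma_j=1 \mid \by, \hat{\bomega}) - m_j(\bomega)] = \bZ^T[\hat{\bpi} - \mathbf{m}(\bomega)]$. The Hessian follows by differentiating once more: the term $\hat{\bpi}$ is inert in $\bomega$, so only $-\mathbf{m}(\bomega)$ contributes, and the Jacobian of $\mathbf{m}(\bomega)$ has rows proportional to $\bz_j^T$ with logistic weights $m_j(\bomega)[1-m_j(\bomega)]$, i.e. it equals $\bD(\bomega)\bZ$; hence $\nabla_{\bomega}^2 f(\bomega) = -\bZ^T \bD(\bomega) \bZ$.

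I do not expect a genuine obstacle here, since the result reduces to the algebra already established for Theorem \ref{thm:ebayes_zerograd}. The single conceptual point requiring care is that the EM weights are held fixed, which is exactly what lets the gradient pass cleanly inside the sum and makes $\nabla_{\bomega}^2 f$ depend on $\bomega$ only through $\mathbf{m}(\bomega)$ rather than through the posterior. The remaining care is purely bookkeeping: justifying the interchange of the finite sums over models and over $j$, and reading off the correct sign and derivative weight of the chosen logistic parameterization when identifying $\bD(\bomega)$.
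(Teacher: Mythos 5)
Your proposal is correct and follows essentially the same route as the paper: differentiate under the sum (the EM weights being frozen at $\hat{\bomega}^{(k)}$), reuse the per-coordinate factorization $\nabla_{\bomega}\log\pi(\gamma_j\mid\bomega)=\frac{\nabla_{\bomega}m_j(\bomega)}{m_j(\bomega)[1-m_j(\bomega)]}[\gamma_j-m_j(\bomega)]$ already established in the proof of Theorem \ref{thm:ebayes_zerograd}, swap the finite sums, and specialize to the inverse logit. One remark: your Hessian computation gives diagonal weights $m_j(\bomega)[1-m_j(\bomega)]$, which is the correct Jacobian of ${\bf m}(\bomega)$ for the inverse-logit parameterization; the corollary's stated $\bD(\bomega)=\mbox{diag}(m_1(\bomega),\ldots,m_p(\bomega))$ omits the factor $1-m_j(\bomega)$ and appears to be a typo (the paper's proof dismisses this step as ``straightforward calculus''), so your version is the one to trust.
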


Briefly, each M-step requires solving $\nabla_{\bomega} f(\bomega)= \bA [\hat{\bpi} - {\bf m}(\bomega)]= {\bf 0}$,
where $\bA$ is the $q \times p$ matrix with $j^{th}$ column equal to $[\nabla_{\bomega} m_j(\bomega)]/(m_j(\bomega) [1 - m_j(\bomega)])$, 
${\bf m}(\bomega) \in \mathbb{R}^p$ the vector with $j^{th}$ entry equal to $m_j(\bomega)$, 
and $\hat{\bpi} \in \mathbb{R}^p$ the vector with $j^{th}$ entry $\pi(\gamma_j=1 \mid \by,\hat{\bomega}^{(k)})$.
One may use any standard optimization algorithm, and in our examples we used Newton-Raphson.
When $m_j(\bomega)$ is the inverse logit function, it is easy to see that the objective function of the M-step is strictly concave and therefore the maximum, if it exists, is unique.

Finally, we consider in Proposition \ref{prop:mstep_fullrank} an important particular case where the M-step has a particularly simple solution.
The result holds when $\bZ$ has $q$ unique and linearly independent rows.
This occurs when $\bZ$ divides the covariates into $q$ blocks,
and when the rows of $\bZ$ are clustered into $q$ blocks, and one replaces $\bZ$ by the mean of the assigned cluster.
The solution requires a univariate function $h()$ that can be easily evaluated using Newton's method.

\begin{proposition}
Let $\bU$ be the sub-matrix containing the unique rows of the $p \times q$ matrix $\bZ$.
Suppose that $\bU$ has $q$ rows and rank $q$, $m_j(\bomega)= 1/(1+e^{-\bz_j^T \bomega})$,
and $\bomega \sim N({\bf 0}, g (\bZ^T\bZ/p)^{-1})$.
Then the posterior mode $\hat{\bomega}$ satisfies $\nabla_{\bomega} f(\hat{\bomega}) + \nabla_{\bomega} \log \pi(\hat{\bomega})= {\bf 0}$,
where $\hat{\bomega}= \bU^{-1} \tilde{\bomega}$ and
$\tilde{\omega}_j= h(a_j, g_\omega p)$, where
$h(a,c)$ is the solution to $1/(1+e^{-w}) + w/c= a$,
$a_j= \sum_{i=1}^p \tilde{z}_{ij}^T \hat{\pi}_i / \sum_{i=1}^p z_{ij}$,
$\tilde{\bZ}= \bZ \bU^{-1}$ and $\hat{\bpi}$ is as in Corollary \ref{cor:em_zerograd}.
%then the solution to the M-step is
%$\hat{\bomega}= \bU^{-1} \mbox{logit} \left( \tilde{\bZ}^T \hat{\bpi} / \tilde{\bZ}^T \mathbf{1} \right)$,
%where $\tilde{\bZ}= \bZ \bU^{-1}$, $\hat{\bpi}$ is defined in Corollary \ref{cor:em_zerograd}, $\mbox{logit}(\bz)= \log(\bz/(\mathbf{1}-\bz))$, and $\mathbf{1}$ a vector will all entries equal to 1.
\label{prop:mstep_fullrank}
\end{proposition}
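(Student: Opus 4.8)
The plan is to verify the stated stationarity condition directly, by combining the two gradients, rewriting the Gaussian-prior contribution, and then exploiting the block structure encoded by $\bU$ to decouple the problem. First I would assemble the objective's gradient. By Corollary~\ref{cor:em_zerograd}, $\nabla_{\bomega} f(\bomega)= \bZ^T[\hat{\bpi} - {\bf m}(\bomega)]$, and since $\bomega \sim N({\bf 0}, g(\bZ^T\bZ/p)^{-1})$ its log-density equals $-\tfrac{1}{2gp}\bomega^T\bZ^T\bZ\bomega$ up to an additive constant, so $\nabla_{\bomega}\log\pi(\bomega)= -\tfrac{1}{gp}\bZ^T\bZ\bomega$. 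Setting the sum of the two gradients to zero and factoring out the common $\bZ^T$, the M-step stationarity condition becomes
\[
\bZ^T\Big[\hat{\bpi} - {\bf m}(\bomega) - \tfrac{1}{gp}\,\bZ\bomega\Big]= {\bf 0}.
\]

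Next I would use the block structure to separate this into $q$ scalar equations. Writing $\tilde{\bZ}= \bZ\bU^{-1}$ and using that each row of $\bZ$ coincides with one of the $q$ rows of $\bU$, a short calculation shows that every row of $\tilde{\bZ}$ is a standard basis vector identifying the block $b(i)\in\{1,\ldots,q\}$ of covariate $i$; in particular $\tilde{z}_{ij}\in\{0,1\}$ and $\bZ= \tilde{\bZ}\bU$. Substituting $\bZ^T= \bU^T\tilde{\bZ}^T$ and cancelling the invertible factor $\bU^T$ (this is exactly where the rank-$q$ assumption on $\bU$ is used) reduces the condition to $\tilde{\bZ}^T[\hat{\bpi} - {\bf m}(\bomega) - \tfrac{1}{gp}\bZ\bomega]= {\bf 0}$. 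I would then reparametrize by $\tilde{\bomega}= \bU\bomega$, noting that $\bz_i^T\bomega= \tilde{\bz}_i^T\tilde{\bomega}= \tilde{\omega}_{b(i)}$ depends only on the block of covariate $i$, so both $m_i(\bomega)= 1/(1+e^{-\tilde{\omega}_{b(i)}})$ and $(\bZ\bomega)_i= \tilde{\omega}_{b(i)}$ are constant within a block. Writing $\sigma(t)= 1/(1+e^{-t})$, the $j$-th component of the reduced condition sums over the covariates with $b(i)=j$ and, after dividing by the block size $\sum_i\tilde{z}_{ij}$, collapses to the single scalar equation
\[
\sigma(\tilde{\omega}_j) + \frac{\tilde{\omega}_j}{gp}= a_j, \qquad a_j= \frac{\sum_{i} \tilde{z}_{ij}\,\hat{\pi}_i}{\sum_{i} \tilde{z}_{ij}},
\]
so that $a_j$ is the average of $\hat{\bpi}$ over block $j$, i.e. the ratio appearing in the statement (the denominator being the number of covariates in block $j$).

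Finally I would identify this equation with the definition of $h$ and read off $\hat{\bomega}$. The map $w\mapsto \sigma(w)+w/(gp)$ has derivative $\sigma(w)(1-\sigma(w))+1/(gp)>0$ and is therefore a strictly increasing bijection of $\mathbb{R}$, so its inverse $h(\cdot,gp)$ is well defined and computable by Newton's method; hence $\tilde{\omega}_j= h(a_j,gp)$ and, undoing the reparametrization, $\hat{\bomega}= \bU^{-1}\tilde{\bomega}$. Since every step above is an equivalence, this $\hat{\bomega}$ indeed solves $\nabla_{\bomega} f(\hat{\bomega}) + \nabla_{\bomega}\log\pi(\hat{\bomega})= {\bf 0}$, and the strict concavity of $f+\log\pi$ (as noted after Corollary~\ref{cor:em_zerograd}) makes it the unique maximizer. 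I expect the only genuinely delicate point to be the bookkeeping in the decoupling step: one must check carefully that $\tilde{\bZ}$ is a $0/1$ block-indicator matrix and that cancelling $\bU^T$ is legitimate, after which the $q$-dimensional stationarity condition separates exactly into the scalar fixed-point equations defining $h$, and the remaining arguments are routine.
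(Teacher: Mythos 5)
Your proposal is correct and follows essentially the same route as the paper's proof: assemble $\nabla_{\bomega} f + \nabla_{\bomega}\log\pi = \bZ^T[\hat{\bpi} - {\bf m}(\bomega)] - \tfrac{1}{gp}\bZ^T\bZ\bomega$, reparametrize via $\tilde{\bomega}=\bU\bomega$ so that the rows of $\tilde{\bZ}=\bZ\bU^{-1}$ become block indicators, and decouple into the $q$ scalar equations $\sigma(\tilde{\omega}_j)+\tilde{\omega}_j/(gp)=a_j$ defining $h$. The only cosmetic difference is direction of travel — you cancel the invertible $\bU^T$ up front while the paper solves the transformed system and then left-multiplies by $(\bU^{-1})^T$ to recover the original condition — and your added observation that $w\mapsto\sigma(w)+w/c$ is a strictly increasing bijection is a welcome (if minor) strengthening of the paper's bare appeal to Newton--Raphson.
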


\subsection{Other computational methods}
\label{ssec:other_comput_methods}

An alternative to obtaining a point estimate $\hat{\bomega}$ maximizing the posterior $\pi(\bomega \mid \by)$ is to approximate this posterior with MCMC. That is, one performs fully Bayesian inference using $\pi(\bomega \mid \by)$, obtains $\hat{\bomega}$ as some sensible posterior summary such as the posterior mean,
and finally uses a second MCMC run to approximate $\pi(\btheta \mid \by, \hat{\bomega})$.
Here we do not discuss the many options available in the MCMC literature,
but rather outline strategies that are relatively simple to implement. 

A first option is to use the gradients derived in Section \ref{ssec:comp_gradientbased} within stochastic gradient MCMC  (see \cite{nemeth:2021} for a review).
%Sampling directly from $\pi(\bomega \mid \by)$ is computationally challenging because $p(\by \mid \bomega)$ is cumbersome to evaluate, as discussed earlier. However, 
Alternatively, it is often simple to design an MCMC algorithm to sample either from $\pi(\bgamma,\bomega \mid \by)$ or from $\pi(\btheta, \bgamma, \bomega \mid \by)$.
For example, this strategy was used by \cite{jewson:2023} in Gaussian graphical model setting. % where $\bomega$ linked prior edge inclusion probabilities to external network data.
Briefly, given some initialization $(\bgamma^{(0)}, \bomega^{(0)})$,
one may use a block Gibbs sampler where its iteration $l=1,2,\ldots$ proceeds as follows:
\begin{enumerate}
\item Sample $\bgamma^{(l)} \sim \pi(\bgamma \mid \by, \bomega^{(l-1)}) \propto p(\by \mid \bgamma) \pi(\bgamma \mid \bomega^{(l-1)})$.

\item Sample $\bomega^{(l)} \sim \pi(\bomega \mid \by, \bgamma^{(l)}) \propto \pi(\bgamma^{(l)} \mid \bomega) \pi(\bomega)$.
\end{enumerate}

Upon convergence, $\bomega^{(l)}$ is by definition a sample from $\pi(\bomega \mid \by)$.
One may set $\hat{\bomega}$ to the posterior mean $L^{-1} \sum_{l=1}^L \bomega^{(l)}$ (possibly after discarding a burn-in period),
or to some Monte Carlo estimate of the posterior mode.
Step 1 requires sampling from the posterior of the model $\bgamma$ given hyper-parameters $\bomega= \bomega^{(l-1)}$.
This step has the same complexity as sampling $\bgamma$ in a fully Bayesian setting where no data integration is performed.
Step 1 is easy to implement as long as one can compute or approximate $p(\by \mid \bgamma)$, as is the case for generalized linear and generalized additive models, for example (see the opening of Section \ref{sec:computation} for some discussion). 
Step 2 is typically also easy. 
First, note that $\bomega$ is often low-dimensional and that
$\nabla_{\bomega} \log \pi(\bgamma \mid \bomega)$ is usually easy to obtain (else, one can use automatic differentiation), enabling the use of gradient-based methods such as Hamiltonian Monte Carlo.
Second, if $\pi(\bgamma \mid \bomega)= \prod_j \mbox{Bern}(\gamma_j; m_j(\bomega))$ where $m_j$ is the inverse logit (or some other suitable) function, Step 2 requires sampling from the posterior of a logistic regression model where the observed outcomes are $\gamma_1^{(l)},\ldots,\gamma_p^{(l)}$, for which a  plethora  of algorithms are available.

If marginal likelihoods $p(\by \mid \bgamma)$ in Step 1 are hard to approximate, one could alternatively sample
$\btheta^{(l)} \sim \pi(\btheta \mid \by)$ using PDMPs such as the sticky Zig-Zag sampler \citep{bierkens:2023},
and subsequently obtain $\bgamma^{(l)}$ as a deterministic function of $\btheta^{(l)}$.
Briefly, PDMPs only require being able to evaluate the likelihood $p(\by \mid \btheta)$, 
and can handle situations where $\pi(\btheta \mid \bomega)$ is such that some singletons have non-zero prior probability.
For example, in variable selection we have non-zero $\pi(\theta_j = 0 \mid \bomega)$ and,
after sampling $\btheta^{(l)} \sim \pi(\btheta \mid \by, \bomega)$, one simply computes $\gamma_j^{(l)}= \mbox{I}(\theta_j^{(l)} \neq 0)$.

If one only seeks a point estimate $\hat{\bomega}$, there's yet another strategy.
One can replace the model selection prior $\pi(\btheta \mid \bomega)$, which places positive probability to zero parameter values, by a continuous prior that sets high probability close to zero \citep{george:1993}.
Typically, this is achieved by replacing Dirac measures at 0 by Gaussian or Laplace distributions that have zero mean and small variance.
In this setting it is possible to devise an EM algorithm to find a local mode of
$\pi(\btheta, \bomega \mid \by)$ \citep{rockova:2014,avalospacheco:2022}.
Briefly, the algorithm views $\bgamma$ as a latent variable and
defines a complete-data likelihood $p(\by,\bgamma \mid \btheta, \bomega)$,
similarly to the EM algorithm outlined in Section \ref{ssec:em_algorithm},
except that the parameters are now $(\btheta,\bomega)$ rather than only $\bomega$.
This strategy can be very fast computationally, because the EM updates do not involve integrals over $\bgamma$
and they may be even available in closed-form.
However, the value of $\bomega$ featuring in the mode of $\pi(\btheta, \bomega \mid \by)$ is different from the mode of $\pi(\bomega \mid \by)$. Whereas the latter has been well-studied theoretically (see Section \ref{sec:ebayes}), the former lacks theoretical guarantees (to our knowledge).
Unless such guarantees are provided, we do not recommend this strategy.

\section{Examples}\label{sec:results}

To assess the improvements in inference brought by data integration,
we compared the performance of our empirical Bayes framework relative to an analogous framework where $\bZ$ only contains the intercept.
We also considered the Beta-Binomial prior \citep{scott:2010}
 as a fully Bayesian counterpart to our framework where 
$\bZ$ only contains the intercept,  
as in our experience this method is an excellent default,
and three penalized likelihood methods.
The latter are the LASSO \citep{tibshirani:1996}, adaptive LASSO (ALASSO, \cite{zou:2006}) and SCAD \citep{fan:2001},
and we set their penalization parameters with 10-fold cross-validation
using functions mylars and ncvreg in R packages parcor 0.2.6 and ncvreg 3.15.0 (respectively).

For our empirical Bayes methodology we used the EM algorithm from Section \ref{sec:computation}, which we implemented in function modelSelection\_eBayes in R package modelSelection, using $M=1,000$ full Gibbs iterations to estimate log-gradients and hessians at each hyper-parameter value $\bomega$.
For fully Bayesian inference under the Beta-Binomial prior we used function modelSelection, also at modelSelection.
In both cases, we used $L=5,000$ full Gibbs iterations to explore the model space.
 Appendix \ref{sec:runtimes} shows that very similar results are obtained with $M=100$ and $L=1,000$. It also shows that run times are comparable to those under the Beta-Binomial prior. 
Regarding the prior on the coefficients, we set a product MOM prior with default prior dispersion ($g=1/3$, after the columns in $\bX$ are standardized to unit sample variance) \citep{johnson:2012,rossell:2017}.
As a sensitivity analysis, we also obtained results using Zellner's prior with the default $g=1$,  and also for $g \in \{0.01,10\}$.  For clarity of the figures and space constraints, Zellner's prior results are in the supplement.
The prior on the error variance $\phi$ was set to a (minimally informative) inverse gamma(0.01, 0.01).
The R code to reproduce our analyses is at 
\url{https://github.com/davidrusi/paper_examples/tree/main/2025_dataintegration_eBayes}.

\subsection{Simulation study}
\label{ssec:simstudy}

\begin{figure}
\begin{center}
\begin{tabular}{cc}
\multicolumn{2}{c}{Scenario 1 $(\omega_1=2, \omega_2=0)$} \\
\includegraphics[width=0.49\textwidth]{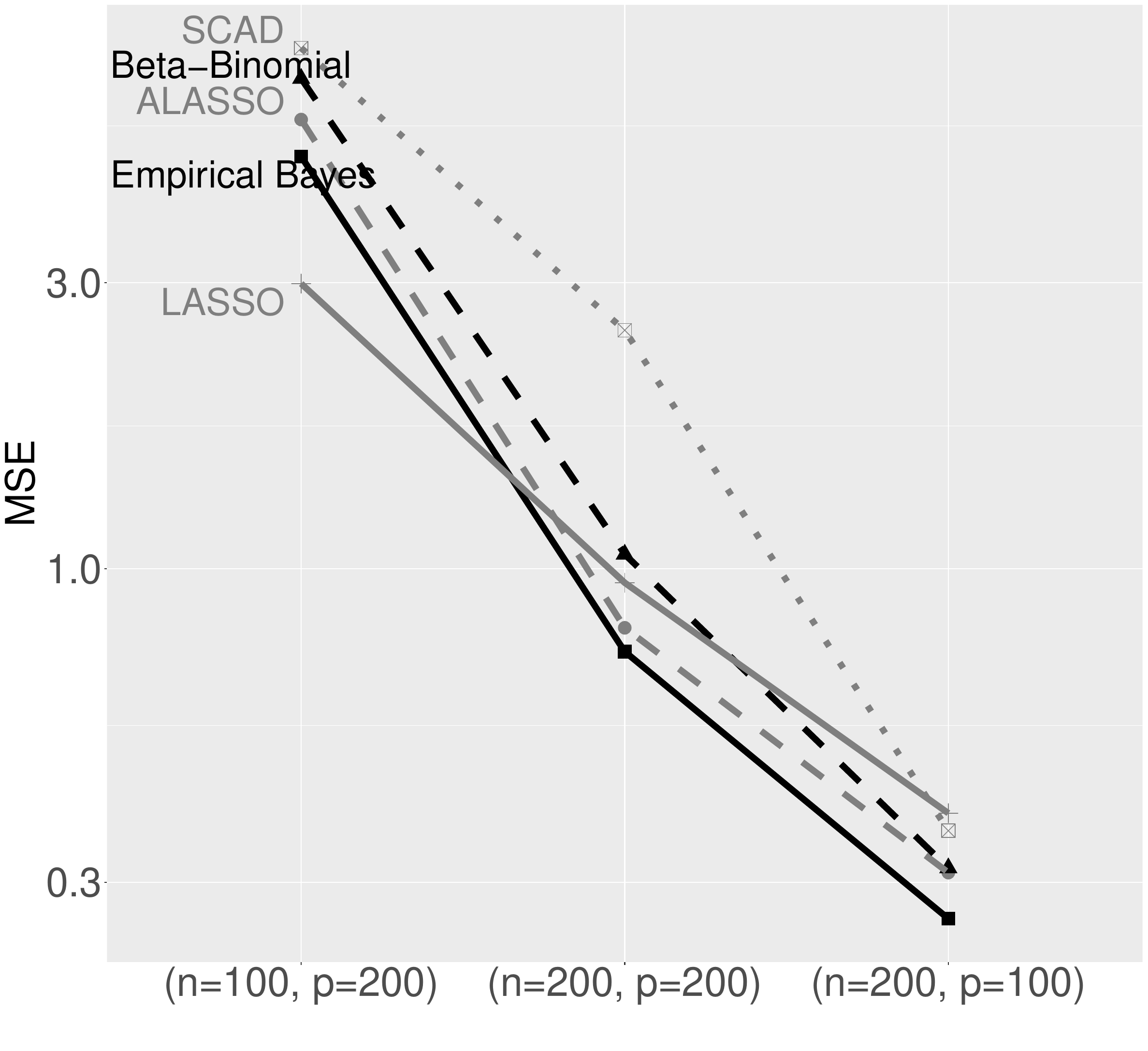} &
\includegraphics[width=0.49\textwidth]{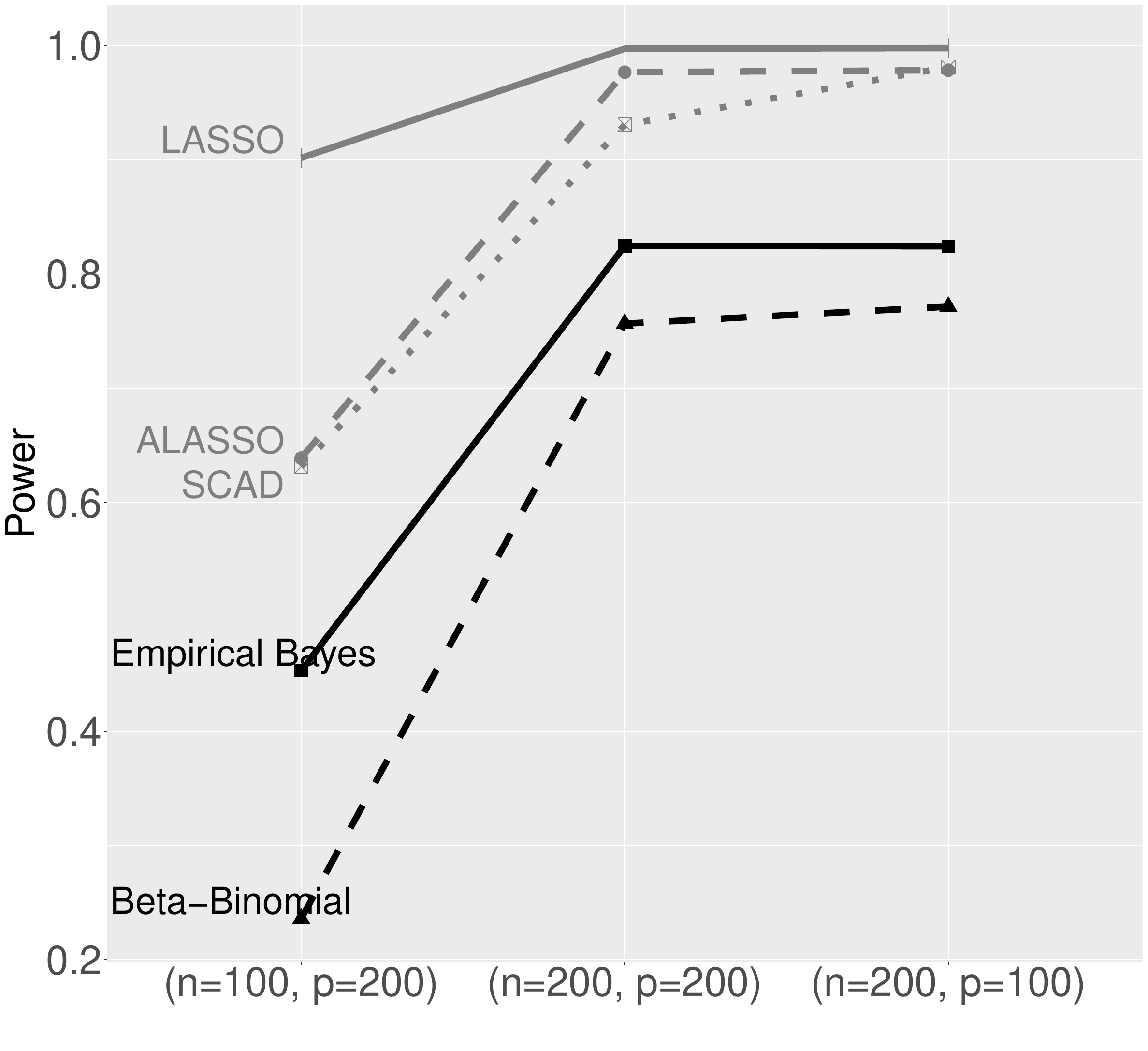} \\
\multicolumn{2}{c}{Scenario 2 $(\omega_1=1, \omega_2=0)$} \\
\includegraphics[width=0.49\textwidth]{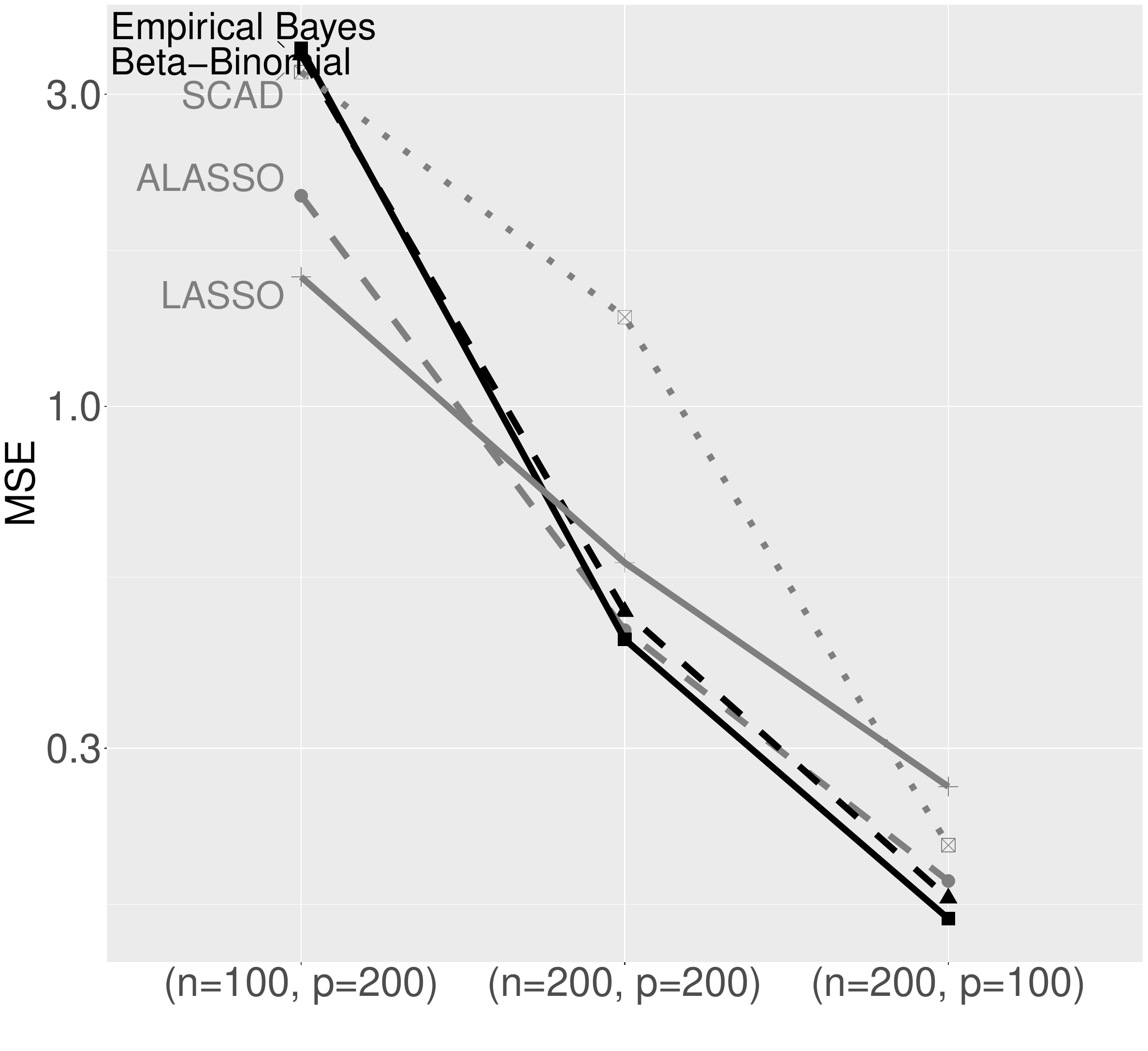} &
\includegraphics[width=0.49\textwidth]{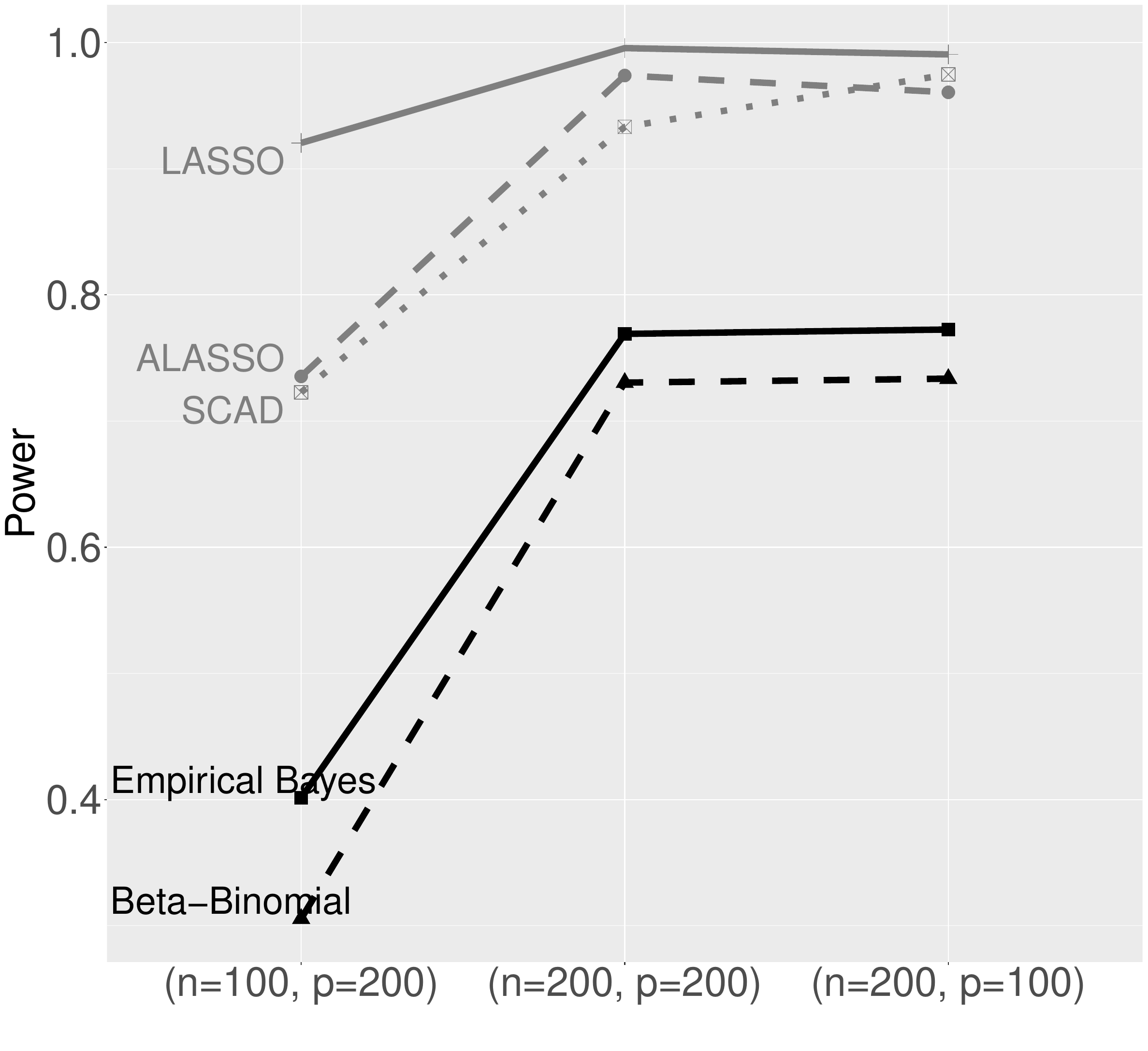} \\
\multicolumn{2}{c}{Scenario 3 $(\omega_1=0, \omega_2=0)$} \\
\includegraphics[width=0.49\textwidth]{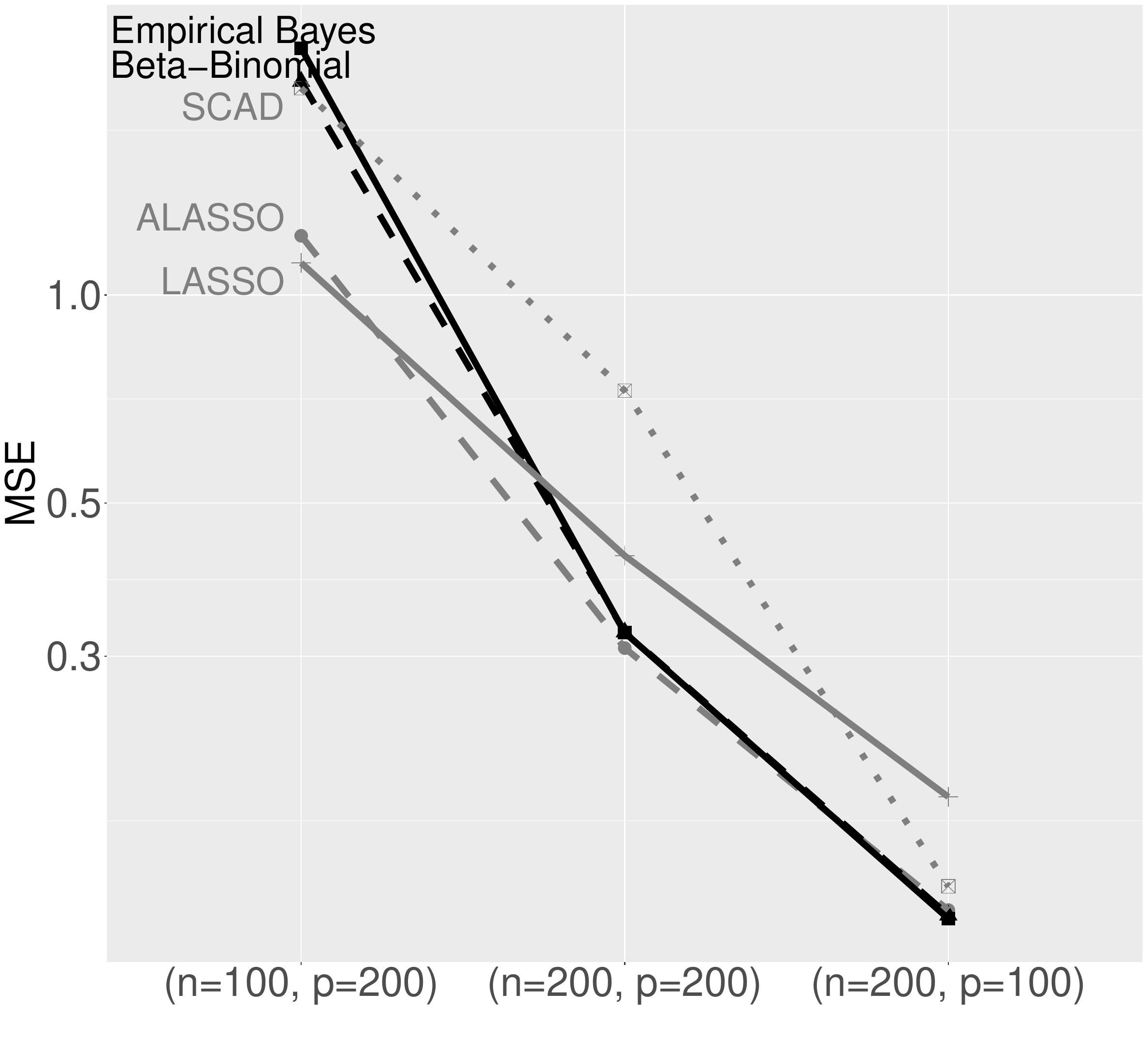} &
\includegraphics[width=0.49\textwidth]{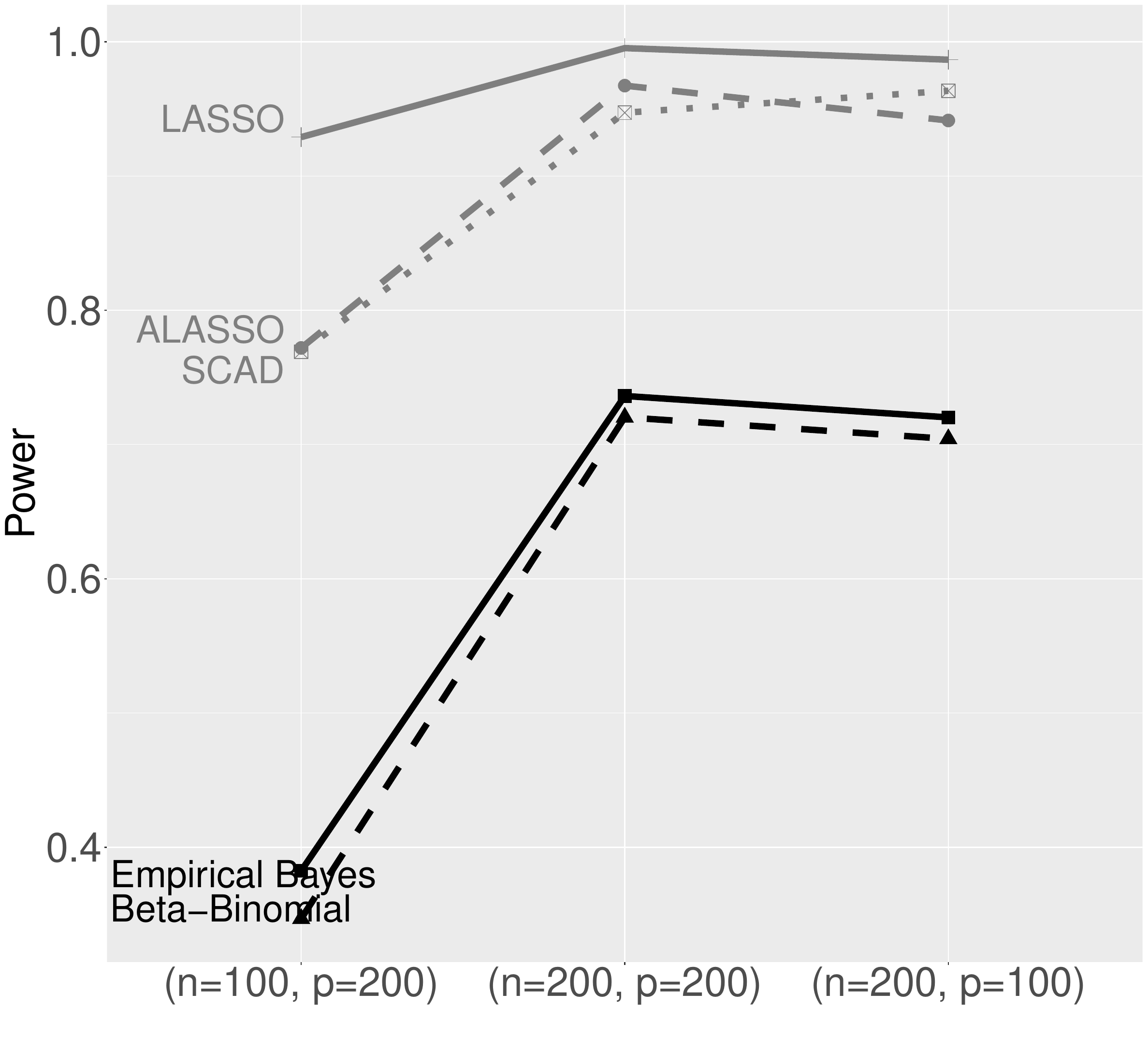} \\
\end{tabular}
\end{center}
\caption{Mean squared estimation error (left) and power (right) in simulation study}
\label{fig:simstudy}
\end{figure}

%\begin{figure}
%\begin{center}
%\begin{tabular}{cc}
%\multicolumn{2}{c}{Scenario 1 $(\omega_1=2, \omega_2=0)$} \\
%\includegraphics[width=0.49\textwidth]{figs/scenario1_MSE0.pdf} &
%\includegraphics[width=0.49\textwidth]{figs/scenario1_power0.pdf} \\
%\multicolumn{2}{c}{Scenario 2 $(\omega_1=1, \omega_2=0)$} \\
%\includegraphics[width=0.49\textwidth]{figs/scenario2_MSE0.pdf} &
%\includegraphics[width=0.49\textwidth]{figs/scenario2_power0.pdf} \\
%\multicolumn{2}{c}{Scenario 3 $(\omega_1=0, \omega_2=0)$} \\
%\includegraphics[width=0.49\textwidth]{figs/scenario3_MSE0.pdf} &
%\includegraphics[width=0.49\textwidth]{figs/scenario3_power0.pdf} \\
%\end{tabular}
%\end{center}
%\caption{Mean squared estimation error (left) and power (right) in simulation study}
%\label{fig:simstudy0}
%\end{figure}

\begin{figure}
\begin{center}
\begin{tabular}{ccc}
Scenario 1 & Scenario 2 & Scenario 3 \\
\includegraphics[width=0.33\textwidth]{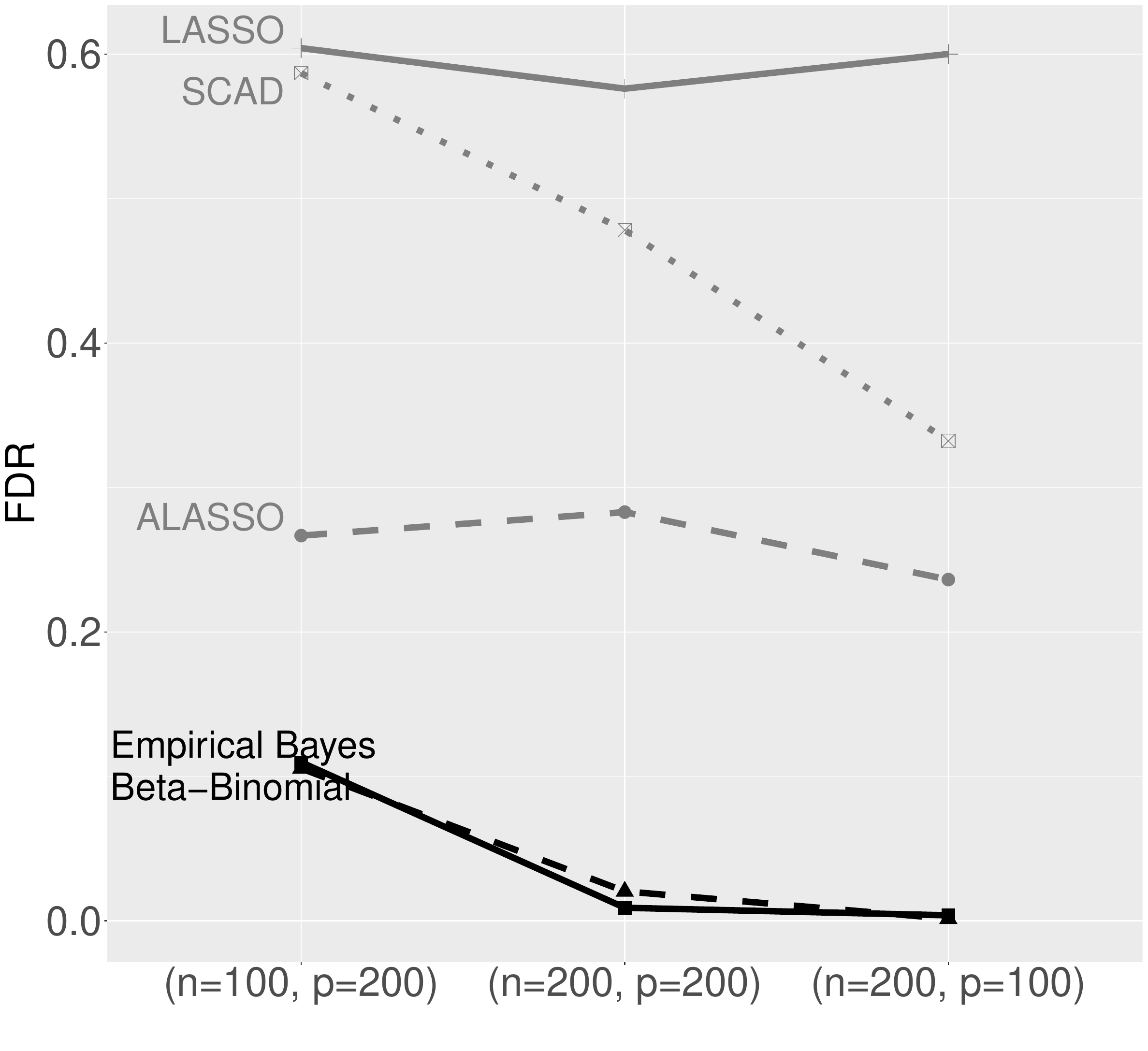} &
\includegraphics[width=0.33\textwidth]{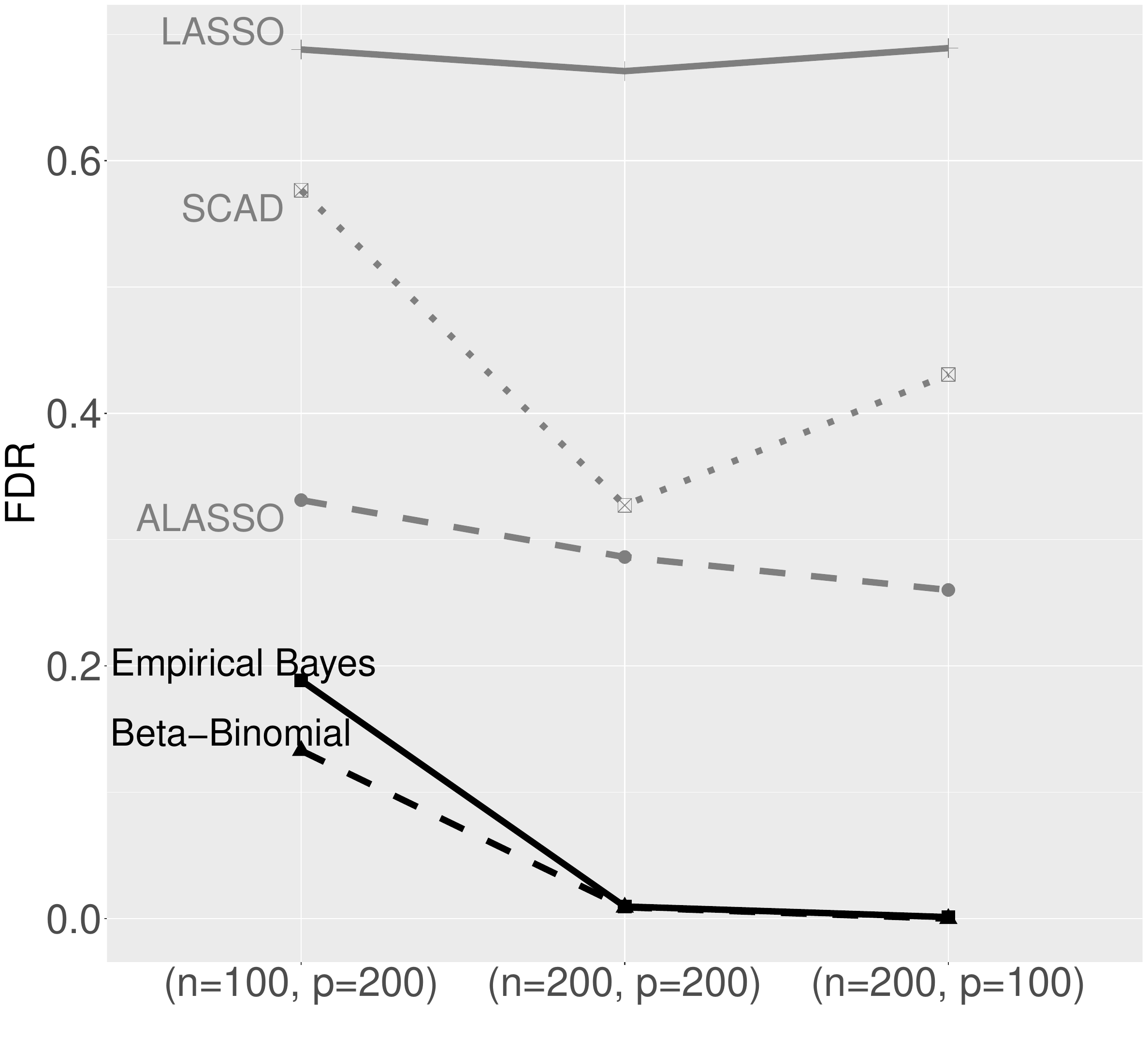} &
\includegraphics[width=0.33\textwidth]{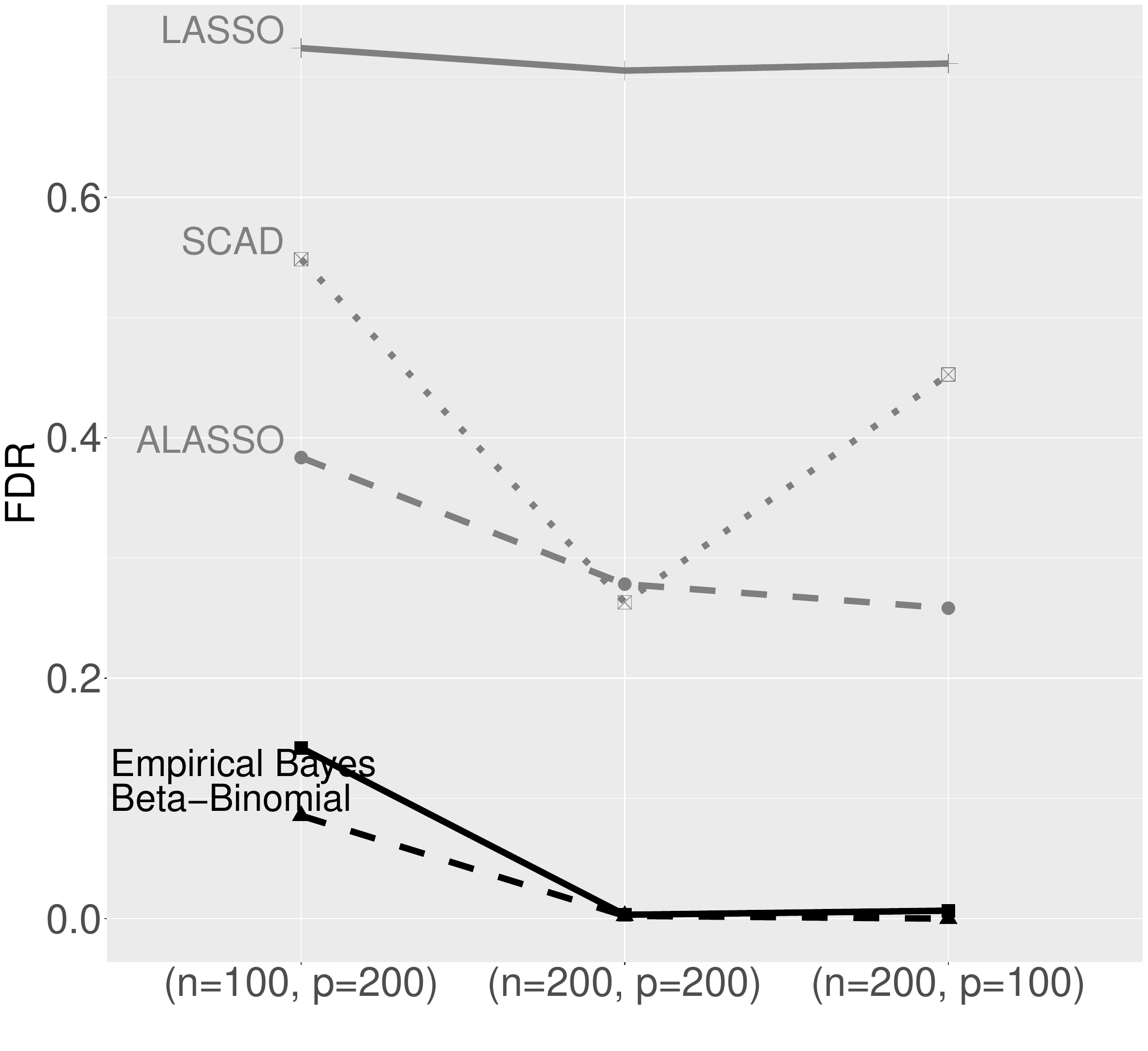} \\
\end{tabular}
\end{center}
\caption{False discovery rate in simulation study.
Scenario 1 corresponds to $(\omega_1=2, \omega_2=0)$, Scenario 2 to $(\omega_1=1, \omega_2=0)$ and Scenario 3 to $(\omega_1=0, \omega_2=0)$}
\label{fig:simstudy_fdr}
\end{figure}

%\begin{figure}
%\begin{center}
%\begin{tabular}{ccc}
%Scenario 1 & Scenario 2 & Scenario 3 \\
%\includegraphics[width=0.33\textwidth]{figs/scenario1_FDR0.pdf} &
%\includegraphics[width=0.33\textwidth]{figs/scenario2_FDR0.pdf} &
%\includegraphics[width=0.33\textwidth]{figs/scenario3_FDR0.pdf} \\
%\end{tabular}
%\end{center}
%\caption{False discovery rate in simulation study.
%Scenario 1 corresponds to $(\omega_1=2, \omega_2=0)$, Scenario 2 to $(\omega_1=1, \omega_2=0)$ and Scenario 3 to $(\omega_1=0, \omega_2=0)$}
%\label{fig:simstudy_fdr}
%\end{figure}

We considered five simulation scenarios that differ in the degree of informativeness of the external data $\bZ$.
In all scenarios, $\bZ$ contains an intercept and two meta-covariates generated from standard Gaussians with 0.5 correlation.
Variable inclusion indicators were simulated using $\pi(\gamma_j =1 \mid \bz_j)= 1 / (1 + e^{-\omega_0 - \omega_1 z_{j1} - \omega_2 z_{j2}})$,
where $\omega_0= \log(0.05/0.95)$ and $\omega_2=0$ in all five scenarios.
That is, our empirical Bayes model prior was estimated using the two meta-covariates, but the second meta-covariate was truly non-informative.
The scenarios differ only in $\omega_1$:
Scenario 1 is a strongly informative case where $\omega_1=2$,
Scenario 2 a moderately informative case where $\omega_1=1$,
and Scenario 3 a non-informative case where $\omega_1=0$.
Scenarios 4 and 5 were added for further exploration, by setting $\omega_1=1.5$ and $0.75$ respectively. The results are similar to those from Scenarios 1 and 3, hence we refer the reader to the supplementary results.
The regression parameters $\theta_j$ for the active covariates ($\gamma_j=1$) were set uniformly spaced in [1/3, 2/3]. 
We chose this range because $\theta_j \not\in [1/3,1/2]$ were not very informative: all methods had a high power to detect $\theta_j > 2/3$, and low power to detect $\theta_j < 1/3$.
In all settings, covariate values were drawn from a multivariate Gaussian with zero means, unit variances, and pairwise correlations $\mbox{cor}(x_{ij}, x_{il})= 0.5$ for $j \neq l$.

In each simulation scenario we considered three combinations of $n$ and $p$.
The most challenging one features $(n=100, p=200)$, then we also considered $(n=200, p=200)$ and $(n=200, p=100)$.
For each of these combinations, we generated 100 independent datasets, and we report averaged results.
For Bayesian methods, the point estimate was the Bayesian model averaging estimator.
To obtain the frequentist power (proportion of truly active covariates that were selected) and false discovery rate (FDR, average proportion of false positives among the selected covariates), variable selection was based on their posterior inclusion probability being $\geq 0.95$.

Figure \ref{fig:simstudy} displays the mean squared parameter estimation error (MSE) of the point estimates $\hat{\btheta}$, and the power. Figure \ref{fig:simstudy_fdr} displays the FDR. 
To assess the potential benefits of data integration, the main comparisons of interest are our empirical Bayes framework vs. its counterpart where $\bZ$ only features the intercept, and vs. the Beta-Binomial prior.
To avoid cluttering the figures, we do not include the results for the intercept-only empirical Bayes. These are in the supplement, and show that the power and MSE are extremely similar to the Beta-Binomial and the FDR to empirical Bayes using all the meta-covariates.
The main findings are that in Scenario 1 the empirical Bayes model prior attained significantly higher power and lower MSE than the Beta-Binomial (and the intercept-only empirical Bayes), and to a lesser extent also in Scenario 2. 
The improvements in power were marked in some cases, e.g. from 0.25 to 0.45 in Scenario 1 when $(n=100, p=200)$.
The FDR was similar in most cases, except in Scenarios 2 and 3 where for the $(n=100, p=200)$ case the FDR was a higher for empirical Bayes.
Overall, these results show that data integration can significantly improve inference when some of the meta-covariates are truly informative.
In Scenario 3, where the meta-covariates were completely uninformative, the MSE and power of the empirical Bayes prior was similar to the Beta-Binomial, and the FDR was similar to that in Scenario 2.
In Scenario 3 we do not expect empirical Bayes to help, but reassuringly it does not hurt much either.
As mentioned, the results when using Zellner's prior on the coefficients were qualitatively similar, see the supplementary results.

Penalized likelihood methods attained higher power but incurred a much higher FDR, roughly in the (0.3,0.7) range, which would be deemed too high in most applications.
This is not an issue per se, given that these methods are more geared towards estimation than false discovery control.
The predictive accuracy of LASSO and adaptive LASSO (ALASSO) ranged from slightly better to slightly worse than for the Bayesian methods, depending on the setting.

\subsection{Colon cancer}
\label{ssec:colon_cancer}

\begin{figure}
\begin{center}
\begin{tabular}{cc}
\includegraphics[width=0.5\textwidth]{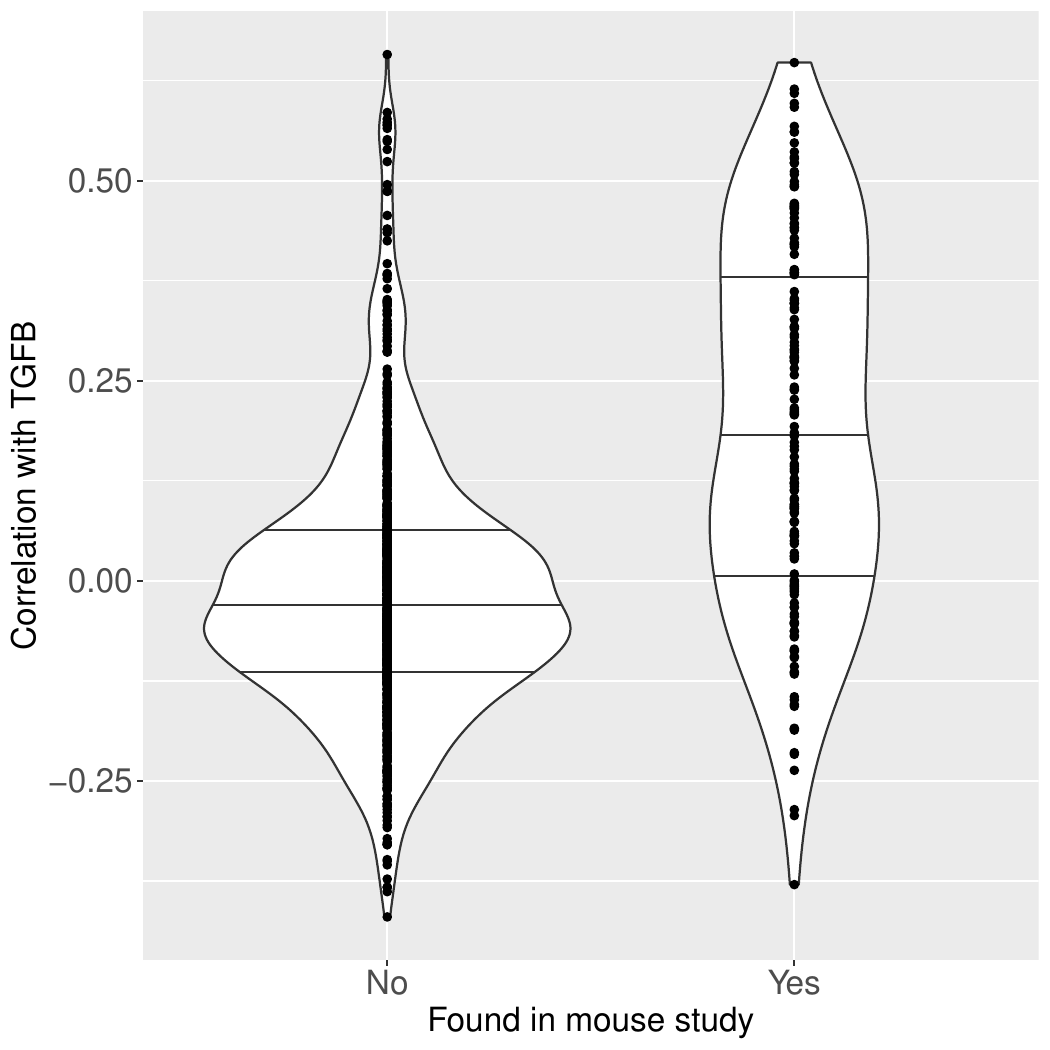} &
\includegraphics[width=0.5\textwidth]{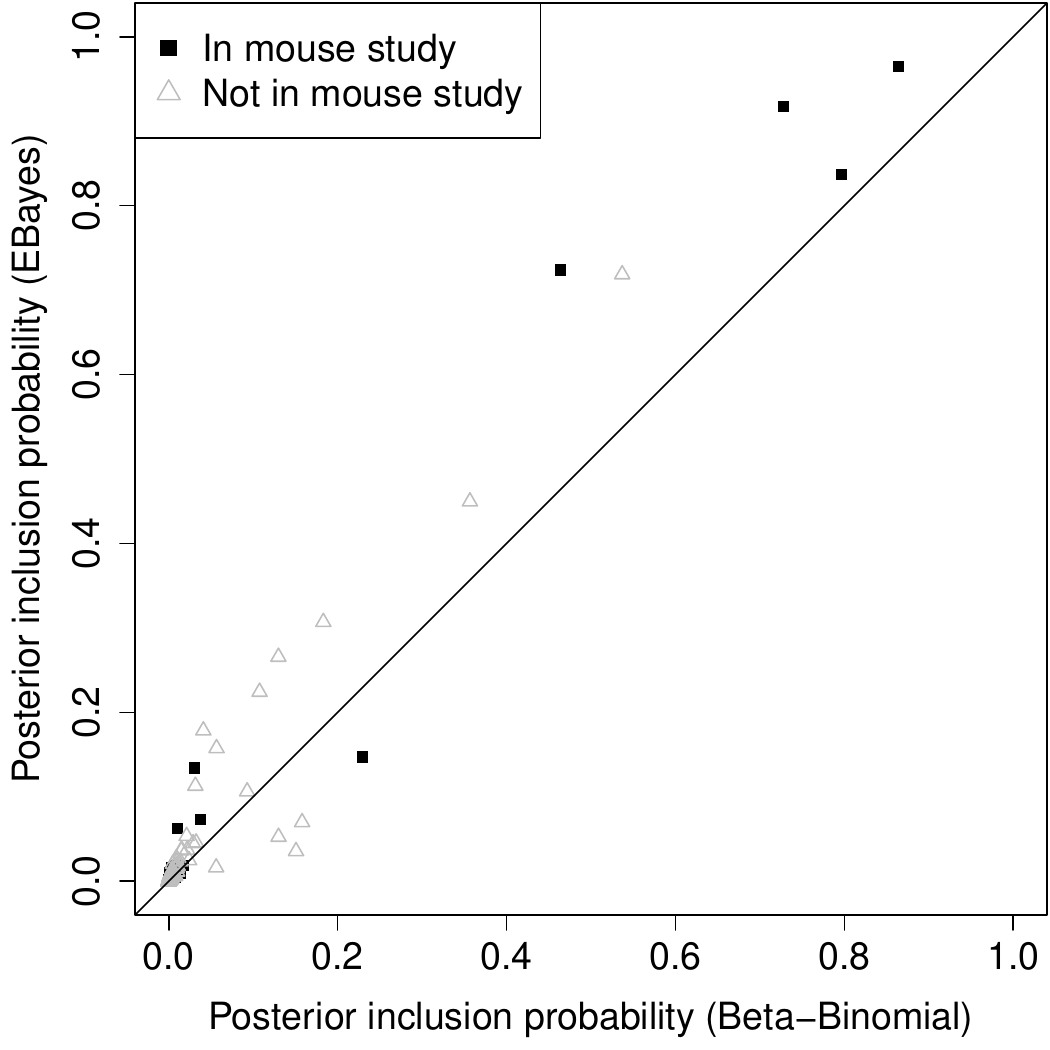} \\
\end{tabular}
\end{center}
\caption{TGFB data. Left: marginal correlations with the outcome vs. gene being in/out the mouse list
Right: posterior inclusion probabilities using empirical Bayes vs. using Beta-Binomial prior}
\label{fig:tgfb_exploratory}
\end{figure}

\begin{table}[h]
\begin{tabular}{@{}l|ccc|ccc|@{}}
\hline
     & \multicolumn{3}{c|}{EBayes ($R^2=0.527$)} & \multicolumn{3}{c|}{Beta-Binomial ($R^2=0.488$)} \\
Gene  & $E(\theta_j \mid \by)$ & 95\% interval & PIP & $E(\theta_j \mid \by)$ & 95\% interval & PIP \\
\hline
CILP      & 0.23 & (0.14, 0.32) & 0.96 & 0.23 & (0.00, 0.34) & 0.86 \\ 
GAS1      & 0.33 & (0.00, 0.45) & 0.92 & 0.25 & (0.00, 0.47) & 0.73 \\ 
HIC1      & 0.24 & (0.00, 0.38) & 0.84 & 0.22 & (0.00, 0.39) & 0.80 \\ 
ESM1      & 0.15 & (0.00, 0.27) & 0.72 & 0.09 & (0.00, 0.27) & 0.46 \\ 
KCNJ5-AS1 & 0.15 & (0.00, 0.26) & 0.72 & 0.10 & (0.00, 0.26) & 0.54 \\ 
%%CILP & cartilage intermediate layer protein & 0.23 & (0.14, 0.32) & 0.96 & 0.23 & (0.00, 0.34) & 0.86 \\ 
%%GAS1 & growth arrest specific 1             & 0.33 & (0.00, 0.45) & 0.92 & 0.25 & (0.00, 0.47) & 0.73 \\ 
%%HIC1 & HIC ZBTB transcriptional repressor 1 & 0.24 & (0.00, 0.38) & 0.84 & 0.22 & (0.00, 0.39) & 0.80 \\ 
%%ESM1 & endothelial cell specific molecule 1 & 0.15 & (0.00, 0.27) & 0.72 & 0.09 & (0.00, 0.27) & 0.46 \\ 
%%KCNJ5-AS1 & KCNJ5 antisense RNA 1           & 0.15 & (0.00, 0.26) & 0.72 & 0.10 & (0.00, 0.26) & 0.54 \\ 
\hline
\end{tabular}
\caption{TGFB data. BMA estimates and 95\% intervals for genes with $\pi(\gamma_j=1 \mid \by) > 0.5$ (PIP) in either the empirical Bayes or Beta-Binomial based analysis. $R^2$ is the squared correlation between the outcome and its leave-one-out prediction}
\label{tab:tgfb_topgenes}
\end{table}

In molecular biology it is common to use animal models (e.g. mice) to study genomic mechanisms such as gene expression.
Findings from such studies often serve as a basis to study human diseases, for example to assess whether genes that played key roles in the animal model are also important in humans.
We assess our empirical Bayes methodology in such a transfer learning task (from mice to humans) in colon cancer.

Specifically, \cite{calon:2012} studied a list of 172 genes that were associated to TGFB in mice, and showed that TGFB plays a crucial role in colon cancer progression.
We consider a colon cancer dataset analysed in \cite{rossell:2017}, which measures the expression of TGFB plus $p=1,000$ genes (including the 172 genes from the mouse list) for $n=262$ human patients.
Both TGFB and the other genes were standardized to zero mean and unit variance.
Given the importance of TGFB in colon cancer, it is of interest to understand how its expression is associated with other genes.
To this end, we used our methodology to regress TGFB on the 1,000 genes, using as a meta-covariate the binary indicator of whether the gene was in the mouse shortlist or not.
Hence, $\bZ$ is a $1000 \times 2$ matrix containing the intercept and said indicator (to be specific, coded with a sum-to-zero constraint, so that it is orthogonal to the intercept).

As a purely exploratory analysis, TGFB showed stronger correlations with the 172 genes inside the mouse shortlist than with other genes, see Figure \ref{fig:tgfb_exploratory} (left).
While this suggests that the mouse findings may have value for humans, these correlations only measure marginal association, and are not concluding evidence that the mouse study helped identify genes that are predictive of TGFB (i.e., conditionally associated with TGFB, given all other genes).
To assess this rigorously, we performed Bayesian variable selection with our empirical Bayes framework.
For comparison, we repeated the exercise under a Beta-Binomial prior, which does not use the mouse data.
 The run time under the Beta-Binomial prior was 14 seconds and for our empirical Bayes framework it was 49 seconds (Ubuntu laptop, 13$^{th}$ gen Intel core i7, 32Gb RAM). 
We report results when using the pMOM prior on parameters, the results for Zellner's prior are almost identical and are in the supplement.
The estimated empirical Bayes prior inclusion probabilities were 0.032 for genes in the mouse list and 0.015 for those outside, supporting that genes identified in mice are more likely to be relevant in humans.
%and the corresponding hyper-parameters were $\hat{\bomega}= (-4.027, 0.740)$.
Figure \ref{fig:tgfb_exploratory} (right) shows the posterior inclusion probabilities (PIPs) for all genes using empirical Bayes and using the Beta-Binomial.
While the PIP for most genes was close to 0 under both, there are non-negligible differences for genes with large PIPs.
Table \ref{tab:tgfb_topgenes} shows genes with PIP$>$0.5 under either analysis.
Empirical Bayes assigned higher PIPs to all these genes, e.g. 2 had PIP$>$0.9 and 5 had PIP$>$0.5, whereas for the Beta-Binomial there were 0 and 4 genes respectively.

These differences in PIP are not huge, but they can have practical implications.
For example, if one sets a stringent PIP threshold such as 0.95, only empirical Bayes would find a gene associated with TGFB.
This gene is CILP, a cartilage intermediate layer protein that has been independently found to be associated with colon cancer progression 
\citep{wang_xueli:2023}.
At a 0.9 threshold a second gene emerges, GAS1. This is a growth-arrest specific gene that is associated with colon cancer  metastasis  \citep{li_qingquo:2016}.
Note that both CILP and GAS1 were in the mouse shortlist, as well as HIC1 and ESM1, but not KCNJ5-AS1.

While this discussion already suggests that the empirical Bayes findings are supported by the biological literature, we conducted two more analyses to further assess that using the mouse-based meta-covariates gave better inference.
First, we used leave-one-out cross-validation to assess the accuracy of the Bayesian model averaging predictions.
Specifically, we computed the $R^2$ coefficient between out-of-sample predictions and observations (i.e. their squared correlation),
obtaining $R^2=$0.527 for empirical Bayes and 0.488 for the Beta-Binomial.
That is, adding the mouse data gave slightly higher out-of-sample predictive accuracy.
Second, there is the question of whether the estimated $\hat{\omega}_2= 0.74$ is significantly different from 0.
To assess this in an independent fashion from our Bayesian analysis, we used the hyper-parameter estimates from the leave-one-out exercise to obtain a jackknife confidence interval. Said interval was (0.68, 0.82), and all the estimated $\omega_2$ were $>$0.64, providing strong evidence that truly $\omega_2 \neq 0$ and that the mouse findings are informative for humans.

\section{Discussion}\label{sec:discussion}

Bayesian inference is naturally suited to incorporate previous data or knowledge. 
Fully Bayesian inference is particularly appealing when one is willing to set a moderately informative prior linking the current and previous data or parameters.
In most situations full Bayes enjoys excellent properties. In particular, as $n$ grows its solution often matches that of empirical Bayes and enjoys a frequentist interpretation as conditional probabilities on parameters given the data, under repeated sampling of both data and parameters.
There are situations when full and empirical Bayes do not match asymptotically, these are somewhat extreme settings where the prior plays a critical role. 
In our data integration context, this occurs for example when parameters are divided into two blocks and one is much sparser than the other.
In most of our examples the benefits of using empirical Bayes for data integration were moderate but non-negligible.
From a theoretical viewpoint, it is interesting that the mathematical conditions for model selection consistency are milder when  one  integrates data. 
 Such benefits can be obtained with empirical Bayes, but not will full Bayes. Our theory applies to models with Gaussian errors, such as linear and non-linear additive regression, when there is a single meta-covariate that splits covariates into blocks. Our algorithms apply to fully general settings, they only require estimating posterior inclusion probabilities,  and that hyper-parameters only feature in prior inclusion probabilities.  In fact, our software includes generalized linear and additive models.
Interesting future work includes developing theory beyond Gaussian errors  and for cases where $\bZ$ contains many meta-covariates, and comparing our results to those obtained when doing data integration with full data.  Our current theory still applies to such settings, provided that one discretizes or clusters the covariates into blocks, but because of the potential information loss our current results would not be tight in such settings.

Performing data integration via empirical Bayes has a computational cost, due to adding more layers to the model.
We discussed simple strategies that rendered computations feasible in our examples, but more work in this area is needed.
%Particularly as one considers transfer learning in large models, where data integration seems particularly promising.
 In particular, we proposed generic algorithms to regress prior inclusion probabilities on meta-covariates that apply to any probability model, as long as one can evaluate or approximate marginal likelihoods $p(\by \mid \bgamma)$ for candidate models $\bgamma$. An interesting extension is to regress the prior dispersion ($g$ in Zellner's prior) on meta-covariates $\bZ$, to learn how the magnitude of non-zero coefficients varies with $\bZ$. Such an extension would however complicate the algorithms, and they would likely need to be derived on a case-by-case basis, as opposed to the generic algorithms outlined here.

\section*{Acknowledgments}

PRV acknowledges the support of Departament de Recerca i Universitats de la Generalitat de Catalunya and the European Social Fund. DR was funded by grant Consolidaci\'on investigadora CNS2022-135963 by the Agencia Espa\~{n}ola de Investigaci\'on (AE), and PID2022-138268NB-I00 by Ministerio de Ciencia e Innovaci\'on (MICIN)/Agencia Espa\~{n}ola de Investigaci\'on(AEI)/10.13039/501100011033 /Fondo Europeo de Desarrollo Regional (FEDER), and an ICREA Academia fellowship from AGAUR (Generalitat de Catalunya).

\section*{Supplementary material}

The sections below contain proofs and additional data analysis results. 

Additionally, folder 2025\_dataintegration\_eBayes contains data and R scripts to reproduce our results, available at https://github.com/davidrusi/paper\_examples.

%\item[R-package mombf:] Functions ``localnulltest" and ``localnulltest\_fda'' perform the inference described in this paper.

%\item[HIV data set:] Data set used in the illustration of MYNEW method in Section~ 3.2. (.txt file)

%\end{description}

\section{Supplementary code}

R code to reproduce our examples is available at
{\small \texttt{https://github.com/davidrusi/paper\_examples/tree/main/2025\_dataintegration\_eBayes}}

\section{Proof of Theorem \ref{thm:ebayes_zerograd}}\label{proof:ebayes_zerograd}

Since $p(\by \mid \bomega)= \sum_{\bgamma} p(\by \mid \bgamma) \pi(\bgamma \mid \bomega)$
and $\nabla_{\bomega} \log \pi(\bgamma \mid \bomega)= [\nabla_{\bomega} \pi(\bgamma \mid \bomega)] / \pi(\bgamma \mid \bomega)$, we have that
\begin{align}
&\nabla_{\bomega}\log p(\by \mid \bomega)= \frac{\nabla_{\bomega} p(\by \mid \bomega)}{p(\by \mid \bomega)}= 
\frac{\sum_{\bgamma} p(\by \mid \bgamma) \nabla_{\bomega} \pi(\bgamma \mid \bomega)}{p(\by \mid \bomega)}= 
\nonumber \\
&= \frac{\sum_{\bgamma} p(\by \mid \bgamma) \pi(\bgamma \mid \bomega) \nabla_{\bomega} \log \pi(\bgamma \mid \bomega)}{p(\by \mid \bomega)}
= \sum_{\bgamma} \pi(\bgamma \mid \by, \bomega) \nabla_{\bomega} \log \pi(\bgamma \mid \bomega),
\nonumber
\end{align}
as we wished to prove.

In the particular case where $\pi(\bgamma \mid \bomega)= \prod_{j=1}^p \pi(\gamma_j \mid \bomega)$, we get
\begin{align}
 \nabla_{\bomega} \log \pi(\bgamma \mid \bomega)=
\sum_{j=1}^p \nabla_{\bomega} \log \pi(\gamma_j \mid \bomega)=
\sum_{j=1}^p \frac{\nabla_{\bomega} \pi(\gamma_j \mid \bomega)}{\pi(\gamma_j \mid \bomega)}.
\nonumber
\end{align}
%and hence
%\begin{align}
%\nabla_{\bomega}\log p(\by \mid \bomega)=
%E \left[ \sum_{j=1}^p \frac{\nabla_{\bomega} \pi(\gamma_j \mid \bomega)}{\pi(\gamma_j \mid \bomega)} \mid \by, \bomega \right].
%\nonumber
%\end{align}

Let $\pi(\gamma_j \mid \bomega)= m_j(\bomega)^{\gamma_j} [1 - m_j(\bomega)]^{1- \gamma_j}$ for an arbitrary differentiable $m_j()$.
Then
\begin{align}
& \nabla_{\bomega} \pi(\gamma_j \mid \omega)= \begin{cases}
\nabla_{\bomega} m_j(\bomega)= 
\frac{\nabla_{\bomega} m_j(\bomega)}{m_j(\bomega)} \pi(\gamma_j \mid \bomega)
\mbox{, if } \gamma_j=1
\\
- \nabla_{\bomega} m_j(\bomega)=
- \frac{\nabla_{\bomega} m_j(\bomega)}{1 - m_j(\bomega)} \pi(\gamma_j \mid \bomega)
\mbox{, if } \gamma_j=0
\end{cases}
\nonumber \\
&= \pi(\gamma_j \mid \bomega) \nabla_{\bomega} m_j(\bomega) \left[ \frac{\gamma_j}{m_j(\bomega)} - \frac{1 - \gamma_j}{1 - m_j(\bomega)} \right]
= \pi(\gamma_j \mid \bomega) \frac{\nabla_{\bomega} m_j(\bomega)}{m_j(\bomega) [1 - m_j(\bomega)]} \left[ \gamma_j - m_j(\bomega) \right].
\label{eq:grad_priorprob_indep}
\end{align}
This gives that
\begin{align}
&\nabla_{\bomega}\log p(\by \mid \bomega)= 
E \left[  \sum_{j=1}^p 
\frac{\nabla_{\bomega} m_j(\bomega)}{m_j(\bomega) [1 - m_j(\bomega)]} \left[ \gamma_j - m_j(\bomega) \right]
\mid \by, \bomega \right]
\nonumber \\
& \sum_{j=1}^p \frac{\nabla_{\bomega} m_j(\bomega)}{m_j(\bomega) [1 - m_j(\bomega)]} 
\left[ \pi(\gamma_j=1 \mid \by, \bomega) - \pi(\gamma_j=1 \mid \bomega) \right],
\nonumber
\end{align}
as we wished to prove.
In the particular case where $m_j(\bomega)= (1 + e^{\bz_j^T \bomega})^{-1}$, we have
$\nabla_{\bomega} m_j(\bomega)= \bz_j  e^{\bz_j^T \bomega} / (1 + e^{\bz_j^T \bomega})^2= \bz_j m_j(\bomega) [1 - m_j(\bomega)]$,
and then 
$\nabla_{\bomega}\log p(\by \mid \bomega)= \sum_{j=1}^p \bz_j \left[ \pi(\gamma_j=1 \mid \by, \bomega) - \pi(\gamma_j=1 \mid \bomega) \right]$.

\section{Proof of Corollary \ref{cor:em_zerograd}}
\label{sec:proof_em_zerograd}

The gradient of 
$f(\bomega)= \sum_{\bgamma} \pi(\bgamma \mid \by, \hat{\bomega}^{(k)}) \log \pi(\bgamma \mid \bomega)$
is 
$\nabla_{\bomega} f(\bomega)= \sum_{\bgamma} \pi(\bgamma \mid \by, \hat{\bomega}^{(k)}) \nabla_{\bomega} \log \pi(\bgamma \mid \bomega)$.
Using \eqref{eq:grad_priorprob_indep}, we obtain that
\begin{align}
 \nabla_{\bomega} f(\bomega)= \sum_{\bgamma} \pi(\bgamma \mid \by, \hat{\bomega}^{(k)})
\sum_{j=1}^p \frac{\nabla_{\bomega} m_j(\bomega)}{m_j(\bomega) [1 - m_j(\bomega)]} \left[ \gamma_j - \pi(\gamma_j=1 \mid \bomega) \right]
\nonumber \\
= \sum_{j=1}^p \frac{\nabla_{\bomega} m_j(\bomega)}{m_j(\bomega) [1 - m_j(\bomega)]} \left[ \pi(\gamma_j=1 \mid \by, \hat{\bomega}) - \pi(\gamma_j=1 \mid \bomega) \right].
\nonumber
\end{align}

In the particular case where $m_j(\bomega)= (1 + e^{\bz_j^T \bomega})^{-1}$, from Theorem \ref{thm:ebayes_zerograd} we have that
$[\nabla_{\bomega} m_j(\bomega)]/(m_j(\bomega) [1 - m_j(\bomega)])= \bz_j$
and then 
$\nabla_{\bomega} f(\bomega)= \sum_{j=1}^p \bz_j \left[ \pi(\gamma_j=1 \mid \by, \bomega) - \pi(\gamma_j=1 \mid \bomega) \right]= \bZ^T [\hat{\bpi}(\bomega) - {\bf m}(\bomega)]$.
The expression for the hessian $\nabla_{\bomega}^2 f(\bomega)$ follows from straightforward calculus.

\section{Proof of Proposition \ref{prop:mstep_fullrank}}
\label{sec:proof_mstep_fullrank}

The goal is to find $\bomega$ such that $\nabla_{\bomega} f(\bomega) + \nabla_{\bomega} \log \pi(\bomega)= {\bf 0}$.
Plugging in the expression for $\nabla_{\bomega} f(\bomega)$ and $\nabla_{\bomega} \log \pi(\bomega)= -\frac{1}{g} \bV^{-1} \bomega$,
where $\bV^{-1}= (\bZ^T \bZ)/p$, we seek $\bomega$ such that
\begin{align}
\bZ^T [ \hat{\bpi} - {\bf m}(\bomega) ] - \frac{1}{gp} \bZ^T \bZ \bomega=0
\Leftrightarrow
\bZ^T \left[ {\bf m}(\bomega) + \frac{1}{gp} \bZ \bomega \right]= \bZ^T \hat{\bpi}, 
\nonumber
\end{align}
where $m_j(\bomega)= 1/(1 + e^{- \bz_j^T \bomega})$.
The $q \times q$ matrix $\bU$ contains the unique rows in $\bZ$ and is full-rank, hence $\tilde{\bZ}= \bZ \bU^{-1}$ is a matrix whose unique rows equal the identity matrix. 
Define $\tilde{\bomega}= \bU \bomega$ and consider the problem of finding $\tilde{\bomega}$ such that $\tilde{\bZ}^T [\tilde{{\bf m}}(\tilde{\bomega}) - (gp)^{-1} \tilde{\bZ} \tilde{\bomega}]= \tilde{\bZ}^T \hat{\bpi}$, where $\tilde{m}_j(\tilde{\bomega})= 1/(1+ e^{- \tilde{z}_j^T \tilde{\bomega}})$. 
We first show how to obtain $\tilde{\bomega}$, and subsequently that one can find $\bomega$ from $\tilde{\bomega}$.

To obtain $\tilde{\bomega}$, intuitively this is simple because the subset of unique rows in $\tilde{\bZ}$ is the identity matrix.
Specifically, the $j^{th}$ equation in $\tilde{\bZ}^T [\tilde{{\bf m}}(\tilde{\bomega})  - (gp)^{-1} \tilde{\bZ} \tilde{\bomega} ]= \tilde{\bZ}^T \hat{\bpi}$ is 
\begin{align}
 \sum_{i=1}^p  \mbox{I}(\tilde{z}_{ij}=1) \left[ \frac{1}{1 + e^{- \tilde{\bz}_i^T \tilde{\bomega}}} + \frac{\tilde{\bz}_i^T \tilde{\bomega}}{gp} \right]
= \tilde{\bz}_{.j}^T \hat{\bpi}
\Rightarrow
\nonumber \\
 \frac{1}{1 + e^{- \tilde{\omega}_j}} + \frac{\tilde{w}_j}{gp} = \frac{\tilde{\bz}_j^T \hat{\bpi}}{\sum_{i=1}^p \mbox{I}(\tilde{z}_{ij}=1)}
= \frac{\tilde{\bz}_{.j}^T \hat{\bpi}}{\tilde{\bz}_{.j}^T \mathbf{1}},
\nonumber
\end{align}
where $\tilde{\bz}_{.j}$ is the $j^{th}$ column in $\bZ$,
and we used that $\tilde{\bz}_i^T \tilde{\bomega}= \tilde{\omega}_j$, since $\tilde{\bz}_{il}=1$ for $l=j$ and $=0$ otherwise.
We hence have that $\tilde{\omega}_j= h(\tilde{\bz}_{.j}^T \hat{\bpi} / \tilde{\bz}_{.j}^T \mathbf{1}, gp)$, 
where $h(a,c)$ denotes the solution (in terms of $w$) to $1/(1+e^{-w}) + w/c= a$.
For the case $c= \infty$, $h(a,\infty)= \log(a/(1-a))$.
More generally, the solution to this univariate equation can be easily found via Newton-Raphson.
%straightforward algebra gives that $\tilde{\omega}_j= \mbox{logit}(\tilde{\bz}_j^T \hat{\bpi}/\tilde{\bz}_j^T \mathbf{1})$,

To complete the proof, define $\hat{\bomega}= \bU^{-1} \tilde{\bomega}$,
so that $\tilde{\bZ} \tilde{\bomega}= \bZ \bU^{-1} \bU \hat{\bomega}= \bZ \hat{\bomega}$,
and therefore ${\bf m}(\hat{\bomega})= \tilde{{\bf m}}(\tilde{\bomega})$.
By definition of $\tilde{\bomega}$ we have that 
\begin{align}
\tilde{\bZ}^T [ \tilde{{\bf m}}(\tilde{\bomega}) - \frac{\tilde{\bZ} \tilde{\bomega}}{gp}]= \tilde{\bZ}^T \hat{\bpi}
\Rightarrow
(\bU^{-1})^T [ \bZ^T {\bf m}(\hat{\bomega})  - \frac{\bZ \hat{\bomega}}{gp} ]= (\bU^{-1})^T \bZ^T \hat{\bpi}
\nonumber
\end{align}
which implies that $\bZ^T [{\bf m}(\hat{\bomega}) - (gp)^{-1} \bZ \hat{\bomega}] = \bZ^T \hat{\bpi}$, as we wished to prove.

\section{Run times and EM algorithm stability}
\label{sec:runtimes}

\begin{table}[ht]
\centering
\begin{tabular}{lrrrr}
  \hline
 & $p=100$ & $p=500$ & $p=1,000$ & $p=2,000$ \\ 
 & ($s=2.5$) & ($s=5.0$) & ($s=11.5$) & ($s=19.5$) \\ \hline
Beta-Binomial   ($L$=5,000) & 0.4 & 5.4 & 23.1 & 81.8 \\ 
Empirical Bayes ($L=$ 5,000; $M=$1,000) & 0.5 & 4.9 & 28.0 & 115.2 \\ 
Empirical Bayes ($L=$ 5,000; $M=$100) & 0.3 & 3.2 & 18.4 & 74.0 \\ 
Empirical Bayes ($L=$ 1,000; $M=$100) & 0.1 & 1.1 & 5.8 & 22.0 \\ 
   \hline
\end{tabular}
\caption{Run times (seconds) in an Ubuntu laptop (13th gen Intel core i7, 32Gb RAM) under simulation settings as in Section \ref{ssec:simstudy} with $n=500$ and $\bomega=(-4.6, 0.5, 0)$. $p$ is the total number of covariates, $s$ the average number that truly have an effect,
$L$ the number of MCMC iterations, and $M$ the number of MCMC iterations in the M-step}
\label{tab:runtimes}
\end{table}

\begin{table}
\centering
\begin{tabular}{|l|ccc|ccc|}\hline
$p$ & \multicolumn{3}{c|}{$\hat{\bomega}^{BB, M=1000} - \hat{\bomega}^{0, M=1000}$} & \multicolumn{3}{c|}{$\hat{\bomega}^{BB, M=1000} - \hat{\bomega}^{0, M=100}$} \\ \hline
100 & 0.00006 &  0.0001  & 0.0001 & 0.0001 &  0.00005 & 0.0001 \\
500 & 0.00337 &  0.0007  & 0.0004 & 0.0059 &  0.00205 & 0.0013  \\ 
\hline
\end{tabular}
\caption{Mean absolute difference in $\hat{\bomega}$ for Beta-Binomial initialization $\hat{\bomega}^{BB}$
vs. $\hat{\bomega}^{(0)}= {\bf 0}$, and $M=1000$ vs. $100$ M-step iterations }
\label{tab:dif_omega}
\end{table}

Table \ref{tab:runtimes} shows the run times for simulations
under the same settings as in Section \ref{ssec:simstudy} with $p \in \{100, 500, 1000, 2000 \}$, $n=500$ and data-generating $\bomega=(-4.6, 0.5, 0)$.
As a benchmark, the run time for $L=5,000$ Gibbs iterations under a standard Beta-Binomial prior was similar to that of our framework 
using $M=1,000$ iterations to estimate $\hat{\pi}(\gamma_j=1 \mid \by, \bomega)$ within the M-step and also $L=5,000$ Gibbs iterations.
The EM algorithm converged in $\leq 10$ iterations in all cases (often, $\leq 5$ iterations), hence $\hat{\bomega}$ was obtained relatively quickly and the main cost was running the $L=5,000$ iterations afterwards.
Table \ref{tab:runtimes} also reports run times for reducing $M=1,000$ to $M=100$, and then further reducing $L=5,000$ to  $L=1,000$. In the latter case the algorithm required 22 seconds to complete.
These reductions had a fairly mild effect on the estimates: the mean difference in $\hat{\pi}(\gamma_j=1 \mid \hat{\bomega})$ between using $L=5,000$ and $L=1,000$ Gibbs iterations was $<0.001$, and similarly for reducing $M=1,000$ to $M=100$.

Table \ref{tab:dif_omega} assesses the effect of $M$ and of the initial hyper-parameter estimate $\hat{\bomega}^{(0)}$ in the final $\hat{\bomega}$ returned by the EM algorithm.

\section{Proofs of Theorems \ref{thm:suffcondlinearmodel} and \ref{theo:empbayesselconsist}}

 \subsection{Auxiliary results}
 In this subsection we give a series of results instrumental to the proofs of Theorem \ref{thm:suffcondlinearmodel} and Theorem \ref{thm:ebayes_zerograd}. We start by recalling some notation and introducing new one. Recall, in \eqref{eq:bf_zellner_known} and \eqref{eq:kappa} we defined for any $\omega_1,\ldots,\omega_B$,
 \begin{equation}\label{eq:kappa2}
 \begin{aligned}
     & \hat{\btheta}_{\bgamma}= (\bX_{\bgamma}^T \bX_{\bgamma})^{-1} \bX_{\bgamma}^T \by \\
     & \kappa_b= \frac{1}{2} \log(1 + gn) + \log (1 / \omega_b - 1).
 \end{aligned}
 \end{equation}
 In addition, for any set of variable indices $\bgamma$, $\bgamma_b$ denotes the set of indices in $\bgamma$ that are in block $b$. For any two models $\bgamma$ and $\tilde{\bgamma}$, let 
 \begin{equation} \label{eq:muQM}
\begin{aligned}
    & \Delta(\bgamma, \tilde{\bgamma}):= \sum_{b=1}^B (p_{\bgamma,b} - p_{\tilde{\bgamma},b}) \kappa_b, \\
    & \bGamma({\bgamma},\tilde{\bgamma}) := \bgamma\cup \bgamma^*,\\
    & L(\bgamma, \tilde{\bgamma}) \,:=\, \frac{1}{\phi}\hat{\btheta}_{\bgamma}^T \bX_{\bgamma}^T \bX_{\bgamma}\hat{\btheta}_{\bgamma}
- \frac{1}{\phi}\hat{\btheta}_{\tilde{\bgamma}}^T \bX_{\tilde{\bgamma}}^T \bX_{\tilde{\bgamma}}\hat{\btheta}_{\tilde{\bgamma}}, \\
    & \mu(\bgamma,\tilde{\bgamma})\,:=\,\frac{1}{\phi}\|\big(I_n-P_{\tilde{\bgamma}}\big) \bX_{\bgamma \setminus \tilde{\bgamma}} \btheta^*_{\bgamma \setminus \tilde{\bgamma}}\|^2,
\end{aligned}
\end{equation}
where $P_{\tilde{\bgamma}}=\bX_{\tilde{\bgamma}}\left(\bX_{\tilde{\bgamma}}^T \bX_{\tilde{\bgamma}}\right)^{-1} \bX_{\tilde{\bgamma}}^T$.
Then $\Delta(\bgamma, \tilde{\bgamma})$ is the difference in penalty between in $\bgamma$ and $\tilde{\bgamma}$, $\bGamma({\bgamma},\tilde{\bgamma})$ is the union model between $\bgamma$ and $\tilde{\bgamma}$, $L(\bgamma \tilde{\bgamma})$ is the difference in the squared $\ell_2$ norms of the fitted values under $\bgamma$ and $\tilde{\bgamma}$. In addition, $\mu(\bgamma,\tilde{\bgamma})$ is the projection of the true signals in $\bgamma$ missing in $\tilde{\bgamma}$ in the space of residuals of $\tilde{\bgamma}$, it appears as a noncentrality parameter of the chi-squared distribution discussed in Lemma \ref{lemma:s7} below. 

For any set of $\omega_1,\ldots,\omega_B$ and corresponding $\kappa_1,\ldots,\kappa_B$, let
\begin{equation}\label{eq:defSes}
\begin{aligned}
    &S_b^{S}(\kappa) := \left\{j : z_j=b, \theta_j^* \neq 0, \sqrt{n\bar{\lambda}}|\theta_j^*|=o\big(\sqrt{\kappa_b}\big) \right\},\\
    &S_b^{L}(\kappa) := \left\{j: z_j=b, \theta_j^* \neq 0, \sqrt{\frac{(1-\nu) n\rho(\bX)}{6}}|\theta_j^*| \,- \,\sqrt{\kappa_b} \;=\; \sqrt{\log(s_b)} + g_b\right\},\\
    &S_b^{I}(\kappa) := \left\{ j: z_j=b, \theta_j^* \neq 0\right\} \setminus \big(S_b^{L}(\kappa) \cup S_b^{S}(\kappa)\big),
\end{aligned}
\end{equation}
where $\bar{\lambda}$ is the largest eigenvalue of $\bX_{\bgamma^*}^T \bX_{\bgamma^*}$. Finally, let $\T(\kappa)$ be the set of models that contain all large signals in $S^{L}(\kappa)=\cup_{b=1}^b S_b^{L}(\kappa)$, and neither truly inactive parameters nor small signals in $S^S(\kappa)=\cup_{b=1}^b S^{S}_b(\kappa)$. That is,
\begin{equation} \label{eq:Tkappa}
   \T(\kappa) \;:=\;\Big\{\bgamma \,|\, S^L(\kappa)\subseteq\gamma \subseteq S^L(\kappa)\cup S^{I}(\kappa)\Big\}.
\end{equation}

\subsubsection{Distributional properties and tail bounds}

\begin{lemma}\label{lemma:s7}
    Let $\bgamma,\tilde{\bgamma}$ be any two nested models such that $\bgamma \subseteq \tilde{\bgamma}$. Then
$$
L(\tilde{\bgamma},\bgamma)\;\sim\; \chi_{|\tilde{\bgamma}\setminus \bgamma|}^2\left(\mu(\tilde{\bgamma}, \bgamma)\right), 
$$
where $\chi_{k}^2(\mu)$ denotes, when $\mu>0$, the noncentral chi-squared distribution with $k$ degrees of freedom and noncentrality parameter $\mu$ and, when $\mu=0$, the chi-squared distribution with $k$ degrees of freedom $\chi_{k}^2$.
\end{lemma}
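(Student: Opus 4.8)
The plan is to recognize $L(\tilde{\bgamma},\bgamma)$ as a quadratic form in the Gaussian vector $\by$ and invoke the standard distribution theory for such forms. First I would rewrite the fitted sums of squares in projection form: since $\hat{\btheta}_{\bgamma}=(\bX_{\bgamma}^T\bX_{\bgamma})^{-1}\bX_{\bgamma}^T\by$, we have $\hat{\btheta}_{\bgamma}^T\bX_{\bgamma}^T\bX_{\bgamma}\hat{\btheta}_{\bgamma}=\by^T\bP_{\bgamma}\by$ and likewise $\hat{\btheta}_{\tilde{\bgamma}}^T\bX_{\tilde{\bgamma}}^T\bX_{\tilde{\bgamma}}\hat{\btheta}_{\tilde{\bgamma}}=\by^T\bP_{\tilde{\bgamma}}\by$, so that $L(\tilde{\bgamma},\bgamma)=\tfrac{1}{\phi}\by^T(\bP_{\tilde{\bgamma}}-\bP_{\bgamma})\by$. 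Write $M:=\bP_{\tilde{\bgamma}}-\bP_{\bgamma}$.

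Next I would verify that $M$ is an orthogonal projection of the claimed rank. The nestedness $\bgamma\subseteq\tilde{\bgamma}$ means the column span of $\bX_{\bgamma}$ sits inside that of $\bX_{\tilde{\bgamma}}$, which gives $\bP_{\tilde{\bgamma}}\bP_{\bgamma}=\bP_{\bgamma}\bP_{\tilde{\bgamma}}=\bP_{\bgamma}$. From this $M^T=M$ and $M^2=\bP_{\tilde{\bgamma}}-2\bP_{\bgamma}+\bP_{\bgamma}=M$, so $M$ is symmetric idempotent, and its rank equals $\mathrm{tr}(M)=\mathrm{tr}(\bP_{\tilde{\bgamma}})-\mathrm{tr}(\bP_{\bgamma})=|\tilde{\bgamma}|-|\bgamma|=|\tilde{\bgamma}\setminus\bgamma|$, using the standing full-column-rank assumption on $\bX_{\bgamma}$ and $\bX_{\tilde{\bgamma}}$. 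I would then apply the classical result that for $\by\sim N(\mu_y,\phi I_n)$, with $\mu_y=\bX\btheta^*$, and $M$ symmetric idempotent of rank $k$, one has $\tfrac{1}{\phi}\by^TM\by\sim\chi^2_k(\tfrac{1}{\phi}\mu_y^TM\mu_y)$; the cleanest justification is to factor $M=UU^T$ with $U^TU=I_k$, note $U^T\by\sim N(U^T\mu_y,\phi I_k)$, and read off $\tfrac{1}{\phi}\|U^T\by\|^2$ as a noncentral $\chi^2_k$ with noncentrality $\tfrac{1}{\phi}\|U^T\mu_y\|^2=\tfrac{1}{\phi}\mu_y^TM\mu_y$.

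It remains to identify the noncentrality parameter with $\mu(\tilde{\bgamma},\bgamma)$ of \eqref{eq:muQM}. Here I would use that the larger model contains the true model, so $\mu_y=\bX_{\bgamma^*}\btheta^*_{\bgamma^*}$ lies in the span of $\bX_{\tilde{\bgamma}}$ and hence $\bP_{\tilde{\bgamma}}\mu_y=\mu_y$; this reduces $\mu_y^TM\mu_y$ to $\mu_y^T(I_n-\bP_{\bgamma})\mu_y=\|(I_n-\bP_{\bgamma})\mu_y\|^2$. Decomposing $\mu_y=\bX_{\bgamma}\btheta^*_{\bgamma}+\bX_{\tilde{\bgamma}\setminus\bgamma}\btheta^*_{\tilde{\bgamma}\setminus\bgamma}$ and using $(I_n-\bP_{\bgamma})\bX_{\bgamma}=0$ collapses this to $\|(I_n-\bP_{\bgamma})\bX_{\tilde{\bgamma}\setminus\bgamma}\btheta^*_{\tilde{\bgamma}\setminus\bgamma}\|^2=\phi\,\mu(\tilde{\bgamma},\bgamma)$, completing the proof. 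The main obstacle is precisely this last identification: it hinges on $\bP_{\tilde{\bgamma}}\mu_y=\mu_y$, which is where one needs $\bgamma^*\subseteq\tilde{\bgamma}$. This holds in all intended applications, where the lemma is invoked with a larger model of the form $\bgamma\cup\bgamma^*$ (cf. the union model in \eqref{eq:muQM}); without that containment the cross terms from the part of $\mu_y$ outside $\mathrm{span}(\bX_{\tilde{\bgamma}})$ do not vanish and the noncentrality would differ from $\mu(\tilde{\bgamma},\bgamma)$. I would therefore either state the containment explicitly in the hypotheses or restrict attention to the union models for which it is automatic.
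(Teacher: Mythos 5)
Your proof is correct and, unlike the paper, is actually self-contained: the paper's ``proof'' of this lemma is a bare citation to Lemma S7 of \cite{rossell:2022}, so there is no in-text argument to compare against. The route you take is the standard one that the cited lemma also rests on --- rewrite $L(\tilde{\bgamma},\bgamma)=\tfrac{1}{\phi}\by^T(\bP_{\tilde{\bgamma}}-\bP_{\bgamma})\by$, use nestedness to show $\bP_{\tilde{\bgamma}}-\bP_{\bgamma}$ is a symmetric idempotent of rank $|\tilde{\bgamma}\setminus\bgamma|$, and invoke the noncentral chi-square law of a Gaussian quadratic form in a projection --- and every step checks out. Your closing caveat is the most substantive part of the write-up and is correct: the exact noncentrality is $\tfrac{1}{\phi}\|(I_n-\bP_{\bgamma})\bP_{\tilde{\bgamma}}\bX\btheta^*\|^2$, which collapses to $\mu(\tilde{\bgamma},\bgamma)$ as defined in \eqref{eq:muQM} only when $\bP_{\tilde{\bgamma}}\bX\btheta^*=\bX\btheta^*$, i.e.\ when $\bgamma^*\subseteq\tilde{\bgamma}$; otherwise the residual term $(I_n-\bP_{\bgamma})\bP_{\tilde{\bgamma}}\bX_{\bgamma^*\setminus\tilde{\bgamma}}\btheta^*_{\bgamma^*\setminus\tilde{\bgamma}}$ survives and the stated noncentrality is wrong. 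This is a genuine observation about the lemma's statement rather than a defect of your argument, and it is not entirely academic here: in the paper's own application (Lemma \ref{lem:boundexpnc}) the large model is $\bGamma(\bgamma,\tilde{\bgamma})=\bgamma\cup\tilde{\bgamma}$ with only $\tilde{\bgamma}\subseteq\bgamma^*$ assumed, which does not by itself force $\bgamma^*\subseteq\bgamma\cup\tilde{\bgamma}$. Your suggestion to add the containment (or to state the noncentrality as $\tfrac{1}{\phi}\|(\bP_{\tilde{\bgamma}}-\bP_{\bgamma})\bX\btheta^*\|^2$, which is what the projection argument actually delivers) is the right fix.
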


\begin{proof} Proved as Lemma S7 in \cite{rossell:2022}.
\end{proof}

\begin{lemma}\label{lem:boundrhogen}
 For any $\tilde{\bgamma}\subseteq \bgamma^*$ such that $\tilde{\bgamma}\not\subseteq \bgamma$, let $\bGamma(\bgamma,\tilde{\bgamma})=\bgamma \cup \tilde{\bgamma}$. The non-centrality parameter defined in \eqref{eq:muQM} satisfies: 
\begin{equation}
    \mu_{\bGamma(\bgamma,\tilde{\bgamma})\bgamma}\; \geq \;n\,\phi^{-1}\,\rho(\bX)\,\sum_{j=1}^b|\tilde{\bgamma}_b\setminus \bgamma_b|\,\min_{i\in \tilde{\bgamma}_b\setminus \bgamma_b}{\theta_{i}^*}^2.
\end{equation}
\end{lemma}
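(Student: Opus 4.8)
The plan is to unfold the definition of the noncentrality parameter in \eqref{eq:muQM}, so as to reduce the claim to a single eigenvalue comparison followed by an elementary per-block inequality. Writing $\bGamma = \bGamma(\bgamma,\tilde{\bgamma}) = \bgamma \cup \tilde{\bgamma}$ and reading $\mu_{\bGamma(\bgamma,\tilde{\bgamma})\bgamma}$ as $\mu(\bGamma,\bgamma)$, I would first note that $\bGamma \setminus \bgamma = \tilde{\bgamma}\setminus\bgamma$, so that
\begin{align}
\mu(\bGamma,\bgamma) = \frac{1}{\phi}\big\|(I_n - \bP_{\bgamma}) \bX_{\tilde{\bgamma}\setminus\bgamma}\, \btheta^*_{\tilde{\bgamma}\setminus\bgamma}\big\|^2 = \frac{1}{\phi}\,(\btheta^*_{\tilde{\bgamma}\setminus\bgamma})^T \bX_{\tilde{\bgamma}\setminus\bgamma}^T (I_n - \bP_{\bgamma}) \bX_{\tilde{\bgamma}\setminus\bgamma}\, \btheta^*_{\tilde{\bgamma}\setminus\bgamma},
\nonumber
\end{align}
where the second equality uses that $I_n - \bP_{\bgamma}$ is symmetric and idempotent. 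This exhibits $\mu(\bGamma,\bgamma)$ as a quadratic form in the true active coefficients lying in $\tilde{\bgamma}\setminus\bgamma$.

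Next I would bound this quadratic form below by $\tfrac{1}{\phi}\lambda_{\min}\big(\bX_{\tilde{\bgamma}\setminus\bgamma}^T (I_n - \bP_{\bgamma}) \bX_{\tilde{\bgamma}\setminus\bgamma}\big)\,\|\btheta^*_{\tilde{\bgamma}\setminus\bgamma}\|^2$ and relate the smallest eigenvalue to $\rho(\bX)$, which is defined in \eqref{eq:rhoX} through the larger index set $\bgamma^*\setminus\bgamma$. Since $\tilde{\bgamma}\subseteq\bgamma^*$ gives $\tilde{\bgamma}\setminus\bgamma \subseteq \bgamma^*\setminus\bgamma$, the matrix $\bX_{\tilde{\bgamma}\setminus\bgamma}^T (I_n - \bP_{\bgamma}) \bX_{\tilde{\bgamma}\setminus\bgamma}$ is a principal submatrix of $\bX_{\bgamma^*\setminus\bgamma}^T (I_n - \bP_{\bgamma}) \bX_{\bgamma^*\setminus\bgamma}$, so by the Cauchy interlacing theorem $\lambda_{\min}(\bX_{\tilde{\bgamma}\setminus\bgamma}^T(I_n-\bP_{\bgamma})\bX_{\tilde{\bgamma}\setminus\bgamma}) \geq \lambda_{\min}(\bX_{\bgamma^*\setminus\bgamma}^T(I_n-\bP_{\bgamma})\bX_{\bgamma^*\setminus\bgamma})$. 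The hypotheses $\tilde{\bgamma}\subseteq\bgamma^*$ and $\tilde{\bgamma}\not\subseteq\bgamma$ force $\bgamma\not\supseteq\bgamma^*$ (otherwise $\tilde{\bgamma}\subseteq\bgamma^*\subseteq\bgamma$), so $\bgamma$ is admissible in the minimum defining $\rho(\bX)$ and hence $\lambda_{\min}(\tfrac1n \bX_{\bgamma^*\setminus\bgamma}^T(I_n-\bP_{\bgamma})\bX_{\bgamma^*\setminus\bgamma}) \geq \rho(\bX)$. Combining these gives $\mu(\bGamma,\bgamma) \geq n\phi^{-1}\rho(\bX)\,\|\btheta^*_{\tilde{\bgamma}\setminus\bgamma}\|^2$.

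It then remains to decompose the squared norm by blocks. Partitioning the indices of $\tilde{\bgamma}\setminus\bgamma$ according to their block label $z_j$, I would write $\|\btheta^*_{\tilde{\bgamma}\setminus\bgamma}\|^2 = \sum_{b=1}^B \sum_{i\in\tilde{\bgamma}_b\setminus\bgamma_b} (\theta_i^*)^2$ and bound each inner sum below by $|\tilde{\bgamma}_b\setminus\bgamma_b|\min_{i\in\tilde{\bgamma}_b\setminus\bgamma_b}(\theta_i^*)^2$. Substituting this into the bound of the previous paragraph yields exactly the stated inequality.

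The only genuinely non-routine step is the eigenvalue comparison: one must check both that the object of interest is literally a principal submatrix (so that Cauchy interlacing applies in the direction $\lambda_{\min}$ of submatrix $\geq \lambda_{\min}$ of full matrix) and that the constraint $\bgamma\not\supseteq\bgamma^*$ holds, which is what makes $\rho(\bX)$ a legitimate lower bound here. Everything else amounts to unfolding definitions and an elementary inequality carried out block by block.
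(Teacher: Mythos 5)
Your proof is correct and follows essentially the same route as the paper's: unfold the definition of the noncentrality parameter using $\bGamma\setminus\bgamma=\tilde{\bgamma}\setminus\bgamma$, lower-bound the quadratic form by $\lambda_{\min}$, apply Cauchy interlacing to pass from the principal submatrix indexed by $\tilde{\bgamma}\setminus\bgamma$ to the full matrix indexed by $\bgamma^*\setminus\bgamma$, invoke the definition of $\rho(\bX)$, and finish with the blockwise bound on $\|\btheta^*_{\tilde{\bgamma}\setminus\bgamma}\|^2$. Your explicit verification that $\bgamma\not\supseteq\bgamma^*$ (so that $\bgamma$ is admissible in the minimum defining $\rho(\bX)$) is a small point the paper leaves implicit, and is a welcome addition.
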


%\paul{I had to reprove this one because of general $\phi$.}

\begin{proof} 
The non-centrality parameter $\mu(\bGamma(\bgamma,\tilde{\bgamma}),\bgamma)$, as defined in \eqref{eq:muQM}, satisfies
\begin{eqnarray*}
    \mu(\bGamma(\bgamma,\tilde{\bgamma}),\bgamma)&=&\phi^{-1}\|\big(I_n-P_{\bgamma}\big) \bX_{\bGamma(\bgamma,\tilde{\bgamma})\setminus \bgamma} \btheta^*_{\bGamma(\bgamma,\tilde{\bgamma})\setminus \bgamma}\|^2\\
    &=&\phi^{-1}\|\big(I_n-P_{\bgamma}\big) \bX_{\tilde{\bgamma}\setminus \bgamma} \btheta^*_{\tilde{\bgamma}\setminus \bgamma}\|^2\\
    &= & n\phi^{-1} {\theta^*_{\tilde{\bgamma}\setminus \bgamma}}^T \left(\tfrac1n \bX_{\tilde{\bgamma}\setminus \bgamma}^T(I_n-P_{\bgamma})\bX_{\tilde{\bgamma}\setminus \bgamma}\right)\theta^*_{\tilde{\bgamma}\setminus \bgamma}\\
    &\geq & n\phi^{-1}\lambda_{\min}\left(\tfrac1n \bX_{\tilde{\bgamma}\setminus \bgamma}^T(I_n-P_{\bgamma})\bX_{\tilde{\bgamma}\setminus \bgamma}\right)\|\theta^*_{\tilde{\bgamma}\setminus \bgamma}\|^2,
\end{eqnarray*}
where the second equality follows from observing that $\bGamma(\bgamma,\tilde{\bgamma})\setminus \bgamma=\tilde{\bgamma}\setminus \bgamma$. Since $\tilde{\bgamma}$ is a subset of $\bgamma^*$, by reordering columns, $X_{\bgamma^*\setminus \bgamma}=[X_{\tilde{\bgamma}\setminus \bgamma},X_{\bgamma^*\setminus (\bgamma\cup \tilde{\bgamma})}]$ and therefore $\tfrac1n \bX_{\tilde{\bgamma}\setminus \bgamma}^T(I_n-P_{\bgamma})\bX_{\tilde{\bgamma}\setminus \bgamma}$ is a principal submatrix of $\tfrac1n \bX_{\bgamma^*\setminus \bgamma}^T(I_n-P_{\bgamma})\bX_{\bgamma^*\setminus \bgamma}$. Hence, Cauchy's interlacing theorem gives that $\lambda_{\min}\left(\tfrac1n \bX_{\tilde{\bgamma}\setminus \bgamma}^T(I_n-P_{\bgamma})\bX_{\tilde{\bgamma}\setminus \bgamma}\right) \geq \lambda_{\min}\left(\tfrac1n \bX_{\bgamma^*\setminus \bgamma}^T(I_n-P_{\bgamma})\bX_{\bgamma^*\setminus \bgamma}\right)$. Finally, by definition of $\rho(\bX)$ in \eqref{eq:rhoX} we have that
$\lambda_{\min}\left(\tfrac1n \bX_{\bgamma^*\setminus \bgamma}^T(I_n-P_{\bgamma})\bX_{\bgamma^*\setminus \bgamma}\right) \geq \rho(\bX)$, and further noting that $\|\theta^*_{\tilde{\bgamma}\setminus \bgamma}\|^2\geq \sum_{j=1}^b|\tilde{\bgamma}_b\setminus \bgamma_b|\,\min_{i\in \tilde{\bgamma}_b\setminus \bgamma_b}{\theta_{i}^*}^2$ gives the desired result.
\end{proof}

\begin{lemma}\label{lem:upperboundnoncentralgen}
    For any $\bgamma$ and $\tilde{\bgamma}\subseteq \bgamma^*$, 
    \begin{equation}\label{eq:upperbounnnnnd}
      \mu( \bGamma({\bgamma},\tilde{\bgamma}), \tilde{\bgamma}) \leq n\,\phi^{-1}\,\bar{\lambda} \,\sum_{j=1}^b|(\bgamma^*_b\cap \bgamma_b )\setminus \tilde{\bgamma}_b|\,\max_{i\in (\bgamma^*_b\cap \bgamma_b )\setminus \tilde{\bgamma}_b}{\theta_{i}^*}^2 
    \end{equation}
    where $\bar{\lambda}\,:=\,\lambda_{\max}\Big(\frac1n \bX_{\bgamma^*}^\top\bX_{\bgamma^*}\Big)$, and
    \begin{equation}\label{eq:lowerbounnnnnd}
    \mu( \bGamma({\bgamma},\tilde{\bgamma}),\tilde{\bgamma}) \geq n\,\phi^{-1}\,\lambda_{\min}\Big(\frac1n \bX_{\bgamma^*}^\top\bX_{\bgamma^*}\Big)\,\sum_{j=1}^b|(\bgamma^*_b\cap \bgamma_b )\setminus \tilde{\bgamma}_b|\,\min_{i\in (\bgamma^*_b\cap \bgamma_b )\setminus \tilde{\bgamma}_b}{\theta_{i}^*}^2
    \end{equation}
\end{lemma}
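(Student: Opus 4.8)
The plan is to write $\mu(\bGamma(\bgamma,\tilde{\bgamma}),\tilde{\bgamma})$ as a quadratic form in $\btheta^*$ and to sandwich its symmetric kernel between scalar multiples of the identity. Using $\bGamma(\bgamma,\tilde{\bgamma})=\bgamma\cup\tilde{\bgamma}$ as in Lemma \ref{lem:boundrhogen}, the difference set is $(\bgamma\cup\tilde{\bgamma})\setminus\tilde{\bgamma}=\bgamma\setminus\tilde{\bgamma}$, and since $\btheta^*$ vanishes off $\bgamma^*$ only the coordinates in $S:=(\bgamma\cap\bgamma^*)\setminus\tilde{\bgamma}$ carry nonzero signal. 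Hence, by the definition of $\mu$ in \eqref{eq:muQM},
\begin{equation*}
\mu(\bGamma(\bgamma,\tilde{\bgamma}),\tilde{\bgamma})=\frac{1}{\phi}\big\|(I_n-P_{\tilde{\bgamma}})\bX_S\btheta^*_S\big\|^2=\frac{n}{\phi}(\btheta^*_S)^T M_S\,\btheta^*_S,\qquad M_S:=\tfrac1n\bX_S^T(I_n-P_{\tilde{\bgamma}})\bX_S.
\end{equation*}
Both inequalities then reduce to eigenvalue bounds on $M_S$ combined with the elementary block-wise estimates $|S_b|\min_{i\in S_b}{\theta_i^*}^2\le\sum_{j\in S_b}{\theta_j^*}^2\le|S_b|\max_{i\in S_b}{\theta_i^*}^2$, where $S_b=(\bgamma^*_b\cap\bgamma_b)\setminus\tilde{\bgamma}_b$ and $\|\btheta^*_S\|^2=\sum_{b=1}^B\sum_{j\in S_b}{\theta_j^*}^2$.

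For the upper bound \eqref{eq:upperbounnnnnd} I would use that $I_n-P_{\tilde{\bgamma}}\preceq I_n$, so $M_S\preceq\tfrac1n\bX_S^T\bX_S$ and $\lambda_{\max}(M_S)\le\lambda_{\max}(\tfrac1n\bX_S^T\bX_S)$. Because $S\subseteq\bgamma^*$, the matrix $\tfrac1n\bX_S^T\bX_S$ is a principal submatrix of $\tfrac1n\bX_{\bgamma^*}^T\bX_{\bgamma^*}$, so Cauchy's interlacing theorem gives $\lambda_{\max}(\tfrac1n\bX_S^T\bX_S)\le\bar{\lambda}$. Bounding the quadratic form above by $\bar{\lambda}\,\|\btheta^*_S\|^2$ and summing the block-wise maxima delivers the claim.

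The lower bound \eqref{eq:lowerbounnnnnd} is the delicate step, because $M_S$ is a residual (Schur-complement) Gram matrix whose smallest eigenvalue must be tied to that of the \emph{full} design $\tfrac1n\bX_{\bgamma^*}^T\bX_{\bgamma^*}$, not to the residual quantity $\rho(\bX)$. Here the sets $S$ and $\tilde{\bgamma}$ are disjoint and both contained in $\bgamma^*$, so $M_S$ is precisely the Schur complement of the $\tilde{\bgamma}$ block in the positive definite matrix $M:=\tfrac1n[\bX_{\tilde{\bgamma}},\bX_S]^T[\bX_{\tilde{\bgamma}},\bX_S]$. I would invoke the standard fact that a Schur complement does not lower the minimal eigenvalue: writing $M=\begin{pmatrix}A&B\\B^T&C\end{pmatrix}$, for any unit $v$ the lift $w=(-A^{-1}Bv,\,v)$ satisfies $w^T M w=v^T M_S v$ and $\|w\|^2\ge1$, whence $v^T M_S v\ge\lambda_{\min}(M)\|w\|^2\ge\lambda_{\min}(M)$ (using that the full-column-rank assumption forces $\lambda_{\min}(M)>0$), so $\lambda_{\min}(M_S)\ge\lambda_{\min}(M)$. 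Since $M$ is a principal submatrix of $\tfrac1n\bX_{\bgamma^*}^T\bX_{\bgamma^*}$, interlacing gives $\lambda_{\min}(M)\ge\lambda_{\min}(\tfrac1n\bX_{\bgamma^*}^T\bX_{\bgamma^*})$. Chaining these and bounding the quadratic form below by $\lambda_{\min}(M_S)\|\btheta^*_S\|^2$, then applying the block-wise minima, yields \eqref{eq:lowerbounnnnnd}. The main obstacle is precisely this Schur-complement eigenvalue inequality; the upper bound, by contrast, only needs that an orthogonal projection contracts.
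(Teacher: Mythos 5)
Your proof is correct and follows essentially the same route as the paper's: reduce $\mu(\bGamma(\bgamma,\tilde{\bgamma}),\tilde{\bgamma})$ to the quadratic form $\tfrac{n}{\phi}(\btheta^*_S)^T M_S \btheta^*_S$ with $S=(\bgamma\cap\bgamma^*)\setminus\tilde{\bgamma}$, bound the extreme eigenvalues of $M_S$ by those of $\tfrac1n\bX_{\bgamma^*}^T\bX_{\bgamma^*}$ via Schur-complement and interlacing arguments, and finish with the block-wise min/max estimates. The only (harmless) differences are that you obtain the upper bound more directly from $I_n-P_{\tilde{\bgamma}}\preceq I_n$ where the paper routes both bounds through the Schur complement, and you prove the inequality $\lambda_{\min}(M_S)\geq\lambda_{\min}(M)$ by an explicit lift rather than by the paper's observation that $M_S^{-1}$ is a principal submatrix of $M^{-1}$.
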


%\paul{I had to reprove this one because of general $\phi$.}

\begin{proof}
Using the definition of $\mu(\bGamma({\bgamma},\tilde{\bgamma}), \tilde{\bgamma})$ in \eqref{eq:muQM}, we have that 
\begin{eqnarray}
    \mu( \bGamma({\bgamma},\tilde{\bgamma})&=&\phi^{-1}{\btheta^*_{ \bGamma({\bgamma},\tilde{\bgamma})\setminus \tilde{\bgamma}}}^T\bX_{ \bGamma({\bgamma},\tilde{\bgamma})\setminus \tilde{\bgamma}}^T  \big(I_n-P_{\tilde{\bgamma}}\big) \bX_{ \bGamma({\bgamma},\tilde{\bgamma})\setminus \tilde{\bgamma}} \btheta^*_{ \bGamma({\bgamma},\tilde{\bgamma})\setminus \tilde{\bgamma}}\nonumber\\
    &=&\phi^{-1}{\btheta^*_{\bgamma\setminus \tilde{\bgamma}}}^T\bX_{\bgamma\setminus \tilde{\bgamma}}^T  \big(I_n-P_{\tilde{\bgamma}}\big) \bX_{\bgamma\setminus \tilde{\bgamma}} \btheta^*_{\bgamma\setminus \tilde{\bgamma}}\nonumber\\
    &=&\phi^{-1}{\btheta^*_{(\bgamma^*\cap \bgamma)\setminus \tilde{\bgamma}}}^T\bX_{(\bgamma^*\cap \bgamma)\setminus \tilde{\bgamma}}^T  \big(I_n-P_{\tilde{\bgamma}}\big) \bX_{(\bgamma^*\cap \bgamma)\setminus \tilde{\bgamma}} \btheta^*_{(\bgamma^*\cap \bgamma)\setminus \tilde{\bgamma}}\label{eq:poihe}
\end{eqnarray}
where the second equality follows from $ \bGamma({\bgamma},\tilde{\bgamma})\setminus \tilde{\bgamma} =\bgamma\setminus \tilde{\bgamma} $ and the third equality from $\btheta^*_{\bgamma\setminus \bgamma^*}=0$.
We start by showing the upper bound in \eqref{eq:upperbounnnnnd}. Denote for any square matrix $A$, its largest eigenvalue 
$\lambda_{\max}(A)$. By \eqref{eq:poihe}, we have that
$$\mu( \bGamma({\bgamma},\tilde{\bgamma})\leq n\phi^{-1}\lambda_{\max}\Big(\frac{1}{n}\bX_{(\bgamma^*\cap \bgamma)\setminus \tilde{\bgamma}}^T  \big(I_n-P_{\tilde{\bgamma}}\big) \bX_{(\bgamma^*\cap \bgamma)\setminus \tilde{\bgamma}}\Big) \|\btheta^*_{(\bgamma^*\cap \bgamma)\setminus \tilde{\bgamma}}\|^2.$$
Let $B:=\tfrac{1}{n}\bX_{(\bgamma^*\cap \bgamma)\setminus \tilde{\bgamma}}^T  \big(I_n-P_{\tilde{\bgamma}}\big) \bX_{(\bgamma^*\cap \bgamma)\setminus \tilde{\bgamma}}$, $C:=\tfrac{1}{n}\bX_{(\bgamma^*\cap \bgamma)\cup \tilde{\bgamma}}^T\bX_{(\bgamma^*\cap \bgamma)\cup \tilde{\bgamma}}$ and $D:=\tfrac{1}{n}\bX_{\tilde{\bgamma}}^T \bX_{\tilde{\bgamma}}$. $D$ is a principal submatrix of $C$ and $B$ is the Schur complement of $D$ of $C$. The inverse $B^{-1}$ is then a principal submatrix of $C^{-1}$, and by Cauchy's interlacing theorem we have that   $\lambda_{\min}(B^{-1})\geq \lambda_{\min}(C^{-1})$ and then $\lambda_{\max}(B)\leq \lambda_{\max}(C)$. Since $\tilde{\bgamma} \subseteq \bgamma^*$ by assumption, we also have that $(\bgamma^*\cap \bgamma)\cup \tilde{\bgamma} \subseteq \bgamma^*$, then by interlacing again $\lambda_{\max}(C)\leq \lambda_{\max}\Big(\tfrac{1}{n}\bX_{\bgamma^*}^T\bX_{\bgamma^*}\Big)=\bar{\lambda}$. The upper bound in \eqref{eq:upperbounnnnnd} follows from the latter inequality and also observing that $\|\btheta^*_{(\bgamma^*\cap \bgamma)\setminus \tilde{\bgamma}}\|_2^2\leq \sum_{j=1}^b |(\bgamma^*_b\cap \bgamma_b)\setminus \tilde{\bgamma}_b|\max_{i\in (\bgamma^*_b\cap \bgamma_b)\setminus \tilde{\bgamma}_b}{\theta_i^*}^2$.

We now derive the lower bound in \eqref{eq:lowerbounnnnnd}. By \eqref{eq:poihe}, we have that %a fist lower bound on $\mu( \bGamma({\bgamma},\tilde{\bgamma})$ is: 
$$\mu( \bGamma({\bgamma},\tilde{\bgamma})\geq n\phi^{-1}\lambda_{\min}\Big(\frac{1}{n}\bX_{(\bgamma^*\cap \bgamma)\setminus \tilde{\bgamma}}^T  \big(I_n-P_{\tilde{\bgamma}}\big) \bX_{(\bgamma^*\cap \bgamma)\setminus \tilde{\bgamma}}\Big) \|\btheta^*_{(\bgamma^*\cap \bgamma)\setminus \tilde{\bgamma}}\|_2^2.$$

Recall that $B^{-1}$ is a principal submatrix of $C^{-1}$, hence by interlacing $\lambda_{\max}(B^{-1})\leq \lambda_{\max}(C^{-1})$ and $\lambda_{\min}(B)\geq \lambda_{\min}(C)$. Since $\tilde{\bgamma} \subseteq \bgamma^*$ by assumption, we have that $(\bgamma^*\cap \bgamma)\cup \tilde{\bgamma} \subseteq \bgamma^*$, and hence $\lambda_{\min}(C)\geq\lambda_{\min}(\frac{1}{n}\bX_{\bgamma^*}^T \bX_{\bgamma^*})$.
The bound in \eqref{eq:lowerbounnnnnd} is obtained by using the latter inequality and noting that also $\|\btheta^*_{(\bgamma^*\cap \bgamma)\setminus \tilde{\bgamma}}\|_2^2\geq \sum_{j=1}^b |(\bgamma^*_b\cap \bgamma_b)\setminus \tilde{\bgamma}_b|\min_{i\in (\bgamma^*_b\cap \bgamma_b)\setminus \tilde{\bgamma}_b}{\theta_i^*}^2.$
\end{proof}

\begin{lemma}\label{lemma:s1s3}
Let $W\sim\chi_{\nu}^2(\mu)$ with $\mu \geq 0$, then for any $w>\mu+\nu$
$$P(W > w) \leq e^{-\big(\frac{w+\mu}{2}  - \sqrt{2w( 2 \mu+ \nu)-2 \mu \nu - \nu^2 } \big)}. $$
Moreover, assume $w$, $\nu$ and $\mu$ are functions of $n$ such that $w$ is increasing, $\nu=o(w)$, and $\mu=o(w)$. Then, for any $\phi\in(0,1)$ and $n$ large enough
$$P(W > w)  \leq  e^{-\phi\frac{w}{2}}.$$
\end{lemma}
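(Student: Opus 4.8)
The plan is to control the upper tail through the moment generating function of the noncentral chi-squared distribution combined with a Chernoff-type bound. Recall that for $W\sim\chi^2_\nu(\mu)$ one has $E[e^{\lambda W}]=(1-2\lambda)^{-\nu/2}\exp\{\mu\lambda/(1-2\lambda)\}$ for $0<\lambda<\tfrac12$, so Markov's inequality gives $P(W>w)\le\exp\{g(\lambda)\}$ with $g(\lambda)=-\lambda w-\tfrac{\nu}{2}\log(1-2\lambda)+\mu\lambda/(1-2\lambda)$. Rather than minimising $g$ exactly (which leaves an intractable logarithmic term), I would first establish a sub-gamma bound on the centred cumulant generating function, namely
\[
\log E\big[e^{\lambda(W-\nu-\mu)}\big]\;\le\;\frac{(\nu+2\mu)\lambda^2}{1-2\lambda},\qquad 0<\lambda<\tfrac12 .
\]
Since $\mathrm{Var}(W)=2(\nu+2\mu)$, this is precisely the sub-gamma property with variance factor $v=2(\nu+2\mu)$ and scale parameter $c=2$.

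To verify the displayed inequality I would split the centred cumulant generating function into its $\nu$-part and its $\mu$-part. The $\mu$-part, $-\mu\lambda+\mu\lambda/(1-2\lambda)=2\mu\lambda^2/(1-2\lambda)$, matches the corresponding term of the bound with equality. For the $\nu$-part it suffices to show $-\lambda-\tfrac12\log(1-2\lambda)\le\lambda^2/(1-2\lambda)$; writing the difference as $h(\lambda)$, one checks $h(0)=0$ and $h'(\lambda)=-2\lambda^2/(1-2\lambda)^2\le 0$, so $h\le 0$ on $[0,\tfrac12)$. With this sub-gamma property in hand, optimising the Chernoff bound of the relaxed cumulant generating function — equivalently, the standard sub-gamma tail inequality (Boucheron, Lugosi and Massart) — yields, for every $x>0$,
\[
P\big(W\ge\nu+\mu+2\sqrt{(\nu+2\mu)x}+2x\big)\;\le\;e^{-x}.
\]

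It then remains to re-express this in terms of the threshold $w$. Setting $w=\nu+\mu+2\sqrt{(\nu+2\mu)x}+2x$ and solving the resulting quadratic in $\sqrt{x}$ gives $x=\tfrac{w+\mu}{2}-\tfrac12\sqrt{(\nu+2\mu)(2w-\nu)}$, with $x>0$ precisely when $w>\nu+\mu$, which is the hypothesis. Expanding $(\nu+2\mu)(2w-\nu)=2w(2\mu+\nu)-2\mu\nu-\nu^2$ and relaxing via $\tfrac12\sqrt{\cdot}\le\sqrt{\cdot}$, I obtain $P(W>w)\le P(W\ge w)\le\exp\{-(\tfrac{w+\mu}{2}-\sqrt{2w(2\mu+\nu)-2\mu\nu-\nu^2})\}$, as claimed. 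The main obstacle here is bookkeeping: carrying out the inversion correctly and confirming that the relaxation is in the safe (weakening) direction. The cumulant generating function inequality, although the technical heart, is routine once split into the two parts above.

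For the asymptotic consequence, write the exponent as $E_n=\tfrac{w+\mu}{2}-\sqrt{2w(2\mu+\nu)-2\mu\nu-\nu^2}$. Under $\nu=o(w)$ and $\mu=o(w)$ with $w$ increasing, we have $\tfrac{w+\mu}{2}=\tfrac{w}{2}(1+o(1))$, and since the radicand is bounded by $2w(2\mu+\nu)=2w\cdot o(w)=o(w^2)$, the square-root term is $o(w)$; hence $E_n=\tfrac{w}{2}(1+o(1))$. Moreover $w>\mu+\nu$ holds for all large $n$, so the first-part bound applies. Consequently, for any fixed $\phi\in(0,1)$ we have $E_n\ge\phi\,w/2$ for $n$ large enough, giving $P(W>w)\le e^{-E_n}\le e^{-\phi w/2}$, which completes the argument.
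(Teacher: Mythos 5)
Your proof is correct. The paper itself does not prove this lemma — it simply cites Lemma S1.4 of \cite{rognon:2025} — and your argument is the standard Chernoff/sub-gamma derivation (control the centred cumulant generating function by $(\nu+2\mu)\lambda^2/(1-2\lambda)$, invert the resulting quadratic in $\sqrt{x}$, then weaken $\tfrac12\sqrt{R}$ to $\sqrt{R}$) that underlies tail bounds of exactly this form; I verified the splitting of the CGF, the sign of $h'$, the inversion $x=\tfrac{w+\mu}{2}-\tfrac12\sqrt{(\nu+2\mu)(2w-\nu)}$ with $x>0\Leftrightarrow w>\mu+\nu$, and the asymptotic step, and all are sound.
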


\begin{proof} Proved as Lemma S1.4. in \cite{rognon:2025}.
\end{proof}

\begin{lemma}\label{lemma:s2} Let $W \sim \chi_\nu^2(\mu)$ with $\mu>0$. For any $w<\mu$, 
$$
P(W<w) \leq \frac{e^{-\frac{1}{2}(\sqrt{\mu}-\sqrt{w})^2}}{(\mu / w)^{\nu / 4}}.$$
\end{lemma}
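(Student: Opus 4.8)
The plan is to obtain the lower-tail bound through a Chernoff-type argument, exploiting the exponential tilting available from the closed-form moment generating function of the noncentral chi-squared distribution. Since we are bounding a lower tail, I would introduce a positive parameter $s>0$ and apply Markov's inequality to $e^{-sW}$, writing
\begin{equation}
P(W<w)=P\big(e^{-sW}>e^{-sw}\big)\leq e^{sw}\,E\big[e^{-sW}\big].
\nonumber
\end{equation}
The moment generating function of $W\sim\chi_\nu^2(\mu)$ is $E[e^{tW}]=(1-2t)^{-\nu/2}\exp\{\mu t/(1-2t)\}$ for $t<1/2$, so evaluating at $t=-s<0$ (always admissible when $s>0$) gives $E[e^{-sW}]=(1+2s)^{-\nu/2}\exp\{-\mu s/(1+2s)\}$. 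Substituting this yields, for every $s>0$,
\begin{equation}
P(W<w)\leq (1+2s)^{-\nu/2}\exp\Big\{sw-\tfrac{\mu s}{1+2s}\Big\}.
\nonumber
\end{equation}

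The only real decision is the choice of the tilting parameter $s$, and the key point is that the clean closed form in the statement comes not from the exact minimizer but from the explicit choice $s=\tfrac12(\sqrt{\mu/w}-1)$, which is positive precisely because $w<\mu$. With this choice one has $1+2s=\sqrt{\mu/w}$, so the prefactor becomes $(1+2s)^{-\nu/2}=(\mu/w)^{-\nu/4}$, exactly the denominator in the target. It then remains to simplify the exponent: a direct computation gives $sw=\tfrac12(\sqrt{\mu w}-w)$ and $\mu s/(1+2s)=\tfrac12(\mu-\sqrt{\mu w})$, so that
\begin{equation}
sw-\frac{\mu s}{1+2s}=\sqrt{\mu w}-\tfrac12 w-\tfrac12\mu=-\tfrac12\big(\sqrt{\mu}-\sqrt{w}\big)^2.
\nonumber
\end{equation}
Combining the prefactor and the exponent reproduces the stated bound exactly.

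Because the tilting parameter is supplied explicitly rather than optimized, there is no genuinely hard step here: the argument reduces to verifying admissibility (that $s>0$, equivalently $w<\mu$, and hence $t=-s$ lies in the domain of the moment generating function) and carrying out the algebraic simplification of the exponent above. The only point requiring a little care is choosing the tilt so that the resulting bound closes in elementary closed form; I would note in passing that this mirrors the structure of the upper-tail estimate in Lemma \ref{lemma:s1s3}, the essential difference being the sign of the tilting parameter and the correspondingly different algebra.
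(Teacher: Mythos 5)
Your proof is correct: the Chernoff tilting with $s=\tfrac12(\sqrt{\mu/w}-1)>0$ is admissible exactly when $0<w<\mu$, the moment generating function evaluation is right, and the algebra closing the exponent to $-\tfrac12(\sqrt{\mu}-\sqrt{w})^2$ checks out. The paper itself gives no argument here --- it simply cites Lemma S1.5 of \cite{rognon:2025} --- so your write-up supplies a self-contained proof via the standard exponential-tilting route, which is almost certainly the same mechanism used in that reference.
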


\begin{proof} Proved as Lemma S1.5. in \cite{rognon:2025}.
\end{proof}

\begin{lemma}\label{lemma:s18} 
    Let $W\sim \chi_\eta^2(\mu)$ with $\mu \geq 0$. Assume that $h$, $c$, $\eta$ and $\mu$ are functions of $n$ such that $h$ is positive and increasing, $c$ is positive, $\eta=o(c\log(h))$, and $\mu=o(c\log(h))$. Let $\bar{u},\underline{u}$ in $(0,1)$ such that $1>\bar{u}>\underline{u}\geq\left(1+h^{\psi}e^{-(\eta+\mu)/c}\right)^{-1}$ where $\psi\in(0,1)$, then for every $n$ large enough, we have 
$$
\int_{\underline{u}}^{\bar{u}} P\left(W>c\log \left(\frac{h}{1 / u-1}\right)\right) d u \leq  \frac{1}{h^{\psi}}\Big(\bar{u}-\underline{u}+\log\Big(\frac{\bar{u}}{\underline{u}}\Big)\Big).
$$
\end{lemma}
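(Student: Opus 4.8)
The plan is to reduce the integral inequality to a single pointwise bound on the integrand. The key first observation is that the right-hand side is itself an integral: since $\frac{d}{du}\big(u + \log u\big) = 1 + 1/u$, we have $\bar u - \underline u + \log(\bar u/\underline u) = \int_{\underline u}^{\bar u}(1 + 1/u)\,du$. Hence the claim follows at once if I can establish, for every $u\in[\underline u,\bar u]$ and all $n$ large, the pointwise inequality
\begin{equation*}
P\Big(W > c\log\big(\tfrac{h}{1/u - 1}\big)\Big) \;\leq\; \frac{1 + 1/u}{h^{\psi}}.
\end{equation*}
Writing $t := 1/u - 1 > 0$ and noting $1 + 1/u = 2 + t$, this is exactly $P\big(W > c\log(h/t)\big)\leq (2+t)/h^{\psi}$, and integrating it over $u$ then yields the stated bound.

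Next I would use the hypothesis on $\underline u$ to force the threshold to be large. The condition $\underline u \geq (1 + h^{\psi} e^{-(\eta+\mu)/c})^{-1}$ is equivalent to $t \leq 1/\underline u - 1 \leq h^{\psi} e^{-(\eta+\mu)/c}$ for every $u\geq\underline u$, so the threshold obeys $w(u) := c\log(h/t) \geq c(1-\psi)\log h + \eta + \mu$. In particular $w(u) > \eta + \mu$, and since $\eta = o(c\log h)$ and $\mu = o(c\log h)$ we get $\eta = o(w(u))$ and $\mu = o(w(u))$ uniformly on $[\underline u,\bar u]$ (the worst case is the lower endpoint $\underline u$, where $w$ is smallest yet still diverges). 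This places me squarely in the regime covered by Lemma \ref{lemma:s1s3}: its asymptotic form gives, for any fixed $\phi\in(0,1)$ and all $n$ large, $P(W > w)\leq e^{-\phi w/2}$ for every threshold $w \geq w(\underline u)$, hence uniformly $P(W > w(u))\leq e^{-\phi w(u)/2} = (t/h)^{\phi c/2}$.

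It then remains to verify the elementary inequality $(t/h)^{\phi c/2}\leq (2+t)/h^{\psi}$, i.e. $t^{\alpha}h^{\psi - \alpha}\leq 2 + t$ with $\alpha := \phi c/2$, for $t\in(0,\,h^{\psi}]$ (recall $t\leq h^{\psi}e^{-(\eta+\mu)/c}\leq h^{\psi}$). I would take $\phi$ close enough to $1$ that $\alpha\geq\psi$, which is possible here because $c/2$ exceeds $\psi$ in the regime of interest (in the applications $c\to 2$ while $\psi<1$). Given $\alpha\geq\psi$ one has $h^{\psi-\alpha}\leq 1$, and a short case split finishes it: for $t\leq 1$, $t^{\alpha}\leq 1$ so the product is $\leq 1\leq 2+t$; for $t>1$, writing $t^{\alpha}=t\,t^{\alpha-1}$ and using $t\leq h^{\psi}$ gives $t^{\alpha}h^{\psi-\alpha}\leq t\,h^{\alpha(\psi-1)}\leq t\leq 2+t$ in both subcases $\alpha\gtrless 1$. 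Integrating the resulting pointwise bound over $[\underline u,\bar u]$ reproduces $\frac{1}{h^{\psi}}\int_{\underline u}^{\bar u}(1+1/u)\,du = \frac{1}{h^{\psi}}\big(\bar u-\underline u+\log(\bar u/\underline u)\big)$.

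The main obstacle I expect is the uniform application of the tail bound across the whole integration range, i.e. controlling the subexponential corrections from $\eta$ and $\mu$ simultaneously for all $u$; this is precisely where the assumptions $\eta,\mu = o(c\log h)$ and the lower bound on $\underline u$ do the work, since they guarantee $\min_{u} w(u)\to\infty$ faster than $\eta+\mu$ and so a single ``$n$ large enough'' suffices. A secondary point of care is the bookkeeping in the elementary inequality near $t = h^{\psi}$, where the constraint $t\leq h^{\psi}e^{-(\eta+\mu)/c}$ is exactly what keeps $t^{\alpha}h^{\psi-\alpha}$ below $2+t$; if one prefers to avoid fixing $\phi<1$, the same step goes through using the non-asymptotic bound of Lemma \ref{lemma:s1s3} directly, at the cost of tracking an explicit $h^{o(1)}$ factor.
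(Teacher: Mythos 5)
Your proposal is correct and follows essentially the same route as the paper: both use the lower bound on $\underline{u}$ to force the threshold $c\log(h/(1/u-1))$ above $c(1-\psi)\log(h)+\eta+\mu$ so that Lemma \ref{lemma:s1s3} applies uniformly, and both then convert the resulting bound $(t/h)^{\phi c/2}$ into the stated integral via an elementary inequality (you verify $t^{\alpha}h^{\psi-\alpha}\leq 2+t$ pointwise and integrate $1+1/u$ directly, whereas the paper integrates $(1/u-1)^{\psi}$ after a change of variables and applies Bernoulli's inequality; the bookkeeping is equivalent). The one implicit requirement you flag, namely that $\phi\in(0,1)$ can be chosen with $\phi c/2\geq\psi$ (i.e.\ $c>2\psi$), is equally implicit in the paper's proof and holds in all its applications since there $c=2(1+gn)/(gn)>2$.
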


%\paul{I had to reprove this one because it was with factor 2 instead of $c$.}

\begin{proof}
For any $u\in [\underline{u},\bar{u}]$, we have $c\log \left(\frac{h}{1 / u-1}\right) \geq c\log \left(\frac{h}{1 / \underline{u}-1}\right)$. Since $\underline{u}\geq (1+h^{\psi}e^{-(\eta+\mu)/c})^{-1}$ by assumption  we also have that, for any $u\in (\underline{u},\bar{u})$,
$$c\log \left(\frac{h}{1 / u-1}\right) \geq c(1-\psi)\log(h)+\eta+\mu.$$
It follows,
$$\frac{\eta}{c\log \left(\frac{h}{1 / u-1}\right)}\leq \frac{\eta}{c(1-\psi)\log(h)+\eta+\mu}
\quad\text{and}\quad
\frac{\mu}{c\log \left(\frac{h}{1 / u-1}\right)}\leq\frac{\mu}{ c(1-\psi)\log(h)+\eta+\mu}.$$
By assumption $\eta=o\left(c\log (h)\right)$ and $\mu=o\left(c\log (h)\right)$, then for any $u\in (\underline{u},\bar{u})$, $\eta=o\left(c\log \left(\frac{h}{1 / u-1}\right)\right)$ and $\mu=o\left(c\log \left(\frac{h}{1 / u-1}\right)\right)$. By Lemma~\ref{lemma:s1s3}, for any $\psi\in(0,1)$ and every $n$ large enough,
\begin{equation}\label{eq:randomm}
    \int_{\underline{u}}^{\bar{u}} P\left(W>c\log \left(\frac{h}{1 / u-1}\right)\right) d u< \frac{1}{h^{\psi}} \int_{\underline{u}}^{\bar{u}}(1 / u-1)^{\psi} d u .
\end{equation}
Applying the change of variables $v=1 / u-1$ to the integral on the right-hand side above gives
\begin{equation}\label{eq:rpoig}
    \int_{\underline{u}}^{\bar{u}}(1 / u-1)^{\psi} d u=\int_{1 / \bar{u}-1}^{1 / \underline{u}-1} \frac{v^{\psi}}{(v+1)^2} d v.
\end{equation}
Rewrite $v^{\psi}=(v-1 + 1)^{\psi}$. Since $\bar{u}<1$, we have that for any $v>1 / \bar{u}-1$, $v-1>-1$. Note that for any $x \geq-1$ and $r\in[0,1]$ $(1+x)^r \leq 1+r x$. Then, for any $v>1 / \bar{u}-1$, $v^{\psi}=(v-1 + 1)^{\psi}\leq 1 + \psi (v-1)\leq 1 + \psi(v+1)$. Applying this last inequality to the right-hand side in \eqref{eq:rpoig} gives
\begin{equation}\label{eq:random2}
\int_{\underline{u}}^{\bar{u}}(1 / u-1)^{\psi} d u<\int_{1 / \bar{u}-1}^{1 / \underline{u}-1} \frac{1}{(v+1)^2} +\frac{\psi}{v+1} d v = \bar{u}-\underline{u}+\psi\log\Big(\frac{\bar{u}}{\underline{u}}\Big).
\end{equation}
The result follows inputting the bound from \eqref{eq:random2} in \eqref{eq:randomm} and using that $\psi<1$ and $\log(\bar{u}/\underline{u}) \geq 0$ ($\underline{u} \leq \bar{u}$).
\end{proof}

\begin{lemma}\label{lemma:s20}
 Let $W\sim \chi_\eta^2(\mu)$ with $\mu \geq 0$. Assume that $h$, $c$, $\eta$ and $\mu$ are functions of $n$ such that $h$ is positive and increasing, $c$ is positive, $\eta=o(c\log(h))$, and $\mu=o(c\log(h))$. Then for any $\alpha\in(0,1)$ and every large enough $n$, we have
$$
\int_0^1 P\left(W>c \log \left(\frac{h}{1 / u-1}\right)\right) d u = o\big(  h^{-\alpha}\big).$$
\end{lemma}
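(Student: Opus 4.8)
The plan is to reduce the integral over $(0,1)$ to the sub-interval on which Lemma \ref{lemma:s18} applies, and to control the two leftover end pieces by hand. I would fix $\alpha \in (0,1)$ and choose an exponent $\psi \in (\alpha, 1)$. Following the hypotheses of Lemma \ref{lemma:s18}, I would set $\underline{u} = (1 + h^{\psi} e^{-(\eta+\mu)/c})^{-1}$ and $\bar{u} = 1 - h^{-1}$, and split
\begin{align}
\int_0^1 P\left(W > c\log\left(\frac{h}{1/u - 1}\right)\right) du = \int_0^{\underline{u}} + \int_{\underline{u}}^{\bar{u}} + \int_{\bar{u}}^1,
\nonumber
\end{align}
where the integrand is the same throughout. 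The middle piece is exactly of the form handled by Lemma \ref{lemma:s18}, while the two outer pieces I would bound trivially by the length of their intervals, since the integrand never exceeds $1$.

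For the left piece, I would first record that the assumptions $\eta = o(c\log h)$ and $\mu = o(c\log h)$ give $(\eta+\mu)/c = o(\log h)$, hence $e^{-(\eta+\mu)/c} = h^{-o(1)}$ and $h^{\psi} e^{-(\eta+\mu)/c} = h^{\psi - o(1)} \to \infty$. Consequently $\underline{u} \leq (h^{\psi} e^{-(\eta+\mu)/c})^{-1} = h^{-\psi} e^{(\eta+\mu)/c} = h^{-\psi + o(1)}$, so that $\int_0^{\underline{u}} \leq \underline{u} = h^{-\psi + o(1)} = o(h^{-\alpha})$ because $\psi > \alpha$. The right piece is immediate, since $\int_{\bar{u}}^1 \leq 1 - \bar{u} = h^{-1} = o(h^{-\alpha})$ as $\alpha < 1$.

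For the middle piece I would invoke Lemma \ref{lemma:s18} with the chosen $\psi$, $\underline{u}$ and $\bar{u}$ (admissible for large $n$, since $\underline{u} \to 0 < 1 - h^{-1} = \bar{u}$, and $\underline{u}$ meets the required lower bound with equality), obtaining
\begin{align}
\int_{\underline{u}}^{\bar{u}} P(\cdot)\, du \leq \frac{1}{h^{\psi}}\left( \bar{u} - \underline{u} + \log\frac{\bar{u}}{\underline{u}}\right).
\nonumber
\end{align}
Here $\bar{u} - \underline{u} \leq 1$, while $\log(\bar{u}/\underline{u}) \leq \log(1/\underline{u}) \leq \log(1 + h^{\psi}) = O(\log h)$ because $e^{-(\eta+\mu)/c} \leq 1$. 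Thus the middle piece is $O(h^{-\psi}\log h) = o(h^{-\alpha})$, again using $\psi > \alpha$. Summing the three bounds then yields the claim.

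The main obstacle to watch is the behaviour near the endpoints. Lemma \ref{lemma:s18} requires $\bar{u} < 1$ strictly, so the region $u \to 1$, where the argument of the logarithm blows up, must be peeled off and treated separately; this is the purpose of inserting $\bar{u} = 1 - h^{-1}$. The other delicate point is verifying that $\underline{u}$ decays faster than $h^{-\alpha}$, which hinges entirely on the $o(c\log h)$ growth conditions on $\eta$ and $\mu$: they guarantee that the correction factor $e^{(\eta+\mu)/c}$ is only $h^{o(1)}$ and therefore cannot spoil the $h^{-\psi}$ rate coming from Lemma \ref{lemma:s18}.
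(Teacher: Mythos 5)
Your proposal is correct and follows essentially the same route as the paper: both split off the endpoints of $(0,1)$ (bounding those pieces trivially by their lengths), apply Lemma \ref{lemma:s18} on the middle interval with the same choice $\underline{u}=(1+h^{\psi}e^{-(\eta+\mu)/c})^{-1}$ for some $\psi\in(\alpha,1)$, and use $(\eta+\mu)/c=o(\log h)$ to show each resulting term is $o(h^{-\alpha})$. The only cosmetic difference is your asymmetric choice $\bar{u}=1-h^{-1}$ versus the paper's symmetric $\bar{u}=1-\underline{u}$, which does not affect the argument.
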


%\paul{I had to reprove this one too because it was with factor 2 instead of $c$.}

\begin{proof}
Since a probability is bounded by $1$, for any $a\in(0,1)$,
\begin{equation}\label{eq:basebound}
    \int_0^1 P\left(W>c \log \left(\frac{h}{1 / u-1}\right)\right) d u \;\leq\; 2 a+\int_{a}^{1-a} P\left(W>c\log\left(\frac{h}{1 / u-1}\right)\right) d u.
\end{equation}

Take $a=\left(1+h^{\psi}e^{-(\eta+\mu)/ c}\right)^{-1}$ for some $\psi\in(\alpha,1)$. By Lemma~\ref{lemma:s18} with $\underline{u}=a$ and $\bar{u}=1-a$, we have that
$$
\int_0^1 P\left(W>c \log \left(\frac{h}{1 / u-1}\right)\right) d u \;\leq\; \frac{2}{1+h^{\psi}e^{-(\eta+\mu)/c}}+ 
\frac{1-2a+\log(h^{\psi}e^{-(\eta+\mu)/c})}{h^{\psi}}. $$
Since $1+h^{\psi}e^{-(\eta+\mu)/ c}>h^{\psi}e^{-(\eta+\mu)/ c}$ and $1-2a-\frac{\eta+\mu}{c}\leq \log(h^\psi)$ for every $n$ large enough, we have
$$
\int_0^1 P\left(W>2 \log \left(\frac{h}{1 / u-1}\right)\right) d u \;\leq\; h^{-\psi}\big(2e^{(\eta+\mu)/ c}+2\log(h^{\psi})\big) $$
We have that
\begin{equation}\label{eq:random24}
    \frac{h^{-\psi}2e^{(\eta+\mu)/ c}}{h^{-\alpha}} 
= e^{-(\psi-\alpha) \log(h) + \log(2) + (\eta + \mu)/c}
    \;=\;e^{-(\psi-\alpha)\log(h)\big(1-\frac{\eta+\mu}{c(\psi-\alpha)\log(h)}- \frac{\log(2)}{(\psi-\alpha)\log(h)}\big)}
\end{equation}
and similarly that
\begin{equation}\label{eq:random25}
    \frac{h^{-\psi}2\log(h^{\psi})}{h^{-\alpha}} \;=\;e^{-(\psi-\alpha)\log(h)\big(1-\frac{\log(\psi\log(h))}{(\psi-\alpha)\log(h)}- \frac{\log(2)}{(\psi-\alpha)\log(h)}\big)}.
\end{equation}
Since $\alpha<\psi$ as stated above, and by assumption $h$ is increasing, $\eta=o(c\log(h))$ and $\mu=o(c\log(h))$, both expressions in \eqref{eq:random24} and \eqref{eq:random25} vanish as $n$ grows. Hence, 
$$\int_0^1 P\left(W>2 \log \left(\frac{h}{1 / u-1}\right)\right) d u \;=\; o(h^{-\alpha}).$$
\end{proof}

\subsubsection{Bound on the expected posterior probability of an individual model}

\begin{lemma}\label{lem:boundexpnc}
For any $\tilde{\bgamma}\subseteq \bgamma^*$ and $\bgamma \neq \tilde{\bgamma}$, denote $A_{\tilde{\bgamma}}:=\nu\Delta(\bgamma,\tilde{\bgamma})+\tfrac{1-\nu}{6}\mu(\bGamma(\bgamma,\tilde{\bgamma}),\bgamma)$ (\emph{cf} \eqref{eq:muQM}). Suppose that, for some $\nu \in (1/2,1]$, it holds that $A_{\tilde{\bgamma}}>0$, and $|\bgamma \setminus \tilde{\bgamma}|=o(A_{\bgamma})$. For any $\psi\in(0,1)$ and every $n$ large enough,
$$E\left(\pi(\bgamma\mid \by, \bomega)\right)  \;\;\leq\;\; e^{-\psi A_{\bgamma}}.$$

\end{lemma}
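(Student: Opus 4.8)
The plan is to dominate the posterior mass of $\bgamma$ by the pairwise posterior odds against $\tilde{\bgamma}$, and then exploit the chi-squared structure of the resulting fit statistic. First I would note that the derivation of \eqref{eq:bf_zellner_known} only uses that both competing models lie in $\Gamma$, so the identity holds verbatim with $\bgamma^*$ replaced by any $\tilde{\bgamma}$. Since $\pi(\tilde{\bgamma}\mid\by,\bomega)\le 1$, this yields $\pi(\bgamma\mid\by,\bomega)\le \exp\{\tfrac{t_n}{2} L(\bgamma,\tilde{\bgamma}) - \Delta(\bgamma,\tilde{\bgamma})\}$, where $t_n := gn/(1+gn)\in(0,1)$ and $L,\Delta$ are as in \eqref{eq:muQM}; I would keep the trivial bound $\pi(\bgamma\mid\by,\bomega)\le 1$ in reserve.

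Next I would isolate the randomness through the union model. Writing $\bGamma := \bgamma\cup\tilde{\bgamma}$, both $\bgamma$ and $\tilde{\bgamma}$ are nested in $\bGamma$, so $L(\bgamma,\tilde{\bgamma}) = L(\bGamma,\tilde{\bgamma}) - L(\bGamma,\bgamma) =: U - V$, and Lemma \ref{lemma:s7} identifies $U\sim\chi^2_{|\bgamma\setminus\tilde{\bgamma}|}(\mu(\bGamma,\tilde{\bgamma}))$ and $V\sim\chi^2_{|\tilde{\bgamma}\setminus\bgamma|}(\mu(\bGamma,\bgamma))$, the latter noncentrality being exactly the $\mu(\bGamma,\bgamma)$ appearing in $A$. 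The exponent in the bound above is thus $\tfrac{t_n}{2}(U-V)-\Delta$, a difference of (possibly noncentral) chi-squares that I cannot integrate in closed form, since $U$ and $V$ are dependent for non-orthogonal designs and any factorization via Cauchy--Schwarz would push the moment generating function of $U$ to the argument $t_n$, where it diverges as $t_n\uparrow 1$.

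Instead I would run a Chernoff/event-splitting argument. Define a good event $G = \{U\le u^*\}\cap\{V\ge v^*\}$, with $v^*$ a fixed fraction of $\mu(\bGamma,\bgamma)$ and $u^*$ placed at the level of $A$. On $G$ the exponent is at most $\tfrac{t_n}{2}(u^*-v^*)-\Delta$; splitting $\Delta = \nu\Delta + (1-\nu)\Delta$ and using $\nu>\tfrac12$ together with $t_n\uparrow 1$, the term $-\tfrac{t_n}{2}V\le-\tfrac{t_n}{2}v^*$ supplies the $\tfrac{1-\nu}{6}\mu(\bGamma,\bgamma)$ contribution to $A$ (the factor $6$ and the condition $\nu>1/2$ leave enough slack that $\tfrac{t_n v^*}{2}$ dominates $\psi\tfrac{1-\nu}{6}\mu(\bGamma,\bgamma)$), while $\nu\Delta$ absorbs the penalty. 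A short case check over $\Delta\ge 0$ and $\Delta<0$ (using $A>0$, which forces $\mu(\bGamma,\bgamma)\gg|\Delta|$ in the underfitting case) shows $\tfrac{t_n}{2}(U-V)-\Delta\le -\psi' A$ on $G$ for some $\psi<\psi'<1$. Off $G$ I would use $\pi(\bgamma\mid\by,\bomega)\le 1$, so that $E[\pi(\bgamma\mid\by,\bomega)]\le e^{-\psi'A} + P(U>u^*) + P(V<v^*)$.

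Finally I would dispatch the two tails. For $P(V<v^*)$ I would invoke the lower-tail bound Lemma \ref{lemma:s2}, whose exponent is governed by $\mu(\bGamma,\bgamma)$; Lemma \ref{lem:boundrhogen} lower-bounds this noncentrality by $n\phi^{-1}\rho(\bX)\sum_b|\tilde{\bgamma}_b\setminus\bgamma_b|\min_i{\theta_i^*}^2$, rendering $P(V<v^*)$ exponentially small. For $P(U>u^*)$ I would apply the upper-tail bound Lemma \ref{lemma:s1s3}, which needs both the degrees of freedom $|\bgamma\setminus\tilde{\bgamma}|$ and the noncentrality $\mu(\bGamma,\tilde{\bgamma})$ to be $o(u^*)$; this is precisely where the hypothesis $|\bgamma\setminus\tilde{\bgamma}|=o(A)$ enters, and where $\mu(\bGamma,\tilde{\bgamma})$ is controlled through Lemma \ref{lem:upperboundnoncentralgen} (and in the primary application $\tilde{\bgamma}=\bgamma^*$ it vanishes, since $\bgamma\setminus\bgamma^*$ carries only null coefficients). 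Because $A\to\infty$ whenever $\bgamma\ne\tilde{\bgamma}$, both tails are $o(e^{-\psi'A})$, and for $n$ large the sum is at most $e^{-\psi A}$. I expect the main obstacle to be the constant bookkeeping in the good-event step: simultaneously choosing $u^*$, $v^*$, $\nu$ and $\psi'$ so the exponent clears $-\psi A$ while both the $U$ upper tail and the $V$ lower tail remain dominated, and in particular guaranteeing that the fit noncentrality $\mu(\bGamma,\tilde{\bgamma})$ never overwhelms the penalty budget $\Delta$.
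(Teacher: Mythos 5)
Your skeleton is the paper's: dominate $\pi(\bgamma\mid\by,\bomega)$ by its posterior odds against $\tilde{\bgamma}$, pass to the union model so that $L(\bgamma,\tilde{\bgamma})=L(\bGamma(\bgamma,\tilde{\bgamma}),\tilde{\bgamma})-L(\bGamma(\bgamma,\tilde{\bgamma}),\bgamma)$ with the two noncentral chi-squares of Lemma \ref{lemma:s7}, and close with Lemmas \ref{lemma:s1s3} and \ref{lemma:s2}. The gap is in the event split. Taking $v^*$ to be a fixed fraction of $\mu:=\mu(\bGamma(\bgamma,\tilde{\bgamma}),\bgamma)$ and using $\pi\le 1$ off $G$ fails in the regime where the penalty difference dominates the missed signal, $\Delta(\bgamma,\tilde{\bgamma})\gg\mu>0$: Lemma \ref{lemma:s2} only yields $P(V<c_1\mu)\le e^{-\frac12(1-\sqrt{c_1})^2\mu}$, while $A_{\tilde{\bgamma}}\approx\nu\Delta(\bgamma,\tilde{\bgamma})$ can be arbitrarily larger than $\mu$. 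Concretely, with $\tilde{\bgamma}=\bgamma^*$, a model $\bgamma$ that adds $k$ spurious covariates and drops one active covariate at the betamin boundary has $\mu\asymp\kappa_b/(1-\nu)$ but $A_{\tilde{\bgamma}}\asymp\nu k\kappa_b$, so your $V$-tail is not $o(e^{-\psi A_{\tilde{\bgamma}}})$ once $k$ is large; these are exactly the models Lemma \ref{thm:convtoT} must sum over. The paper's fix is to fold the penalty into the signal-term threshold: its bad event is $\{gnL(\bGamma(\bgamma,\tilde{\bgamma}),\bgamma)/(2(1+gn))<\tfrac{1-\nu}{6}(\mu-6\Delta(\bgamma,\tilde{\bgamma}))\}$, which is \emph{empty} whenever $\mu\le 6\Delta(\bgamma,\tilde{\bgamma})$ and otherwise forces $A_{\tilde{\bgamma}}<\mu/6$, so the $e^{-\mu/6}$ tail genuinely beats $e^{-A_{\tilde{\bgamma}}}$.

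A second obstruction, which you flag as ``bookkeeping'' but which is structural, sits in the overfit tail. To get $P(U>u^*)\le e^{-\psi A_{\tilde{\bgamma}}}$ from Lemma \ref{lemma:s1s3} you need $u^*\gtrsim 2\psi A_{\tilde{\bgamma}}$, but then the good-event exponent gains $+\tfrac{t_n}{2}u^*\approx\psi A_{\tilde{\bgamma}}$, and absorbing both costs requires roughly $\tfrac{t_n}{2}v^*+\Delta(\bgamma,\tilde{\bgamma})\ge 2\psi A_{\tilde{\bgamma}}$; matching the $\Delta$ coefficients forces $2\psi\nu\le 1$, so a fixed threshold caps the attainable rate at $\psi\le 1/(2\nu)$, short of the claimed ``any $\psi\in(0,1)$'' that the application actually needs ($\psi$ close to $1$ in the proof of Lemma \ref{thm:convtoT}). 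The paper sidesteps the trade-off by never fixing a level: it writes $E[\pi(\bgamma\mid\by,\bomega)]=\int_0^1 P(\pi(\bgamma\mid\by,\bomega)>u)\,du$, so the overfit statistic is tested against the $u$-dependent threshold $\log\big(e^{A_{\tilde{\bgamma}}}/(1/u-1)\big)$ and Lemma \ref{lemma:s20} returns $o(e^{-\alpha A_{\tilde{\bgamma}}})$ for any $\alpha<1$. (You are right that this step also needs $\mu(\bGamma(\bgamma,\tilde{\bgamma}),\tilde{\bgamma})=o(A_{\tilde{\bgamma}})$; the paper uses that condition too, even though the lemma statement omits it.) Without both devices the stated bound is not reachable.
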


\begin{proof}
For any $\bgamma\neq \tilde{\bgamma}$, since $\sum_{\bgamma}\pi(\bgamma \mid \by, \bomega)=1$, we have the following bound on the posterior probability
\begin{equation}\label{eq:saknv}
    \pi(\bgamma \mid \by, \bomega)= \frac{\pi(\bgamma \mid \by, \bomega)} { \sum_{\delta}\pi(\delta\mid \by, \bomega)} = \bigg[1+\sum_{\delta:\delta\neq \bgamma}\frac{\pi(\delta\mid \by, \bomega)}{ \pi(\bgamma \mid \by, \bomega)}\bigg]^{-1}
\leq \bigg[1+\frac{\pi(\tilde{\bgamma}\mid \by, \bomega)}{ \pi(\bgamma \mid \by, \bomega)}\bigg]^{-1}
\end{equation}
where the last inequality follows from $\sum_{\delta:\delta\notin \{\bgamma,\tilde{\bgamma}\}}\pi(\delta\mid \by, \bomega)/ \pi(\bgamma \mid \by, \bomega) \geq 0$.
By \eqref{eq:saknv} and using that for any random variable $Z \geq 0$, $E(Z)=\int_0^\infty P(Z>u) du$, we have that
\begin{equation*}
    E \left[ \pi(\bgamma\mid \by, \bomega) \right] \leq \int_0^1 P\left( \bigg[1+\frac{\pi(\tilde{\bgamma}\mid \by, \bomega)}{ \pi(\bgamma \mid \by, \bomega)}\bigg]^{-1}\geq u\right){\rm d} u,
\end{equation*}
and inverting and taking logarithm on both sides of the inequality within the probability in the right-hand side, we get
\begin{equation}\label{eq:fsan}
    E \left[ \pi(\bgamma\mid \by, \bomega) \right] \leq \int_0^1 P\left( \log \frac{\pi(\tilde{\bgamma}\mid \by, \bomega)}{ \pi(\bgamma \mid \by, \bomega)}\leq \log\Big(\frac 1u-1\Big)\right){\rm d} u.
\end{equation}
Simple algebra shows that
\begin{equation}\label{eq:pdSVN}
\frac{\pi(\tilde{\bgamma} \mid \by,\bomega)}{\pi(\bgamma \mid \by, \bomega)}=
\exp \left\{ -\frac{g n (\hat{\btheta}_{\bgamma}^T \bX_{\bgamma}^T \bX_{\bgamma}\hat{\btheta}_{\bgamma} - \hat{\btheta}_{\tilde{\bgamma}}^T \bX_{\tilde{\bgamma}}^T \bX_{\tilde{\bgamma}}\hat{\btheta}_{\tilde{\bgamma}})}{2\phi (1+g n)} \right\}
\prod_{b=1}^B \left( \frac{\omega_b}{(1 + gn)^{1/2} (1 - \omega_b)} \right)^{p_{\tilde{\bgamma},b} - p_{\bgamma,b}}
\end{equation}
Inputting \eqref{eq:pdSVN} in \eqref{eq:fsan}, and using the notation in \eqref{eq:kappa2} and \eqref{eq:muQM}, we have
\begin{eqnarray*}
 E \left[ \pi(\bgamma\mid \by, \bomega) \right] &\leq&\int_0^1 P\left(\frac{g n L(\tilde{\bgamma},\bgamma)}{2 (1+g n)} +\Delta(\bgamma,\tilde{\bgamma})\leq \log{(\tfrac1u-1)}\right){\rm d}u\\
  &=&\int_0^1 P\left(-\frac{g n L(\tilde{\bgamma},\bgamma)}{2 (1+g n)} -  \Big(\Delta(\bgamma,\tilde{\bgamma})-\log{(\tfrac1u-1)}\Big)\geq 0  \right) {\rm d}u.
\end{eqnarray*}

The second step of the proof is to use the union bound to upper bound the probability in the integrand above.
By adding and resting $ \phi^{-1}\hat{\btheta}_{\bGamma(\bgamma,\tilde{\bgamma})}^T \bX_{\bGamma(\bgamma,\tilde{\bgamma})}^T \bX_{\bGamma(\bgamma,\tilde{\bgamma})}\hat{\btheta}_{\bGamma(\bgamma,\tilde{\bgamma})}$ to $L(\tilde{\bgamma},\bgamma)$, we get   $$L(\tilde{\bgamma},\bgamma)=L(\bGamma(\bgamma,\tilde{\bgamma}),\bgamma)-L(\bGamma(\bgamma,\tilde{\bgamma}),\tilde{\bgamma}).$$ 
For any $\nu \in \mathrm{R}$,
\begin{equation*}
   \begin{aligned}
     -\frac{g n L(\tilde{\bgamma},\bgamma)}{2 (1+g n)} -  \Big(\Delta(\bgamma,\tilde{\bgamma})-\log{(\tfrac1u-1)}\Big)\;=
     \;&\left(\frac{gnL(\bGamma(\bgamma,\tilde{\bgamma}),\bgamma)}{2(1+gn)}-\log{\Bigl(\frac{e^{\nu\Delta(\bgamma,\tilde{\bgamma})}}{\tfrac1u-1}\Bigl)}\right)\\
     &-\Bigl(\frac{gnL(\bGamma(\bgamma,\tilde{\bgamma}),\tilde{\bgamma})}{2(1+gn)}+(1-\nu)\Delta(\bgamma,\tilde{\bgamma})\Bigl).
\end{aligned} 
\end{equation*}
Observe that for any random variables $U$, $V$, and any $\epsilon,\nu' \geq 0$, the event $
\{U-V \geq 0\}$ implies $\{U \geq \nu' \epsilon\}\cup
\{V  < \nu' \epsilon\}$. Let $U=\frac{gnL(\bGamma(\bgamma,\tilde{\bgamma}),\bgamma)}{2(1+gn)}-\log{\Bigl(\frac{e^{\nu\Delta(\bgamma,\tilde{\bgamma})}}{1/u-1}\Bigl)}$ and $V=\frac{gnL(\bGamma(\bgamma,\tilde{\bgamma}),\tilde{\bgamma})}{2(1+gn)}+(1-\nu)\Delta(\bgamma,\tilde{\bgamma})$. Take $\epsilon=\mu(\bGamma(\bgamma,\tilde{\bgamma}),\bgamma)$ and $\nu'=\tfrac16({1-\nu})$, and observe that
$A_{\tilde{\bgamma}}:=\nu\Delta(\bgamma,\tilde{\bgamma})+\tfrac{1-\nu}{6}\mu(\bGamma(\bgamma,\tilde{\bgamma}),\bgamma)= \nu\Delta(\bgamma,\tilde{\bgamma})+\nu'\mu(\bGamma(\bgamma,\tilde{\bgamma}),\bgamma)$. We then have
\begin{align*}
\{U\geq \nu'\epsilon\}\;&=\;\left\{\frac{gnL(\bGamma(\bgamma,\tilde{\bgamma}),\bgamma)}{2(1+gn)}\geq \log{\Bigl(\frac{e^{\nu\Delta(\bgamma,\tilde{\bgamma})}}{\tfrac1u-1}\Bigl)}+\nu'\mu(\bGamma(\bgamma,\tilde{\bgamma}),\bgamma)\right\}\\\;&
=\;\left\{\frac{gnL(\bGamma(\bgamma,\tilde{\bgamma}),\bgamma)}{2(1+gn)}\geq \log{\Bigl(\frac{e^{A_{\tilde{\bgamma}}}}{\tfrac1u-1}\Bigl)}\right\}
\end{align*}
and,
\begin{align*}
\{V<\nu'\epsilon\}\;&=\;\left\{\frac{gnL(\bGamma(\bgamma,\tilde{\bgamma}),\tilde{\bgamma})}{2(1+gn)}<-(1-\nu)\Delta(\bgamma,\tilde{\bgamma})+\nu'\mu(\bGamma(\bgamma,\tilde{\bgamma}),\bgamma)\right\}\;\\
&=\;\left\{\frac{gnL(\bGamma(\bgamma,\tilde{\bgamma}),\tilde{\bgamma})}{2(1+gn)}<\nu'(\mu(\bGamma(\bgamma,\tilde{\bgamma}),\bgamma)-6\Delta(\bgamma,\tilde{\bgamma}))\right\}.
\end{align*}
By the union bound we have that
\begin{equation}\label{eq:prooflemmas20nonspu}
  \begin{aligned}
    E \left[ \pi(\bgamma\mid \by, \bomega) \right]   \leq 
    & \int_0^1P\Bigl(\frac{gnL(\bGamma(\bgamma,\tilde{\bgamma}),\bgamma)}{2(1+gn)}\geq \log{\Bigl(\frac{e^{A_{\tilde{\bgamma}}}}{\tfrac1u-1}\Bigl)}\Bigl){\rm d}u \\
    &+ P\Bigl(\frac{gnL(\bGamma(\bgamma,\tilde{\bgamma}),\tilde{\bgamma})}{2(1+gn)}<\nu'(\mu(\bGamma(\bgamma,\tilde{\bgamma}),\bgamma)-6\Delta(\bgamma,\tilde{\bgamma}))\Bigl).
\end{aligned}  
\end{equation}

The third and final step of the proof is to upper bound each of the terms in the right-hand side of \eqref{eq:prooflemmas20nonspu}. The intuition is that both $\tilde{\bgamma}$ and $\bgamma$ are nested within $\bGamma(\bgamma,\tilde{\bgamma})$, and therefore $L(\bGamma(\bgamma,\tilde{\bgamma}),\tilde{\bgamma})$ and $L(\bGamma(\bgamma,\tilde{\bgamma}),\bgamma)$ follow chi-squared distributions.
We first bound the first term. %in the right-hand side of \eqref{eq:prooflemmas20nonspu}. 
If $\bgamma \subset \tilde{\bgamma}$ then $\bGamma(\bgamma,\tilde{\bgamma})=\tilde{\bgamma}$, $L(\bGamma(\bgamma,\tilde{\bgamma}),\tilde{\bgamma})=0$, and this term is zero. Suppose now that $\bgamma \not\subset \tilde{\bgamma}$. %$\tilde{\bgamma}$ is a strict subset of $\bGamma(\bgamma,\tilde{\bgamma})$. 
Then, by Lemma~\ref{lemma:s7}, %$L(\bGamma(\bgamma,\tilde{\bgamma}),\tilde{\bgamma})$ has a $\chi^2$-distribution. More precisely, 
$L(\bGamma(\bgamma,\tilde{\bgamma}),\tilde{\bgamma}) \sim \mathcal{\chi}_{|\bGamma(\bgamma,\tilde{\bgamma})\setminus \tilde{\bgamma}|}^2(\mu(\bGamma(\bgamma,\tilde{\bgamma}),\tilde{\bgamma}))$ with $|\bGamma(\bgamma,\tilde{\bgamma})\setminus \tilde{\bgamma}|=|\bgamma\setminus \tilde{\bgamma}|$. By assumption, $A_{\tilde{\bgamma}}>0$, $|\bgamma\setminus \tilde{\bgamma}|=o(\log(e^{A_{\tilde{\bgamma}}}))$ and $\mu(\bGamma(\bgamma,\tilde{\bgamma}),\tilde{\bgamma})=o(\log(e^{A_{\tilde{\bgamma}}}))$, then  by Lemma~\ref{lemma:s20}, for $\alpha\in(\psi,1)$ and $c=2(1+gn)/gn>0$, and every $n$ large enough,
\begin{equation}\label{eq:boundLQS}
\int_0^1P\Bigl(\frac{gnL(\bGamma(\bgamma,\tilde{\bgamma}),\bgamma)}{2(1+gn)}>\log{\Bigl(\frac{e^{A_{\tilde{\bgamma}}}}{1 / u-1}\Bigl)}\Bigl){\rm d}u \;<\; 
e^{-\alpha A_{\tilde{\bgamma}}}.
\end{equation}

We now bound the second term in \eqref{eq:prooflemmas20nonspu}. If $\bgamma \supset \tilde{\bgamma}$, then $\bGamma(\bgamma,\tilde{\bgamma})=\bgamma$, $L(\bGamma(\bgamma,\tilde{\bgamma}),\bgamma)=0$, $\mu(\bGamma(\bgamma,\tilde{\bgamma}),\bgamma)=0$, and this term is zero. Alternatively, if $\bgamma \not\supset \tilde{\bgamma}$ %$\bgamma$ is a strict subset of $\bGamma(\bgamma,\tilde{\bgamma})$ 
then, by Lemma~\ref{lemma:s7}, $L(\bGamma(\bgamma,\tilde{\bgamma}),\bgamma) \sim \chi_{|\bGamma(\bgamma,\tilde{\bgamma})\setminus \bgamma|}^2(\mu(\bGamma(\bgamma,\tilde{\bgamma}),\bgamma))$ with $|\bGamma(\bgamma,\tilde{\bgamma})\setminus \bgamma|=|\tilde{\bgamma}\setminus \bgamma|$.
Clearly, when $\mu(\bGamma(\bgamma,\tilde{\bgamma}),\bgamma)\leq 6\Delta(\bgamma,\tilde{\bgamma})$ this term is also zero, so suppose that $\mu(\bGamma(\bgamma,\tilde{\bgamma}),\bgamma)> 6\Delta(\bgamma,\tilde{\bgamma})$.  
We have  
$$P\left(\frac{gnL(\bGamma(\bgamma,\tilde{\bgamma}),\tilde{\bgamma})}{2(1+gn)}<\nu'(\mu(\bGamma(\bgamma,\tilde{\bgamma}),\bgamma)-6\Delta(\bgamma,\tilde{\bgamma}))\right)\; \leq \; P\left(\frac{gnL(\bGamma(\bgamma,\tilde{\bgamma}),\tilde{\bgamma})}{2(1+gn)}<\nu'\mu(\bGamma(\bgamma,\tilde{\bgamma}),\bgamma)\right)$$ 
and, by Lemma~\ref{lemma:s2} and using that $\nu'\in \Big(0,\tfrac{1}{12}\Big)$, we obtain that
\begin{align*}
    P\left(\frac{gnL(\bGamma(\bgamma,\tilde{\bgamma}),\tilde{\bgamma})}{2(1+gn)}<\nu'\mu(\bGamma(\bgamma,\tilde{\bgamma}),\bgamma)\right)&\;\;\leq\;\; \left(2\nu'\Big(1+\frac{1}{gn}\Big)\right)^{\frac{|\tilde{\bgamma}\setminus \bgamma|}{4}} e^{-\frac{1}{2}\Big(1-\sqrt{2\nu'\big(1+\frac{1}{gn}\big)}\Big)^2\mu(\bGamma(\bgamma,\tilde{\bgamma}),\bgamma)}.
\end{align*}
Since $\nu'<\tfrac{1}{12}$ and $\frac{1}{gn} \to 0$, for $n$ large enough $2\nu'\big(1+\frac{1}{gn}\big)<\frac{1}{6}$ and $\frac{1}{2}\Big(1-\sqrt{2\nu'\big(1+\frac{1}{gn}\big)}\Big)^2 >\frac{1}{2}\Big(1-\sqrt{1/6}\Big)^2>1/6$, we have
\begin{align*}
    P\left(\frac{gnL(\bGamma(\bgamma,\tilde{\bgamma}),\tilde{\bgamma})}{2(1+gn)}<\nu'\mu(\bGamma(\bgamma,\tilde{\bgamma}),\bgamma)\right)&\;\;\;\;\leq\;\;  \left(\frac{1}{6}\right)^{\frac{|\tilde{\bgamma}\setminus \bgamma|}{4}}e^{-\frac{1}{6}\mu(\bGamma(\bgamma,\tilde{\bgamma}),\bgamma)}.
\end{align*}
Since $\mu(\bGamma(\bgamma,\tilde{\bgamma}),\bgamma)>6\Delta(\bgamma,\tilde{\bgamma})$, we also have
$$
A_{\tilde{\bgamma}}\;=\;\nu'\mu(\bGamma(\bgamma,\tilde{\bgamma}),\bgamma)+\nu\Delta(\bgamma,\tilde{\bgamma})\;<\;\frac{1-\nu}{6}\mu(\bGamma(\bgamma,\tilde{\bgamma}),\bgamma)+\frac{\nu}{6}\mu(\bGamma(\bgamma,\tilde{\bgamma}),\bgamma)\;=\;\frac16\mu(\bGamma(\bgamma,\tilde{\bgamma}),\bgamma).
$$ It follows
\begin{equation}\label{eq:boundLQM}
     P\left(\frac{gnL(\bGamma(\bgamma,\tilde{\bgamma}),\tilde{\bgamma})}{2(1+gn)}<\nu'\mu(\bGamma(\bgamma,\tilde{\bgamma}),\bgamma)\right)\;\leq\; \left(\frac14\right)^{\frac{|\tilde{\bgamma}\setminus \bgamma|}{6}}e^{-A_{\tilde{\bgamma}}} \leq e^{-A_{\tilde{\bgamma}}}\leq e^{-\alpha A_{\tilde{\bgamma}}}
\end{equation}
Summing the bounds in \eqref{eq:boundLQS} and \eqref{eq:boundLQM} gives that for every $n$ large enough $E \left[ \pi(\bgamma\mid \by, \bomega) \right] < 2e^{-\alpha A_{\tilde{\bgamma}}}$. Since $\psi<\alpha$, for every $n$ large enough, we have that $E \left[ \pi(\bgamma\mid \by, \bomega) \right]  < e^{-\psi A_{\tilde{\bgamma}}}$ as we wished to prove.
\end{proof}

\subsubsection{Convergence to a reduced set of models}

\begin{lemma}\label{thm:convtoT}
    Assume that A1--A2 hold for $\omega_1,\ldots,\omega_B$, $|S^{I}(\kappa)|=O(1)$ and $|S^{S}_b(\kappa)|=O(p_b-s_b)$ (\emph{cf} \eqref{eq:defSes}) for every $b=1,\ldots,B$. Then,
    $$\lim_{n \to \infty} \sum_{\bgamma \not\in \T(\kappa)} E\left(\pi(\bgamma\mid \by, \bomega)\right)= 0$$
\end{lemma}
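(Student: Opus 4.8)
The plan is to bound each $E[\pi(\bgamma\mid\by,\bomega)]$ for $\bgamma\notin\T(\kappa)$ via Lemma~\ref{lem:boundexpnc} against a well-chosen reference model $\tilde{\bgamma}\in\T(\kappa)$, and then to sum the resulting exponential bounds using a factorization over blocks and signal types. For a given $\bgamma$ I would take
\[
\tilde{\bgamma}\;=\;\big(\bgamma\cap S^{I}(\kappa)\big)\cup S^{L}(\kappa),
\]
which lies in $\T(\kappa)$ (cf.~\eqref{eq:Tkappa}) and satisfies $\tilde{\bgamma}\subseteq\bgamma^*$, as required by Lemma~\ref{lem:boundexpnc}. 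With this choice the two symmetric differences split cleanly: $\tilde{\bgamma}\setminus\bgamma=S^{L}(\kappa)\setminus\bgamma$ collects the \emph{missing large signals}, while $\bgamma\setminus\tilde{\bgamma}=(\bgamma\cap S^{S}(\kappa))\cup(\bgamma\cap N)$ collects the \emph{included small signals and included noise variables}, where $N$ denotes the truly inactive indices. A model fails to lie in $\T(\kappa)$ precisely when at least one of these three sets is nonempty, so this decomposition exhausts all $\bgamma\notin\T(\kappa)$.

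Next I would lower-bound $A_{\tilde{\bgamma}}=\nu\Delta(\bgamma,\tilde{\bgamma})+\tfrac{1-\nu}{6}\mu(\bGamma(\bgamma,\tilde{\bgamma}),\bgamma)$. Writing $m_b,t_b,u_b$ for the numbers of missing large signals, included small signals and included noise variables in block $b$, the penalty term is $\Delta(\bgamma,\tilde{\bgamma})=\sum_b(t_b+u_b-m_b)\kappa_b$. The noncentrality $\mu(\bGamma(\bgamma,\tilde{\bgamma}),\bgamma)$ only involves $\tilde{\bgamma}\setminus\bgamma=S^{L}(\kappa)\setminus\bgamma$, so Lemma~\ref{lem:boundrhogen} together with the defining lower bound on $|\theta_i^*|$ for $i\in S^{L}_b(\kappa)$ in \eqref{eq:defSes} gives $\tfrac{1-\nu}{6}\mu\ge\sum_b m_b\big(\kappa_b+\log s_b+g_b^2\big)$ up to a fixed $\phi$ factor. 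Combining the two yields
\[
A_{\tilde{\bgamma}}\;\ge\;\nu\sum_b(t_b+u_b)\kappa_b+\sum_b m_b\big[(1-\nu)\kappa_b+\log s_b+g_b^2\big]\;=:\;\underline{A}(\bgamma),
\]
which is strictly positive for every bad $\bgamma$ (at least one of $m_b,t_b,u_b$ is nonzero). Since $\kappa_b\to\infty$, one also has $|\bgamma\setminus\tilde{\bgamma}|=\sum_b(t_b+u_b)=o(A_{\tilde{\bgamma}})$, so the hypotheses of Lemma~\ref{lem:boundexpnc} hold and $E[\pi(\bgamma\mid\by,\bomega)]\le e^{-\psi A_{\tilde{\bgamma}}}\le e^{-\psi\underline{A}(\bgamma)}$ for any $\psi\in(0,1)$ and all large $n$.

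The summation step exploits that $\underline{A}(\bgamma)$ is independent of $\bgamma\cap S^{I}(\kappa)$ and factorizes over blocks and signal types. Summing $e^{-\psi\underline{A}}$ over \emph{all} models and noting that the $2^{|S^{I}(\kappa)|}$ members of $\T(\kappa)$ have $m_b=t_b=u_b=0$, hence contribute $e^{0}=1$ each, gives
\[
\sum_{\bgamma\notin\T(\kappa)} E[\pi(\bgamma\mid\by,\bomega)]\;\le\;2^{|S^{I}(\kappa)|}\Bigg[\prod_{b=1}^B\big(1+e^{-\psi[(1-\nu)\kappa_b+\log s_b+g_b^2]}\big)^{|S^{L}_b(\kappa)|}\big(1+e^{-\psi\nu\kappa_b}\big)^{|S^{S}_b(\kappa)|+|N_b|}-1\Bigg].
\]
Using $1+x\le e^{x}$ and $|S^{L}_b(\kappa)|\le s_b$, the large-signal factor is at most $\exp\big(s_b^{1-\psi}e^{-\psi(1-\nu)\kappa_b}e^{-\psi g_b^2}\big)\to1$ because $g_b\to\infty$ (and $\kappa_b\to\infty$) crush the exponent once $\psi$ is taken close to $1$; and since $|S^{S}_b(\kappa)|+|N_b|=O(p_b-s_b)$, the noise/small factor is at most $\exp\big(O\big((p_b-s_b)^{1-\psi\nu}e^{-\psi\nu f_b}\big)\big)$ after substituting $\kappa_b=\log(p_b-s_b)+f_b$ from A2. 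With $B$ fixed (A1) and $|S^{I}(\kappa)|=O(1)$, the finite product tends to $1$ and the bracket to $0$.

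I expect the main obstacle to be verifying that the noise/small factor genuinely vanishes, i.e.\ that $(p_b-s_b)^{1-\psi\nu}e^{-\psi\nu f_b}\to0$, which forces a specific choice of $\psi$ close to $1$. The definition $\nu=\tfrac12\big(1+\max_b\log(p_b-s_b)/\kappa_b\big)$ gives $\log(p_b-s_b)\le\tfrac{2\nu-1}{2(1-\nu)}f_b$ for every $b$, so the exponent is at most $f_b\big[(1-\psi\nu)\tfrac{2\nu-1}{2(1-\nu)}-\psi\nu\big]$; the bracketed coefficient equals $-\tfrac12$ at $\psi=1$, hence by continuity it is negative for $\psi$ sufficiently near $1$, and $f_b\to\infty$ then drives the exponent to $-\infty$. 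The delicate points are that a single such $\psi\in(0,1)$ must work simultaneously for the finitely many blocks and must be compatible with the $\psi\in(0,1)$ range in Lemma~\ref{lem:boundexpnc}, and that in the large-signal factor the slowly growing $s_b^{1-\psi}$ be dominated by the $g_b\to\infty$ and $\kappa_b\to\infty$ decay; both are secured by pushing $\psi$ toward $1$.
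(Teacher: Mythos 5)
Your overall strategy coincides with the paper's: the same reference model $\tilde{\bgamma}=\big(\bgamma\cap S^{I}(\kappa)\big)\cup S^{L}(\kappa)$, the same appeal to Lemma~\ref{lem:boundexpnc}, essentially the same lower bound on $A_{\tilde{\bgamma}}$ via Lemma~\ref{lem:boundrhogen} and the definition of $S^L_b(\kappa)$, and a block-factorized summation that is the binomial-theorem form of the paper's $\sum_k\binom{N}{k}x^k$ counting. Your handling of the choice of $\psi$ (pushing it toward $1$ at an $n$-dependent rate so that $(1-\psi\nu)\log(p_b-s_b)-\psi\nu f_b\to-\infty$) also mirrors the paper's explicit construction of $\psi$ from $f_b$ and $g_b'$.

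There is, however, one genuine gap. You check only two conditions before invoking Lemma~\ref{lem:boundexpnc}: positivity of $A_{\tilde{\bgamma}}$ and $|\bgamma\setminus\tilde{\bgamma}|=o(A_{\tilde{\bgamma}})$. The proof of that lemma additionally requires $\mu\big(\bGamma(\bgamma,\tilde{\bgamma}),\tilde{\bgamma}\big)=o(A_{\tilde{\bgamma}})$ (it is needed to apply Lemma~\ref{lemma:s20} to the likelihood-ratio term $L(\bGamma(\bgamma,\tilde{\bgamma}),\tilde{\bgamma})$, whose noncentrality is exactly this $\mu$), and you never verify it. This is not a formality: $\bgamma\setminus\tilde{\bgamma}$ contains the small \emph{truly active} signals included in $\bgamma$, so $\bgamma$ captures genuine signal that $\tilde{\bgamma}$ misses, and without a bound on that signal the comparison could favor $\bgamma$. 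This is precisely where the defining property of $S^S_b(\kappa)$ --- that $\sqrt{n\bar{\lambda}}\,|\theta_j^*|=o(\sqrt{\kappa_b})$ for small signals --- must enter, and your argument never uses it. The paper devotes a substantial part of its proof to this step (the upper bound of Lemma~\ref{lem:upperboundnoncentralgen} combined with the $\bar{r}$ computation showing $\mu(\bGamma(\tilde{\bgamma},\bgamma),\tilde{\bgamma})/A_{\tilde{\bgamma}}\leq\sum_b n\phi^{-1}\bar{\lambda}\max_{i\in S^S_b(\kappa)}{\theta_i^*}^2/(\nu\kappa_b)\to 0$). To close the gap you would need to add exactly this verification; as written, the bound $E[\pi(\bgamma\mid\by,\bomega)]\leq e^{-\psi A_{\tilde{\bgamma}}}$ does not follow for models $\bgamma$ that include small signals.
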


\begin{proof}
The first step of the proof is to use Lemma~\ref{lem:boundexpnc} to bound individual $E\left(\pi(\bgamma\mid \by, \bomega)\right)$ for every $\bgamma\not\in \T(\kappa)$ with Lemma~\ref{lem:boundexpnc} taking $\tilde{\bgamma}=\tilde{\bgamma}(\bgamma)\in \T(\kappa)$ that depends on $\bgamma$. 
Intuitively, $\tilde{\bgamma}(\bgamma)$ contains large truly non-zero parameters that are missed by $\bgamma$, and hence $\tilde{\bgamma}(\bgamma)$ should be chosen over $\bgamma$ asymptotically. More precisely,
we choose $\tilde{\bgamma}(\bgamma)\in \T(\kappa)$ such that %for every $\bgamma\not\in \T(\kappa)$, 
$\tilde{\bgamma}(\bgamma)\setminus \bgamma\subseteq S^L(\kappa)$ and the elements in $\bgamma\setminus \tilde{\bgamma}(\bgamma)$ are either inactive or in $S^S(\kappa)$. The latter condition %observation 
and Assumption A2 ensure that the assumptions of Lemma~\ref{lem:boundexpnc} are met. We then get a bound $E\left(\pi(\bgamma\mid \by, \bomega)\right) \leq e^{-\psi A_{\tilde{\bgamma}(\bgamma)}}$ for every large enough $n$ and any $\psi \in (0,1)$, where $A_{\tilde{\bgamma}(\bgamma)}=\nu\Delta(\bgamma,\tilde{\bgamma}(\bgamma))+\tfrac{1-\nu}{6}\mu(\Gamma(\tilde{\bgamma},\bgamma),\bgamma)$ (\emph{cf} \eqref{eq:muQM}), $\nu\in(1/2,1)$ is defined in \eqref{eq:defSes} and $\Gamma(\tilde{\bgamma},\bgamma)= \bgamma\cup \tilde{\bgamma}(\bgamma)$. The second step is to obtain a lower bound for $A_{\tilde{\bgamma}(\bgamma)}$, which gives an upper bound for $E(\pi(\bgamma\mid \by, \bomega))$.
The final step is to %use these bounds to 
get an upper-bound $\sum_{\bgamma \in \bgamma\setminus \T(\kappa)} E\left(\pi(\bgamma\mid \by, \bomega)\right)$ that vanishes (as $n$ grows) under the assumptions of Lemma~\ref{thm:convtoT}.

For the first step of the proof, recall that the set $\T(\kappa)$ contains all the models that are the union of $S^L(\kappa)$ (large signals) and some subset of $S^{I}(\kappa)$ (intermediate signals). For any $\bgamma \not\in\T(\kappa)$, take the unique $\tilde{\bgamma}(\bgamma)\in\T(\kappa)$ such that $\tilde{\bgamma}(\bgamma) = \big[\bgamma\cap S^{I}(\kappa)\big]\cup S^L(\kappa) $, which implies $\bgamma\cap S^{I}(\kappa)=\tilde{\bgamma}(\bgamma)\cap S^{I}(\kappa)$. 
That is, $\tilde{\bgamma}(\bgamma)$ contains all the large signals plus the intermediate signals in $\bgamma$.
To show that $E(\pi(\bgamma\mid \by, \bomega)) \leq e^{-\psi A_{\tilde{\bgamma}(\bgamma)}}$ we show that $A_{\tilde{\bgamma}(\bgamma)}$ satisfies the conditions of Lemma~\ref{lem:boundexpnc}, taking $T=\tilde{\bgamma}(\bgamma)$. That is, we wish to show that three conditions hold: $A_{\tilde{\bgamma}(\bgamma)}>0$, $|\bgamma\setminus \tilde{\bgamma}(\bgamma)|=o(A_{\tilde{\bgamma}(\bgamma)})$, and $\mu(\Gamma(\tilde{\bgamma},\bgamma), \tilde{\bgamma}(\bgamma))=o(A_{\tilde{\bgamma}(\bgamma)})$.

Observe that $\Delta(\bgamma,\tilde{\bgamma}(\bgamma))$, defined in \eqref{eq:muQM}, can be rewritten as $\Delta(\bgamma,\tilde{\bgamma}(\bgamma))=\sum_{b=1}^B(|\bgamma_b\setminus \tilde{\bgamma}(\bgamma)_b|-|\tilde{\bgamma}(\bgamma)_b\setminus \bgamma_b|)\kappa_b$ and then:
$$ A_{\tilde{\bgamma}(\bgamma)}=\sum_{b=1}^B|\bgamma_b\setminus \tilde{\bgamma}(\bgamma)_b|\nu\kappa_b+\tfrac{1-\nu}{6}\mu(\Gamma(\tilde{\bgamma},\bgamma),\bgamma)-\sum_{b=1}^B|\tilde{\bgamma}(\bgamma)_b\setminus \bgamma_b|\nu\kappa_b $$
Since $\nu <1$, $(1-\nu)/6 >0$, by Lemma~\ref{lem:boundrhogen}, for every $n$ we have 
\begin{equation}\label{eq:nonoverfittingcond2}
   A_{\tilde{\bgamma}(\bgamma)}\;\geq\; \sum_{b=1}^B|\bgamma_b \setminus \tilde{\bgamma}(\bgamma)_b|\nu\kappa_b + \sum_{b=1}^B |\tilde{\bgamma}(\bgamma)_b\setminus \bgamma_b |\Big(\tfrac{1-\nu}{6} n \phi^{-1}\rho(\bX) \min_{i\in \tilde{\bgamma}(\bgamma)_b\setminus \bgamma_b}{\theta^*_i}^{2} - \nu\kappa_b\Big).
\end{equation} 
We have that $\tilde{\bgamma}(\bgamma)\subseteq \big(S^I(\kappa) \cup S^L(\kappa)\big)$ since $\tilde{\bgamma}(\bgamma)\in\T(\kappa)$ and that $\tilde{\bgamma}(\bgamma)\cap S^{I}(\kappa) = \bgamma\cap S^{I}(\kappa)$. It follows that $\tilde{\bgamma}(\bgamma)\setminus \bgamma \subseteq S^L(\kappa)$. By definition of $S^L(\kappa)$, the rightmost component in \eqref{eq:nonoverfittingcond2} is nonnegative and, if $|\tilde{\bgamma}(\bgamma)\setminus \bgamma|\neq0$, it is positive, for every $n$ large enough. By Assumption A2, component $\nu\sum_{b=1}^B|\bgamma_b \setminus \tilde{\bgamma}(\bgamma)_b|\kappa_b$ is nonnegative, and, if $|\bgamma\setminus \tilde{\bgamma}(\bgamma)|\neq0$, positive. Now, $\bgamma\neq \tilde{\bgamma}(\bgamma)$ implies that necessarily $|\bgamma\setminus \tilde{\bgamma}(\bgamma)|\neq0$ or $|\tilde{\bgamma}(\bgamma)\setminus \bgamma|\neq0$ and then, for every $n$ large enough, $A_{\tilde{\bgamma}(\bgamma)}>0$, establishing the first condition required by Lemma~\ref{lem:boundexpnc}. Regarding its second condition, if $|\bgamma\setminus \tilde{\bgamma}(\bgamma)|=0$, we immediately have $|\bgamma\setminus \tilde{\bgamma}(\bgamma)|=o(A_{\tilde{\bgamma}(\bgamma)})$. If $|\bgamma\setminus \tilde{\bgamma}(\bgamma)|\neq 0$, since the rightmost component in \eqref{eq:nonoverfittingcond2} is nonnegative, we have
$$\frac{|\bgamma\setminus \tilde{\bgamma}(\bgamma)|}{A_{\tilde{\bgamma}(\bgamma)}}\leq \bigg[ \nu\sum_{b=1}^B\frac{|\bgamma_b \setminus \tilde{\bgamma}(\bgamma)_b|}{|\bgamma\setminus \tilde{\bgamma}(\bgamma)|}\kappa_b \bigg]^{-1} \leq \bigg[ \nu\min_{b=1,\ldots,B}\kappa_b \bigg]^{-1} $$
where the last inequality follows from $\sum_{b=1}^B\frac{|\bgamma_b \setminus \tilde{\bgamma}(\bgamma)_b|}{|\bgamma\setminus \tilde{\bgamma}(\bgamma)|}=1$. By Assumption A2, $\min_{b=1,\ldots,B}\kappa_b\to \infty$ as $n\to\infty$ and hence $|\bgamma\setminus \tilde{\bgamma}(\bgamma)|=o(A_{\tilde{\bgamma}(\bgamma)})$ when $|\bgamma\setminus \tilde{\bgamma}(\bgamma)|\neq 0$ too.

Finally, consider the third condition in Lemma~\ref{lem:boundexpnc}.
%In the proof of Theorem~\ref{theo:suffcondlinearmodel}, $\mu_{Q_SS}=o(A_S)$ is immediate because $ Q_S \setminus S= \bgamma\setminus S\subseteq (\bgamma^*)^C$, i.e. since all parameters in $\bgamma \setminus S$ are truly zero we have that $\mu_{Q_SS}=0$.
%Here $\bgamma\setminus \tilde{\bgamma}(\bgamma)$ is not necessarily a subset of $(\bgamma^*)^C$, hence $\mu_{Q_\tilde{\bgamma}(\bgamma) \tilde{\bgamma}(\bgamma)} \geq 0$. 
Note that, since $\tilde{\bgamma}(\bgamma)\in\T(\kappa)$, we have that $S^L(\kappa)\subseteq \tilde{\bgamma}(\bgamma)$. Moreover, we have $\bgamma\cap S^{I}(\kappa)=\tilde{\bgamma}(\bgamma)\cap S^{I}(\kappa)$. It follows that $\bgamma\setminus \tilde{\bgamma}(\bgamma)\subseteq (S^{I}(\kappa) \cup S^L(\kappa))^C$, that is the elements of $\bgamma\setminus \tilde{\bgamma}(\bgamma)$ are either inactive or belong to $S^S(\kappa)$. If $\bgamma\cap S^{S}(\kappa)=\emptyset$ then $\bgamma\setminus \tilde{\bgamma}(\bgamma)\subseteq (\bgamma^*)^C$ and we immediately get $\mu(\Gamma(\tilde{\bgamma},\bgamma), \tilde{\bgamma}(\bgamma))=o(A_{\tilde{\bgamma}(\bgamma)})$ because $\btheta^*_{\bgamma \setminus \tilde{\bgamma}(\bgamma)}=\mu(\Gamma(\tilde{\bgamma},\bgamma), \tilde{\bgamma}(\bgamma))=0$. Assume now that $\bgamma\cap S^{S}(\kappa)\neq\emptyset$. Using that the rightmost component in \eqref{eq:nonoverfittingcond2} is nonnegative and Lemma~\ref{lem:upperboundnoncentralgen}, we have
$$\frac{\mu(\Gamma(\tilde{\bgamma},\bgamma), \tilde{\bgamma}(\bgamma))}{ A_{\tilde{\bgamma}(\bgamma)}} \leq \frac{\mu(\Gamma(\tilde{\bgamma},\bgamma), \tilde{\bgamma}(\bgamma))}{ \nu\sum_{b=1}^B|\bgamma_b \setminus \tilde{\bgamma}(\bgamma)_b|\kappa_b} \leq \frac{n\,\phi^{-1}\,\bar{\lambda}\,\sum_{b=1}^B|(\bgamma^*_b\cap \bgamma_b )\setminus \tilde{\bgamma}(\bgamma)_b|\,\max_{i\in (\bgamma^*_b\cap \bgamma_b )\setminus \tilde{\bgamma}(\bgamma)_b}{\theta_{i}^*}^2}{ \nu\sum_{b=1}^B|\bgamma_b \setminus \tilde{\bgamma}(\bgamma)_b|\kappa_b}.$$
Observe that for all $b=1,\ldots,B$, $\bgamma_b \setminus \tilde{\bgamma}(\bgamma)_b \supseteq (\bgamma^*_b\cap \bgamma_b )\setminus \tilde{\bgamma}(\bgamma)_b$ and then $|\bgamma_b \setminus \tilde{\bgamma}(\bgamma)_b| \geq |(\bgamma^*_b\cap \bgamma_b )\setminus \tilde{\bgamma}(\bgamma)_b|$. We then get
$$\frac{\mu(\Gamma(\tilde{\bgamma},\bgamma), \tilde{\bgamma}(\bgamma))}{ A_{\tilde{\bgamma}(\bgamma)}} \leq \frac{n\,\phi^{-1}\bar{\lambda}\,\sum_{b=1}^B|(\bgamma^*_b\cap \bgamma_b )\setminus \tilde{\bgamma}(\bgamma)_b|\,\max_{i\in(\bgamma^*_b\cap \bgamma_b )\setminus \tilde{\bgamma}(\bgamma)_b}{\theta_{i}^*}^2}{ \nu\sum_{b=1}^B|(\bgamma^*_b\cap \bgamma_b ) \setminus \tilde{\bgamma}(\bgamma)_b|\kappa_b}.$$
Moreover, since  $\bgamma\setminus \tilde{\bgamma}(\bgamma)\subseteq (S^{I}(\kappa) \cup S^L(\kappa))^C$ as discussed earlier, we have, for all $b=1,\ldots,B$, $(\bgamma^*_b\cap \bgamma_b )\setminus \tilde{\bgamma}(\bgamma)_b \subseteq S^{S}_b(\kappa)$. It follows $\max_{i\in(\bgamma^*_b\cap \bgamma_b )\setminus \tilde{\bgamma}(\bgamma)_b}{\theta_{i}^*}^2 \leq \max_{i\in S_b^S}{\theta_{i}^*}^2$ and
\begin{equation}\label{eq:onbfsa}
  \frac{\mu(\Gamma(\tilde{\bgamma},\bgamma), \tilde{\bgamma}(\bgamma))}{ A_{\tilde{\bgamma}(\bgamma)}} \leq \frac{n\,\phi^{-1}\bar{\lambda}\,\sum_{b=1}^B|(\bgamma^*_b\cap \bgamma_b )\setminus \tilde{\bgamma}(\bgamma)_b|\,\max_{i\in S^{S}_b(\kappa)}{\theta_{i}^*}^2}{ \nu\sum_{b=1}^B|(\bgamma^*_b\cap \bgamma_b ) \setminus \tilde{\bgamma}(\bgamma)_b|\kappa_b}.  
\end{equation}
Let $\bar{r}:=\sum_{b=1}^B n\phi^{-1}\bar{\lambda}\max_{i\in S^{S}_b(\kappa)}{\theta_i^*}^2/(\nu\kappa_b)$. We show next that $\bar{r}$ is an upper bound on $\mu(\Gamma(\tilde{\bgamma},\bgamma), \tilde{\bgamma}(\bgamma))/ A_{\tilde{\bgamma}(\bgamma)}$. By restricting the sum in $\bar{r}$ to the $b$ such that $|(\bgamma^*_b\cap \bgamma_b )\setminus \tilde{\bgamma}(\bgamma)_b|\neq 0$ and multiplying the numerator and denominator of the summand by $|(\bgamma^*_b\cap \bgamma_b )\setminus \tilde{\bgamma}(\bgamma)_b|$, we get 
\begin{eqnarray}\label{eq:ovrwS}
    \bar{r}
    &\geq &\sum_{b=1,|(\bgamma^*_b\cap \bgamma_b )\setminus \tilde{\bgamma}(\bgamma)_b|\neq 0}^B\frac{|(\bgamma^*_b\cap \bgamma_b )\setminus \tilde{\bgamma}(\bgamma)_b|n\phi^{-1}\bar{\lambda}\max_{i\in S^{S}_b(\kappa)}{\theta_i^*}^2}{|(\bgamma^*_b\cap \bgamma_b )\setminus \tilde{\bgamma}(\bgamma)_b|\nu\kappa_b}
\end{eqnarray}
Note that for any collections $(\alpha_j,\delta_j)\in  \R\times \R\setminus\{0\}$, $b=1,\ldots,B$, we have
\begin{equation}\label{eq:truc}
    \sum_{b=1}^B\frac{\alpha_j}{\delta_j}=
\sum_{b=1}^B\frac{\alpha_j \frac{1}{\delta_j} (\delta_j + \sum_{l \neq b} \delta_l)}{\sum_{b=1}^B \delta_j}=
    \frac{\sum_{b=1}^B\alpha_j(1+\sum_{l\neq b} \frac{\delta_l}{\delta_j})}{\sum_{b=1}^B\delta_j}
\end{equation}
Using the property in \eqref{eq:truc} in the right-hand side of \eqref{eq:ovrwS}, we get
\begin{align*}
     \bar{r}
    &\geq \frac{\sum_{b=1,|(\bgamma^*_b\cap \bgamma_b )\setminus \tilde{\bgamma}(\bgamma)_b|\neq 0 }^B|(\bgamma^*_b\cap \bgamma_b )\setminus \tilde{\bgamma}(\bgamma)_b|n\phi^{-1}\bar{\lambda}\max_{i\in S^{S}_b(\kappa)}{\theta_i^*}^2\Big(1+\sum_{l\neq b}\frac{|(\bgamma^*_l\cap \bgamma_l)\setminus \tilde{\bgamma}(\bgamma)_l|\nu\kappa_l}{|(\bgamma^*_b\cap \bgamma_b )\setminus \tilde{\bgamma}(\bgamma)_b|\nu\kappa_b}\Big)}{\sum_{b=1,|(\bgamma^*_b\cap \bgamma_b )\setminus \tilde{\bgamma}(\bgamma)_b|\neq 0}^B|(\bgamma^*_b\cap \bgamma_b )\setminus \tilde{\bgamma}(\bgamma)_b|\nu\kappa_b} \\
    &\geq \frac{\sum_{b=1}^B|(\bgamma^*_b\cap \bgamma_b )\setminus \tilde{\bgamma}(\bgamma)_b|n\phi^{-1}\bar{\lambda}\max_{i\in S^{S}_b(\kappa)}{\theta_i^*}^2}{\sum_{b=1}^B \big|(S\cap \bgamma_b)\setminus \tilde{\bgamma}(\bgamma)_b\big|\nu\kappa_b}
\end{align*}
where last inequality follows from $\Big(1+\sum_{l\neq b}\frac{|(\bgamma^*_l\cap \bgamma_l)\setminus \tilde{\bgamma}(\bgamma)_l|\nu\kappa_l}{|(\bgamma^*_b\cap \bgamma_b )\setminus \tilde{\bgamma}(\bgamma)_b|\nu\kappa_b}\Big)\geq 1$ for all $b$ and from the identity $\sum_{b=1,|(\bgamma^*_b\cap \bgamma_b )\setminus \tilde{\bgamma}(\bgamma)_b|\neq 0}^B|(\bgamma^*_b\cap \bgamma_b )\setminus \tilde{\bgamma}(\bgamma)_b|\nu\kappa_b=\sum_{b=1}^B \big|(\bgamma^*\cap \bgamma_b)\setminus \tilde{\bgamma}(\bgamma)_b\big|\nu\kappa_b$. Then, %when $\bgamma\cap S^{S}(\kappa)\neq\emptyset$, 
by \eqref{eq:onbfsa},
$$\frac{\mu(\Gamma(\tilde{\bgamma},\bgamma), \tilde{\bgamma}(\bgamma))}{ A_{\tilde{\bgamma}(\bgamma)}} \;\leq\; \bar{r} \;= \;\sum_{b=1}^B\frac{n\phi^{-1}\bar{\lambda}\max_{i\in S^{S}_b(\kappa)}{\theta_i^*}^2}{\nu\kappa_b}.$$
By definition of $S_b^{S}(\kappa)$, $n\phi^{-1}\bar{\lambda}\max_{i\in S^{S}_b(\kappa)}{\theta_i^*}^2=o(\kappa_b)$ for all $b$ and $\mu(\Gamma(\tilde{\bgamma},\bgamma), \tilde{\bgamma}(\bgamma))=o(A_{\tilde{\bgamma}(\bgamma)})$ when $\bgamma\cap S^{S}(\kappa)\neq\emptyset$ too.
We can now apply Lemma~\ref{lem:boundexpnc} and get that for every $\bgamma\in \mathcal \bgamma\setminus \T(\kappa)$, for any $\psi\in(0,1)$ and every $n$ large enough, $E(\pi(\bgamma\mid \by, \bomega))\;\leq\;e^{-\psi A_{\tilde{\bgamma}(\bgamma)}}$.

The second step of the proof is to lower-bound $A_{\tilde{\bgamma}(\bgamma)}=\nu\Delta(\bgamma,\tilde{\bgamma}(\bgamma))+\tfrac{1-\nu}{6}\mu(\Gamma(\tilde{\bgamma},\bgamma),\bgamma)$. %For the second step of the proof, 
Let
$$A_{\tilde{\bgamma}(\bgamma)}^*\;:=\;\nu\sum_{b=1}^B|\bgamma_b \setminus \tilde{\bgamma}(\bgamma)_b|\kappa_b + \sum_{b=1}^B |\tilde{\bgamma}(\bgamma)_b\setminus \bgamma_b |\left(\tfrac{1-\nu}{6} n \phi^{-1} \rho(\bX) \min_{j\in S^{L}_b(\kappa)}{\theta^*_j}^{2} - \nu\kappa_b\right) $$
Recall that, for every $b$, $\tilde{\bgamma}(\bgamma)_b\setminus \bgamma_b\subseteq S^{L}_b(\kappa)$. We have then $\min_{i\in \tilde{\bgamma}(\bgamma)_b\setminus \bgamma_b}{\theta^*_i}^{2}\geq \min_{j\in S^{L}_b(\kappa)}{\theta^*_j}^{2}$, and by \eqref{eq:nonoverfittingcond2}, $A_{\tilde{\bgamma}(\bgamma)} \geq A_{\tilde{\bgamma}(\bgamma)}^*$. It follows that for any $\psi\in(0,1)$ and for all $n$ large enough,
\begin{equation}\label{eq:boundexp2}
    E(\pi(\bgamma\mid \by, \bomega))\;\leq\;e^{-\psi A_{\tilde{\bgamma}(\bgamma)}^*}.
\end{equation}

To conclude the second part of the proof we lower-bound $\psi A^*_{\tilde{\bgamma}(\bgamma)}=\sum_{b=1}^B|\bgamma_b \setminus \tilde{\bgamma}(\bgamma)_b|\psi\nu\kappa_b + \sum_{b=1}^B |\tilde{\bgamma}(\bgamma)_b \setminus \bgamma_b|\psi\left(\tfrac{1-\nu}{6} n \phi^{-1}\rho(\bX) \min_{j\in S^{L}_b(\kappa)}{\theta^*_j}^{2} - \nu\kappa_b\right)$. To do this, we obtain a lower bound for $\psi \nu \kappa_b$ and for $\psi\bigg(\tfrac{1-\nu}{6} n \phi^{-1}\rho(\bX) \min_{j\in S^{L}_b(\kappa)}{\theta^*_j}^{2} - \nu\kappa_b\bigg)$.

The definition of $S^{L}_b(\kappa)$ implies that there exists some $g'_b \to \infty$ such that
\begin{equation}\label{eq:bngrws}
  \frac{(1-\nu) n \phi^{-1}\rho(\bX)}{6}\min_{j\in S^{L}_b(\kappa)}{\theta^*_j}^{2}\,- \,\kappa_b \;=\; \log(s_b) + g'_{b}.  
\end{equation}
Let $\delta \in (0,1)$ and denote $\bar{\bgamma}_b=\max\big\{\frac{2\log(p_b-s_b)}{f_b},\frac{2\log(s_b)}{g'_{b}}\big\}$, where $f_b$ is given in Assumption A2. 
Take $\psi=\max_{b=1,\ldots,B}\frac{\xi + \delta + \bar{\bgamma}_b}{1+\bar{\bgamma}_b}$ for some $\xi \in(0, 1- \delta)$ then $\psi \in (0,1)$, and we have, for every $b=1,\ldots,B$,
\begin{align}
    &\psi > \frac{\delta + \frac{2\log(p_b-s_b)}{f_b}}{1+\frac{2\log(p_b-s_b)}{f_b}} = \frac{\delta f_b/2 + \log(p_b-s_b)}{f_b/2 +\log(p_b-s_b)} \label{eq:psif2}\\
    &\psi > \frac{\delta + \frac{2\log(s_b)}{g'_b}}{1+\frac{2\log(s_b)}{g'_b}} = \frac{\delta g'_b/2 + \log(s_b)}{g'_b/2 +\log(s_b)} \geq \frac{\delta g'_b/2 + \log(s_b)}{g'_b +\log(s_b)}.\label{eq:psig2}
\end{align}
 Recall that Assumptions A2 defines $f_b= \kappa_b - \log(p_b-s_b)$ and $\nu= \frac{1}{2}(1 + \max_b \frac{\log(p_b-s_b)}{\kappa_b})$ respectively. Hence,
%By definition of $\nu= \frac{1}{2}(1 + \max_b \frac{\log(p_b-s_b)}{\kappa_b})$ in \eqref{eq:defSes} we have
$$
\nu\kappa_b\;\geq\;\frac12 \Big(1+\frac{\log(p_b-s_b)}{\kappa_b}\Big)\kappa_b\;=\;\log(p_b-s_b)+\frac12(\kappa_b-\log(p_b-s_b)) \;=\; \log(p_b-s_b)+ \frac12f_b.
$$
Hence, by \eqref{eq:psif2}, we have
\begin{equation}\label{eq:ooofjfj3}
\psi\nu\kappa_b \;\geq\; \psi\big(\log(p_b-s_b)+ \frac12f_b\big) \;\geq\;  \log(p_b-s_b)+ \delta\frac12f_b.
\end{equation}
Further,
%By XXXXX and by our choice of $\psi$ and $\delta$,
\begin{equation}\label{eq:ooofjfj4}
\psi\bigg(\tfrac{1-\nu}{6} n \phi^{-1}\rho(\bX) \min_{j\in S^{L}_b(\kappa)}{\theta^*_j}^{2} - \nu\kappa_b\bigg) \;\geq\; \psi\left(\log(s_b)+g_b'\right) \;\geq\; \log(s_b)+\delta \frac{1}{2}g'_b,
\end{equation}
where the first inequality follows from \eqref{eq:bngrws} and the second inequality from \eqref{eq:psig2}. In \eqref{eq:boundexp2}, $\psi A^*_{\tilde{\bgamma}(\bgamma)}=\sum_{b=1}^B|\bgamma_b \setminus \tilde{\bgamma}(\bgamma)_b|\psi\nu\kappa_b + \sum_{b=1}^B |\tilde{\bgamma}(\bgamma)_b \setminus \bgamma_b|\psi\left(\tfrac{1-\nu}{6} n \phi^{-1}\rho(\bX) \min_{j\in S^{L}_b(\kappa)}{\theta^*_j}^{2} - \nu\kappa_b\right)$. Then by \eqref{eq:ooofjfj3} and \eqref{eq:ooofjfj4}, we get
\begin{equation}\label{eq:fyv2}
E(\pi(\bgamma\mid \by, \bomega))\;\leq\;\exp\left\{- \sum_{b=1}^B |\bgamma_b\setminus \tilde{\bgamma}(\bgamma)_b|(\log(p_b-s_b)+\delta \tfrac{f_b}{2})-\sum_{b=1}^B |\tilde{\bgamma}(\bgamma)_b\setminus \bgamma_b|(\log(s_b)+\delta \tfrac{g'_b}{2})\right\}.
\end{equation}

For the final step of the proof, denote $\mathcal{S}= \sum_{\bgamma \not\in \T(\kappa)} E\left(\pi(\bgamma\mid \by, \bomega)\right)$ for convenience. By \eqref{eq:fyv2} we have
\begin{align*}
    \mathcal{S} &\leq \sum_{\bgamma \not\in \T(\kappa)}e^{- \sum_{b=1}^B |\bgamma_b\setminus \tilde{\bgamma}(\bgamma)_b|\big(\log(p_b-s_b)+\delta \tfrac{f_b}{2}\big)-\sum_{b=1}^B |\tilde{\bgamma}(\bgamma)_b\setminus \bgamma_b|\big(\log(s_b)+\delta \tfrac{g'_b}{2}\big)}.
\end{align*}
We split the sum in the right-hand side above into sums over models $\bgamma$ %$\bgamma\not\in \T(\kappa)$ 
such that $\tilde{\bgamma}(\bgamma)=\tilde{\bgamma}$ for some $\tilde{\bgamma}\in \T(\kappa)$. Denote for any $\tilde{\bgamma}\in \T(\kappa)$, $\bGamma(\tilde{\bgamma}):=\{\bgamma \not\in \T(\kappa) \,| \tilde{\bgamma}(\bgamma)=\tilde{\bgamma}\}$, then
\begin{align}\label{eq:gepihw}
    \mathcal{S} &\leq \sum_{\tilde{\bgamma} \in \T(\kappa)}\sum_{\bgamma \in \bGamma(\tilde{\bgamma})}e^{- \sum_{b=1}^B |\bgamma_b\setminus T_b|\big(\log(p_b-s_b)+\delta \tfrac{f_b}{2}\big)-\sum_{b=1}^B |T_b\setminus \bgamma_b|\big(\log(s_b)+\delta \tfrac{g'_b}{2}\big)}.
\end{align}
The right hand-side of \eqref{eq:gepihw} is composed of a double sum over $\tilde{\bgamma} \in \T(\kappa)$ and over $\bgamma \in \bGamma(\tilde{\bgamma})$. Consider the sum over $\bgamma \in \bGamma(\tilde{\bgamma})$, add $\tilde{\bgamma}$ to it, and denote it
\begin{equation}\label{eq:defST}
  \mathcal{S}(\tilde{\bgamma})=\sum_{\bgamma \in \bGamma(\tilde{\bgamma}) \cup \tilde{\bgamma}}e^{- \sum_{b=1}^B |\bgamma_b\setminus T_b|\big(\log(p_b-s_b)+\delta \tfrac{f_b}{2}\big)-\sum_{b=1}^B |T_b\setminus \bgamma_b|\big(\log(s_b)+\delta \tfrac{g'_b}{2}\big)}.
\end{equation}
In the summand in the right-hand side of \eqref{eq:defST}, the case $\bgamma=\tilde{\bgamma}$ correspond to $|\bgamma_b\setminus \tilde{\bgamma}(\bgamma)_b|=|\tilde{\bgamma}(\bgamma)_b\setminus \bgamma_b|=0$ for all $b$ and the summand is then 1.
By \eqref{eq:gepihw}, we then get that
\begin{align}\label{eq:gepihw2}
    \mathcal{S} &\leq \sum_{\tilde{\bgamma} \in \T(\kappa)}\big(\mathcal{S}(\tilde{\bgamma})-1\big).
\end{align}
For each $\tilde{\bgamma}=\tilde{\bgamma}(\bgamma)$, we further split $\mathcal{S}(\tilde{\bgamma})$ into sums over subsets of models $\bgamma$ that have $u_b$ more parameters than $\tilde{\bgamma}(\bgamma)$ in block $b$, and are missing $w_b$ parameters from $\tilde{\bgamma}(\bgamma)$ in block $b$.
Specifically, consider models $\bgamma$ such that, for all $b$, $|\bgamma_b\setminus \tilde{\bgamma}(\bgamma)_b|=u_b$ and $|\tilde{\bgamma}(\bgamma)_b\setminus \bgamma_b|=w_b$ with $u_b \in \{0,\ldots,p_b-s_b,\ldots,p_b-s_b+|S_b^{S}(\kappa)|\}$ and $w_b \in \{0,\ldots,|S^{L}_b(\kappa)|\}$. Denote by
$$S^{\mathbf{u}}_{\mathbf{w}}(\tilde{\bgamma})=\sum_{\bgamma \in \bGamma(\tilde{\bgamma}) \cup \tilde{\bgamma}:\forall \,b |\bgamma_b\setminus \bgamma^*_b|=u_b, |\bgamma^*_b\setminus \bgamma_b|=w_b } e^{- \sum_{b=1}^B u_b\big(\log(p_b-s_b)+\delta \tfrac{f_b}{2}\big)-\sum_{b=1}^B w_b\big(\log(s_b)+\delta \tfrac{g'_b}{2}\big)}.$$
We get
\begin{equation}\label{eq:uygfew2}
    \mathcal{S}(\tilde{\bgamma})=\sum_{w_1=0}^{|S^L_1(\kappa)|}\cdots\sum_{w_b=0}^{|S^L_b(\kappa)|}\sum_{u_1=0}^{p_1-s_1+|S_1^S|}\cdots\sum_{u_b=0}^{p_b-s_b+|S_b^S|}S^{\mathbf{u}}_{\mathbf{w}}(\tilde{\bgamma}).
\end{equation}
The number of models missing, for all $b$, $w_b$ out of the $|S^L_b(\kappa)|$ large active parameters and having $u_b$ inactive or small active parameters from $B_b$ is $\prod_{b=1}^B {\binom{p_b-s_b+|S^S_b(\kappa)|}{u_b}}{\binom{|S^L_b(\kappa)|}{w_b}}$. We thus have
\begin{align*}
    S^{\mathbf{u}}_{\mathbf{w}}(\tilde{\bgamma})&= \bigg(\prod_{b=1}^B {\binom{p_b-s_b+|S^S_b(\kappa)|}{u_b}}{\binom{|S^L_b(\kappa)|}{w_b}}\bigg)e^{- \sum_{b=1}^B u_b\big(\log(p_b-s_b)+\delta \tfrac{f_b}{2}\big)-\sum_{b=1}^B w_b\big(\log(s_b)+\delta \tfrac{g'_b}{2}\big)} \\
    &= \prod_{b=1}^B {\binom{p_b-s_b+|S^S_b(\kappa)|}{u_b}}e^{- u_b\big(\log(p_b-s_b)+\delta \tfrac{f_b}{2}\big)}{\binom{|S^L_b(\kappa)|}{w_b}}e^{-w_b\big(\log(s_b)+\delta \tfrac{g'_b}{2}\big)}.
\end{align*}
Inputting the expression above in \eqref{eq:uygfew2} and factorizing over terms in $u_b$ and $w_b$ gives
\begin{align*}
    \mathcal{S}(\tilde{\bgamma})\leq\prod_{b=1}^B & \Bigg(\sum_{u_b=0}^{p_b-s_b+|S^S_b(\kappa)|}{\binom{p_b-s_b+|S^S_b(\kappa)|}{u_b}}e^{- u_b(\log(p_b-s_b)+\delta \tfrac{f_b}{2})}\Bigg)\\&.\Bigg(
    \sum_{w_b=0}^{|S^L_b(\kappa)|}{\binom{|S^L_b(\kappa)|}{w_b}}
    e^{-w_b(\log(s_b)+\delta \tfrac{g'_b}{2})}\Bigg).
\end{align*}
A standard bound on binomial coefficient for $1\leq k\leq n$ is
\begin{equation}\label{eq:upperboudbinom}
    \binom{n}{k} \leq \left(\frac{n\,e}{k}\right)^k \leq \left(n\,e\right)^k = e^{k(\log(n)+1)}.
\end{equation}
By the bound in \eqref{eq:upperboudbinom} and taking the terms in $u_b=0$ and $w_b=0$ out of the sums above, %taking the term in 0 out of the sum above, 
we have
\begin{align}\label{eq:ogr}
    \mathcal{S}(\tilde{\bgamma})\leq\prod_{b=1}^B & \left(1+\sum_{u_b=1}^{p_b-s_b+|S_b^{S}(\kappa)|}e^{-u_b\left(\delta \tfrac{f_b}{2}-\log\big(1+\frac{|S^{S}_b(\kappa)|}{p_b-s_b}\big)-1\right)}\right)\\&.\left(1+
    \sum_{w_b=1}^{|S_b^{L}(\kappa)|}    e^{-w_b\left(\delta \tfrac{g'_b}{2}+\log\big(\frac{s_b}{|S_b^{L}(\kappa)|}\big)-1\right)}\right).\nonumber
\end{align}
Denote
\begin{align}
d_b\;=\;e^{1+\log\big(1+\frac{|S^{S}_b(\kappa)|}{p_b-s_b}\big)-\delta \tfrac{f_b}{2}},\qquad h_b\;=\;e^{1-\log\big(\frac{s_b}{|S_b^{L}(\kappa)|}\big)-\delta \tfrac{g'_b}{2}}.
\label{eq:db}
\end{align}
By assumption $|S^{S}_b(\kappa)|=O(p_b-s_b)$, and by definition of $|S_b^{L}(\kappa)|$ we have $s_b\geq|S_b^{L}(\kappa)|$. By Assumption A2 and the definition of $|S_b^{L}(\kappa)$, we also have $f_b\to\infty$ and $g'_b\to\infty$, and then $\lim_{n \to \infty}= d_b= \lim_{n \to \infty} h_b=0$. Using the properties of geometric series, 
 for every $j$ we have
\begin{eqnarray*}
    &1+\sum_{u_b=1}^{p_b-s_b+|S_b^{S}(\kappa)|}e^{-u_b\left(\delta \tfrac{f_b}{2}-\log\big(1+\frac{|S^{S}_b(\kappa)|}{p_b-s_b}\big)-1\right)}
    =\frac{1-d_b^{p_b-s_b+|S_b^{S}(\kappa)|+1}}{1-d_b} \\
    &1+\sum_{w_b=1}^{|S_b^{L}(\kappa)|}    e^{-w_b\left(\delta \tfrac{g'_b}{2}+\log\big(\frac{s_b}{|S_b^{L}(\kappa)|}\big)-1\right)} =\frac{1-h_b^{|S_b^{L}(\kappa)|+1}}{1-h_b},
\end{eqnarray*}
where both expressions converge to $1$ as $n$ grows. By \eqref{eq:gepihw2} and \eqref{eq:ogr}:
$$\mathcal{S}\leq\sum_{\tilde{\bgamma}\in\T(\kappa)}\Bigg(\prod_{b=1}^B\Big(\frac{1-d_b^{p_b-s_b+|S_b^{S}(\kappa)|+1}}{1-d_b}\Big)\Big(\frac{1-h_b^{|S_b^{L}(\kappa)|+1}}{1-h_b}\Big)-1\Bigg).$$
Each of the summand vanishes as $n\to\infty$. Moreover, by assumption $|S^I(\kappa)|=O(1)$ and then $|\T(\kappa)|=2^{|S^I(\kappa)|}=O(1)$. We thus have $\lim_{n\to\infty}\mathcal{S}=\lim_{n\to\infty}\sum_{\bgamma \not\in \T(\kappa)} E\left(\pi(\bgamma\mid \by, \bomega)\right)=0$.\\
\end{proof}

\subsubsection{Convergence of the expected sum of posterior inclusion probabilities}
\begin{lemma}
    \label{prop:shatconsistence}
Assume that A1--A2 hold for $\omega_1,\ldots,\omega_B$, $|S^{I}(\kappa)|=O(1)$ and $|S^{S}_b(\kappa)|=O(p_b-s_b)$ for every $j=1,\ldots,b$. Then
$$ \,\, \frac{|S^{L}_b(\kappa)|}{p_b}\leq \lim_{n\to\infty}E\bigg(\frac{1}{p_b}\sum_{z_j=b} \pi(\bgamma_j \mid \by, \bomega)\bigg)\leq \frac{s_b-|S^{S}_b(\kappa)|}{p_b}\qquad\text{for all }j=1,\ldots,b.$$
\end{lemma}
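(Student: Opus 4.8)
The plan is to reduce the claim to Lemma~\ref{thm:convtoT}, which already shows that the expected posterior mass placed outside $\T(\kappa)$ vanishes. First I would rewrite the target quantity as a posterior expectation of the block-wise model size. Writing the posterior inclusion probability as $\pi(\gamma_j=1\mid\by,\bomega)=\sum_{\bgamma:\gamma_j=1}\pi(\bgamma\mid\by,\bomega)$ and exchanging the (finite) sums gives
\begin{equation*}
\frac{1}{p_b}\sum_{z_j=b}\pi(\gamma_j=1\mid\by,\bomega)=\frac{1}{p_b}\sum_{\bgamma}\pi(\bgamma\mid\by,\bomega)\,p_{\bgamma,b}.
\end{equation*}
Since $p_{\bgamma,b}$ is a deterministic function of $\bgamma$, taking expectation over $\by$ and interchanging with the finite sum yields $E\big(\tfrac{1}{p_b}\sum_{z_j=b}\pi(\gamma_j=1\mid\by,\bomega)\big)=\tfrac{1}{p_b}\sum_{\bgamma}p_{\bgamma,b}\,E(\pi(\bgamma\mid\by,\bomega))$, so it suffices to bound this $E(\pi)$-weighted average of the block counts $p_{\bgamma,b}$.

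Next I would split the sum according to whether $\bgamma\in\T(\kappa)$ or not. For any $\bgamma\in\T(\kappa)$, by the definition \eqref{eq:Tkappa} the model restricted to block $b$ contains every index in $S^{L}_b(\kappa)$ and at most the indices in $S^{L}_b(\kappa)\cup S^{I}_b(\kappa)$, so that $|S^{L}_b(\kappa)|\le p_{\bgamma,b}\le |S^{L}_b(\kappa)|+|S^{I}_b(\kappa)|$. Using that the truly active indices of block $b$ partition into small, intermediate and large signals, $s_b=|S^{S}_b(\kappa)|+|S^{I}_b(\kappa)|+|S^{L}_b(\kappa)|$, the upper bound rewrites as $s_b-|S^{S}_b(\kappa)|$, which is exactly the right-hand side of the claim.

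The tail over $\bgamma\notin\T(\kappa)$ is then controlled by the crude bound $0\le p_{\bgamma,b}\le p_b$ together with Lemma~\ref{thm:convtoT}. Denoting $\epsilon_n:=\sum_{\bgamma\notin\T(\kappa)}E(\pi(\bgamma\mid\by,\bomega))$, the tail contributes at most $\tfrac{1}{p_b}\,p_b\,\epsilon_n=\epsilon_n$, while $\sum_{\bgamma\in\T(\kappa)}E(\pi(\bgamma\mid\by,\bomega))=1-\epsilon_n$. Combining the three bounds gives, for every $n$ large enough,
\begin{equation*}
\frac{|S^{L}_b(\kappa)|}{p_b}\,(1-\epsilon_n)\;\le\;E\bigg(\frac{1}{p_b}\sum_{z_j=b}\pi(\gamma_j=1\mid\by,\bomega)\bigg)\;\le\;\frac{s_b-|S^{S}_b(\kappa)|}{p_b}+\epsilon_n .
\end{equation*}
Because the hypotheses of Lemma~\ref{thm:convtoT} hold by assumption, $\epsilon_n\to0$, and letting $n\to\infty$ yields the two inequalities of the statement.

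In short, the argument is essentially a bookkeeping consequence of Lemma~\ref{thm:convtoT}; the only genuine points to get right are the block-wise bounds $|S^{L}_b(\kappa)|\le p_{\bgamma,b}\le |S^{L}_b(\kappa)|+|S^{I}_b(\kappa)|$ on $\T(\kappa)$ and the identity $s_b=|S^{S}_b(\kappa)|+|S^{I}_b(\kappa)|+|S^{L}_b(\kappa)|$ that converts the upper bound into $s_b-|S^{S}_b(\kappa)|$. The main thing to handle carefully is the $n$-dependence of the bounds themselves: since dividing by $p_b$ keeps the discarded tail uniformly bounded by $\epsilon_n$, its contribution vanishes in the limit regardless of how $p_b$, $|S^{L}_b(\kappa)|$ and $|S^{I}_b(\kappa)|$ grow with $n$.
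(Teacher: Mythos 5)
Your proposal is correct and follows essentially the same route as the paper's proof: the same decomposition of the block-averaged posterior inclusion probabilities into contributions from $\T(\kappa)$ and its complement, the same bounds $|S^{L}_b(\kappa)|\le p_{\bgamma,b}\le |S^{L}_b(\kappa)|+|S^{I}_b(\kappa)|$ on $\T(\kappa)$, the identity $|S^{L}_b(\kappa)|+|S^{I}_b(\kappa)|=s_b-|S^{S}_b(\kappa)|$, and Lemma~\ref{thm:convtoT} to kill the tail. No gaps.
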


\begin{proof}
Denote
$$A_b:=\frac{1}{p_b}\sum_{z_j=b}\sum_{\bgamma \in \T(\kappa)|j\in  \bgamma} \pi(\bgamma \mid \by, \bomega) \quad\text{and}\quad  C_b:=\frac{1}{p_b}\sum_{z_j=b}\sum_{\bgamma \not\in \T(\kappa)|j\in  \bgamma} \pi(\bgamma \mid \by, \bomega).$$
For every $j=1,\ldots,b$, we have the decomposition
\begin{gather}\label{eq:decompshat}
    \frac{1}{p_b}\sum_{z_j=b} \pi(\bgamma_j \mid \by, \bomega)
    =\frac{\sum_{z_j=b} \sum_{\bgamma  | j \in \bgamma}  \pi(\bgamma \mid \by, \bomega)}{p_b}
   =A_b\,+\, C_b.
\end{gather}
To show the lower bound, we decompose $A_b$
\begin{eqnarray}
  A_b
   &=& \sum_{\bgamma \in \T(\kappa)} \pi(\bgamma \mid \by, \bomega)\sum_{z_j=b}\frac{I(j\in  \bgamma_b)}{p_b}\nonumber\\
   &=& \sum_{\bgamma \in \T(\kappa)} \pi(\bgamma \mid \by, \bomega)\sum_{j| z_j=b, j \in S^{L}_b(\kappa)}\frac{I(j\in  \bgamma_b)}{p_b}+\sum_{\bgamma \in \T(\kappa)} \pi(\bgamma \mid \by, \bomega)\sum_{j| z_j=b, j\not\in S^{L}_b(\kappa)}\frac{I(j\in  \bgamma_b)}{p_b}\nonumber\\
   &=& \frac{|S^{L}_b(\kappa)|}{p_b}\sum_{\bgamma \in \T(\kappa)} \pi(\bgamma \mid \by, \bomega)+\sum_{\bgamma \in \T(\kappa)} \pi(\bgamma \mid \by, \bomega)\sum_{j| z_j=b, j\not\in S^{L}_b(\kappa)}\frac{I(j\in  \bgamma_b)}{p_b}\label{eq:wigr},
\end{eqnarray}
where the last equality follows from $I(j\in  \bgamma_b)=1$ for all $j\in S^L(\kappa)$ when $\bgamma\in\T(\kappa)$. The rightmost term above and $C_b$ are nonnegative, then by the linearity of the expectation
$$E\bigg(\frac{1}{p_b}\sum_{z_j=b} \pi(\bgamma_j \mid \by, \bomega)\bigg)\geq\frac{|S^{L}_b(\kappa)|}{p_b}\sum_{\bgamma \in \T(\kappa)}E( \pi(\bgamma \mid \by, \bomega)).$$
By Lemma~\ref{thm:convtoT}, $\lim_{n\to\infty}\sum_{\bgamma \in \T(\kappa)}E( \pi(\bgamma \mid \by, \bomega))=1$. It follows that $$\lim_{n\to\infty}E\bigg(\frac{1}{p_b}\sum_{z_j=b} \pi(\bgamma_j \mid \by, \bomega)\bigg)\geq \frac{|S^{L}_b(\kappa)|}{p_b}\quad\text{ for every   }j=1,\ldots,b.$$

We now prove the upper bound. Recall that $\T(\kappa)$ by definition includes models that have no small signals, i.e. all parameters are in $S^L(\kappa) \cup S^I(\kappa)$. That is,
%Since 
for all $\bgamma\in\T(\kappa)$, we have that $I(j\in  \bgamma_b)=0$ for all $j\in\{j|z_j=b, j\not\in(S^{L}_b(\kappa)\cup S^{I}_b(\kappa))\}$. Hence, $A_b$ in \eqref{eq:wigr} satisfies
\begin{align*}
    A_b
   &= \frac{|S^{L}_b(\kappa)|}{p_b}\sum_{\bgamma \in \T(\kappa)} \pi(\bgamma \mid \by, \bomega)+\sum_{\bgamma \in \T(\kappa)} \pi(\bgamma \mid \by, \bomega)\sum_{j \in S^{I}_b(\kappa)}\frac{I(j\in  \bgamma_b)}{p_b}\\
   &\leq\frac{|S^{L}_b(\kappa)|}{p_b}\sum_{\bgamma \in \T(\kappa)} \pi(\bgamma \mid \by, \bomega)+ \frac{|S^{I}(\kappa)_b|}{p_b}\sum_{\bgamma \in \T(\kappa)} \pi(\bgamma \mid \by, \bomega)
\end{align*}
where the inequality follows from $\sum_{j \in S^{I}_b(\kappa)} I(j\in  \bgamma_b) \leq |S_b^{I}(\kappa)|$  for all $\bgamma$. By \eqref{eq:decompshat}, we then have
\begin{equation}\label{eq:uuuuu}
    \frac{1}{p_b}\sum_{z_j=b} \pi(\bgamma_j \mid \by, \bomega)\leq \frac{|S^{L}_b(\kappa)|+|S^{I}(\kappa)_b|}{p_b}\sum_{\bgamma \in \T(\kappa)} \pi(\bgamma \mid \by, \bomega) + C_b.
\end{equation}
Moreover, for every $j=1,\ldots,b$, $C_b$ satisfies
\begin{equation}\label{eq:uuuuu2}
    C_b = \sum_{\bgamma \not\in \T(\kappa)} \pi(\bgamma \mid \by, \bomega)\sum_{z_j=b}\frac{I(j\in  \bgamma_b)}{p_b}\leq \sum_{\bgamma \not\in \T(\kappa)} \pi(\bgamma \mid \by, \bomega)
\end{equation}
where the inequality follows from $\sum_{z_j=b}\frac{I(j\in  \bgamma_b)}{p_b}\leq 1$ for all $\bgamma$.
Taking expectations in \eqref{eq:uuuuu} and \eqref{eq:uuuuu2} gives
\begin{equation*}
    E\bigg(\frac{1}{p_b}\sum_{z_j=b} \pi(\bgamma_j \mid \by, \bomega)\bigg) 
    \leq \frac{|S^{L}_b(\kappa)|+|S^{I}(\kappa)_b|}{p_b}\sum_{\bgamma \in \T(\kappa)}E( \pi(\bgamma \mid \by, \bomega)) + \sum_{\bgamma \not\in \T(\kappa)}E( \pi(\bgamma \mid \by, \bomega)).
\end{equation*}
 By Lemma~\ref{thm:convtoT}, we have on one hand $\lim_{n\to \infty}\sum_{\bgamma \in \T(\kappa)}E( \pi(\bgamma \mid \by, \bomega))= 1$ and, on the other hand, $\lim_{n\to \infty}\sum_{\bgamma \not\in\T(\kappa)}E( \pi(\bgamma \mid \by, \bomega))= 0$. It follows that $$ \lim_{n\to \infty}E\bigg(\frac{1}{p_b}\sum_{z_j=b} \pi(\bgamma_j \mid \by, \bomega)\bigg) \leq  \frac{|S^{L}_b(\kappa)|+|S^{I}(\kappa)_b|}{p_b}= \frac{s_b-|S^{S}_b(\kappa)|}{p_b}\quad \text{for every }j=1,\ldots,b,$$ which proves the upper bound.
\end{proof}

\subsection{Proof of Theorem \ref{thm:suffcondlinearmodel}}\label{proof:suffcondlinearmodel}

Theorem \ref{thm:suffcondlinearmodel} is a particular case of Lemma \ref{thm:convtoT}. Observe that under assumption A3, $S^L(\kappa)=\bgamma^*$ and $S^S(\kappa)=S^I(\kappa)=\emptyset$. It follows that $\T(\kappa)=\bgamma^*$. Assumptions A1 and A2 are also assumed to hold and, trivially, $|S^{I}(\kappa)|=O(1)$ and $|S^{S}_b(\kappa)|=O(p_b-s_b)$ for every $b=1,\ldots,B$, then by Lemma \ref{thm:convtoT}, $\lim_{n \to \infty} \sum_{\bgamma \neq \bgamma^*} E\left(\pi(\bgamma\mid \by, \bomega)\right)= 0$.

\subsection{Oracle convergence rate for Theorem \ref{thm:suffcondlinearmodel}}
\label{ssec:oracle_conv_rate}

We give in Corollary \ref{cor:convtoT} oracle values for $\kappa_b^*$ that approximately optimize the rates at which
$\sum_{\bgamma \neq \bgamma^*} E\left(\pi(\bgamma\mid \by, \bomega)\right)$ converges to 0 under a betamin assumption.
The result follows from Lemma \ref{thm:convtoT} and some extra derivations.
We refer to $\kappa_b^*$ as oracle penalties because they depend on unknown quantities such as the sparsity $s_b$ and the smallest true signal $\theta^*_{\min,b}$ in each block.

\begin{corollary} \label{cor:convtoT}
    Assume that A1 holds, and let
\begin{align}
 \sqrt{\kappa^*_b\,} \;=\; \frac12\sqrt{\frac{(1-\nu^*) n\rho(\bX)}{6\phi}}{\theta^*_{\min,b}}+\frac12\sqrt{\frac{6\phi}{(1-\nu^*) n\rho(\bX)}}\frac{\ln({p_b-s_b})-\ln({s_b})}{\theta^*_{\min,b}}
\nonumber
\end{align}
for some fixed $\nu^* \in (1/2,1)$.
Further assume that, for all $b=1,\ldots,B$, the following betamin condition holds
\begin{equation}\label{betamincond:minimalreg}
    \lim_{n\to \infty}\frac{g(\nu^*)\sqrt{ n\rho(\bX)\phi^{-1}/6}\;\theta^*_{\min,b}}{\sqrt{\ln(p_b-s_b)} + \sqrt{\ln(s_b)}} \geq 1
\end{equation}
where $g(\nu^*)=\sqrt{(-1 + 2 \nu^*)(1-\nu^*)}/(1 + \sqrt{2 - 2 \nu^*})$.
Then there exists a constant $c>0$ such that
\begin{equation}
\sum_{\bgamma \neq \bgamma^*} E\left(\pi(\bgamma\mid \by, \bomega)\right)\;\leq\; 2c\,\sum_{b=1}^B\,\, e^{-\tfrac{\delta}{2}\big[\kappa^*_b-\ln(p_b-s_b)\big]},
\nonumber
\end{equation}
where $\delta\in(0,1)$ is a constant that can be taken arbitrarily close to 1.
\end{corollary}

\begin{proof}
The proof strategy is as follows. The first step is to obtain an upper bound for $\mathcal{S} := \sum_{\bgamma \neq \bgamma^*} E\left(\pi(\bgamma\mid \by, \bomega)\right)$ when $\kappa_b$ and $\theta_{\min,b}^*$ meet Assumptions A2 ad A3, building on the proof of Theorem \ref{thm:convtoT}. The second step in the proof is to set $\kappa_b=\kappa_b^*$, where the latter meets A2 and A3 and approximately minimize the upper bound.

Assume the $\kappa_b$ and $\theta_{\min,b}^*$, $b=1,\ldots,B$, meet Assumptions A2 ad A3. Since A3 hold, we have $S^L(\kappa)=\bgamma^*$ and $S^S(\kappa)=S^I(\kappa)=\emptyset$ such that $\T(\kappa)=\bgamma^*$. Trivially, $|S^{I}(\kappa)|=O(1)$ and $|S^{S}_b(\kappa)|=O(p_b-s_b)$ for every $b=1,\ldots,B$ and the assumptions of Lemma \ref{thm:convtoT} are also met. All steps in the proof of Theorem \ref{thm:convtoT} therefore remain valid for the $\kappa_b$.
% The convergence rate in \eqref{eq:gHIPS} then applies to the $\kappa^*_b$ and for every $n$ large enough.
%\begin{equation*}
%\mathcal{S} \leq c\,\sum_{b=1}^B\,\, e^{-\tfrac{\delta}{2}\big[\kappa_b^*-\ln(p_b-s_b)\big]} + 
%    e^{-\tfrac{\delta}{2}\big[(\sqrt{\frac{(1-\nu^*) n\rho(\bX)}{6\phi}}{\theta^*_{\min,b}} - \sqrt{\kappa_b^*})^2 -\ln(s_b) \big]}.
%\end{equation*}
%The bound obtained in the first step depends on the penalties $\kappa_b$. 
%The second step in the proof is to set $\kappa_b=\kappa_b^*$, where the latter approximately minimize the upper bound.

Recall that, from \eqref{eq:gepihw2}, we have that
\begin{align}
    \mathcal{S} \leq \sum_{\tilde{\bgamma} \in \T(\kappa)}\big(\mathcal{S}(\tilde{\bgamma})-1\big) = \mathcal{S}(\bgamma^*)-1.
\nonumber
\end{align}
where $\mathcal{S}(\bgamma^*)$ is defined in \eqref{eq:defST}. Using that $S^L(\kappa)=\bgamma^*$ and $S^S(\kappa)=\emptyset$, the bound in \eqref{eq:ogr} gives 
 \begin{align}\label{eq:ogr2}
    \mathcal{S}(\bgamma^*)\leq\prod_{b=1}^B & \left(1+\sum_{u_b=1}^{p_b-s_b}e^{-u_b\left(\delta \tfrac{f_b}{2}-1\right)}\right)\\&.\left(1+
    \sum_{w_b=1}^{s_b}    e^{-w_b\left(\delta \tfrac{g'_b}{2}-1\right)}\right).\nonumber
\end{align}
where $\delta\in(0,1)$ is arbitrarily close to 1, $f_b$ is defined in Assumption A2, $g_b'$ as in \eqref{eq:bngrws} for the case that A3 holds, that is
\begin{equation}\label{eq:bngrws2}
  \frac{(1-\nu) n \phi^{-1}\rho(\bX)}{6}{\theta^*_{\min,b}}^{2}\,- \,\kappa_b \;=\; \log(s_b) + g'_{b}.  
\end{equation}
Denote
\begin{align*}
d_b\;=\;e^{1-\delta \tfrac{f_b}{2}},\qquad h_b\;=\;e^{1-\delta \tfrac{g'_b}{2}}.
\label{eq:db2}
\end{align*}
 Developing the product in the right-hand side in \eqref{eq:ogr2} and reordering the resulting terms gives
$$\mathcal{S}(\bgamma^*)-1\;\;\leq\;\;-1+1+\sum_{b=1}^B\big[S(u_b)+S(w_b)\big]+\mathcal{R}$$
where
\begin{align*}
    &S(u_b) = \sum_{u_b=1}^{p_b-s_b}e^{-u_b\left(\delta \tfrac{f_b}{2}-1\right)}
    =d_b\,\frac{1-d_b^{p_b-s_b}}{1-d_b} \\
    &S(w_b) = \sum_{w_b=1}^{s_b}    e^{-w_b\left(\delta \tfrac{g'_b}{2}-1\right)}=h_b\,\frac{1-h_b^{s_b}}{1-h_b},
\end{align*}
and all the terms in $\mathcal{R}$ are product of two or more of the sums $S(u_1),\ldots,S(u_B),S(w_1),\ldots,S(w_B)$. Given that $\delta>0$, $f_b \to \infty$ and $g_b' \to \infty$ by assumption, and hence $d_b \to 0$ and $h_b \to 0$, the $S(u_b)$ and $S(w_b)$ are smaller than 1 for all sufficiently large $n$ for all $b$. Then, each of the $2^{2B}-2B-1$ terms in $\mathcal{R}$ is bounded above by $\sum_{b=1}^B\big[S(u_b)+S(w_b)\big]$, and we get, for every $n$ large enough,
 \begin{equation} \label{eq:fierg}
 \mathcal{S}(\bgamma^*)-1
 %\mathcal{S} 
 \leq (2^{2B}-2B)\sum_{b=1}^B \bigg[d_b\,\frac{1-d_b^{p_b-s_b}}{1-d_b}+ h_b\,\frac{1-h_b^{s_b}}{1-h_b}\bigg ]. \end{equation} 
Let $r>1$ constant. Since $d_b \to 0$ and $h_b \to 0$, then $\frac{1-d_b^{p_b-s_b}}{1-d_b}\to1$ and $\frac{1-h_b^{s_b}}{1-h_b}\to1$ for every $b$, and for every $n$ large enough, 
\begin{equation} \label{eq:fierg2}r>\max_{b=1,\ldots,b}\bigg\{\frac{1-d_b^{p_b-s_b}}{1-d_b} \,,\,\frac{1-h_b^{s_b}}{1-h_b} \bigg\}.\end{equation} 
By \eqref{eq:fierg} and \eqref{eq:fierg2}, we then obtain
$$\mathcal{S}(\bgamma^*)-1
     \;\leq\; (2^{2B}-2B) r\, e \,\sum_{b=1}^B e^{-\delta \tfrac{f_b}{2}} +e^{-\delta \tfrac{g'_b}{2}}.$$
By \eqref{eq:gepihw2} and using $\T(\kappa)=\bgamma^*$, we get
$$\sum_{\bgamma \neq \bgamma^*} E\left(\pi(\bgamma\mid \by, \bomega)\right)\;\leq\;(2^{2B}-2B) r\, e \,\,\sum_{b=1}^B e^{-\delta \tfrac{f_b}{2}} +e^{-\delta \tfrac{g'_b}{2}}.$$
In the rest of the proof, we denote the constant factor on the righthand side above $c=(2^{2B}-2B) r\, e$. By the definition of $f_b$ in Assumption A2, that of $g'_b$ in \eqref{eq:bngrws2}, we get, for every $n$ large enough, that $\mathcal{S}=$
\begin{equation}\label{eq:gpkr}
    \sum_{\bgamma \neq \bgamma^*} E\left(\pi(\bgamma\mid \by, \bomega)\right)\;\leq\; c\,\sum_{b=1}^B\,\, e^{-\tfrac{\delta}{2}\big[\kappa_b-\ln(p_b-s_b)\big]} + e^{-\tfrac{\delta}{2}\big[\frac{(1-\nu) n\rho(\bX)}{6\phi}{\theta^*_{\min,b}}^{2}\,- \,\kappa_b -\ln(s_b) \big]}.
\end{equation}

We have $\frac{(1-\nu) n \rho(\bX)}{6\phi}{\theta^*_{\min,b}}^{2}\,- \,\kappa_b > \Big(\sqrt{\frac{(1-\nu) n\rho(\bX)}{6\phi}}\theta^*_{\min,b} - \sqrt{\kappa_b}\Big)^2$ and then, for any $\delta\in(0,1)$ arbitrarily close to 1, we obtain the convergence rate
\begin{equation}\label{eq:gHIPS}
  \mathcal{S} \leq c\sum_{b=1}^B e^{-\tfrac{\delta}{2}\big[\kappa_b-\ln(p_b-s_b)\big]} + e^{-\tfrac{\delta}{2}\big[\big(\sqrt{\frac{(1-\nu) n\rho(\bX)}{6\phi}}\theta^*_{\min,b} - \sqrt{\kappa_b}\big)^2-\kappa_b -\ln(s_b)\big]}.
\end{equation}
Expression \eqref{eq:gHIPS} provides an upper-bound for $\mathcal{S}= \sum_{\bgamma \neq \bgamma^*} E\left(\pi(\bgamma\mid \by, \bomega)\right)$, concluding the first part of the proof.

In the second part of the proof we derive approximately optimal values for the block penalties $\kappa_b$. In \eqref{eq:gHIPS}, each summand features a first term that decays exponentially in the penalty $\kappa_b$, and a second term that decays exponentially in 
$(\sqrt{\tfrac{(1-\nu)n\rho(\bX)}{6\phi}}\,\theta^*_{\min,b} - \sqrt{\kappa_b})^2
$. These two terms capture competing type I and II error contributions.
The choice $\kappa_b=\kappa_b^*$ such that
\begin{equation} 
\sqrt{\kappa^*_b\,} \;=\; \frac12\sqrt{\frac{(1-\nu^*) n\rho(\bX)}{6\phi}}{\theta^*_{\min,b}}+\frac12\sqrt{\frac{6\phi}{(1-\nu^*) n\rho(\bX)}}\frac{\ln({p_b-s_b})-\ln({s_b})}{\theta^*_{\min,b}},
\nonumber
\end{equation}
for some fixed $\nu^*\in(1/2,1)$, balances these contributions by equating their exponential rates, and thus approximately minimizes \eqref{eq:gHIPS}. 

To be able to apply \eqref{eq:gHIPS} to $\kappa^*_b$, we next show that if Assumption \eqref{betamincond:minimalreg} holds then the $\kappa^*_b$ and $\theta^*_{\min,b}$ satisfy Assumptions A2 and A3. Indeed, under~\eqref{betamincond:minimalreg}, for every $b$, there exists a sequence $x_b$ such that $\lim_{n\to\infty} x_b\geq1$ and $g(\nu^*)\sqrt{\frac{n}{6\phi}}\theta^*_{\min,b}=x_b\big(\sqrt{\ln(p_b-s_b)} + \sqrt{\ln(s_b)}\big)$. Let $v_b= x_b \sqrt{(1-\nu^*)}g(\nu^*)^{-1}$, then $\sqrt{\frac{(1-\nu^*)n}{6\phi}}\theta^*_{\min,b}=v_b\big(\sqrt{\ln(p_b-s_b)} + \sqrt{\ln(s_b)}\big)$. Denote $a=\sqrt{\ln(p_b-s_b)}$ and $l=\sqrt{\ln(s_b)}$.
We have
$$
\sqrt{\kappa^*_b} \;=\;\frac{v_b}{2} (a+l)+\frac{(a^2-l^2)}{2v_b(a+l)}\;=\;\frac{v_b}{2} (a+l)+\frac{a-l}{2v_b}\;=\;\frac{v_b^2+1}{2v_b}a+\frac{v_b^2-1}{2v_b}l.
$$
Since $\sqrt{(1-\nu^*)}g(\nu^*)^{-1}>1$ for $\nu^*\in(1/2,1)$ and $\lim_{n\to\infty} x_b\geq1$, for $n$ large enough, $v_b>1$. We also have $b\geq 0$. It follows that $b(v_b^2-1)/2v_b \geq 0$ for $n$ large enough and
\begin{equation}\label{eq:ojdv}
\sqrt{\kappa^*_b} =\frac{v_b^2+1}{2v_b}a+\frac{v_b^2-1}{2v_b}l  \geq  \Big(\frac{v_b^2+1}{2v_b}+1-1\Big)a = \Big(1+\frac{(v_b-1)^2}{2v_b}\Big)\sqrt{\ln(p_b-s_b)} 
\end{equation}
which implies Assumption A2 since $v_b>1$ for $n$ sufficiently large. We also have
\begin{equation}\label{eq:oabpx}
\frac{\sqrt{(1-\nu^*) n\rho(\bX)}}{6}{\theta_{\min,b}^*}-\sqrt{\kappa^*_b} = \frac{v_b^2-1}{2v_b}a+\frac{v_b^2+1}{2v_b}l \geq \Big(1+\frac{(v_b-1)^2}{2v_b}\Big)\sqrt{\ln(s_b)}.
\end{equation}
Let $\nu = \tfrac12(1+\max_b\ln(p_b-s_b)/\kappa^*_b)$ as defined in Assumption A3. By~\eqref{eq:ojdv}, $\nu = \tfrac12(1+\max_b\big(1+(v_b-1)^2/(2v_b))^{-2}$. Since $\lim_{n\to\infty} x_b\geq 1$, then for sufficiently large $n$, $v_b\geq \sqrt{(1-\nu^*)}g(\nu^*)^{-1}$. Simple algebra shows the latter inequality implies $\nu = \tfrac12(1+\max_b\big(1+(v_b-1)^2/(2v_b))^{-2} < \nu^*$ and $1-\nu > 1-\nu^*$. Then, by~\eqref{eq:oabpx}, 
\begin{equation*}
\frac{\sqrt{(1-\nu) n\rho(\bX)}}{6\phi}{\theta_{\min,b}^*}-\sqrt{\kappa^*_b} \geq \Big(1+\frac{(v_b-1)^2}{2v_b}\Big)\sqrt{\ln(s_b)},
\end{equation*}
for every sufficiently large $n$. Since $\lim_{n\to\infty} x_b\geq1$ and $\sqrt{(1-\nu^*)}g(\nu^*)^{-1}>1$ for $\nu^*\in(1/2,1)$, $v_b>1$ for $n$ sufficiently large and Assumption A3 holds.

Since under Assumption \eqref{betamincond:minimalreg}, Assumptions A2 and A3 hold for the $\kappa^*_b$, \eqref{eq:gHIPS} applies. Plugging such $\kappa_b^*$ into \eqref{eq:gHIPS}, and noting that
$\kappa_b^*$ satisfies $\kappa^*_b-\ln(p_b-s_b)=\big(\sqrt{\frac{(1-\nu^*) n\rho(\bX)}{6\phi}}{\theta^*_{\min,b}} - \sqrt{\kappa_b^*}\big)^2 -\ln(s_b)$ for all $b$, gives that %for the oracle penalties $\kappa^*_b$ and 
for every $n$ large enough,
\begin{equation}\label{eq:afi}
\mathcal{S}= \sum_{\bgamma \neq \bgamma^*} E\left(\pi(\bgamma\mid \by, \bomega)\right)\;\leq\; 2c\,\sum_{b=1}^B\,\, e^{-\tfrac{\delta}{2}\big[\kappa^*_b-\ln(p_b-s_b)\big]},
\end{equation}
where $\delta\in(0,1)$ is arbitrarily close to 1, as we wished to prove.
\end{proof}

\subsection{Proof of Theorem \ref{theo:empbayesselconsist}}\label{proof:empbayesselconsist}

The proof strategy is to show that Assumptions A1, A2 and A3 hold to apply Theorem~\ref{thm:suffcondlinearmodel}. Recall that Theorem~\ref{theo:empbayesselconsist} makes Assumptions A1, A4, A5 and A6, hence it suffices to show that A2 and A3 hold.
We first derive a convenient decomposition of the empirical Bayes penalties $\kappa^{(1)}_b$. The second step of the proof is to show that, with probability going to 1, these $\kappa^{(1)}_b$ satisfy Assumption A2. The third step consists in showing that Assumption A5 implies Assumption A3 for $\kappa^{(1)}_b$. The convergence of $E\left(\pi(\bgamma^*\mid \by, \bomega^{(1)})\right)$ then follows Theorem~\ref{thm:suffcondlinearmodel}.

In Step 1, from the choice $\omega^{(0)}$ and the ensuing penalty $\kappa^{(0)}$, we have
\begin{align*}
     \frac{1}{p_b}\sum_{z_j=b}\pi(\bgamma_j\mid \by, \bomega^{(0)})
    &\;=\; \frac{1}{p_b}\sum_{z_j = b}\,\,\sum_{\bgamma}\pi(\bgamma\mid \by, \bomega^{(0)})I(j\in \bgamma)\\
   &\;=\; \frac{1}{p_b}\sum_{\bgamma }\pi(\bgamma\mid \by, \bomega^{(0)})\sum_{z_j = b}I(j\in \bgamma) \\
   &\;=\;\frac{s_b}{p_b}\pi(\bgamma^*\mid \by, \bomega^{(0)}) + \sum_{\bgamma  | \bgamma \neq \bgamma^*}\frac{|\bgamma_b|}{p_b}\pi(\bgamma\mid \by, \bomega^{(0)}).
\end{align*}
Using that $\pi(\bgamma^*\mid \by, \bomega^{(0)})=1-\sum_{\bgamma  | \bgamma \neq \bgamma^*}\pi(\bgamma\mid \by, \bomega^{(0)})$, we get
\begin{equation*}
   \frac{1}{p_b}\sum_{z_j=b}\pi(\bgamma_j\mid \by, \bomega^{(0)})
   =\frac{s_b}{p_b} + \sum_{\bgamma  | \bgamma \neq \bgamma^*}\frac{|\bgamma_b|-s_b}{p_b}\pi(\bgamma\mid \by, \bomega^{(0)}).
\end{equation*}
Consider the decomposition of the sum in the right-hand side above between the sum over models $\bgamma$ that contain more parameters than $\bgamma^*$ in block $b$ and the sum over those that contain fewer parameters than $\bgamma^*$ in block $b$. Denote
$$
    O_b^{(0)} := \sum_{\bgamma  | |\bgamma_b| > s_b}\frac{|\bgamma_b|-s_b}{p_b}\pi(\bgamma\mid \by, \bomega^{(0)})\quad\text{and}\quad
    U_b^{(0)} := \sum_{\bgamma  | |\bgamma_b| < s_b}\frac{s_b-|\bgamma_b|}{p_b}\pi(\bgamma\mid \by, \bomega^{(0)}).
$$
%are overfitting in $B_b$ in the sense $|\bgamma_b|> s_b$ and those that are underfitting in $B_b$, in the sense $|\bgamma_b|<s_b$, we can write:
We have
\begin{equation}\label{eq:rewriteshat}
  \frac{1}{p_b}\sum_{z_j=b}\pi(\bgamma_j\mid \by, \bomega^{(0)})  = \frac{s_b}{p_b}\,\,+\,\, O_b^{(0)}-\,\,U_b^{(0)}.
\end{equation}
Observe that we have the following decomposition of Step 2 penalties
$$\kappa^{(1)}_b
    = \log(p_b-s_b) + \log\Big(\frac{\sqrt{1+gn}}{s_b}\Big) + \log\Bigg(\frac{p_b-\sum_{z_j=b}\pi(\bgamma_j\mid \by, \bomega^{(0)}) }{p_b-s_b}\Bigg) + \log\Bigg(\frac{s_b}{\sum_{z_j=b}\pi(\bgamma_j\mid \by, \bomega^{(0)}) }\Bigg). \nonumber$$
By \eqref{eq:rewriteshat}, it follows that
\begin{equation}\label{eq:decompstep2pen}
    \kappa^{(1)}_b = \log(p_b-s_b)+\log\Big(\frac{\sqrt{1+gn}}{s_b}\Big)+\log\Bigg(1-\frac{p_b(O^{(0)}_b-U^{(0)}_b)}{p_b-s_b}\Bigg)+\log\Bigg(\frac{s_b}{\sum_{z_j=b}\pi(\bgamma_j\mid \by, \bomega^{(0)}) }\Bigg),
\end{equation}
completing the first step of the proof.

We continue with the second step of the proof: showing that the $\kappa^{(1)}_b$'s satisfy Assumption~A2 with probability going to 1. Recall that Assumption~A2 states that there exists $f_b\to \infty$ (as $n\to \infty$) such that for every sufficiently large $n$,
\begin{equation*}
    \kappa_b \;=\;\log(p_b-s_b) + f_b.
\end{equation*}
Since $U_b^{(0)}$ is nonnegative, a lower bound on $\kappa^{(1)}_b$ is %for every $b=1,\ldots,b$ is:
\begin{equation}\label{eq:decompstep2pen2}
    \kappa^{(1)}_b
    \geq  \log(p_b-s_b)+\log\Big(\frac{\sqrt{1+gn}}{s_b}\Big)+\log\Bigg(1-\frac{p_b O^{(0)}_b}{p_b-s_b}\Bigg)+\log\Bigg(\frac{s_b}{\sum_{z_j=b}\pi(\bgamma_j\mid \by, \bomega^{(0)}) }\Bigg).
\end{equation}
Plugging in the definition of $O^{(0)}_b$, we have that
$$\frac{p_b O^{(0)}_b}{p_b-s_b} =  \sum_{\bgamma  | |\bgamma_b| > s_b}\frac{|\bgamma_b|-s_b}{p_b-s_b}\pi(\bgamma\mid \by, \bomega^{(0)})\leq \sum_{\bgamma  | |\bgamma_b| > s_b}\pi(\bgamma\mid \by, \bomega^{(0)})$$
where the inequality follows from $(|\bgamma_b|-s_b)/(p_b-s_b)\leq 1$ for all $\bgamma$.
Note that if $\bgamma$ is such that $|\bgamma_b|>s_b$, then $\bgamma\not\in\T(\kappa^{(0)})$ (this follows immediately from the definition of $\T(\kappa)$ in \eqref{eq:Tkappa})  and therefore $\sum_{\bgamma  | |\bgamma_b| > s_b}\pi(\bgamma\mid \by, \bomega^{(0)})\leq \sum_{\bgamma  \not \in \T(\kappa^{(0)})}\pi(\bgamma\mid \by, \bomega^{(0)})$. Moreover, $\kappa^{(0)}$ satisfies Assumption~A2 and the assumptions of Lemma~\ref{thm:convtoT} are met for $\kappa^{(0)}$.
Then, by Lemma~\ref{thm:convtoT}, $\lim_{n\to\infty} \sum_{\bgamma \not\in \T(\kappa^{(0)})}E(\pi(\bgamma\mid \by, \bomega^{(0)}))=\lim_{n\to\infty}\sum_{\bgamma  | |\bgamma_b| > s_b}E(\pi(\bgamma\mid \by, \bomega^{(0)})) = 0$. It follows that $\frac{p_b O^{(0)}_b}{p_b-s_b}$ vanishes in probability and so does $\log\Big(1-\frac{p_b O^{(0)}_b}{p_b-s_b}\Big)$. By Assumption~A4, we also have that $\log(\sqrt{1+gn}s_b^{-1})\to\infty$. 
Then, to show that Assumption~A2 holds with probability going to 1, it is enough to show that with probability going to 1, $\log(s_b/\sum_{z_j=b}\pi(\bgamma_j\mid \by, \bomega^{(0)}) )$ is nonnegative. Observe that all assumptions in Lemma~\ref{prop:shatconsistence} are also met for $\kappa^{(0)}$. By \eqref{eq:uuuuu} and \eqref{eq:uuuuu2} in the proof of Lemma~\ref{prop:shatconsistence}, we have
\begin{align*}
    \frac{\sum_{z_j=b}\pi(\bgamma_j\mid \by, \bomega^{(0)}) }{p_b}
    \leq & \frac{|S^{L}_b(\kappa^{(0)})|+|S^{I}(\kappa^{(0)})_b|}{p_b}\sum_{\bgamma \in \T(\kappa^{(0)})}\pi(\bgamma\mid \by, \bomega^{(0)}) \\
    &+ \frac{1}{p_b} \sum_{\bgamma \not\in \T(\kappa^{(0)})}\pi(\bgamma\mid \by, \bomega^{(0)}).
\end{align*}
By Lemma~\ref{thm:convtoT}, $\sum_{\bgamma \in \T(\kappa^{(0)})}\pi(\bgamma\mid \by, \bomega^{(0)})$ and $\sum_{\bgamma \not\in \T(\kappa^{(0)})}\pi(\bgamma\mid \by, \bomega^{(0)})$ converge in probability to 1 and 0 respectively. We then have that, with probability going to 1,
\begin{equation*}
    \frac{\sum_{z_j=b}\pi(\bgamma_j\mid \by, \bomega^{(0)}) }{p_b}\leq \frac{|S^{L}_b(\kappa^{(0)})|+|S^{I}(\kappa^{(0)})_b|}{p_b},
\end{equation*}
which implies, with probability going to 1,
$$\log\Big(\frac{s_b}{\sum_{z_j=b}\pi(\bgamma_j\mid \by, \bomega^{(0)}) }\Big)\geq \log\bigg(\frac{s_b}{|S^{L}_b(\kappa^{(0)})|+|S^{I}(\kappa^{(0)})_b|} \bigg) \geq 0.$$
We then obtain that, with probability going to 1,
\begin{equation}\label{eq:eeeeeeeh}
    \kappa^{(1)}_b
    \geq  \log(p_b-s_b)+\log\Big(\frac{\sqrt{1+gn}}{s_b}\Big)
\end{equation}
and that the $\kappa^{(1)}_b$ satisfy Assumption~A2, completing the second part of the proof.\medskip

For the third and final part of the proof, 
we now show that Assumption~A5 implies Assumption~A3 for the $\kappa_b^{EB}$ with probability going to 1. Recall that Assumption~A3 for the $\kappa_b^{EB}$ states that for each block $b$ there exists $g_b\to \infty$ such that for large enough $n$,
    \begin{equation*}
        \sqrt{\frac{(1-\nu) n \phi^{-1}\rho(\bX)}{6}}{{\theta_{\min,b}^*}} \,- \,\sqrt{\kappa^{(1)}_b} \;=\; \sqrt{\log(s_b)} + g_b,
    \end{equation*}
where $\nu$ takes value
\begin{equation}\label{eq:gammakeb}
    \nu=\frac12\Big(1+\max_b\frac{\log(p_b-s_b)}{\kappa^{(1)}_b}\Big).
\end{equation}
Observe that Assumption~A5 and Assumption~A3 take the same form. To show that Assumption~A5 implies Assumption~A3 for the $\kappa_b^{EB}$ with probability going to 1, it suffices to show that the following two inequalities
\begin{align}
    \sqrt{\frac{ (1-\nu)n\phi^{-1}\rho(\bX)}{6}}{\theta_{\min,b}^*} \,&\geq \,  \sqrt{\frac{ (1-\psi)n\phi^{-1}\rho(\bX)}{6}}{\theta_{\min,b}^*}, \label{eq:oaebr} \\
    - \,\sqrt{\kappa^{(1)}_b} \,&\geq \,-\sqrt{\log\bigg(\frac{p}{|S^{L}(\kappa^{(0)})|}-1\bigg)+\frac12\log(1+gn)}\label{eq:oaebr2}
\end{align}
hold with probability going to 1 for $\nu$ as in \eqref{eq:gammakeb} and $\psi=\frac12\big(1+\max_b \frac{\log(p_b-s_b)}{\log(p_b/s_b-1)+0.5\log(1+gn)}\big)$ (defined in Assumption~A5). We first show \eqref{eq:oaebr} holds with probability going to 1 and then that \eqref{eq:oaebr2} does too.

By \eqref{eq:eeeeeeeh}, with probability going to 1,
$$ \frac{\log(p_b-s_b)}{\kappa^{(1)}_b}
    \leq \frac{\log(p_b-s_b)}{\log(p_b-s_b)+\log\Big(\frac{\sqrt{1+gn}}{s_b}\Big)} =  \frac{\log(p_b-s_b)}{\log(p_b/s_b-1)+0.5\log(1+gn)} .$$
It follows that:
\begin{equation*}
    \nu=\frac12\Big(1+\max_{b=1, \ldots,b}\frac{\log(p_b-s_b)}{\kappa^{(1)}_b}\Big) \leq \frac12\Big(1+\max_b \frac{\log(p_b-s_b)}{\log(p_b/s_b-1)+0.5\log(1+gn)}\Big)=\psi
\end{equation*}
and \eqref{eq:oaebr} holds with probability going to 1.

We now upper bound $\kappa^{(1)}_b$ to show \eqref{eq:oaebr2} holds with probability going to 1. Observe that 
$$
\log\Big(\frac{\sum_{z_j=b}\pi(\bgamma_j\mid \by, \bomega^{(0)}) }{s_b}\Big)=
 \log \left( 1 + \frac{\sum_{z_j=b}\pi(\bgamma_j\mid \by, \bomega^{(0)})  - s_b}{s_b}\right) =   
\log\Big(1+\frac{p_b(O^{(0)}_b-U^{(0)}_b)}{s_b}\Big),
$$
where the second equality follows from \eqref{eq:rewriteshat}.
Plugging this expression into \eqref{eq:decompstep2pen}, and using that $O_b^{(0)} \geq 0$, we have that  %Since $O_b^{(0)}$ is nonnegative, from \eqref{eq:decompstep2pen} we have that %, for every $b=1,\ldots,b$, an upper bound on $\kappa^{(1)}_b$ is:
\begin{equation}\label{eq:gggggggg}
    \kappa^{(1)}_b \leq \log\big(p_b-s_b\big)+\log\Big(\frac{\sqrt{1+gn}}{s_b}\Big)+\log\bigg(\frac{1+\frac{p_b}{p_b-s_b}U^{(0)}_b}{1-\frac{p_b}{s_b}U^{(0)}_b}\bigg).
\end{equation}
We split the sum in $U^{(0)}_b$ between models in $\T(\kappa^{(0)})$ and those not in $\T(\kappa^{(0)})$. 
$$U_b^{(0)} = \sum_{\bgamma \in \T(\kappa^{(0)}) | |\bgamma_b| < s_b}\frac{s_b-|\bgamma_b|}{p_b}\pi(\bgamma\mid \by, \bomega^{(0)})+\sum_{\bgamma \not\in \T(\kappa^{(0)}) | |\bgamma_b| < s_b}\frac{s_b-|\bgamma_b|}{p_b}\pi(\bgamma\mid \by, \bomega^{(0)}).
$$
If $\bgamma\in \T(\kappa^{(0)})$, then by definition $|\bgamma_b|\geq |S^{L}_b(\kappa^{(0)})|$ and thus $s_b-|\bgamma_b|\leq s_b-|S^{L}_b(\kappa^{(0)})|$. A bound on $s_b-|\bgamma_b|$ for $\bgamma\not\in \T(\kappa^{(0)})$ is simply $s_b-|\bgamma_b|\leq s_b$. It follows that
\begin{eqnarray*}
    U^{(0)}_b \leq \frac{s_b-|S^{L}_b(\kappa^{(0)})|}{p_b}\sum_{\bgamma \in \T(\kappa^{(0)}) | |\bgamma_b| < s_b}\pi(\bgamma\mid \by, \bomega^{(0)})+  \frac{s_b}{p_b}\sum_{\bgamma  \not\in \T(\kappa^{(0)})| |\bgamma_b| < s_b}\pi(\bgamma\mid \by, \bomega^{(0)})
\end{eqnarray*}
By Lemma~\ref{thm:convtoT}, $\sum_{\bgamma \in  \T(\kappa^{(0)})| |\bgamma_b| < s_b}\pi(\bgamma\mid \by, \bomega^{(0)})$ and $\sum_{\bgamma  \not\in \T(\kappa^{(0)})| |\bgamma_b| < s_b}\pi(\bgamma\mid \by, \bomega^{(0)})$ converge in probability to 1 and 0 respectively. We then get that, with probability going to 1,
\begin{align}
    \frac{p_b}{p_b-s_b}U^{(0)}_b &\leq \frac{s_b-|S^{L}_b(\kappa^{(0)})|}{p_b-s_b}\nonumber\\
    \frac{p_b}{s_b}U^{(0)}_b &\leq  \frac{s_b-|S^{L}_b(\kappa^{(0)})|}{s_b}. \label{eq:gewou}
\end{align}
By the bounds above and \eqref{eq:gggggggg}, we have that with probability going to 1,
\begin{align}
  \kappa^{(1)}_b 
  &\leq \log\big(p_b-s_b\big)+\log\Big(\frac{\sqrt{1+gn}}{s_b}\Big)+\log\Bigg(\frac{1+\frac{s_b-|S^{L}_b(\kappa^{(0)})|}{p_b-s_b}}{1- \frac{s_b-|S^{L}_b(\kappa^{(0)})|}{s_b}}\Bigg) \nonumber\\
  &= \log\big(p_b-s_b\big) + \log\Big(\frac{\sqrt{1+gn}}{s_b}\Big) + \log\Bigg(\frac{\frac{p_b-|S^{L}_b(\kappa^{(0)})|}{p_b-s_b}}{ \frac{|S^{L}_b(\kappa^{(0)})|}{s_b}}\Bigg) \\
&=\log\big(p_b/|S^{L}_b(\kappa^{(0)})|-1\big)+\frac{1}{2}\log(1+gn)\label{eq:finalupperboundkappaj}
\end{align}
which shows \eqref{eq:oaebr2} holds with probability going to 1 and that
Assumption~A5 implies Assumption~A3 holds for the $\kappa^{(1)}_b$ with probability going to 1. 

Since Assumptions~A2 and A3 hold with probability going to 1, by Theorem~\ref{thm:suffcondlinearmodel}, $\lim_{n\to\infty}E \left[ \pi(\bgamma^* \mid \by, \hat{\bomega}^{(1)}) \right]= 1$, as we wished to prove.

\section{Simulation study}\label{sec:simstudy_suppl}

We report numerical results complementing and extending the figures shown in the main text,
for the simulation scenarios listed next.
The scenarios differ in how informative is the external data $\bZ$. 
Specifically, additionally to the intercept, $\bZ$ contains a first meta-covariate that ranges from strongly  informative to non-informative,
as measured by the size of the hyper-parameter $\omega_1$.
The second covariate is always non-informative ($\omega_2=0$).

\begin{itemize}
\item Table \ref{tab:simstudy1} shows results for Scenario 1, where $\omega_1=2$, $\omega_2=0$.

\item Table \ref{tab:simstudy2} shows results for Scenario 2, where $\omega_1=1$, $\omega_2=0$.

\item Table \ref{tab:simstudy3} shows results for Scenario 3, where $\omega_1=0$, $\omega_2=0$.

\item Table \ref{tab:simstudy4} shows results for Scenario 4, where $\omega_1=1.5$, $\omega_2=0$.

\item Table \ref{tab:simstudy5} shows results for Scenario 5, where $\omega_1=0.75$, $\omega_2=0$.

\item  Table \ref{tab:simstudy5_sensitivity} shows results for Scenario 5 again, for different values of Zellner's g-prior parameter (the default $g=1$, $g=0.1$ and $g=10$) 
\end{itemize}

The results from Scenarios 1--3 are discussed in the main paper.

The results from Scenario 4 (Table \ref{tab:simstudy4}) are qualitatively similar to those from Scenarios 1 and 2.
When using a pMOM prior on the parameters, the empirical Bayes model prior significantly improves MSE and power over the Beta-Binomial prior.
For Zellner's prior, a similar pattern is observed. An exception is the more challenging case where $n=100$ and $p=200$: the Beta-Binomial prior shows a much smaller FDR, at the cost of a much smaller power and higher mean squared estimation error (MSE).

The results from Scenario 5 (Table \ref{tab:simstudy5}) are similar to those from Scenario 2. The main difference is that the improvement in MSE and power of the empirical Bayes model prior over the Beta-Binomial are smaller in Scenario 5, as expected given that the external information is less informative $(\omega_1=0.75)$.
 Table \ref{tab:simstudy5_sensitivity} show that the results of our framework were very robust to the choice of the $g$ dispersion parameter in Zellner's prior.

\begin{table}[h]
\begin{tabular}{@{}lcc|ccc|@{}}
\hline
Method & n & p & MSE & Power & FDR \\ 
\hline
Empirical Bayes (pMOM) & 100 & 200 & 4.87 & 0.45 & 0.11 \\ 
Empirical Bayes (pMOM, Intercept) & 100 & 200 & 6.38 & 0.25 & 0.12 \\ 
  Beta-Binomial (pMOM) & 100 & 200 & 6.57 & 0.24 & 0.11 \\ 
  Empirical Bayes (Zellner) & 100 & 200 & 4.89 & 0.44 & 0.12 \\ 
  Empirical Bayes (Zellner, Intercept) & 100 & 200 & 6.43 & 0.24 & 0.11 \\ 
  Beta-Binomial (Zellner) & 100 & 200 & 11.00 & 0.03 & 0.00 \\ 
  LASSO & 100 & 200 & 2.99 & 0.90 & 0.60 \\ 
  ALASSO & 100 & 200 & 5.62 & 0.64 & 0.27 \\ 
  SCAD & 100 & 200 & 7.39 & 0.63 & 0.59 \\ 
\hline
  Empirical Bayes (pMOM) & 200 & 200 & 0.73 & 0.82 & 0.01 \\ 
  Empirical Bayes (pMOM, Intercept) & 200 & 200 & 1.06 & 0.76 & 0.02 \\ 
  Beta-Binomial (pMOM) & 200 & 200 & 1.06 & 0.76 & 0.02 \\ 
  Empirical Bayes (Zellner) & 200 & 200 & 0.72 & 0.83 & 0.01 \\ 
  Empirical Bayes (Zellner, Intercept) & 200 & 200 & 1.06 & 0.76 & 0.02 \\ 
  Beta-Binomial (Zellner) & 200 & 200 & 1.75 & 0.49 & 0.00 \\ 
  LASSO & 200 & 200 & 0.95 & 1.00 & 0.58 \\ 
  ALASSO & 200 & 200 & 0.80 & 0.98 & 0.28 \\ 
  SCAD & 200 & 200 & 2.50 & 0.93 & 0.48 \\ 
\hline
  Empirical Bayes (pMOM) & 200 & 100 & 0.26 & 0.82 & 0.00 \\ 
  Empirical Bayes (pMOM, Intercept) & 200 & 100 & 0.32 & 0.77 & 0.00 \\ 
  Beta-Binomial (pMOM) & 200 & 100 & 0.32 & 0.77 & 0.00 \\ 
  Empirical Bayes (Zellner) & 200 & 100 & 0.26 & 0.83 & 0.00 \\ 
  Empirical Bayes (Zellner, Intercept) & 200 & 100 & 0.32 & 0.77 & 0.00 \\ 
  Beta-Binomial (Zellner) & 200 & 100 & 0.34 & 0.72 & 0.00 \\ 
  LASSO & 200 & 100 & 0.39 & 1.00 & 0.60 \\ 
  ALASSO & 200 & 100 & 0.31 & 0.98 & 0.24 \\ 
  SCAD & 200 & 100 & 0.37 & 0.98 & 0.33 \\ 
\hline
\end{tabular}
\caption{Simulation study. Scenario 1 ($\omega_1=2,\omega_2=0$). 
Mean squared estimation error (MSE), power and false discovery rate (FDR).
Priors on parameters are product MOM (pMOM) and Zellner, both with default unit prior precision.
Priors on model are that given by our empirical Bayes framework, and by a default Beta-Binomial(1,1)
}\label{tab:simstudy1}
\end{table}

\begin{table}[h]
\begin{tabular}{@{}lcc|ccc|@{}}
\hline
Method & n & p & MSE & Power & FDR \\ 
\hline
Empirical Bayes (pMOM) & 100 & 200 & 3.53 & 0.40 & 0.19 \\ 
  Empirical Bayes (pMOM, Intercept) & 100 & 200 & 3.65 & 0.33 & 0.18 \\ 
  Beta-Binomial (pMOM) & 100 & 200 & 3.45 & 0.31 & 0.13 \\ 
  Empirical Bayes (Zellner) & 100 & 200 & 3.47 & 0.40 & 0.17 \\ 
  Empirical Bayes (Zellner, Intercept) & 100 & 200 & 3.92 & 0.33 & 0.19 \\ 
  Beta-Binomial (Zellner) & 100 & 200 & 3.48 & 0.15 & 0.01 \\ 
  LASSO & 100 & 200 & 1.58 & 0.92 & 0.69 \\ 
  ALASSO & 100 & 200 & 2.10 & 0.74 & 0.33 \\ 
  SCAD & 100 & 200 & 3.24 & 0.72 & 0.58 \\ 
\hline
  Empirical Bayes (pMOM) & 200 & 200 & 0.44 & 0.77 & 0.01 \\ 
  Empirical Bayes (pMOM, Intercept) & 200 & 200 & 0.49 & 0.73 & 0.01 \\ 
  Beta-Binomial (pMOM) & 200 & 200 & 0.49 & 0.73 & 0.01 \\ 
  Empirical Bayes (Zellner) & 200 & 200 & 0.44 & 0.77 & 0.01 \\ 
  Empirical Bayes (Zellner, Intercept) & 200 & 200 & 0.49 & 0.73 & 0.01 \\ 
  Beta-Binomial (Zellner) & 200 & 200 & 0.54 & 0.65 & 0.00 \\ 
  LASSO & 200 & 200 & 0.58 & 1.00 & 0.67 \\ 
  ALASSO & 200 & 200 & 0.45 & 0.97 & 0.29 \\ 
  SCAD & 200 & 200 & 1.37 & 0.93 & 0.33 \\ 
\hline
  Empirical Bayes (pMOM) & 200 & 100 & 0.16 & 0.77 & 0.00 \\ 
  Empirical Bayes (pMOM, Intercept) & 200 & 100 & 0.18 & 0.73 & 0.00 \\ 
  Beta-Binomial (pMOM) & 200 & 100 & 0.18 & 0.73 & 0.00 \\ 
  Empirical Bayes (Zellner) & 200 & 100 & 0.16 & 0.77 & 0.00 \\ 
  Empirical Bayes (Zellner, Intercept) & 200 & 100 & 0.18 & 0.73 & 0.00 \\ 
  Beta-Binomial (Zellner) & 200 & 100 & 0.17 & 0.72 & 0.00 \\ 
  LASSO & 200 & 100 & 0.26 & 0.99 & 0.69 \\ 
  ALASSO & 200 & 100 & 0.19 & 0.96 & 0.26 \\ 
  SCAD & 200 & 100 & 0.21 & 0.97 & 0.43 \\ 
\hline
\end{tabular}
\caption{Simulation study. Scenario 2 $(\omega_1=1,\omega_2=0)$.
Mean squared estimation error (MSE), power and false discovery rate (FDR).
Priors on parameters are product MOM (pMOM) and Zellner, both with default unit prior precision.
Priors on model are that given by our empirical Bayes framework, and by a default Beta-Binomial(1,1)
}\label{tab:simstudy2}
\end{table}

\begin{table}[h]
\begin{tabular}{@{}lcc|ccc|@{}}
\hline
Method & n & p & MSE & Power & FDR \\ 
\hline
  Empirical Bayes (pMOM) & 100 & 200 & 2.28 & 0.38 & 0.14 \\ 
  Empirical Bayes (pMOM, Intercept) & 100 & 200 & 2.43 & 0.36 & 0.16 \\ 
  Beta-Binomial (pMOM) & 100 & 200 & 2.04 & 0.35 & 0.09 \\ 
  Empirical Bayes (Zellner) & 100 & 200 & 2.25 & 0.37 & 0.14 \\ 
  Empirical Bayes (Zellner, Intercept) & 100 & 200 & 2.39 & 0.35 & 0.15 \\ 
  Beta-Binomial (Zellner) & 100 & 200 & 1.77 & 0.20 & 0.01 \\ 
  LASSO & 100 & 200 & 1.11 & 0.93 & 0.72 \\ 
  ALASSO & 100 & 200 & 1.22 & 0.77 & 0.38 \\ 
  SCAD & 100 & 200 & 1.99 & 0.77 & 0.55 \\ 
\hline
  Empirical Bayes (pMOM) & 200 & 200 & 0.32 & 0.74 & 0.00 \\ 
  Empirical Bayes (pMOM, Intercept) & 200 & 200 & 0.32 & 0.72 & 0.00 \\ 
  Beta-Binomial (pMOM) & 200 & 200 & 0.32 & 0.72 & 0.00 \\ 
  Empirical Bayes (Zellner) & 200 & 200 & 0.32 & 0.73 & 0.00 \\ 
  Empirical Bayes (Zellner, Intercept) & 200 & 200 & 0.33 & 0.72 & 0.00 \\ 
  Beta-Binomial (Zellner) & 200 & 200 & 0.33 & 0.69 & 0.00 \\ 
  LASSO & 200 & 200 & 0.42 & 1.00 & 0.71 \\ 
  ALASSO & 200 & 200 & 0.31 & 0.97 & 0.28 \\ 
  SCAD & 200 & 200 & 0.73 & 0.95 & 0.26 \\ 
\hline
  Empirical Bayes (pMOM) & 200 & 100 & 0.12 & 0.72 & 0.01 \\ 
  Empirical Bayes (pMOM, Intercept) & 200 & 100 & 0.13 & 0.71 & 0.00 \\ 
  Beta-Binomial (pMOM) & 200 & 100 & 0.13 & 0.70 & 0.00 \\ 
  Empirical Bayes (Zellner) & 200 & 100 & 0.12 & 0.72 & 0.01 \\ 
  Empirical Bayes (Zellner, Intercept) & 200 & 100 & 0.13 & 0.70 & 0.00 \\ 
  Beta-Binomial (Zellner) & 200 & 100 & 0.12 & 0.70 & 0.00 \\ 
  LASSO & 200 & 100 & 0.19 & 0.99 & 0.71 \\ 
  ALASSO & 200 & 100 & 0.13 & 0.94 & 0.26 \\ 
  SCAD & 200 & 100 & 0.14 & 0.96 & 0.45 \\ 
\hline
\end{tabular}
\caption{Simulation study. Scenario 3 ($\omega_1=0,\omega_2=0$).
Mean squared estimation error (MSE), power and false discovery rate (FDR).
Priors on parameters are product MOM (pMOM) and Zellner, both with default unit prior precision.
Priors on model are that given by our empirical Bayes framework, and by a default Beta-Binomial(1,1)
}\label{tab:simstudy3}
\end{table}

\begin{table}[h]
\begin{tabular}{@{}lcc|ccc|@{}}
\hline
Method & n & p & MSE & Power & FDR \\ 
\hline
  Empirical Bayes (pMOM) & 100 & 200 & 4.35 & 0.41 & 0.15 \\ 
  Empirical Bayes (pMOM, Intercept) & 100 & 200 & 4.85 & 0.31 & 0.15 \\ 
  Beta-Binomial (pMOM) & 100 & 200 & 4.70 & 0.29 & 0.12 \\ 
  Empirical Bayes (Zellner) & 100 & 200 & 4.07 & 0.42 & 0.13 \\ 
  Empirical Bayes (Zellner, Intercept) & 100 & 200 & 4.91 & 0.32 & 0.16 \\ 
  Beta-Binomial (Zellner) & 100 & 200 & 5.54 & 0.08 & 0.01 \\ 
  LASSO & 100 & 200 & 2.12 & 0.90 & 0.65 \\ 
  ALASSO & 100 & 200 & 3.33 & 0.69 & 0.28 \\ 
  SCAD & 100 & 200 & 5.10 & 0.68 & 0.58 \\ 
\hline
  Empirical Bayes (pMOM) & 200 & 200 & 0.55 & 0.80 & 0.01 \\ 
  Empirical Bayes (pMOM, Intercept) & 200 & 200 & 0.69 & 0.75 & 0.01 \\ 
  Beta-Binomial (pMOM) & 200 & 200 & 0.69 & 0.75 & 0.01 \\ 
  Empirical Bayes (Zellner) & 200 & 200 & 0.55 & 0.80 & 0.01 \\ 
  Empirical Bayes (Zellner, Intercept) & 200 & 200 & 0.69 & 0.75 & 0.01 \\ 
  Beta-Binomial (Zellner) & 200 & 200 & 0.87 & 0.60 & 0.00 \\ 
  LASSO & 200 & 200 & 0.73 & 1.00 & 0.63 \\ 
  ALASSO & 200 & 200 & 0.58 & 0.98 & 0.28 \\ 
  SCAD & 200 & 200 & 1.81 & 0.94 & 0.43 \\ 
\hline
  Empirical Bayes (pMOM) & 200 & 100 & 0.21 & 0.81 & 0.00 \\ 
  Empirical Bayes (pMOM, Intercept) & 200 & 100 & 0.24 & 0.76 & 0.00 \\ 
  Beta-Binomial (pMOM) & 200 & 100 & 0.24 & 0.75 & 0.00 \\ 
  Empirical Bayes (Zellner) & 200 & 100 & 0.21 & 0.81 & 0.00 \\ 
  Empirical Bayes (Zellner, Intercept) & 200 & 100 & 0.24 & 0.76 & 0.00 \\ 
  Beta-Binomial (Zellner) & 200 & 100 & 0.24 & 0.74 & 0.00 \\ 
  LASSO & 200 & 100 & 0.32 & 1.00 & 0.64 \\ 
  ALASSO & 200 & 100 & 0.24 & 0.98 & 0.23 \\ 
  SCAD & 200 & 100 & 0.27 & 0.98 & 0.37 \\ 
\hline
\end{tabular}
\caption{Simulation study. Scenario 4 ($\omega_1=1.5,\omega_2=0$).
Mean squared estimation error (MSE), power and false discovery rate (FDR).
Priors on parameters are product MOM (pMOM) and Zellner, both with default unit prior precision.
Priors on model are that given by our empirical Bayes framework, and by a default Beta-Binomial(1,1)
}\label{tab:simstudy4}
\end{table}

\begin{table}[h]
\begin{tabular}{@{}lcc|ccc|@{}}
\hline
Method & n & p & MSE & Power & FDR \\ 
\hline
Empirical Bayes (pMOM) & 100 & 200 & 2.87 & 0.41 & 0.17 \\ 
  Empirical Bayes (pMOM, Intercept) & 100 & 200 & 3.05 & 0.35 & 0.18 \\ 
  Beta-Binomial (pMOM) & 100 & 200 & 2.48 & 0.33 & 0.10 \\ 
  Empirical Bayes (Zellner) & 100 & 200 & 2.82 & 0.39 & 0.16 \\ 
  Empirical Bayes (Zellner, Intercept) & 100 & 200 & 3.13 & 0.35 & 0.17 \\ 
  Beta-Binomial (Zellner) & 100 & 200 & 2.38 & 0.17 & 0.02 \\ 
  LASSO & 100 & 200 & 1.31 & 0.92 & 0.71 \\ 
  ALASSO & 100 & 200 & 1.59 & 0.75 & 0.35 \\ 
  SCAD & 100 & 200 & 2.47 & 0.75 & 0.53 \\ 
\hline
  Empirical Bayes (pMOM) & 200 & 200 & 0.37 & 0.74 & 0.01 \\ 
  Empirical Bayes (pMOM, Intercept) & 200 & 200 & 0.39 & 0.72 & 0.01 \\ 
  Beta-Binomial (pMOM) & 200 & 200 & 0.39 & 0.72 & 0.01 \\ 
  Empirical Bayes (Zellner) & 200 & 200 & 0.37 & 0.74 & 0.01 \\ 
  Empirical Bayes (Zellner, Intercept) & 200 & 200 & 0.39 & 0.72 & 0.01 \\ 
  Beta-Binomial (Zellner) & 200 & 200 & 0.42 & 0.68 & 0.00 \\ 
  LASSO & 200 & 200 & 0.49 & 0.99 & 0.69 \\ 
  ALASSO & 200 & 200 & 0.37 & 0.97 & 0.27 \\ 
  SCAD & 200 & 200 & 1.18 & 0.93 & 0.28 \\ 
\hline
  Empirical Bayes (pMOM) & 200 & 100 & 0.15 & 0.73 & 0.00 \\ 
  Empirical Bayes (pMOM, Intercept) & 200 & 100 & 0.16 & 0.72 & 0.00 \\ 
  Beta-Binomial (pMOM) & 200 & 100 & 0.16 & 0.72 & 0.00 \\ 
  Empirical Bayes (Zellner) & 200 & 100 & 0.15 & 0.73 & 0.00 \\ 
  Empirical Bayes (Zellner, Intercept) & 200 & 100 & 0.16 & 0.72 & 0.00 \\ 
  Beta-Binomial (Zellner) & 200 & 100 & 0.16 & 0.71 & 0.00 \\ 
  LASSO & 200 & 100 & 0.24 & 0.99 & 0.70 \\ 
  ALASSO & 200 & 100 & 0.17 & 0.95 & 0.27 \\ 
  SCAD & 200 & 100 & 0.19 & 0.97 & 0.46 \\ 
\hline
\end{tabular}
\caption{Simulation study. Scenario 5 ($\omega_1=0.75,\omega_2=0$).
Mean squared estimation error (MSE), power and false discovery rate (FDR).
Priors on parameters are product MOM (pMOM) and Zellner, both with default unit prior precision.
Priors on model are that given by our empirical Bayes framework, and by a default Beta-Binomial(1,1)
}\label{tab:simstudy5}
\end{table}

\begin{table}[h]
\begin{tabular}{@{}lcc|ccc|@{}}
\hline
Method & n & p & MSE & Power & FDR \\ 
\hline
Empirical Bayes (Zellner, g = 1) & 100 & 200 & 2.820 & 0.386 & 0.163 \\ 
  Empirical Bayes (Zellner, g = 0.1) & 100 & 200 & 2.725 & 0.397 & 0.146 \\ 
  Empirical Bayes (Zellner, g = 10) & 100 & 200 & 2.730 & 0.382 & 0.147 \\ 
\hline
  Empirical Bayes (Zellner, g = 1) & 200 & 200 & 0.372 & 0.743 & 0.009 \\ 
  Empirical Bayes (Zellner, g = 0.1) & 200 & 200 & 0.371 & 0.747 & 0.009 \\ 
  Empirical Bayes (Zellner, g = 10) & 200 & 200 & 0.372 & 0.743 & 0.008 \\ 
\hline
  Empirical Bayes (Zellner, g = 1) & 200 & 100 & 0.154 & 0.733 & 0.005 \\ 
  Empirical Bayes (Zellner, g = 0.1) & 200 & 100 & 0.154 & 0.733 & 0.005 \\ 
  Empirical Bayes (Zellner, g = 10) & 200 & 100 & 0.154 & 0.734 & 0.005 \\ 
\hline
\end{tabular}
\caption{Simulation study. Scenario 5 ($\omega_1=0.75,\omega_2=0$).
Mean squared estimation error (MSE), power and false discovery rate (FDR).
Priors on parameters are Zellner's prior with default precision $g=1$, and also with $g=0.1$ and $g=10$.
Priors on model are that given by our empirical Bayes framework 
}\label{tab:simstudy5_sensitivity}
\end{table}

\section{Colon cancer}\label{sec:colon_cancer_suppl}

\begin{table}[h]
\begin{tabular}{@{}l|ccc|ccc|@{}}
\hline
     & \multicolumn{3}{c|}{EBayes ($R^2=0.533$)} & \multicolumn{3}{c|}{Beta-Binomial ($R^2=0.513$)} \\
Gene  & $E(\theta_j \mid \by)$ & 95\% interval & PIP & $E(\theta_j \mid \by)$ & 95\% interval & PIP \\
\hline
CILP      & 0.22   & (0.17 , 0.28) &  0.97 & 0.22  & ( 0.15 , 0.29) & 0.92\\ 
GAS1      & 0.33   & (0.23 , 0.43) &  0.96 & 0.3   & ( 0    , 0.44) & 0.79\\ 
HIC1      & 0.25   & (0    , 0.35) &  0.85 & 0.25  & ( 0    , 0.35) & 0.76\\ 
ESM1      & 0.16   & (0    , 0.22) &  0.73 & 0.11  & ( 0    , 0.23) & 0.52\\ 
KCNJ5-AS1 & 0.16   & (0    , 0.22) &  0.70 & 0.15  & ( 0    , 0.23) & 0.69\\ 
\hline
\end{tabular}
\caption{TGFB data with Zellner's prior on coefficients. BMA estimates and 95\% intervals for genes with $\pi(\gamma_j=1 \mid \by) > 0.5$ (PIP) in either the empirical Bayes or Beta-Binomial based analysis. $R^2$ is the squared correlation between the outcome and its leave-one-out prediction}
\label{tab:tgfb_topgenes_zellner}
\end{table}

Here we report results for the colon cancer dataset when setting Zellner's unit information prior on the regression parameters.
The results remained very similar.
Table \ref{tab:tgfb_topgenes_zellner} summarizes the results for genes that had a posterior inclusion probability either under the empirical Bayes or the Beta-Binomial model priors.
These genes are the same as those reported in the main paper, and their estimated coefficients, 95\% posterior intervals and posterior inclusion probabilities are all very similar.
The leave-one-out cross-validated $R^2$ was slightly higher for Zellner's prior than for the pMOM: $R^2= 0.533$ for empirical Bayes and $0.513$ for the Beta-Binomial (relative to 0.527 and 0.488 for the pMOM, respectively).

\bibliographystyle{Chicago}

\bibliography{references}
\end{document}